\newcommand{\T}{\mathbb{T}}
\newcommand{\R}{\mathbb{R}}
\newcommand{\C}{\mathbb{C}}
\newcommand{\Z}{\mathbb{Z}}
\newcommand{\N}{\mathbb{N}}
\newcommand{\E}{\mathbb{E}}
\newcommand{\Om}{\Omega}
\newcommand{\eps}{\varepsilon}
\newcommand{\mcN}{\mathcal{N}}
\newcommand{\mcNt}{\mcN_\tau}
\newcommand{\mcA}{\mathcal{A}}
\newcommand{\mcW}{\mathcal{W}}
\newcommand{\mcWt}{\mathcal{W}_\tau}
\newcommand{\mfh}{\mathfrak{h}}
\newcommand{\mbt}{\mathbf{t}}
\newcommand{\mbx}{\mathbf{x}}
\newcommand{\mby}{\mathbf{y}}
\newcommand{\mcB}{\mathcal{B}}
\newcommand{\mcP}{\mathcal{P}}
\newcommand{\mfP}{\mathfrak{P}}
\newcommand{\etavstar}{\left(\e^{t_\alpha h /\tau}\varphi^*_\tau\right)}
\newcommand{\g}{g^\xi_{\tau,m}(\mbt)}
\newcommand{\mbk}{\mathbf{k}}
\newcommand{\Jte}{\mathcal{J}_{\tau,e}}
\newcommand{\mcJ}{\mathcal{J}}
\newcommand{\vph}{\varphi}
\newcommand{\lk}{\lambda_k}
\newcommand{\dd}{\mathrm{d}}
\newcommand{\Tr}{\text{Tr}}
\newcommand{\La}{\Lambda}
\newcommand{\mbP}{\mathbf{P}}
\newcommand{\e}{\mathrm{e}}
\newcommand{\zG}{z_{\mathrm{Gibbs}}}
\newcommand{\mfS}{\mathfrak{S}}
\newcommand{\mfhp}{\mathfrak{h}^{(p)}}
\newcommand{\mcCp}{\mathcal{C}_p}
\newcommand{\mc}{\mathcal}
\newcommand{\mf}{\mathfrak}
\newcommand{\om}{\omega}
\newcommand{\Th}{\Theta}
\newcommand{\vphts}{\varphi_\tau^*}
\newcommand{\vpht}{\varphi_\tau}
\newcommand{\Hf}{H_{\tau,0}}
\newcommand{\rt}{\rho_\tau}
\newcommand{\Zf}{Z_{\tau,0}}
\newcommand{\mbb}{\mathbb}
\newcommand{\tm}{\tau,m}
\newcommand{\tM}{\tau,M}
\newcommand{\tr}{\tilde{\rho}}
\newcommand{\HnuN}{\left(H_{\tau,0} + \nu\mcNt\right)}
\newcommand{\mbs}{\mathbf{s}}
\newcommand{\tinfty}{\tau \to \infty}
\newcommand{\mcL}{\mathcal{L}}
\newcommand{\mthc}{\mathcal}
\numberwithin{equation}{section}
\theoremstyle{plain}
\theoremstyle{definition}
\theoremstyle{remark}
\definecolor{darkred}{rgb}{0.9,0,0.3}
\definecolor{darkblue}{rgb}{0,0.3,0.9}
\newcommand{\vertiii}[1]{{\left\vert\kern-0.25ex\left\vert\kern-0.25ex\left\vert #1 
		\right\vert\kern-0.25ex\right\vert\kern-0.25ex\right\vert}}
\theoremstyle{plain} 
\newtheorem{theorem}{Theorem}[section]
\newtheorem*{theorem*}{Theorem}
\newtheorem{lemma}[theorem]{Lemma}
\newtheorem*{lemma*}{Lemma}
\newtheorem{corollary}[theorem]{Corollary}
\newtheorem*{corollary*}{Corollary}
\newtheorem{proposition}[theorem]{Proposition}
\newtheorem*{proposition*}{Proposition}
\newtheorem*{conjecture*}{Conjecture}
\theoremstyle{definition} 
\newtheorem{definition}[theorem]{Definition}
\newtheorem*{definition*}{Definition}
\newtheorem*{example*}{Example}
\newtheorem{remark}[theorem]{Remark}
\newtheorem*{remark*}{Remark}
\newtheorem{assumption}[theorem]{Assumption}
\newtheorem*{assumption*}{Assumption}
\begin{document}
	\title[Derivation of Gibbs measures for the 1D focusing quintic NLS]{A microscopic derivation of Gibbs measures for the 1D focusing quintic nonlinear Schr\"odinger equation}
	
	\author{Andrew Rout and Vedran Sohinger}
	\maketitle
	
\begin{abstract}
In this work, we obtain a microscopic derivation of Gibbs measures for the focusing quintic nonlinear Schr\"{o}dinger equation (NLS) on $\mathbb{T}$ from many-body quantum Gibbs states. On the quantum many-body level, the quintic nonlinearity corresponds to a three-body interaction.
This is a continuation of our previous work \cite{RS22}. In the aforementioned work, we studied the cubic problem, which corresponds to a two-body interaction on the quantum many-body level.

In our setup, we truncate the mass of the classical free field in the classical setting and the rescaled particle number in the quantum setting. 
Our methods are based on a perturbative expansion previously developed in the work of Fr\"{o}hlich, Knowles, Schlein, and the second author \cite{FKSS17}.
We prove results both in the time-independent and time-dependent setting. This is the first such known result in the three-body regime. Furthermore, this gives the first microscopic derivation of time-dependent correlation functions for Gibbs measures corresponding to the quintic NLS, as studied in the work of Bourgain \cite{Bou94}.
\end{abstract}

\section{Introduction}
\subsection{The general setup}
We work on the one-dimensional torus $\Lambda := \T^1 \equiv \R/\Z \equiv [-1/2,1/2)$. The \emph{one-particle space} is given by $\mfh := L^2(\T;\C)$, with the inner product convention
\begin{equation}
\label{inner_product}
\langle f_1,f_2 \rangle := \int \dd x \, \overline{f_1}(x)f_2(x)\,.
\end{equation}
In \eqref{inner_product} and in the sequel, we abbreviate $\int_{\Lambda} \dd x$ by $\int \dd x$.
We define the \emph{one-body Hamiltonian} as
\begin{equation}
	\label{q_1-body_Hamn_defn}
h:= -\Delta + \kappa,
\end{equation}
where $\kappa > 0$ is a fixed quantity throughout. Note that $h$ is a positive self-adjoint densely-defined operator on $\mfh$. We can write $h$ spectrally as
\begin{equation}
	\label{q_Hamn_spectral_decomp}
h=\sum_{k \in \N} \lk e_k e^*_k\,,
\end{equation}
where
\begin{equation}
\label{lambda_k}
\lk := 4 \pi^2 |k|^2 + \kappa,
\end{equation}
are the eigenvalues of $h$ and 
\begin{equation}
	\label{q_eigenvalues_defn}
e_k(x) := \e^{2\pi \mathrm{i} k x}
\end{equation}
are the corresponding normalised eigenvalues of $h$ in $\mfh$. It follows from \eqref{q_Hamn_spectral_decomp}--\eqref{lambda_k} that
\begin{equation}
	\label{q_trace_finite}
\Tr_{\mfh}(h^{-1}) = \sum_{k \in \N} \frac{1}{\lk} < \infty\,.
\end{equation}

Let us now state the assumptions of the interaction potential in our model.

\begin{assumption}
\label{w_assumption_bounded}
Let $w: \Lambda \rightarrow \R$ be even and such that $w \in L^{\infty}(\Lambda)$.
\end{assumption}

In certain parts of our analysis, we can study more singular interaction potentials.

\begin{assumption}
\label{w_assumption}
We consider one of the following forms of interaction potentials.
\begin{itemize}
\item[(i)] $w: \Lambda \rightarrow \R$ is even and $w \in L^{1}(\Lambda)$.
\item[(ii)] $w=c\delta$ for some $c \in \R$.
\end{itemize}
\end{assumption}

In Assumptions \ref{w_assumption_bounded}--\ref{w_assumption} above, we make no condition on the (pointwise) sign of $w$ or $\hat{w}$.

\textbf{Convention:} When working in the classical setting, we consider $w$ as in Assumption \ref{w_assumption}.
When working in the quantum setting, we consider $w$ as in Assumption \ref{w_assumption_bounded}. 

\subsubsection*{The classical model}
We now introduce the classical model that we will analyse.
For $w$ as in Assumption \ref{w_assumption}, we study the following form of the {\it nonlinear Schr\"odinger equation (NLS)}.

\begin{align}
\notag
\mathrm{i}\partial_t u + (\Delta - \kappa)u =&-\frac{1}{3}\,\int \dd y \, \dd z \, w(x-y)\,w(x-z)\,|u(y)|^2\,|u(z)|^2\,u(x)
\\
\label{q_Hartree_defn}
&-\frac{2}{3}\,\int \dd y \, \dd z \, w(x-y)\,w(y-z)\,|u(y)|^2\,|u(z)|^2\,u(x)\,.
\end{align}
We refer to \eqref{q_Hartree_defn} as the \emph{quintic Hartree equation} or \emph{nonlocal quintic NLS}. 
Let us note that when $w=\delta$ (as in Assumption \ref{w_assumption} (ii)), \eqref{q_Hartree_defn} is the \emph{focusing quintic NLS}
\begin{equation}
\label{focusing_quintic_NLS}
\mathrm{i}\partial_t u + (\Delta - \kappa)u=-|u|^4\,u\,.
\end{equation}

On the space of fields $u: \La \to \C$, we consider the Poisson structure where the Poisson bracket is given by
\begin{equation}
\label{Poisson_structure}
\{u(x),\bar{u}(y)\} = \mathrm{i} \delta(x-y), \qquad \{u(x),u(y)\} = \{\bar{u}(x),\bar{u}(y)\} = 0\,.
\end{equation}

\begin{lemma}
\label{Hamiltonian_Lemma}
With Poisson structure given by \eqref{Poisson_structure}, we have that \eqref{q_Hartree_defn} corresponds to the Hamiltonian equation of motion associated with Hamiltonian
\begin{multline}
\label{q_Hamn_defn}
H(u) = \int \dd x\, \left(|\nabla u(x)|^2 + \kappa |u(x)|^2\right) \\ 
- \frac{1}{3}\int \dd x \, \dd y \, \dd z \, w(x-y)\,w(x-z)\,|u(x)|^2\,|u(y)|^2\,|u(z)|^2\,.
\end{multline}
\end{lemma}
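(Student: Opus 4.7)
The plan is a direct calculation: compute the functional derivative $\delta H/\delta \bar u(x)$ and apply the Poisson structure \eqref{Poisson_structure} to recover \eqref{q_Hartree_defn}. By the derivation property of the Poisson bracket together with $\{u(x),u(y)\} = 0$, one has $\partial_t u(x) = \{u(x),H\} = \mathrm{i}\,\delta H/\delta \bar u(x)$, so the lemma reduces to computing the variational derivative $\delta H/\delta \bar u$ and matching it, up to the prefactor of $\mathrm{i}$, to the right-hand side of \eqref{q_Hartree_defn}.

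For the quadratic part of \eqref{q_Hamn_defn}, a single integration by parts on the boundaryless torus immediately gives $\delta(\int|\nabla u|^2 + \kappa|u|^2)/\delta \bar u(x) = (-\Delta+\kappa)u(x) = hu(x)$. The substantive step is the variation of the sextic interaction. Expanding $|u(x_j)|^2 = u(x_j)\bar u(x_j)$ for $j=1,2,3$, the derivative $\delta V/\delta \bar u(x)$ splits as a sum of three contributions, one for each factor $\bar u(x_j)$. Setting $x_1 = x$ (both arguments of $w$ anchored at the distinguished point $x$) and relabeling $x_2 \to y$, $x_3 \to z$ produces the ``star'' term $-\tfrac{1}{3}u(x)\iint w(x-y)w(x-z)|u(y)|^2|u(z)|^2\,\dd y\,\dd z$. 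Setting $x_2 = x$ and using the evenness of $w$ produces the ``chain'' term $-\tfrac{1}{3}u(x)\iint w(x-y)w(y-z)|u(y)|^2|u(z)|^2\,\dd y\,\dd z$, and the $x_3 = x$ contribution is identical to this one by the $x_2 \leftrightarrow x_3$ symmetry of the kernel $w(x_1-x_2)w(x_1-x_3)$; these two identical contributions combine to give the overall coefficient $-\tfrac{2}{3}$.

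The only point requiring real care is this combinatorial split: in contrast to the cubic (two-body) case treated in \cite{RS22}, the three-body kernel $w(x_1-x_2)w(x_1-x_3)$ singles out the variable $x_1$ and is symmetric in the remaining two, so the three $\bar u$-derivatives collapse into exactly two geometrically distinct terms, producing the asymmetric coefficients $-\tfrac{1}{3}$ and $-\tfrac{2}{3}$ appearing in \eqref{q_Hartree_defn}. Beyond this bookkeeping no genuine analytic obstacle is expected; assembling the pieces and restoring the prefactor of $\mathrm{i}$ from the Poisson bracket yields \eqref{q_Hartree_defn}.
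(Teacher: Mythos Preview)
Your approach is correct and is precisely the paper's: a direct computation of the Poisson bracket via the variational derivative $\delta H/\delta\bar u$, using the evenness of $w$ (Assumption~\ref{w_assumption}) to collapse the three $\bar u$-contributions from the sextic term into the star and chain pieces with coefficients $-\tfrac13$ and $-\tfrac23$. One small caveat: in the paper's convention (cf.\ Remark~\ref{Hamiltonian_structure_remark}) the equation of motion reads $\partial_t u=\{H,u\}=-\mathrm{i}\,\delta H/\delta\bar u$, so your identity $\partial_t u=\{u,H\}=\mathrm{i}\,\delta H/\delta\bar u$ has the opposite sign; this is purely a convention mismatch and does not affect the combinatorial argument.
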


\begin{proof}
By a direct calculation using \eqref{Poisson_structure} and Assumption \ref{w_assumption}, we obtain that
\begin{multline}
\label{Poisson_bracket_H}
\{H,u\}(x)=\mathrm{i} (\Delta u(x)-\kappa u(x))+\frac{\mathrm{i}}{3} \int \dd y\,\dd z\, w(x-y)\,w(x-z)\,|u(y)|^2\,|u(z)|^2\,u(x)
\\
+\frac{\mathrm{2i}}{3} \int \dd y\,\dd z\, w(x-y)\,w(y-z)\,|u(y)|^2\,|u(z)|^2\,u(x)\,.
\end{multline}
\end{proof}

It was shown in \cite{Bou93} that \eqref{focusing_quintic_NLS} is locally well-posed for $w$ as in Assumption \ref{w_assumption} (ii). We show in Section \ref{Cauchy problem} that \eqref{q_Hartree_defn} that the analogous result holds for $w$ as in Assumption \ref{w_assumption} (i).
Here, $H^s(\La)$ denotes the $L^2$-based Sobolev space of order $s$ on $\La$ with norm given by
\begin{equation*}
\|g\|_{H^s(\La)} := \|(1+|k|)^s|\hat{g}(k)|\|_{\ell^2_k}\,,
\end{equation*}
where we use the following convention for the Fourier transform of $g \in L^1(\La)$.
\begin{equation*}
\hat{g}(k) := \int \dd x \, g(x)\,\e^{-2 \pi  \mathrm{i} k x}, \quad k \in \mathbb{Z}.
\end{equation*}

We study suitable \emph{Gibbs measures} associated with \eqref{q_Hartree_defn}. Let us recall the following general notion. The Gibbs measure associated with an abstract positive-definite Hamiltonian $H$ is the probability measure $\dd\mbP_{\mathrm{Gibbs}}$ on the space of fields $u : \La \to \C$ formally defined by
\begin{equation}
	\label{q_gibbs_measure_defn}
\dd\mbP_{\mathrm{Gibbs}}(u) := \frac{1}{z_{\mathrm{Gibbs}}} \e^{-H(u)} \,\dd u,
\end{equation}
where $z_{\mathrm{Gibbs}}$ is the {\it partition function}, i.e.\ the normalisation constant making $\dd\mbP_{\mathrm{Gibbs}}$ a probability measure, and $\dd u$ is the formally defined Lebesgue measure on the space of fields. The rigorous construction of such measures was first studied in the constructive quantum field theory literature, see for example the works \cite{GJ87,Nel73a,Nel73b,Simon74}. They were further studied in \cite{BFS83,BG20,BS96,CFL16,GH21,LRS88,MV97a,MV97b,OST22} and the references therein. The invariance of measures of the form of \eqref{q_gibbs_measure_defn} under the flow of the nonlinear Schr\"{o}dinger equation was first rigorously proven by Bourgain in \cite{Bou94,Bou96,Bou97}, with some preliminary results known to Zhidkov in \cite{Zhi91}. This invariance has been used to construct global solutions for equations with rough initial conditions. For further details, we refer the reader to the expository works \cite{BTT18,NS19,OT18}.
For more recent developments, we also refer the reader to \cite{Bringmann1,Bringmann2,Bringmann_Deng_Nahmod_Yue, Deng_Nahmod_Yue_1,Deng_Nahmod_Yue, Gunaratnam_Oh_Tzvetkov_Weber, OST22} and the references given therein.
 
When $H$ is not positive definite, it is in general not possible to normalise $\e^{-H(u)}\,\dd u$. In other words, one formally has $\zG=\int \dd\mbP_{\mathrm{Gibbs}}(u) =  \infty$. In this case, one considers instead the modification of \eqref{q_gibbs_measure_defn} given by
\begin{equation}
\label{q_gibbs_cutoff_defn}
\dd \mbP^f_{\mathrm{Gibbs}}(u) := \frac{1}{z_{\mathrm{Gibbs}}^f} \,\e^{-H(u)}\, f(\|u\|_{L^2}^2) \,\dd u,
\end{equation}
where $f \in C_0^\infty(\R)$ is a suitable cut-off function and 
\begin{equation*}
z_{\mathrm{Gibbs}}^f=\int  \,\dd u\, \e^{-H(u)}\, f(\|u\|_{L^2}^2)
\end{equation*}
is the formal normalisation constant that makes \eqref{q_gibbs_cutoff_defn} a probability measure. Such measures were considered in \cite{Ammari_Sohinger,Bou94,CFL16,DR23,DRTW23,GJ87,Tolomeo_Weber}. For a non positive-definite Hamiltonian $H$, we say the Hamiltonian equation associated with $H$ is in the {\it focusing} (or {\it thermodynamically unstable}) regime.

Since we are considering interaction potentials $w$ given by Assumptions \ref{w_assumption_bounded}-\ref{w_assumption} above, it follows that the Hamiltonian \eqref{q_Hamn_defn} is not necessarily positive-definite.  As such the Gibbs measures we work with in this paper will always be of the cut-off form given by \eqref{q_gibbs_cutoff_defn}. When it is clear, we will omit the $f$ dependence from $\dd\mbP^f_{\mathrm{Gibbs}} \equiv \dd\mbP_{\mathrm{Gibbs}}$.

In Section \ref{Cauchy problem}, we show that \eqref{q_Hartree_defn} is globally well-posed for initial conditions in the support of the Gibbs measure \eqref{q_gibbs_cutoff_defn}. For a precise statement, see Proposition \ref{Cauchy_problem_2} below. In particular, there is a well-defined solution map $S_t$ that maps any initial condition $u_0$ in the support of \eqref{q_gibbs_cutoff_defn} to the solution of \eqref{q_Hartree_defn} at time $t$ given by
\begin{equation}
\label{q_evolution_defn}
u(\cdot) \equiv u(\cdot,t) := S_t(u_0)\,.
\end{equation}
The map $S_t$ preserves regularity in the sense that for $s \in (0,\frac{1}{2})$ and $u_0 \in H^s$, we have $S_t(u_0) \in H^s$.

Let $\dd\mbP_{\mathrm{Gibbs}}^f$ be as in \eqref{q_gibbs_cutoff_defn} and let be $S_t$ as in \eqref{q_evolution_defn} above. 
Given $m \in \N^*$, we consider the $m$-particle time-dependent correlation functions associated with $H$. More precisely, given $X_j \in C^\infty(\mfh)$ and $t_j \in \R$ for $j=1,\ldots,m$, we define 
$\mathbb{Q}_{\mathrm{Gibbs}} \equiv \mathbb{Q}_{\mathrm{Gibbs}}^f$ as
\begin{equation}
\label{q_time_correlation_defn}
\mathbb{Q}_{\mathrm{Gibbs}} (X_1,\ldots, X_m,t_1,\ldots,t_m) := \int \dd \mbP_{\mathrm{Gibbs}}^f(u) \prod^m_{j=1} X_j(S_{t_j}u)\,.
\end{equation} 

\subsubsection*{The quantum model}
We can view \eqref{q_Hartree_defn} as the {\it classical limit} of many-body quantum dynamics. At this point, according to our convention, we take $w$ as in Assumption \ref{w_assumption_bounded}. Given $n \in \N$, we consider the $n$-body Hamiltonian given by
\begin{equation}
\label{q_n_body_Hamn}
H^{(n)} := \sum_{i=1}^n (-\Delta_i + \kappa) -\frac{\lambda}{3} \mathop{\sum_{i,j,k}^{n}}_{i \neq j \neq k \neq i} w(x_i-x_j)\,w(x_i-x_k)\,,
\end{equation}
which is defined on the {\it bosonic $n$-particle space} 
\begin{equation}
\label{bosonic_n-particle_space}
\mfh^{(n)}:=\bigl\{u \in \mfh^n,\, u(x_1,\ldots,x_n) = u(x_{\pi(1)}, \ldots, x_{\pi(n)})
\,\,\forall (x_1,\ldots,x_n) \in \Lambda^n\,\,\forall \pi \in S_n\bigr\}\,.
\end{equation}
Here $\Delta_i$ denotes the action of  $\Delta$ in the $i^{th}$ component $x_i$.
In order for the two terms in \eqref{q_n_body_Hamn} to have the same order, we henceforth consider the interaction strength $\lambda \sim \frac{1}{n^2}$. 
Note that the second term (i.e.\ the interaction term) in \eqref{q_n_body_Hamn} can be written as 
\begin{equation}
\label{symmetry_Fock_space}
-\frac{\lambda}{3} \sum_{1\leq i<j<k\leq n} \Bigl( w(x_i-x_j)\,w(x_i-x_k)+ w(x_j-x_k)\,w(x_j-x_i)+w(x_k-x_i)\,w(x_k-x_j) \Bigr)\,.
\end{equation}
As a multiplication operator, \eqref{symmetry_Fock_space} leaves $\mfh^{(n)}$ invariant.

The $n$-body Schr\"odinger equation corresponding to \eqref{q_n_body_Hamn} is defined as
\begin{equation}
\label{q_n-body_Schro}
\mathrm{i} \partial_{t} \Psi_{n,t} = H^{(n)}\Psi_{n,t}\,,
\end{equation}
where $H^{(n)}$ is as in \eqref{q_n_body_Hamn}. One can in general compare the dynamics of \eqref{q_n-body_Schro} with the dynamics of \eqref{q_Hartree_defn} for suitably chosen initial conditions as $n \to \infty$. The first rigorous results in this direction (for different, but closely related models) were proved in \cite{GV79,Hep74,Spo80}. For (suitably) factorised initial data $\Psi_{n,0} \sim u_0^{\otimes n}$ in the appropriate $n$-body Schr\"{o}dinger equation, it is shown that the factorisation property persists in the form $\Psi_{n,t} \sim u_t^{\otimes n}$, where $u_t$ solves solves the appropriate NLS with initial data $u_0$. This type of result sometimes goes under the name of \emph{propagation of chaos}. We do not address this question in our paper, although this heuristic provides a useful guiding principle for our further analysis. For further details and more recent results on related problems, we refer the reader to the expository work \cite{Schlein_2008}.

At temperature $\tau>0$, the \emph{(normalised) canonical Gibbs state} associated with \eqref{q_n-body_Schro} is given by
\begin{equation*}
P_\tau^{(n)} \equiv P_\tau^{f,(n)} := \frac{\e^{-H^{(n)}/\tau}f\left(\frac{n}{\tau}\right)}{Z^{(n)}_{\tau}}\,,\qquad Z^{(n)}_{\tau} \equiv Z^{f,(n)}_{\tau}:=\mathrm{Tr}_{\mfh^{(n)}} \Bigl(\e^{-H^{(n)}/\tau}f\left(\frac{n}{\tau}\right)\Bigr)\,.
\end{equation*}
This is an equilibrium state of \eqref{q_n-body_Schro}.
In our setting, we will vary the particle number $n$. Hence, we work in the \emph{grand canonical ensemble} and consider the \emph{(normalised) grand canonical Gibbs state} given by
\begin{equation}
\label{normalised_grand_canonical_Gibbs_state}
P_\tau^{\mathrm{norm}} \equiv P_\tau^{f,\mathrm{norm}}
:= \frac{1}{Z_{\tau}}\,\bigoplus_{n=0}^{\infty} \e^{-H^{(n)}/\tau}f\left(\frac{n}{\tau}\right)\,,
\end{equation}
where
\begin{equation*}
Z_{\tau} \equiv Z^{f}_{\tau}:=
\sum_{n=0}^{\infty} \mathrm{Tr}_{\mfh^{(n)}} \Bigl(\e^{-H^{(n)}/\tau}f\left(\frac{n}{\tau}\right)\Bigr)\,.
\end{equation*}
In this paper, we show that the grand canonical Gibbs state \eqref{normalised_grand_canonical_Gibbs_state} converges to the classical Gibbs state $\E_{\mbP^f_{\mathrm{Gibbs}}}(\cdot)$ (see \eqref{q_classical_state_defn} below) as $\tau \rightarrow \infty$ in the sense of correlation functions and partition function. Furthermore, we show time-dependent variants of this result. 
The precise results are stated in Section \ref{Results} below. We emphasise that the main novelty of our work consists in analysing three-body interactions in the quantum setting, which in turn lead to a quintic nonlinearity in the classical setting.
All of the previous works on this problem have focused on two-body interactions in the quantum setting which lead to a quartic nonlinearity in the classical setting. We recall the previously known results in detail in Section \ref{Previously known results} below. In what follows, we explain the setup more precisely.

\subsubsection*{Notation and Conventions}
Let us define the notation and give the conventions that we will use in the sequel.
We write $\N = \{0,1,2,\ldots\}$ and $\N^* = \{1,2,\ldots\}$. Throughout the paper, $C$ is used to denote a generic positive constant that can change from line to line. If $C$ depends on a set of parameters $\{a_1,\ldots,a_n\}$, we write $C(a_1,\ldots,a_n)$. We sometimes write $a \lesssim b$ to denote $a \leq Cb$ for some $C>0$. Moreover, we write $a \lesssim_{a_{1},\ldots,a_{n}}b $ if $a \leq C(a_1,\ldots,a_n) b$ for some constant $C(a_1,\ldots,a_n)>0$ depending on the parameters $a_1,\ldots,a_n$.
We denote by $\chi_A$ the characteristic function of the set $A$, which is defined by

\begin{equation*}
\chi_A(x) := 
\begin{cases}
0 & \text{if } x \notin A \\
1 & \text{if } x \in A.
\end{cases}
\end{equation*}

Consider a separable Hilbert space $\mathcal{H}$.
We denote by $\mathbf{1}$ the identity operator on $\mathcal{H}$. Furthermore, $\mathcal{L}(\mathcal{H})$ denotes the space of bounded linear operators on $\mathcal{H}$. For $q \in [1,\infty]$, the Schatten space $\mfS^q(\mathcal{H})$ is defined to be the set of $\mcA \in \mathcal{L}(\mathcal{H})$ with $\|\mathcal{A}\|_{\mfS^q(\mathcal{H})} < \infty$, where
\begin{equation*}
\|\mathcal{A}\|_{\mfS^q(\mathcal{H})} := \begin{cases}
\bigl(\Tr_{\mathcal{H}} |\mcA|^q\bigr)^{1/q} & \text{if } q< \infty \\
\sup \,\rm{spec}\, |\mathcal{A}| & \text{if } q = \infty\,.
\end{cases}
\end{equation*}
Here $|\mcA| := \sqrt{\mathcal{A}^*\mcA}$ and $\rm{spec}$ denotes the spectrum of an operator. For $p \in \N$, we also define
\begin{align}
	\label{q_mathfrakB_defn}
\mathfrak{B}_p &:= \{\xi \in \mfS^2(\mfhp) : \|\xi\|_{\mathfrak{S}^2(\mfhp)} \leq 1\}\,, \\
\label{q_mathcalC_defn}
\mcCp &:= \mathfrak{B}_p \cup \mathbf{1}_p\,,
\end{align}
where $\mathbf{1}_p$ is the identity operator on $\mfhp$. Here, we recall \eqref{bosonic_n-particle_space}.

If $\xi$ is a closed linear operator on $\mfhp$, we can identify it with its Schwartz kernel, which we write as $\xi(x_1,\ldots,x_p;y_1,\ldots,y_p)$; see for example \cite[Corollary V.4.4]{RS80}.

\begin{remark}
\label{Setup_1}
Our analysis would apply if we considered the nonlinearity of the opposite sign in \eqref{q_Hartree_defn}. In this case, $w=\delta$ would yield the defocusing quintic NLS, instead of \eqref{focusing_quintic_NLS}. Our methods require a truncation of the mass in the classical problem. This is a natural condition to define the Gibbs measure in the focusing regime, as is explained in Section \ref{Gibbs measures and the classical system} below. Such a truncation is unnatural in the defocusing regime. We conjecture that analogues of our main results hold in the defocusing regime without a truncation. 
\end{remark}

\subsection{Gibbs measures and the classical system}
\label{Gibbs measures and the classical system}
For $k \in \N$, we take $\mu_k$ to be independent standard complex Gaussian measures. Namely, we take $\mu_k := \frac{1}{\pi}\,\e^{-|z|^2}\dd z$, where $\dd z$ denotes the standard Lebesgue measure on $\C$. We consider the product probability space $(\C^{\mathbb{N}}, \mc{G}, \mu)$, where
\begin{equation}
	\label{q_rigorous_gibbs_defn}
\mu := \bigotimes_{k \in \mathbb{N}} \mu_k\,.
\end{equation}
We write $\omega = (\omega_k)_{k \in \mathbb{N}}$ for elements of the probability space $\C^{\N}$. The {\it classical free field} $\vph \equiv \vph^{\om}$ is defined as
\begin{equation}
\label{q_class_free_field}
\vph := \sum_{k \in \mathbb{N}} \frac{\om_k}{\sqrt{\lk}}\, e_k\,.
\end{equation}
Here, we recall \eqref{lambda_k}.
Then $\eqref{q_trace_finite}$ implies that
\begin{equation}
	\label{q_regularity_vph}
\vph \in H^{\frac{1}{2}-\varepsilon}(\Lambda)\,,
\end{equation}
$\mu$-almost surely for $\eps > 0$. 
Letting $\vph(g) := \langle g,\vph\rangle$ and $\overline{\vph}(g) := \langle \vph,g\rangle$, we have
\begin{equation}
	\label{q_covariance_ Wiener}
\mathbb{E}_\mu \bigl( \overline{\vph}(g) \vph(\tilde{g})\bigr) = \langle \tilde{g},h^{-1}g \rangle, \quad \mathbb{E}_\mu\bigl( \vph(g) \vph(\tilde{g})\bigr) = \mathbb{E}_\mu \bigl(\overline{\vph}(g) \overline{\vph}(\tilde{g})\bigr)= 0 
\end{equation} 
for any $g,\tilde{g} \in H^{-\frac{1}{2} + \eps}$. Here the Green function $h^{-1}$ is the covariance of $\mu$. Then $\mu$ satisfies the following form of Wick's theorem; see \cite[Lemma 2.4]{FKSS20} for a self-contained summary.
\begin{proposition}[Wick's theorem]
	\label{q_Wick_thm}
Suppose $\vph$ is defined as in \eqref{q_class_free_field}. Let $n \in \mathbb{N}^*$ and $g_i \in H^{-\frac{1}{2}+\varepsilon}$ for $i \in \{1,\ldots,n\}$ be given. Denote by $\vph^\#$ either $\vph$ or $\overline{\vph}$. Then, we have
\begin{equation}
	\label{q_Wick_thm_computation}
\mathbb{E}_\mu \left(\prod_{i=1}^n \vph^{\#_i}(g_i)\right) = \sum_{\Pi \in M(n)} \prod_{(j,k) \in \Pi} \mathbb{E}_\mu \left(\vph^{\#_j}(g_j)\vph^{\#_k}(g_k)\right)\,.
\end{equation}
Here, $M(n)$ is the set of complete pairings of $\{1,\ldots,n\}$, and $(j,k)$ is the set of edges in $\Pi$.
\end{proposition}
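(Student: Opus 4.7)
The plan is to reduce the proposition to a finite-dimensional Wick identity for a collection of independent standard complex Gaussians, and to establish that identity by induction on $n$ via Stein's lemma.

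First I would reduce to the case where each $g_i$ is a basis vector $e_{k_i}$. Both sides of \eqref{q_Wick_thm_computation} depend multilinearly on the $g_i$ (linearly where $\#_i = \cdot$, antilinearly where $\#_i = \overline{\cdot}$), and both are continuous in $g_i \in H^{-1/2+\varepsilon}$ in an $L^{n}(\mu)$ sense on the remaining factors $\vph^{\#_j}(g_j)$. Expanding $g_i = \sum_k \langle e_k, g_i\rangle\,e_k$ and pulling the sums through the expectations reduces the identity to the case $g_i = e_{k_i}$, in which case \eqref{q_class_free_field} gives $\vph(e_k) = \omega_k/\sqrt{\lk}$ and $\overline{\vph}(e_k) = \overline{\omega_k}/\sqrt{\lk}$. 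After extracting the scalar factors $\lk^{-1/2}$, the statement reduces to
\begin{equation*}
\E_\mu\left(\prod_{i=1}^n \omega_{k_i}^{\#_i}\right) = \sum_{\Pi \in M(n)}\prod_{(j,l)\in\Pi}\E_\mu\bigl(\omega_{k_j}^{\#_j}\omega_{k_l}^{\#_l}\bigr),
\end{equation*}
a purely finite-dimensional identity involving only the finitely many Gaussians whose indices appear in the product.

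Next I would prove this finite-dimensional statement by induction on $n$. The cases $n \in \{0,1\}$ are immediate. The driver of the inductive step is Stein's lemma, which follows by integration by parts from $\partial_{\overline{z}}\bigl(\tfrac{1}{\pi}\e^{-|z|^2}\bigr) = -z\,\tfrac{1}{\pi}\e^{-|z|^2}$ and yields, for any polynomial $F$,
\begin{equation*}
\E_\mu\bigl(\omega_k\,F(\omega,\overline{\omega})\bigr) = \E_\mu\bigl(\partial_{\overline{\omega_k}}F\bigr),\qquad \E_\mu\bigl(\overline{\omega_k}\,F(\omega,\overline{\omega})\bigr) = \E_\mu\bigl(\partial_{\omega_k}F\bigr).
\end{equation*}
Singling out the factor at position $1$ and pulling it out via Stein's lemma produces a sum over those $l \in \{2,\ldots,n\}$ with opposite conjugation status ($\#_l \neq \#_1$) and $k_l = k_1$, of the $(n-2)$-point expectation obtained by removing positions $1$ and $l$. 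By the induction hypothesis each such sub-expectation is itself a sum over pairings of $\{1,\ldots,n\}\setminus\{1,l\}$, and the outer sum over $l$ amounts to choosing the partner of position $1$ in a pairing of $\{1,\ldots,n\}$. This matches the Wick sum on the right-hand side exactly: partners $l$ that fail the conjugation or index condition never appear from the derivative, and are also killed on the Wick side because the corresponding two-point function vanishes.

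The main obstacle is the bookkeeping in the inductive step: verifying that the correspondence between Stein-generated terms and pairings in $M(n)$ is exact, and that the missing partners on the Stein side are precisely those killed by vanishing two-point functions on the Wick side. The $L^n(\mu)$-continuity used in the reduction to basis elements then falls out of the finite-dimensional Wick formula applied to truncations of $\vph$, combined with dominated convergence; there is no circularity since the truncations are already handled by the finite-dimensional result above.
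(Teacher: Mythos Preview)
Your proposal is correct and follows the standard route to Wick's theorem for Gaussian fields: reduce to finitely many independent complex Gaussians via multilinearity and density, then run the induction on $n$ using Gaussian integration by parts (Stein's lemma). The bookkeeping you describe for matching Stein-generated terms to pairings is accurate.

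The paper does not actually supply its own proof of this proposition; it simply refers the reader to \cite[Lemma 2.4]{FKSS20} for a self-contained treatment. So there is no ``paper's proof'' to compare against here. Your argument is essentially the classical one and would serve as a perfectly good self-contained proof in place of the citation.
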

In particular, using Proposition \ref{q_Wick_thm}, we can simplify \eqref{q_Wick_thm_computation} using the identities in \eqref{q_covariance_ Wiener}, keeping only the terms terms which contain a $(\overline{\vph},\vph)$ pair.  As in \cite[Remark 1.3]{FKSS17}, using a suitable pushforward one can also view $\mu$ as a probability measure on $H^s$. 
More precisely, it is the complex Gaussian measure on $H^s$ (for $s<1/2$) with covariance $h^{-1}$. We also refer to $\mu$ as the \emph{Wiener measure}.

The {\it mass} $\mathcal{N} \equiv \mathcal{N}^{\omega}$ of the classical free field \eqref{q_class_free_field} is defined to be
\begin{equation}
\label{q_mass}
\mc{N} := \|\vph\|_{\mfh}^2\,.
\end{equation}
With $w$ as in Assumption \ref{w_assumption}, the \emph{classical interaction} $\mathcal{W} \equiv \mathcal{W}^{\omega}$ is given by
\begin{multline}
\label{q_classical_interaction}
\mc{W} := -\frac{1}{3} \int \dd x \, \dd y \, \dd z \, w(x-y)\,w(x-z)\, |\vph(x)|^2\,|\vph(y)|^2\,|\vph(z)|^2
\\
=-\frac{1}{3} \int \dd x\, \bigl(w*|\vph|^2\bigr)^2(x)\,|\vph(x)|^2\,.
\end{multline}
Using Sobolev embedding, we get that $\varphi \in H^{\frac{1}{3}} \subset L^6$ $\mu$-almost surely. Therefore, by Lemma \ref{Multlinear_estimates_1} below, it follows that $\mc{W}$ is finite $\mu$-almost surely.
We also define the \emph{classical free Hamiltonian} $H_0 \equiv H_0^{\omega}$ as
\begin{equation}
\label{q_classical_free_Hamn}
H_0 := \int \dd x \, \dd y \,  \overline{\vph}(x)\,h(x;y)\,\vph(y)\,,
\end{equation}
where $h(x;y)$ is the kernel corresponding to \eqref{q_1-body_Hamn_defn}. The {\it classical interacting Hamiltonian} $H \equiv H^{\omega}$ is given by
\begin{equation}
\label{q_classical_interacting_Hamn}
H := H_0 + \mc{W}\,.
\end{equation}

\begin{assumption}
\label{f_assumption}
For the remainder of the paper, we fix $f \in C^\infty_0(\R)$ a cut-off function, which is not identically zero, satisfying $0 \leq f \leq 1$, and
\begin{equation}
	\label{q_cut-off_support}
f(s)=0 \,\,\, \text{for}  \,\,\, s>K\,,
\end{equation}
where $K>0$ is a sufficiently small positive constant.
\end{assumption}

\begin{remark}
\label{f_assumption_remark}
The choice of $K$ is dictated by Proposition \ref{Cauchy_problem_2} (i) below. Throughout, our estimates will depend on the $K$ in \eqref{q_cut-off_support}, but we will not keep explicit track of this dependence.
\end{remark}

With $\mu$ as in \eqref{q_rigorous_gibbs_defn}, $\mathcal{N}$ as in \eqref{q_mass}, $\mathcal{W}$ as in \eqref{q_classical_interaction}, and $f$ as in Assumption \ref{f_assumption}, we now define the \emph{classical Gibbs measure}
\begin{equation}
\label{q_gibbs_cutoff_defn_rigorous}
\mbP^f_{\mathrm{Gibbs}} := \frac{1}{z_{\mathrm{Gibbs}}^f} \,\e^{-\mathcal{W}}\,f(\mathcal{N})\,\mu\,,
\end{equation}
where the {\it classical partition function} $z \equiv z_{\mathrm{Gibbs}}^f$ is the normalisation constant
\begin{equation}
\label{q_classical_partn_functn}
z:= \int \dd \mu \, \e^{-\mc{W}} f(\mc{N})\,.
\end{equation}
We note that the probability measure $\mbP^f_{\mathrm{Gibbs}}$ in \eqref{q_gibbs_cutoff_defn_rigorous} and the classical partition function \eqref{q_classical_state_defn} are well-defined for sufficiently small $K$ in \eqref{q_cut-off_support} by Proposition \ref{Cauchy_problem_2} (i) below. In particular, this gives us a rigorous construction of \eqref{q_gibbs_cutoff_defn} above.

For a random variable $X \equiv X^\om $, the \emph{classical state} $\rho \equiv \rho^f$ is given by
\begin{equation}
\label{q_classical_state_defn}
\rho(X) := \E_{\mbP^f_{\mathrm{Gibbs}}}(X) = \frac{\int \dd\mu \, X \e^{-\mc{W}}f(\mc{N})}{\int \dd\mu \, \e^{-\mc{W}}f(\mc{N})}\,.
\end{equation}
We define the \emph{classical $p$-particle correlation function} $\gamma_p \equiv \gamma_p^f$ as the operator on $\mfhp$ with kernel given by
\begin{equation}
	\label{q_classical_correln_fn_defn}
\gamma_p(x_1,\ldots,x_p;y_1,\ldots,y_p) = \rho(\bar{\vph}(y_1) \ldots \overline{\vph}(y_p) \vph(x_1)\ldots \vph(x_p))\,.
\end{equation}
For a closed densely-defined linear operator $\xi$ on $\mfhp$, we will consider the random variable $\Th(\xi) \equiv \Th^\om(\xi)$ defined as
\begin{multline}
	\label{q_random_var_defn}
\Th(\xi) := \int \dd x_1 \ldots \dd x_p \, \dd y_1 \ldots \dd y_p \, \xi(x_1,\ldots,x_p;y_1,\ldots,y_p) \\
\times \overline{\vph}(x_1) \ldots \overline{\vph}(x_p) \vph(y_1) \ldots \vph(y_p)\,.
\end{multline}
By writing \eqref{q_classical_interaction} as
\begin{multline}
\label{q_classical_interaction_symmetric}
-\frac{1}{9} \int \dd x\, \dd y \, \dd z \, \Bigl(w(x-y)\,w(x-z)+w(y-z)\,w(y-x)+w(z-x)\,w(z-y)\Bigr)
\\
\times
|\vph(x)|^2\,|\vph(y)|^2\,|\vph(z)|^2\,,
\end{multline}
it follows that we can write \eqref{q_classical_interacting_Hamn} as
\begin{equation}
\label{H_random_variable}
H = \Th(h) +\frac{1}{3} \,\Th(W)\,,
\end{equation}
where $W$ is the operator on $\mfh^{(3)}$ and operator kernel
\begin{multline}
\label{W_kernel}
W(x_1,x_2,x_3;y_1,y_2,y_3)
\\
:=-\frac{1}{3} \Bigl(w(x_1-x_2)\,w(x_1-x_3)+w(x_2-x_3)\,w(x_2-x_1)+w(x_3-x_1)\,w(x_3-x_2)\Bigr)\,
\\
\times \delta(x_1-y_1)\,\delta(x_2-y_2)\,\delta(x_3-y_3)\,.
\end{multline}
In particular, $W$ acts as multiplication by 
\begin{equation*}
-\frac{1}{3} \Bigl(w(x_1-x_2)\,w(x_1-x_3)+w(x_2-x_3)\,w(x_2-x_1)+w(x_3-x_1)\,w(x_3-x_2)\Bigr)\,.
\end{equation*}
Note that $W$ indeed acts on $\mathfrak{h}^{(3)}$ as densely-defined linear operator\footnote{
In order to write $W$ as an operator on $\mfh^{(3)}$, we needed to apply the symmetrisation step in \eqref{q_classical_interaction_symmetric} above. Moreover, we note that $W$ is bounded if $w \in L^{\infty}$ as in Assumption \ref{w_assumption_bounded}.}. As was noted earlier, $\Theta(W)=3 \mathcal{W}$ is finite $\mu$-almost surely. 

With $S_t$ defined as in \eqref{q_evolution_defn} and $\xi$ a closed densely-defined linear operator on $\mfhp$, we define for $t \in \R$ the random variable
\begin{multline}
\label{q_time_evolved_rv}
\Psi^t\Th(\xi) := \int \dd x_1 \ldots \dd x_p \, \dd y_1 \ldots \dd y_p \, \xi(x_1,\ldots,x_p;y_1,\ldots,y_p) \\
\times \overline{S_t\vph}(x_1) \ldots \overline{S_t\vph}(x_p) S_t\vph(y_1) \ldots S_t\vph(y_p)\,,
\end{multline}
which corresponds to the time evolution of \eqref{q_random_var_defn}.
\subsection{The quantum system}
We work with the {\it bosonic Fock space}
\begin{equation*}
\mc{F} \equiv \mc{F}(\mfh) := \bigoplus_{p \in \mathbb{N}} \mfhp\,.
\end{equation*}
Vectors in $\mc{F}$ are denoted by $\Phi = (\Phi^{(p)})_{p \in \N}$. For $g \in \mfh$, we denote by $b^*(g)$ and $b(g)$ the  {\it creation and annihilation operators} on $\mc{F}$, given by
\begin{align}
\label{Operator_b^*}
\left(b^*(g)\Phi\right)^{(p)}(x_1,\ldots,x_p) &:= \frac{1}{\sqrt{p}} \sum_{j=1}^p g(x_j) \Phi^{(p-1)}(x_1,\ldots,x_{i-1},x_{i+1}\,,\ldots,x_p), \\
\label{Operator_b}
\left(b(g)\Phi\right)^{(p)}(x_1,\ldots,x_p) &:= \sqrt{p+1} \int \dd x \, \overline{g}(x) \Phi^{(p+1)}(x,x_1,\ldots,x_p)\,.
\end{align}
These operators are closed, densely-defined, and adjoints of each other. They also satisfy the {\it canonical commutation relations}, meaning that for all $g,\tilde{g} \in \mfh$, we have
\begin{equation}
\label{CCR}
[b(g),b^*(\tilde{g})] = \langle g,\tilde{g}\rangle_{\mfh}, \quad [b(g),b(\tilde{g})] = [b^*(g),b^*(\tilde{g})] = 0\,,
\end{equation}
where $[X,Y]=XY-YX$ denotes the commutator.
In what follows, we work with the {\it rescaled creation and annihilation operators}. For $g \in \mfh$, we define 
\begin{equation}
\label{varphi_tau^sharp}
\vph^*_\tau(g) := \tau^{-1/2}\,b^*(g)\,,\qquad \vph_\tau(g) := \tau^{-1/2}\,b(g)\,.
\end{equation}
We think of $\vph^*_\tau$ and $\vph_\tau$ as operator-valued distributions and denote their distribution kernels by 
\begin{equation}
\label{varphi_tau^sharp_2}
\vph^*_\tau(x) := \vph^*_\tau(\delta_x)\,,\qquad \vph_\tau(x) := \vph_\tau(\delta_x)\,.
\end{equation}
By a formal calculation from \eqref{CCR}, we obtain that for $x,y \in \Lambda$ 
\begin{equation}
\label{CCR_formal}
[\varphi_{\tau}(x),\varphi^{*}_{\tau}(y)]=\frac{1}{\tau}\,\delta(x-y)\,,\qquad [\varphi_{\tau}(x),\varphi_{\tau}(y)]=[\varphi^{*}_{\tau}(x),\varphi^{*}_{\tau}(y)]=0\,.
\end{equation}
By analogy with the classical field $\vph$ defined in \eqref{q_class_free_field}, we call $\vph_\tau$ the {\it quantum field}.
By letting $\tau \rightarrow \infty$ in the formal identity \eqref{CCR_formal}, the quantum fields formally commute in this limit.

For $w$ as in Assumption \ref{w_assumption_bounded}, the {\it quantum interaction} is defined as 
\begin{multline}
\label{Quantum_interaction}
\mc{W}_\tau := -\frac{1}{3} \int \dd x \, \dd y \, \dd z \, w(x-y)\,w(x-z)\,\vphts(x)\vphts(y)\vphts(z)\vpht(x)\vpht(y)\vpht(z)
\\
=-\frac{1}{9}  \int \dd x \, \dd y \, \dd z \Bigl(w(x-y)\,w(x-z)+w(y-z)\,w(y-x)+w(z-x)\,w(z-y)\Bigr)
\\
\times \vphts(x)\vphts(y)\vphts(z)\vpht(x)\vpht(y)\vpht(z)
\,.
\end{multline}
For $h$ as in \eqref{q_1-body_Hamn_defn} the {\it quantum free Hamiltonian} is defined as
\begin{equation}
\label{Quantum_free_Hamiltonian}
\Hf := \int \dd x \, \dd y \,  \vphts(x)\,h(x,y)\, \vpht(y)\,.
\end{equation}
Then the {\it quantum interacting Hamiltonian} is given by
\begin{equation}
	\label{q_quantum_interacting_Hamn}
H_\tau := \Hf + \mc{W}_\tau\,.
\end{equation}
By using \eqref{Operator_b^*}--\eqref{Operator_b}, \eqref{varphi_tau^sharp}--\eqref{varphi_tau^sharp_2}, and \eqref{Quantum_interaction}--\eqref{q_quantum_interacting_Hamn}, we have
\begin{equation*}
H_{\tau}=\bigoplus_{n=0}^{\infty} H_{\tau}^{(n)}\,,
\end{equation*}
where 
\begin{equation}
\label{H^n_tau}
H^{(n)}_{\tau} =
\frac{1}{\tau}\,\sum_{i=1}^n (-\Delta_i + \kappa) - \frac{1}{3\tau^3} \mathop{\sum_{i,j,k}^{n}}_{i \neq j \neq k \neq i} w(x_i-x_j)\,w(x_i-x_k)\,.
\end{equation}
Note that $H^{(n)}_{\tau}=\frac{1}{\tau} H^{(n)}$ for $H^{(n)}$ as in \eqref{q_n_body_Hamn} with $\lambda=\frac{1}{\tau^2}$.

We define the \emph{rescaled particle number} as
\begin{equation}
\label{N_tau}
\mc{N}_{\tau} := \int \dd x \, \vphts(x)\,\vpht(x)\,.
\end{equation}
By \eqref{Operator_b^*}--\eqref{Operator_b} and \eqref{varphi_tau^sharp}--\eqref{varphi_tau^sharp_2}, it follows that \eqref{N_tau}
acts on the $n^{th}$ sector of Fock space as multiplication by $\frac{n}{\tau}$. The \emph{(untruncated and unnormalised) grand canonical ensemble} is given by 
\begin{equation*}
P_\tau := \e^{-H_\tau}=\bigoplus_{n=0}^{\infty} \e^{-H^{(n)}_{\tau}}\,,
\end{equation*}
with $H^{(n)}_{\tau}$ as in \eqref{H^n_tau}.
For a closed operator $\mc{A}: \mc{F} \to \mc{F}$, the \emph{quantum state} $\rho_\tau \equiv \rt^f$ is defined as
\begin{equation}
\label{q_quantum_state_defn}
\rt(\mc{A}) := \frac{\Tr_{\mc{F}}\left(\mc{A}P_\tau f(\mc{N}_\tau)\right)}{\Tr_{\mc{F}}\left(P_\tau f(\mc{N}_\tau)\right)}\,.
\end{equation}
We define the {\it quantum partition function} and {\it quantum free partition function} $Z_\tau \equiv Z_\tau^f$ and $\Zf$ respectively as
\begin{equation}
	\label{q_quantum_partition_functions}
Z_\tau := \Tr\left(P_\tau f(\mc{N}_\tau)\right), \qquad \Zf := \Tr\left(\e^{-\Hf} \right),
\end{equation}
and the {\it quantum relative partition function} $\mathcal{Z}_\tau \equiv \mathcal{Z}_\tau^f$ as
\begin{equation}
\label{q_relative quantum_partition_fn}
\mc{Z}_\tau := \frac{Z_\tau}{\Zf}\,.
\end{equation}
For $p \in \N^*$ we define the \emph{$p$-particle quantum correlation function} $\gamma_{\tau,p} \equiv \gamma_{\tau,p}^f$ as operator which acts on $\mfhp$, with kernel given by
\begin{equation}
\label{q_quantum_correlation_fn_defn}
\gamma_{\tau,p}(x_1,\ldots, x_p, y_1 , \ldots, y_p) := \rt(\vphts(y_1)\ldots\vphts(y_p)\vpht(x_1)\ldots\vpht(x_p))\,.
\end{equation}
Note that \eqref{q_quantum_correlation_fn_defn} is a quantum analogue of \eqref{q_classical_correln_fn_defn}. 

By analogy with \eqref{q_random_var_defn}, for a closed linear operator $\xi \in \mc{L}(\mfhp)$, we define the lift of $\xi$ to Fock space as
\begin{multline}
	\label{q_quantum_lift}
\Th_\tau(\xi) := \int \dd x_1  \ldots \dd x_p \, \dd y_1 \ldots \dd y_p \, \xi(x_1,\ldots,x_p;y_1,\ldots,y_p) \\ 
\times \vphts(x_1)\ldots\vphts(x_p)\vpht(y_1)\ldots\vpht(y_p)\,.
\end{multline}
Analogously to \eqref{H_random_variable}, in the quantum setting we have
\begin{equation*}
H_\tau = \Th_\tau(h) + \frac{1}{3} \,\Th_\tau(W)\,,
\end{equation*}
where $W$ is the $3$-body operator with kernel \eqref{W_kernel}. We note that, by Assumption \ref{w_assumption_bounded}, $\Theta_{\tau}(W)$ is a bounded operator on Fock space. 
For an operator $\mc{A}:\mc{F} \to \mc{F}$, we define the {\it quantum time evolution}
\begin{equation}
\label{q_quantum_time_evolution}
\Psi^t_\tau \mc{A} := \e^{\mathrm{i} t\tau H_\tau} \mc{A}\, \e^{-\mathrm{i}t\tau H_\tau}\,.
\end{equation}
In this paper, when working with an object $X = \rho, \gamma_p, \Th, \ldots$, we use the notation $X_{\#}$ to denote either the classical object $X$ or quantum object $X_{\tau}$.
\subsection{Results}
\label{Results}

We now state the main results of our work. In Section \ref{Time-independent results}, we state the results on the time-independent problem. In Section \ref{Time-dependent results}, we state the results on time-dependent correlations. Throughout, we fix a cut-off function $f \in C_0^{\infty}(\R)$ as in Assumption \ref{f_assumption} above.

\subsubsection{Time-independent results}
\label{Time-independent results}

We first analyse bounded interaction potentials (as in Assumption \ref{w_assumption_bounded}).
\begin{theorem}[Convergence for bounded interaction potentials]
	\label{q_bounded_state_convergence_thm}
Let $w$ be as in Assumption \ref{w_assumption_bounded}. Let $p \in \mathbb{N}^*$ be given. Consider $\gamma_{p}$ and $\gamma_{\tau,p}$ defined as in \eqref{q_classical_correln_fn_defn} and \eqref{q_quantum_correlation_fn_defn} respectively. Then, we have
\begin{equation}
\label{correlation_function_convergence_bounded}
\lim_{\tau \to \infty}\|\gamma_{\tau,p} - \gamma_p\|_{\mfS^1(\mfhp)} = 0\,.
\end{equation}
Moreover, for $z$ and $\mc{Z}_{\tau}$ defined as in \eqref{q_classical_partn_functn} and \eqref{q_relative quantum_partition_fn} respectively, we have
\begin{equation}
\label{partition_function_convergence_bounded}
\lim_{\tau \to \infty}\mc{Z}_{\tau} = z\,.
\end{equation}

\end{theorem}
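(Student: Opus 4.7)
My plan is to adapt the perturbative expansion strategy of \cite{FKSS17}, previously deployed for the two-body (cubic) problem in \cite{RS22}, to the three-body (quintic) setting at hand. The core idea is to expand both the quantum and classical interacting states in powers of the interactions $\mathcal{W}_\tau$ and $\mathcal{W}$ around the corresponding free (Gaussian) states, match the resulting series term by term as $\tau \to \infty$, and control the remainder uniformly in $\tau$.

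The first step is a Duhamel expansion. Using the standard Duhamel formula for $e^{-H_\tau} = e^{-(H_{\tau,0} + \mathcal{W}_\tau)}$, I would write
\begin{equation*}
e^{-H_\tau} f(\mathcal{N}_\tau) \,=\, \sum_{m=0}^{M-1} (-1)^m \mathcal{D}_{m,\tau} \,+\, R_{M,\tau},
\end{equation*}
where each $\mathcal{D}_{m,\tau}$ is an $m$-fold time-ordered integral of $m$ copies of $\mathcal{W}_\tau$ interleaved with free propagators $e^{-s H_{\tau,0}}$, and $R_{M,\tau}$ is the Duhamel remainder. An analogous Taylor expansion $e^{-\mathcal{W}}f(\mathcal{N}) = \sum_{m=0}^{M-1}\frac{(-1)^m}{m!}\mathcal{W}^m f(\mathcal{N}) + R_M$ holds on the classical side. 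Inserting the observables from \eqref{q_classical_correln_fn_defn} and \eqref{q_quantum_correlation_fn_defn} and normalising by the corresponding free partition functions reduces \eqref{correlation_function_convergence_bounded}--\eqref{partition_function_convergence_bounded} to (a) term-by-term convergence of each $\mathcal{D}_{m,\tau}$ to its classical analogue and (b) a summable remainder bound.

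For (a), the free quantum state $e^{-H_{\tau,0}}/Z_{\tau,0}$ is quasi-free, so a trace of $\mathcal{D}_{m,\tau}$ multiplied by a monomial in $\varphi_\tau^*,\varphi_\tau$ reduces, via repeated use of the commutation relation \eqref{CCR_formal}, to a sum over pairings. Each elementary pair is either a \emph{state contraction} --- whose two-point function converges to the Green function $h^{-1}$ by the Bose-Einstein formula and \eqref{q_covariance_ Wiener} --- or a \emph{commutator contraction} contributing an extra factor of $\tau^{-1}$. Pairings containing any commutator vanish in the limit, while fully state-contracted ones reproduce, via a Riemann-sum argument for the time-ordered integrals, the Gaussian Wick expansion of the classical $\mathcal{D}_m$ provided by Proposition \ref{q_Wick_thm}. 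For (b), on the range of $f(\mathcal{N}_\tau)$ one has $\mathcal{N}_\tau \leq K$, so boundedness of $w$ (Assumption \ref{w_assumption_bounded}) controls $\Theta_\tau(W) f(\mathcal{N}_\tau)$ in operator norm by $C\|w\|_{L^\infty}^2 K^3$, and a Schatten-interpolation argument in the spirit of \cite[Section 5]{FKSS17} would yield
\begin{equation*}
\|R_{M,\tau}\|_{\mathfrak{S}^1(\mathcal{F})} \,\leq\, \frac{C(K,\|w\|_{L^\infty})^M}{M!}\,Z_{\tau,0},
\end{equation*}
with a matching classical bound, both summable thanks to the smallness of $K$ in Assumption \ref{f_assumption}. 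Combining (a) and (b) gives \eqref{partition_function_convergence_bounded} and the weak-$\ast$ convergence $\mathrm{Tr}(\xi\gamma_{\tau,p}) \to \mathrm{Tr}(\xi\gamma_p)$ for every bounded $\xi$ on $\mfh^{(p)}$. Testing with $\xi = \mathbf{1}_p$ also yields $\mathrm{Tr}\,\gamma_{\tau,p} \to \mathrm{Tr}\,\gamma_p$, and a Gr\"umm-type theorem then upgrades this weak-plus-norm convergence to the trace-norm convergence \eqref{correlation_function_convergence_bounded}.

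The principal difficulty specific to the three-body setting lies in the combinatorics of the Wick expansion: each vertex in $\mathcal{D}_{m,\tau}$ contributes six field operators rather than four, so the number of pairings grows like $(6m+2p)!/((6!)^m\,2^p)$. Isolating those that survive the classical limit, bounding the commutator-rich ones uniformly in $\tau$, and preserving the $1/M!$ factorial decay of the remainder all require a careful diagrammatic classification and refined Schatten-norm estimates on the resulting kernels. This is where the cubic argument of \cite{RS22} needs substantial modification.
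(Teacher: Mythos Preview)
Your overall strategy --- Duhamel expansion, term-by-term convergence via a Wick-type pairing analysis, remainder control, and the duality/positivity upgrade to trace norm --- matches the paper's. However, two concrete steps are either missing or misidentified.

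First, the explicit terms $a^\xi_{\tau,m}$ in \eqref{q_quantum_explicit_defn} contain the non-polynomial factor $f(\mathcal{N}_\tau)$, so the trace you want to evaluate is \emph{not} the free quasi-free expectation of a polynomial in $\varphi_\tau^*,\varphi_\tau$, and the quantum Wick theorem does not apply directly as you claim. The paper circumvents this by first proving convergence of the \emph{untruncated} terms $\alpha^{\xi,\nu}_{\tau,m}$ (Proposition \ref{q_expansion_nugeq0}), in which $f(\mathcal{N}_\tau)$ is replaced by $e^{-\nu\mathcal{N}_\tau}$; this factor can be absorbed into $H_{\tau,0}$ by the shift $\kappa\mapsto\kappa+\nu$, restoring the quasi-free structure on which the graphical Wick analysis of Section \ref{q_untruncated} is carried out. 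Passing back from $\alpha^{\xi,\nu}_{\tau,m}$ to the truncated $a^\xi_{\tau,m}$ is done via a complex-analytic argument (the reference to \cite[Lemma 3.13]{RS22}), which uses the smoothness of $f$ --- this is in fact the only place smoothness of $f$ enters. You do not mention this step, and without it your pairing computation in (a) is not justified.

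Second, your proposed remainder bound via ``Schatten interpolation in the spirit of \cite[Section 5]{FKSS17}'' does not work here. In \cite{FKSS17} the interaction is non-negative, so $H_\tau\geq H_{\tau,0}$ and operator-monotone bounds on $e^{-t_M H_\tau}$ are available; in the present focusing setting no such inequality holds, and the remainder \eqref{q_quantum_remainder_defn} contains the full interacting propagator $e^{-t_M(H_{\tau,0}+z\mathcal{W}_\tau)}$. The paper instead uses the Feynman--Kac formula (Lemma \ref{q_quantum_remainder_bounds_lem} and Proposition \ref{q_FK_formula}) to write the $n$-particle kernel of $e^{-t_M(H_{\tau,0}+z\mathcal{W}_\tau)}f^{1/2}(\mathcal{N}_\tau)$ as a path integral, where the pointwise bound $|w|\leq\|w\|_{L^\infty}$ together with the cutoff (which forces $n/\tau\leq K$) yield the kernel domination \eqref{q_quantum_remainder_needed}. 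This is what produces the $C^M/M!$ decay; incidentally, that decay is summable for \emph{any} $K>0$ --- the smallness of $K$ in Assumption \ref{f_assumption} is needed only for the Gibbs measure itself to be well-defined (Proposition \ref{Cauchy_problem_2}(i)), not for convergence of the series.
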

To obtain a result for $w$ as in Assumption \ref{w_assumption}, we use an approximation argument. For a suitable approximation  $w^\eps$ to $w$, as defined below, any object with a superscript $\eps$ will be the corresponding object defined using $w^\eps$ instead of $w$. We first state the result for interactions as in Assumption \ref{w_assumption} (i).
\begin{theorem}[Convergence for $L^1$ interaction potentials]
	\label{q_L1_state_convergence_thm}
Let $w$ be as in Assumption \ref{w_assumption} (i). Suppose that $w^\eps$ is a sequence of interaction potentials as in Assumption \ref{w_assumption_bounded} converging to $w$ in $L^1$. 
With objects defined analogously as for Theorem \ref{q_bounded_state_convergence_thm}, there exists a sequence $\eps_\tau \rightarrow 0$ as $\tau \to \infty$ such that for any $p \in \mathbb{N}^*$, we have
\begin{equation}
\label{correlation_function_convergence_L1}
	\lim_{\tau \to \infty}\|\gamma^{\varepsilon_\tau}_{\tau,p} - \gamma_p\|_{\mfS^1(\mfhp)} = 0\,,
\end{equation}
and such that
\begin{equation}
\label{partition_function_convergence_L1}
	\lim_{\tau \to \infty} \mc{Z}^{\varepsilon_\tau}_{\tau} = z\,.
\end{equation}
\end{theorem}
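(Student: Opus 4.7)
The plan is a standard approximation-and-diagonalisation argument that reduces the $L^1$ case to the bounded case already handled by Theorem \ref{q_bounded_state_convergence_thm}. For each fixed $\varepsilon > 0$, the potential $w^\varepsilon$ satisfies Assumption \ref{w_assumption_bounded}, so Theorem \ref{q_bounded_state_convergence_thm} yields
\begin{equation*}
\lim_{\tau \to \infty}\|\gamma_{\tau,p}^{\varepsilon} - \gamma_p^{\varepsilon}\|_{\mfS^1(\mfhp)} = 0, \qquad \lim_{\tau \to \infty}\mc{Z}_{\tau}^\varepsilon = z^\varepsilon.
\end{equation*}
The task thus splits into: (a) showing that the \emph{classical} objects converge, i.e.\ $z^\varepsilon \to z$ and $\gamma_p^\varepsilon \to \gamma_p$ in $\mfS^1(\mfhp)$ as $\varepsilon \to 0$; and (b) performing a diagonal extraction $\varepsilon_\tau \to 0$.

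For step (a), the key pointwise estimate is a stability bound for the classical interaction. Using \eqref{q_classical_interaction} and Young's inequality $\|w*|\vph|^2\|_{L^3} \leq \|w\|_{L^1}\|\vph\|_{L^6}^2$, one obtains
\begin{equation*}
|\mcW^\varepsilon - \mcW| \lesssim \|w^\varepsilon - w\|_{L^1}\bigl(\|w\|_{L^1}+\|w^\varepsilon\|_{L^1}\bigr)\|\vph\|_{L^6}^6.
\end{equation*}
Since $\vph \in H^{1/3} \hookrightarrow L^6$ $\mu$-almost surely and $\|\vph\|_{L^6}^6$ has all moments under $\mu$ (it is a polynomial in Gaussians), this gives $\mcW^\varepsilon \to \mcW$ in $L^1(\mu)$, and in particular $\mu$-almost surely along a subsequence. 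The classical partition function is $z^\varepsilon = \int \e^{-\mcW^\varepsilon} f(\mcN)\,\dd\mu$, and the mass cutoff $f(\mcN)$ with $K$ small (as in Assumption \ref{f_assumption} and the hypotheses of Proposition \ref{Cauchy_problem_2}(i)) provides the uniform integrability needed to apply dominated convergence: one has a bound of the form $\e^{-\mcW^\varepsilon}f(\mcN) \leq F$ for some $F \in L^1(\mu)$ independent of $\varepsilon$, via the same Lebowitz--Rose--Speer type construction that makes the classical Gibbs measure well-defined. This yields $z^\varepsilon \to z$.

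For the correlation functions, by the duality $(\mfS^\infty)^* \supset \mfS^1$, it suffices to control $\Tr(\xi(\gamma_p^\varepsilon - \gamma_p))$ uniformly for $\xi$ in a dense subset of the unit ball of $\mfS^\infty(\mfhp)$; using \eqref{q_classical_correln_fn_defn} and \eqref{q_random_var_defn}, this quantity equals $\rho^\varepsilon(\Theta(\xi)) - \rho(\Theta(\xi))$, to which one applies the same dominated convergence argument as for the partition function, together with the uniform bound on $\Theta(\xi)$ provided by the mass cutoff (after expressing everything via the random variables $\Theta(\xi)$, which are polynomial in $\vph$, and hence controllable by powers of $\mcN$ and of $\|\vph\|_{L^6}$). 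A further approximation reducing a general $\xi \in \mfS^\infty$ to a finite rank operator and using the uniform $\mfS^1$ bounds on $\gamma_p^\varepsilon$ and $\gamma_p$ upgrades weak to norm convergence.

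Finally, the diagonal step (b) is routine: for each $n \in \N^*$, choose $\varepsilon_n = 1/n$ and pick $\tau_n$ increasing to infinity so large that
\begin{equation*}
\|\gamma_{\tau,p}^{\varepsilon_n} - \gamma_p^{\varepsilon_n}\|_{\mfS^1} + \|\gamma_p^{\varepsilon_n} - \gamma_p\|_{\mfS^1} + |\mc{Z}_\tau^{\varepsilon_n} - z^{\varepsilon_n}| + |z^{\varepsilon_n} - z| < 1/n
\end{equation*}
for all $\tau \geq \tau_n$ and $p \in \{1,\ldots,n\}$; then set $\varepsilon_\tau := \varepsilon_n$ for $\tau \in [\tau_n,\tau_{n+1})$. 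The main obstacle is step (a), specifically the uniform integrability required to pass to the limit $\varepsilon \to 0$ in the focusing classical Gibbs measure: because $\mcW$ can be negative and unbounded below, $\e^{-\mcW^\varepsilon}$ is the dangerous factor, and it is only the smallness of $K$ in the mass cutoff $f(\mcN)$ that saves the argument, exactly as in the Bourgain--Lebowitz--Rose--Speer construction of focusing Gibbs measures.
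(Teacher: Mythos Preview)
Your approach is essentially the same as the paper's: reduce to the bounded case via Theorem \ref{q_bounded_state_convergence_thm} for fixed $\varepsilon$, prove the classical convergence as $\varepsilon \to 0$ using the $L^6$ stability bound for $\mcW$ and dominated convergence with the mass cutoff, then diagonalise. The paper packages this into Lemma \ref{q_unbounded_time_ind_lemma}(i), establishing $\rho_\tau^{\varepsilon_\tau}(\Theta_\tau(\xi)) \to \rho(\Theta(\xi))$ uniformly in $\xi \in \mcCp$, and then passes to $\mfS^1$ convergence of correlation functions by duality exactly as in the proof of Theorem \ref{q_bounded_state_convergence_thm}.

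One point where your sketch is imprecise is the upgrade from testing against observables to $\mfS^1$ norm convergence. Writing $(\mfS^\infty)^* \supset \mfS^1$ and testing against a dense subset of the $\mfS^\infty$ unit ball does not yield $\mfS^1$ norm convergence, and a finite-rank approximation plus uniform $\mfS^1$ bounds is not the right mechanism either. The paper (following \cite{FKSS17} and \cite{RS22}) uses instead: first test against $\xi \in \mathfrak{B}_p$ to obtain $\mfS^2$ convergence, then test against $\xi = \mathbf{1}_p$ to obtain convergence of traces, and finally combine these with the positivity $\gamma_{\tau,p}^{\varepsilon_\tau}, \gamma_p \geq 0$ via a standard trace-class convergence lemma (\cite[Lemma 4.10]{FKSS17}) to deduce $\mfS^1$ convergence. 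This is why the paper insists on uniformity over $\mcCp = \mathfrak{B}_p \cup \{\mathbf{1}_p\}$ rather than just over Hilbert--Schmidt observables.
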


We prove an analogous result for the local problem, i.e.\ for $w=c\delta$ be as in Assumption \ref{w_assumption} (ii). In order to state the result, we define $w^{\varepsilon}$. Let $\Phi :\R \rightarrow \R$ be a continuous even function with compact support such that  
\begin{equation}
\label{int_Phi}
\int_{\R} \dd x\,\Phi(x)=c\,.
\end{equation}
For $\varepsilon \in (0,1)$, we define $w^{\varepsilon}: \T \rightarrow \R$ as
\begin{equation}
\label{w_epsilon_local_problem_definition}
w^{\varepsilon}(x):=\frac{1}{\varepsilon}\,\Phi\biggl(\frac{[x]}{\varepsilon}\biggr)\,,
\end{equation}
where $[x]$ denotes the unique element of $(x+\mathbb{Z}) \cap \mathbb{T}$. The function $w^{\varepsilon}$ defined as in \eqref{w_epsilon_local_problem_definition} is even and belongs to $L^{\infty}(\T)$.  Moreover, $w^{\varepsilon}$ converges to $c \delta$ weakly, with respect to continuous functions, i.e.\
\begin{equation*}
\int_{\T}\dd x\, w^{\varepsilon}(x)\,g(x) \rightarrow c g(0)\,\quad \text{as }\, \varepsilon \rightarrow 0
\end{equation*}
for all $g: \T \rightarrow \R$ continuous.

\begin{theorem}[Convergence for delta function interaction potentials]
\label{delta_function_state_convergence_thm}
Let $w$ be as in Assumption \ref{w_assumption} (ii). Let $w^{\varepsilon}$ be defined as in \eqref{w_epsilon_local_problem_definition} above.
There exists a sequence $\eps_\tau \rightarrow 0$ as $\tau \to \infty$ such that for any $p \in \mathbb{N}^*$, we have
\begin{equation}
\label{correlation_function_convergence_delta}
	\lim_{\tau \to \infty}\|\gamma^{\varepsilon_\tau}_{\tau,p} - \gamma_p\|_{\mfS^1(\mfhp)} = 0\,,
\end{equation}
and such that
\begin{equation}
\label{partition_function_convergence_delta}
	\lim_{\tau \to \infty} \mc{Z}^{\varepsilon_\tau}_{\tau} = z\,.
\end{equation}
\end{theorem}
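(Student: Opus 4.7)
The strategy mirrors the proof of Theorem~\ref{q_L1_state_convergence_thm}: we reduce to Theorem~\ref{q_bounded_state_convergence_thm} via the bounded mollifiers $w^\eps$, and combine the quantum-to-classical limit (in $\tau$) with a classical limit (in $\eps$) by a diagonal argument. The new feature relative to Theorem~\ref{q_L1_state_convergence_thm} is that $w^\eps$ converges to $c\delta$ only as a measure and not in $L^1$, so the classical convergence as $\eps \to 0$ must be proved directly rather than by Young's inequality.

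For the classical convergence, observe that $w^\eps$ defined by \eqref{w_epsilon_local_problem_definition} is an approximate identity with $\|w^\eps\|_{L^1(\La)} = \|\Phi\|_{L^1(\R)}$ uniformly in $\eps$. Since $\vph \in L^q(\La)$ for every finite $q$, $\mu$-almost surely, by \eqref{q_regularity_vph} together with Sobolev embedding, standard properties of approximate identities give $w^\eps * |\vph|^2 \to c\,|\vph|^2$ in every $L^r(\La)$, $\mu$-a.s. Substituting into \eqref{q_classical_interaction} yields
\begin{equation*}
\mc{W}^\eps \longrightarrow \mc{W} = -\tfrac{c^2}{3}\int \dd x\,|\vph(x)|^6 \qquad \mu\text{-a.s.}
\end{equation*}
The crucial analytic input is a uniform exponential moment bound
\begin{equation*}
\sup_{\eps \in (0,1)}\E_\mu\bigl(\e^{-q\mc{W}^\eps}\,f(\mc{N})\bigr) \leq C_q \qquad (q \geq 1),
\end{equation*}
valid for $K$ sufficiently small in \eqref{q_cut-off_support}, in the spirit of the focusing-regime estimates of \cite{OST22,Tolomeo_Weber}. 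Combined with the $\mu$-a.s.\ convergence, dominated convergence gives $z^\eps \to z$. For the correlation functions, one tests $\gamma^\eps_p$ against $\xi \in \mfS^\infty(\mfhp)$ via $\Th(\xi)$ in \eqref{q_random_var_defn}, using $|\Th(\xi)| \leq \|\xi\|_{\mfS^\infty}\,\mc{N}^p$ and the compactness of $\operatorname{supp} f$, to obtain the weak-$*$ convergence $\gamma^\eps_p \to \gamma_p$ as trace-class operators. Combined with convergence of traces $\|\gamma^\eps_p\|_{\mfS^1} = \rho^\eps(\mc{N}^p) \to \rho(\mc{N}^p) = \|\gamma_p\|_{\mfS^1}$ (applying the same dominated convergence to the scalar random variable $\mc{N}^p$), this upgrades to convergence in $\mfS^1(\mfhp)$ by a standard positivity argument.

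For the diagonal extraction, given $n \in \N^*$ we first choose $\eps_n > 0$ with $\|\gamma^{\eps_n}_p - \gamma_p\|_{\mfS^1} + |z^{\eps_n} - z| \leq 1/n$ using the classical step, then $\tau_n$ so that $\|\gamma^{\eps_n}_{\tau,p} - \gamma^{\eps_n}_p\|_{\mfS^1} + |\mc{Z}^{\eps_n}_\tau - z^{\eps_n}| \leq 1/n$ for all $\tau \geq \tau_n$ by Theorem~\ref{q_bounded_state_convergence_thm} applied to the bounded potential $w^{\eps_n}$. Setting $\eps_\tau := \eps_n$ for $\tau \in [\tau_n,\tau_{n+1})$, the triangle inequality yields \eqref{correlation_function_convergence_delta} and \eqref{partition_function_convergence_delta}.

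The hard part will be the uniform-in-$\eps$ exponential integrability. We are in the focusing regime ($-\mc{W}^\eps$ is nonnegative and can be large) while $\|w^\eps\|_{L^\infty}$ blows up as $\eps \to 0$, so controlling $\E_\mu(\e^{-q\mc{W}^\eps}f(\mc{N}))$ uniformly in $\eps$ forces crucial use of the mass cutoff, with $K$ small enough that a Gagliardo--Nirenberg-type bound on $\int|\vph|^6$ wins over the negative sign in the exponential. Without such a uniform bound, neither $z^\eps$ nor $\gamma^\eps_p$ can be shown to converge as $\eps \to 0$, and the diagonal argument collapses.
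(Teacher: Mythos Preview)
Your overall strategy is correct and matches the paper's: reduce to Theorem~\ref{q_bounded_state_convergence_thm} for each fixed $\eps$, prove the classical $\eps\to 0$ convergence, and diagonalise. Two points of comparison, and one place where you overcomplicate matters.

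For the almost sure convergence $\mcW^\eps\to\mcW$, your approximate-identity argument in physical space (using $w^\eps*|\vph|^2\to c|\vph|^2$ in $L^r$) is perfectly valid and arguably more direct than the paper's route. The paper instead expands $\mcW^\eps$ in Fourier space, notes $\widehat{w^\eps}(k)\to c$ pointwise with $\sup_\eps\|\widehat{w^\eps}\|_{\ell^\infty}\lesssim 1$, and applies dominated convergence against the majorant $\sum |\widehat\vph(k_1)|\cdots|\widehat\vph(k_6)|\,\delta(k_1-k_2+\cdots-k_6)\lesssim \|\vph\|_{H^{1/3}}^6$.

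Where you overshoot is the uniform exponential integrability $\sup_\eps \E_\mu(\e^{-q\mcW^\eps}f(\mcN))<\infty$, which you flag as the ``hard part'' and attribute to \cite{OST22,Tolomeo_Weber}. In fact it drops out immediately from tools already assembled in the paper: since $\|w^\eps\|_{L^1}=\|\Phi\|_{L^1(\R)}$ uniformly in $\eps$ (as you yourself observe), Lemma~\ref{Multlinear_estimates_1} with $w_1=w_2=w^\eps$ gives the $\eps$-independent pointwise bound $|\mcW^\eps|\leq c_0\|\vph\|_{L^6}^6$. Hence $\e^{-\mcW^\eps}\leq \e^{c_0\|\vph\|_{L^6}^6}$, and Proposition~\ref{Cauchy_problem_2}\,(i) applied to the local interaction (i.e.\ Bourgain's \cite[Lemma 3.10]{Bou94}) yields $\e^{c_0\|\vph\|_{L^6}^6}f^{1/2}(\mcN)\in L^1(\dd\mu)$ for $K$ small. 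No Gagliardo--Nirenberg balancing or variational techniques from \cite{OST22,Tolomeo_Weber} are needed; the point is that $\|w^\eps\|_{L^\infty}\to\infty$ is irrelevant because only $\|w^\eps\|_{L^1}$ enters the bound on $\mcW^\eps$. With this in hand, the dominated-convergence arguments for $z^\eps\to z$ and $\rho^\eps(\Theta(\xi))\to\rho(\Theta(\xi))$ (uniformly in $\xi\in\mcCp$) proceed exactly as in the $L^1$ case, and the rest of your sketch goes through.
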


\begin{remark}
In \cite[Appendix B]{RS22}, it was proved that one can obtain an analogue of \cite[Theorem 1.4]{RS22} with a cut-off function of the form of $f(x)=e^{-c|x|^2}$. This is not possible in the quintic case because there is no known corresponding analogue of Theorem \ref{q_bounded_state_convergence_thm} for non-negative interaction potentials without a cut-off function.
\end{remark}

\subsubsection{Time-dependent results}
\label{Time-dependent results}
We now state our time-dependent results.
Let us recall the definitions of $\rho$ and $\rt$ in \eqref{q_classical_state_defn} and \eqref{q_quantum_state_defn} respectively. Furthermore, we recall the definitions of $\Psi^t \Theta(\xi)$ and $\Psi^t_{\tau} \Theta_{\tau}(\xi)$ in \eqref{q_time_evolved_rv} and \eqref{q_quantum_lift}--\eqref{q_quantum_time_evolution}  respectively. We first state the result for bounded interaction potentials.
\begin{theorem}[Convergence for bounded potentials]
	\label{q_bounded_time_thm}
Let $w$ be as in Assumption \ref{w_assumption_bounded}. Let $m \in \mathbb{N}^*$, $p_1,\ldots,p_m \in \N^*$, $\xi_1 \in \mc{L}(\mathfrak{h}^{(p_1)}),\ldots,\xi_m \in \mc{L}(\mathfrak{h}^{(p_m)})$, and $t_1,\ldots,t_m \in \R$ be given. Then
\begin{equation}
\label{Theorem_1.12}
\lim_{\tau \to \infty}\rt(\Psi^{t_1}_\tau\Th_\tau(\xi_1) \ldots \Psi^{t_m}_\tau\Th_\tau(\xi_m)) = \rho(\Psi^{t_1}\Th(\xi_1) \ldots \Psi^{t_m}\Th(\xi_m))\,.
\end{equation}
\end{theorem}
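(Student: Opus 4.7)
The plan is to reduce the time-dependent statement \eqref{Theorem_1.12} to the time-independent Theorem \ref{q_bounded_state_convergence_thm} by expanding both the quantum and classical interacting flows perturbatively around their free counterparts. On the quantum side, the Heisenberg equation $\partial_t\Psi^t_\tau\mathcal{A} = \mathrm{i}\tau[H_\tau,\Psi^t_\tau\mathcal{A}]$ together with Duhamel's formula yields, for each $\xi_j$, an iterated expansion
\begin{equation*}
\Psi^{t_j}_\tau\Theta_\tau(\xi_j) = \sum_{n=0}^{N-1}(\mathrm{i}\tau)^n\int_{0<s_1<\ldots<s_n<t_j}\Psi^{t_j}_{\tau,0}\bigl[\Psi^{-s_n}_{\tau,0}\mathcal{W}_\tau,\bigl[\ldots,\bigl[\Psi^{-s_1}_{\tau,0}\mathcal{W}_\tau,\Theta_\tau(\xi_j)\bigr]\bigr]\bigr]\,\mathrm{d}\mathbf{s}+R^{(N)}_{\tau,j},
\end{equation*}
where $\Psi^t_{\tau,0}\mathcal{A}:=\mathrm{e}^{\mathrm{i}t\tau H_{\tau,0}}\mathcal{A}\,\mathrm{e}^{-\mathrm{i}t\tau H_{\tau,0}}$ denotes the free Heisenberg flow. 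By the canonical commutation relations \eqref{CCR_formal}, each commutator with $\mathcal{W}_\tau$ reduces the field-degree by one and produces a compensating factor of $\tau^{-1}$, so that the prefactor $(\mathrm{i}\tau)^n$ is exactly absorbed and every fixed term remains of order one as $\tau\to\infty$. An entirely analogous expansion of $\Psi^{t_j}\Theta(\xi_j)$ on the classical side follows by iterating Duhamel's formula for the quintic Hartree flow \eqref{q_Hartree_defn} around the linear Schr\"odinger flow.

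Inserting the two expansions into both sides of \eqref{Theorem_1.12} and using multilinearity produces, at each fixed order, multi-time integrals of $\rho_\#$-expectations of products of Fock-lifted operators $\Theta_\#(\eta(\mathbf{s}))$, whose kernels depend parametrically on the time variables through the unitary free evolution on $\mathfrak{h}^{(p)}$ and are therefore uniformly bounded as $\mathbf{s}$ ranges over the compact simplex of integration. Theorem \ref{q_bounded_state_convergence_thm} can be upgraded from single correlation functions $\gamma_{\tau,p}\to\gamma_p$ to arbitrary polynomials $\rho_\tau(\Theta_\tau(\eta_1)\cdots\Theta_\tau(\eta_r))\to\rho(\Theta(\eta_1)\cdots\Theta(\eta_r))$: the canonical commutation relations let one reorder all creation operators to the left at a cost of $O(\tau^{-1})$ corrections, reducing the product to a sum of single $\Theta_\tau$'s whose kernels are built from tensor products and partial contractions of the $\eta_j$'s, to which the trace-class convergence of the correlation functions applies directly, as in \cite{FKSS17, RS22}. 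Termwise convergence for each fixed $\mathbf{s}$ then follows, and dominated convergence over the compact time simplex upgrades this to convergence of each fixed-order term.

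The hard part will be the uniform-in-$\tau$ control of the remainders $R^{(N)}_{\tau,j}$ and their classical analogues, which is what will allow one to exchange $\lim_{\tau\to\infty}$ with the infinite sum over $n$. This is strictly more delicate than in the cubic case of \cite{RS22, FKSS17}: since $\mathcal{W}_\tau$ is sextic in the fields, after $n$ nested commutators one is left with polynomials of degree $4n+2p_j$ in $\varphi_\tau^*,\varphi_\tau$, and estimating their expectation under $\rho_\tau$ requires factorial-type Wick-contraction bounds combined with the mass cutoff $f(\mathcal{N}_\tau)$. Choosing the cutoff constant $K$ in Assumption \ref{f_assumption} sufficiently small, as is already needed for Proposition \ref{Cauchy_problem_2}, should render the resulting series absolutely convergent uniformly in $\tau$ for all fixed $t_1,\ldots,t_m \in \mathbb{R}$, closing the argument. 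The parallel classical remainder estimate is handled using the $L^p$-bounds on $\varphi$ implied by \eqref{q_regularity_vph} together with the cutoff $f(\mathcal{N})$.
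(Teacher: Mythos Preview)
Your overall strategy---a Schwinger--Dyson expansion of $\Psi^{t_j}_\tau\Theta_\tau(\xi_j)$ around the free flow, followed by a reduction to the time-independent Theorem~\ref{q_bounded_state_convergence_thm}---is exactly what the paper does (Lemmas~\ref{q_SD_quantum}--\ref{q_SD_classical} and the reference to \cite[Proposition~2.1]{FKSS18}). The termwise convergence argument you outline is correct.

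The genuine gap is in your remainder control. The $n$-th iterated commutator produces an operator whose degree in the fields is $2p_j+4n$, and on the cutoff sector $\mathfrak{h}^{(\leq K\tau)}$ its norm is bounded (cf.\ Lemma~\ref{q_quantum_lift_estimate}) by a quantity of size
\[
\frac{|t|^n}{n!}\,(p_j+2n)^{2n}\,K^{p_j+2n}\,\|w\|_{L^\infty}^{2n}\|\xi_j\| \;\lesssim\; \bigl(C\,K^2\,\|w\|_{L^\infty}^{2}\,|t|\bigr)^{n}\,.
\]
This is geometric in $n$ with ratio proportional to $|t|$, so the Dyson series converges absolutely only for $|t|<\bigl(CK^2\|w\|_{L^\infty}^2\bigr)^{-1}$. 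Since $K$ is fixed once and for all in Assumption~\ref{f_assumption} (see Remark~\ref{f_assumption_remark}) while the theorem must hold for \emph{arbitrary} $t_1,\dots,t_m$, you cannot shrink $K$ to accommodate large times. The ``factorial-type Wick-contraction bounds'' you mention do not improve this: the $1/n!$ from the simplex is exactly cancelled by the combinatorial growth of the nested commutators.

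What the paper (via \cite[Lemma~3.9]{FKSS18}) does instead is a time-slicing argument: write $\Psi^t_\tau=(\Psi^{t/N}_\tau)^N$ with $N$ large enough that $|t|/N$ falls inside the radius of convergence, apply the finite Dyson expansion on each short step, and observe that the output $\sum_{l}\Theta_\tau(e^l)$ is again a sum of lifts of bounded operators to which the next step's expansion applies. This yields, for any $\varepsilon>0$, a \emph{finite} approximation $\bigl\|\bigl(\Psi^t_\tau\Theta_\tau(\xi)-\sum_{l=0}^L\Theta_\tau(e^l)\bigr)\big|_{\mathfrak{h}^{(\leq K\tau)}}\bigr\|<\varepsilon$ valid for all large $\tau$, rather than an infinite convergent series. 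The same iteration works on the classical side (Lemma~\ref{q_SD_classical}). Once you insert this time-slicing, your argument closes.
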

As in the time-independent problem, we can use an approximation result to prove results for $w$ as in Assumption \ref{w_assumption}.

\begin{theorem}[Convergence for $L^1$ potentials]
\label{q_L1_time_thm}
Let $w$ be as in Assumption \ref{w_assumption} (i). Let $w^\eps$ be defined as in Theorem \ref{q_L1_state_convergence_thm} above. Then there is a sequence $\varepsilon_\tau \rightarrow 0$ as $\tau \to \infty$ such that, for all $m \in \mathbb{N}^*$, $p_1,\ldots,p_m \in \N^*$, $\xi_1 \in \mc{L}(\mathfrak{h}^{(p_1)}),\ldots,\xi_m \in \mc{L}(\mathfrak{h}^{(p_m)})$, and $t_1,\ldots,t_m \in \R$, we have
\begin{equation*}
\lim_{\tau \rightarrow \infty} \rt^{\eps_\tau}(\Psi^{t_1}_\tau\Th_\tau(\xi_1) \ldots \Psi^{t_m}_\tau\Th_\tau(\xi_m)) = \rho(\Psi^{t_1}\Th(\xi_1) \ldots \Psi^{t_m}\Th(\xi_m))\,.
\end{equation*}
\end{theorem}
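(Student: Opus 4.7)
The plan is to reduce the statement to Theorem \ref{q_bounded_time_thm} by a diagonal extraction in $\varepsilon$, mirroring the reduction of Theorem \ref{q_L1_state_convergence_thm} to Theorem \ref{q_bounded_state_convergence_thm}. For each fixed $\varepsilon>0$ the approximation $w^{\varepsilon}$ satisfies Assumption \ref{w_assumption_bounded}, so Theorem \ref{q_bounded_time_thm} applied with $w$ replaced by $w^{\varepsilon}$ yields
\begin{equation*}
\lim_{\tau \to \infty} \rt^{\varepsilon} \bigl(\Psi^{t_1}_\tau \Th_\tau(\xi_1) \cdots \Psi^{t_m}_\tau \Th_\tau(\xi_m)\bigr) = \rho^{\varepsilon} \bigl(\Psi^{t_1,\varepsilon} \Th(\xi_1) \cdots \Psi^{t_m,\varepsilon} \Th(\xi_m)\bigr),
\end{equation*}
where $\Psi^{t,\varepsilon}$ denotes the classical flow \eqref{q_Hartree_defn} with $w$ replaced by $w^{\varepsilon}$. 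It then suffices to establish the classical stability statement that the right-hand side converges to $\rho \bigl(\Psi^{t_1}\Th(\xi_1) \cdots \Psi^{t_m}\Th(\xi_m)\bigr)$ as $\varepsilon \to 0$, and then to pick $\varepsilon_{\tau} \to 0$ sufficiently slowly.

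The classical limit I would approach in two steps. First, the convergence of the Gibbs measure densities $\e^{-\mathcal{W}^{\varepsilon}}f(\mathcal{N}) \to \e^{-\mathcal{W}}f(\mathcal{N})$ in $L^1(\dd \mu)$ follows (essentially as in the proof of Theorem \ref{q_L1_state_convergence_thm}) from $\mathcal{W}^{\varepsilon}\to \mathcal{W}$ $\mu$-a.s., together with a $\mu$-integrable dominating function coming from the cut-off $f(\mathcal{N})$ of Assumption \ref{f_assumption}; here the pointwise convergence uses $\|w-w^{\varepsilon}\|_{L^1}\to 0$ and the $\mu$-a.s.\ bound $\varphi \in L^6$ supplied by Sobolev embedding. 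Second, I would prove continuity of the classical solution map in the potential: for $\mu$-a.e.\ $\omega$ in $\{f(\mathcal{N})\neq 0\}$,
\begin{equation*}
S^{\varepsilon}_{t}\varphi^{\omega} \longrightarrow S_{t}\varphi^{\omega} \quad \text{in } H^{s} \text{ as } \varepsilon \to 0
\end{equation*}
for some $s \in (0,1/2)$. This would be obtained by writing the Duhamel difference between the two flows, using the global control from Proposition \ref{Cauchy_problem_2} on the mass-truncated support, and running a Gronwall argument in which the contribution of $w-w^{\varepsilon}$ is estimated by $\|w-w^{\varepsilon}\|_{L^1}$ times a polynomial in $\|\varphi\|_{H^s}$.

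Combining the two ingredients, the random variable $\Psi^{t_j,\varepsilon}\Th(\xi_j)$ converges $\mu$-a.s.\ to $\Psi^{t_j}\Th(\xi_j)$ by continuity of the polynomial expression \eqref{q_time_evolved_rv} in the $H^s$-topology and the boundedness of each $\xi_j$, while being dominated uniformly in $\varepsilon$ by a polynomial in $\mathcal{N}$ that is $\mu$-integrable against $f(\mathcal{N})$ on the support of the cut-off. Dominated convergence then yields the classical limit, and a standard diagonal extraction (as for Theorem \ref{q_L1_state_convergence_thm}) delivers the required sequence $\varepsilon_{\tau}\to 0$.

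The main obstacle I expect is the second step of the classical limit: quantifying the continuous dependence of the focusing quintic Hartree flow on the potential $w$ uniformly on the mass-truncated support of the Gibbs measure. The quintic nonlinearity is more delicate than the cubic case of \cite{RS22} because the relevant multilinear estimates require higher-order control of $\|\varphi\|_{L^6}$ or $\|\varphi\|_{H^s}$, and the resulting bound must be strong enough to survive $\mu$-integration against $f(\mathcal{N})$ rather than merely holding pointwise in $\omega$.
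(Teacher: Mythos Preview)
Your overall strategy—reduce to Theorem \ref{q_bounded_time_thm} for fixed $\varepsilon$, prove classical stability as $\varepsilon\to 0$, then extract diagonally—is exactly the paper's approach. The paper packages the diagonal step via Lemma \ref{diagonal_argument_lemma} applied over the sets $\Gamma_k$ of \eqref{Gamma_k}, but this is just bookkeeping.

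The place where your proposal needs sharpening is the continuous-dependence step, and your diagnosis of the difficulty is slightly off. You worry that the bound ``must be strong enough to survive $\mu$-integration,'' but in fact the domination is immediate: mass conservation for \eqref{q_Hartree_approximation_cauchy_problem_2} gives $|\Psi^{t,\varepsilon}\Theta(\xi)|\le \|\xi\|\,\mathcal{N}^p$, and $f(\mathcal{N})$ has compact support, so the dominated convergence argument is trivial once pointwise convergence is known (this is \eqref{Thm_unbounded_4} in the paper). The real issue is the \emph{pointwise} convergence $S_t^\varepsilon\varphi\to S_t\varphi$ in $H^s$ on a set of full $\mathbf{P}^f_{\mathrm{Gibbs}}$-measure. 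Here a Gronwall argument controlled by a polynomial in $\|\varphi\|_{H^s}$ alone does not work: the quintic problem has no $L^2$ local theory, so the mass cut-off by itself gives no control on $\|u(t)\|_{H^s}$, and there is no bound on the solution that is polynomial in the initial $H^s$ norm. The paper's Lemma \ref{q_Hartree_approximation_lemma} handles this by exploiting the quantitative structure of the good set $\mathcal{G}$ from Proposition \ref{Cauchy_problem_2}(ii): on each piece $\mathcal{K}_{\vartheta,A}$ of \eqref{G_union}--\eqref{K_eta,A} one has Bourgain's a priori bound $\|u(t)\|_{H^s}\le A\log((1+|t|)/\vartheta)^{s+\nu}=:\mathcal{A}$, and it is this global $H^s$ control on the \emph{limiting} solution $u$ (not on $u^\varepsilon$) that drives an inductive local-in-time bootstrap on intervals of length $\delta$ with $\delta^\theta\mathcal{A}^4\ll 1$. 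At each step one shows $\|u^\varepsilon(k\delta)\|_{H^s}\le \mathcal{A}+1$ inductively, which keeps the local well-posedness machinery running for $u^\varepsilon$ as well. This is precisely the content of the paper's remark that the argument ``requires full use of the set $\mathcal{G}$,'' and it is the missing mechanism in your sketch.
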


\begin{theorem}[Convergence for delta function potentials]
\label{q_delta_time_thm}
Let $w$ be as in Assumption \ref{w_assumption} (i). Let $w^\eps$ be defined as in Theorem \ref{delta_function_state_convergence_thm} above. Then there is a sequence $\varepsilon_\tau \rightarrow 0$ as $\tau \to \infty$ such that, for all $m \in \mathbb{N}^*$, $p_1,\ldots,p_m \in \N^*$, $\xi_1 \in \mc{L}(\mathfrak{h}^{(p_1)}),\ldots,\xi_m \in \mc{L}(\mathfrak{h}^{(p_m)})$, and $t_1,\ldots,t_m \in \R$, we have
\begin{equation*}
\lim_{\tau \rightarrow \infty} \rt^{\eps_\tau}(\Psi^{t_1}_\tau\Th_\tau(\xi_1) \ldots \Psi^{t_m}_\tau\Th_\tau(\xi_m)) = \rho(\Psi^{t_1}\Th(\xi_1) \ldots \Psi^{t_m}\Th(\xi_m))\,.
\end{equation*}
\end{theorem}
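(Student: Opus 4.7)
The strategy is to reduce to the bounded case covered by Theorem \ref{q_bounded_time_thm} and then approximate $w = c\delta$ by the bounded mollifiers $w^{\eps}$ from \eqref{w_epsilon_local_problem_definition}, mirroring the way Theorem \ref{delta_function_state_convergence_thm} is derived from Theorem \ref{q_bounded_state_convergence_thm} and Theorem \ref{q_L1_time_thm} from Theorem \ref{q_bounded_time_thm}. For each fixed $\eps > 0$ the potential $w^\eps$ is even and belongs to $L^\infty(\T)$, so Theorem \ref{q_bounded_time_thm} applied with $w$ replaced by $w^\eps$ yields
\[
\lim_{\tau \to \infty} \rt^{\eps}\bigl(\Psi^{t_1,\eps}_\tau\Th_\tau(\xi_1) \cdots \Psi^{t_m,\eps}_\tau\Th_\tau(\xi_m)\bigr) = \rho^{\eps}\bigl(\Psi^{t_1,\eps}\Th(\xi_1) \cdots \Psi^{t_m,\eps}\Th(\xi_m)\bigr),
\]
where, following the convention stated before Theorem \ref{q_L1_state_convergence_thm}, the superscript $\eps$ denotes the $w^\eps$-version of every object.

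The second step is to pass to the limit $\eps \to 0$ on the right-hand side. This needs two ingredients. First, convergence of the classical cut-off Gibbs measures $\mbP^{f,\eps}_{\mathrm{Gibbs}} \to \mbP^{f}_{\mathrm{Gibbs}}$ in total variation, which follows from the $\mu$-almost sure convergence $\mcW^\eps \to \mcW$ together with uniform $L^p(\dd\mu)$ bounds on $\e^{-\mcW^\eps} f(\mcN)$ of the type used to construct the limiting measure in Theorem \ref{delta_function_state_convergence_thm}. Second, stability of the classical flow $S^\eps_t \vph \to S_t \vph$ at $\mu$-a.e.\ $\vph$, in a topology strong enough to guarantee $L^r(\dd\mu)$ convergence of the polynomial observables $\Psi^{t_j,\eps}\Th(\xi_j)$ towards $\Psi^{t_j}\Th(\xi_j)$. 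Together these inputs give
\[
\rho^{\eps}\bigl(\Psi^{t_1,\eps}\Th(\xi_1) \cdots \Psi^{t_m,\eps}\Th(\xi_m)\bigr) \longrightarrow \rho\bigl(\Psi^{t_1}\Th(\xi_1) \cdots \Psi^{t_m}\Th(\xi_m)\bigr)
\]
as $\eps\to 0$. Combining the two steps via a Cantor diagonal extraction over a countable dense set of test operators, times, and orders $m$, and using the uniform-in-$\tau$ a priori correlation bounds coming from the perturbative expansion of \cite{FKSS17} to pass from the dense family to arbitrary $\xi_j \in \mcL(\mfhp)$ and $t_j \in \R$, yields a sequence $\eps_\tau \to 0$ with the desired property.

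The main obstacle is the classical stability $S^\eps_t \vph \to S_t \vph$ in the singular limit $w^\eps \to c\delta$: since the approximation is only weak (against continuous functions) and not controlled in any $L^p$-norm, the uniform-in-$\eps$ $L^1$-continuity argument available for the proof of Theorem \ref{q_L1_time_thm} cannot be used directly. Instead one must exploit dispersive smoothing for the $1$D Schr\"odinger propagator --- most naturally Bourgain's $X^{s,b}$ and $L^6_{t,x}$ Strichartz framework from \cite{Bou93,Bou94} --- to derive local-in-time estimates on $S^\eps_t\vph - S_t\vph$ that are uniform in $\eps$ and apply to initial data of regularity $H^{1/2-\eps}$. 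Iterating these bounds globally via the mass cut-off $f(\mcN)$, which is preserved by the flow thanks to $L^2$ conservation, would then deliver the required stability on the full support of the Gibbs measure and complete the argument.
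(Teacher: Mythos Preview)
Your proposal has the correct overall architecture and correctly isolates the two steps (Theorem \ref{q_bounded_time_thm} at fixed $\eps$, then $\eps\to 0$ on the classical side) as well as the main difficulty (that $w^\eps\to c\delta$ only weakly, so the $L^1$-continuity argument behind Theorem \ref{q_L1_time_thm} is unavailable). The paper indeed proceeds exactly this way, the substitute for the $L^1$ input being the approximation Lemma \ref{q_Hartree_approximation_lemma_delta}, proved in $X^{s,b}$ spaces.

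There is, however, one imprecision that matters. You propose to iterate the local-in-time stability ``globally via the mass cut-off $f(\mcN)$, which is preserved by the flow thanks to $L^2$ conservation''. Mass conservation only controls $\|u\|_{L^2}$, and this is not enough to re-run the local $X^{s,b}$ contraction on successive intervals: the local existence time from Proposition \ref{Cauchy_problem_1} depends on $\|u_0\|_{H^s}$ for $s>0$. The paper instead uses the a priori $H^s$ bound on the \emph{limiting} solution $u$ supplied by the almost sure global well-posedness of Proposition \ref{Cauchy_problem_2}(ii): every $\psi\in\mathcal{G}$ lies in some $\mathcal{K}_{\vartheta,A}$ with $\|u(t)\|_{H^s}\le A\log((1+|t|)/\vartheta)^{s+\nu}=:\mathcal{A}$. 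One then shows inductively that $\|u^\eps(k\delta)-u(k\delta)\|_{H^s}\to 0$, hence $\|u^\eps(k\delta)\|_{H^s}\le\mathcal{A}+1$ for small $\eps$, which lets the local step be repeated. In the quintic problem this reliance on the Gibbs-measure GWP (rather than on mass alone) is essential; see Remark after Lemma \ref{q_Hartree_approximation_lemma}.

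A second, smaller point: the local step in Lemma \ref{q_Hartree_approximation_lemma_delta} does not yield a quantitative uniform-in-$\eps$ bound but only \emph{qualitative} convergence, obtained by dominated convergence in the spacetime Fourier variable (using $\widehat{w}^\eps(k)\to c$ pointwise and $\sup_\eps\|\widehat{w}^\eps\|_{\ell^\infty}<\infty$, with the quintic estimate \eqref{quintic_estimate_known} supplying the majorant). The diagonalisation is then the abstract Lemma \ref{diagonal_argument_lemma} over the nested sets $\Gamma_k$ of \eqref{Gamma_k}, with uniformity in $\xi$ over norm-balls, rather than a Cantor extraction over a countable dense family of observables.
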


We make the following remarks about the time-dependent results given in Theorems \ref{q_bounded_time_thm}--\ref{q_delta_time_thm} above.

\begin{remark}
\label{PDE_relevance_remark}
Theorems \ref{q_bounded_time_thm}--\ref{q_delta_time_thm} give the first microscopic derivation of time-dependent correlation functions for a quintic NLS.
From the PDE point of view, the study of time-dependent correlation functions is more relevant for quintic nonlinearities than for cubic ones. Namely, in the latter case, one can study the global well-posedness theory in $\mathfrak{h}$ without using Gibbs measures \cite{Bou93}. When studying the quintic problem, local well-posedness is known in $H^s(\mathbb{T})$ only for $s>0$. Therefore, one needs to use the invariance of the Gibbs measure as the substitute for a conservation law that allows us to obtain (almost sure) global solutions \cite{Bou94}.
\end{remark}
\begin{remark}
By arguing as in \cite[Remark 1.4]{FKSS18}, we can also recover the invariance of the Gibbs measure \eqref{q_gibbs_cutoff_defn_rigorous} for \eqref{q_Hartree_defn} from Theorem \ref{q_L1_time_thm}.
\end{remark}
\begin{remark}
We note that Theorems \ref{q_bounded_time_thm}, \ref{q_L1_time_thm}, and \ref{q_delta_time_thm} are generalisations of Theorems \ref{q_bounded_state_convergence_thm}, \ref{q_L1_state_convergence_thm}, and  \ref{delta_function_state_convergence_thm} respectively. This can be seen by taking $m=1$ and $t_1 = 0$, and arguing by duality as in \cite[Remark 1.7 (5)]{RS22}. Alternatively, see \eqref{duality_1}--\eqref{duality_3} below.
\end{remark}

\subsection{Previously known results}
\label{Previously known results}

One refers to the results in Section \ref{Time-independent results} as a microscopic derivation of a Gibbs measure for the nonlinear Schr\"{o}dinger equation from quantum many-body Gibbs states. Physically, one can interpret this as a high-density limit where the mass (or temperature) of the system tends to infinity. One can also interpret this as a semiclassical (mean-field) limit with semiclassical parameter $1/\tau$; see \cite[Section 1.1]{FKSS20} for a detailed explanation.
The first such result was obtained by Lewin, Nam, and Rougerie \cite{LNR15}. In this work, the authors study the one-dimensional problem with positive translation-invariant interaction, and variants in higher dimensions, which do not require renormalisation. The method in \cite{LNR15} is based on a variational approach and the quantum de Finetti theorem. Further extensions in one dimension were given in \cite{LNR18}. The problem for translation-invariant (Wick-ordered) interactions when the dimension $d=2,3$ was subsequently studied by Fr\"{o}hlich, Knowles, Schlein, and the second author using a perturbative expansion in the interaction and Borel resummation techniques \cite{FKSS17}. Here, the authors could prove the result under a suitable modification of the grand canonical ensemble. The methods of \cite{FKSS17} were extended to positive interaction potentials $w \in L^p$ for optimal $p$ (in the sense of the work of Bourgain \cite{Bou97}) by the second author \cite{Soh19}. 

The full result when $d=2,3$ without the modification of the grand canonical ensemble introduced in \cite{FKSS17} was later shown simultaneously and using different methods by Lewin, Nam, and Rougerie \cite{LNR21} and Fr\"{o}hlich, Knowles, Schlein, and the second author \cite{FKSS20}. The $d=2$ result in \cite{LNR21} had previously been announced in \cite{LNR18b}. The method in \cite{LNR18b,LNR21} is a highly non-trivial extension of that used in \cite{LNR15}. The method in \cite{FKSS20} is based on a functional integral formulation. In \cite{FKSS22}, the $d=2$ result of \cite{FKSS20} was extended to the complex Euclidean $\Phi^4_2$ theory, which corresponds to taking $w=\delta$.

Results on time-dependent correlations were first considered when $d=1$ in \cite{FKSS18}. Related problems on the lattice were considered in \cite{FKSS20_3,Kno09,Salmhofer}. For more details on the previously known literature, we refer the reader to the expository works \cite{LNR19} and \cite{FKSS20_2}, as well as to the introduction of \cite{RS22}.

All of the aforementioned results deal with positive (defocusing) interactions. The first known result for focusing interactions (or more generally for interactions without any assumption on their sign) was recently obtained by the authors when $d=1$ in \cite{RS22}. In the quantum setting, we considered a model involving two-body interactions which led to a quartic nonlinearity in the classical setting. In this framework, we had to truncate the mass of the classical free field as in \eqref{q_gibbs_cutoff_defn_rigorous}, \eqref{q_classical_state_defn}, and \eqref{q_quantum_state_defn} above.
In \cite{RS22}, we could also study time-dependent correlations.

The present paper gives the first result for three-body interactions, which in turn yield a quintic nonlinearity in the limit $\tau \rightarrow \infty$. Our methods rely crucially on the presence of the cut-off $f$. This is a natural assumption in the focusing regime; see \cite{Bou94,CFL16,LRS88,Li_Liang_Wang,Liang_Wang,OST22,RSTW,Xian}. In the defocusing regime, one would expect the result to hold without a cut-off (see Remark \ref{Setup_1} above).

It is crucial for our analysis that the limiting problem \eqref{q_Hartree_defn} is Hamiltonian; see Lemma \ref{Hamiltonian_Lemma} above. Our goal is to obtain the microscopic derivation of the Gibbs measure for a quintic classical effective evolution, which we show is possible in our setting.  For related works on three-body interactions in the quantum many-body problem and their classical limits, we refer the reader to \cite{CP10,Che12,Lee,Nam_Ricaud_Triay_1,Nam_Ricaud_Triay_2,Nam_Ricaud_Triay_3,Xie15}. 




\subsection{Outline of proof}
We begin by proving local well-posedness, existence of the Gibbs measure, and almost sure global well-posedness for \eqref{q_Hartree_defn}; see Propositions \ref{Cauchy_problem_1} and \ref{Cauchy_problem_2} for precise statements. This is done by proving suitable multilinear estimates in Lemmas \ref{Multlinear_estimates_1} and \ref{Multlinear_estimates_2} which allow us to apply the argument from \cite{Bou94}. Here, we need to exploit the convolution structure of the nonlinearity, which is more complicated than that for the cubic problem studied previously in \cite{RS22}.

For the quantum many-body problem, we use a perturbative approach, similarly as in \cite{FKSS17,FKSS18,RS22}. Namely, we use a Duhamel expansion to perform a perturbative expansion of $\e^{-H_\#}$. As in \cite{RS22}, the  cut-off $f(\mc{N}_\#)$ in the definitions \eqref{q_classical_state_defn} and \eqref{q_quantum_state_defn} means the resulting explicit terms of the series both converge absolutely, meaning we avoid needing to use the Borel resummation techniques used in \cite{FKSS17,FKSS18}. 

The remainder terms of the series expansion in the quantum setting are analysed using the Feynman-Kac formula, which allows us to rewrite the remainder term using the explicit terms. We thus obtain a series with infinite radius of convergence of the resulting series. 
In the classical setting, our remainder term is also analysed using the support properties of the cut-off function. 

To complete the analysis in the bounded case, we prove convergence of the explicit terms as $\tau \to \infty$ by using the complex analytic techniques in \cite[Section 3.6]{RS22}, which is the only place we use the technical assumption that $f$ is smooth. We also need sufficient bounds on the corresponding untruncated terms, see Proposition \ref{q_expansion_nugeq0} below. To prove the required bounds, we use a diagrammatic representation of the terms, similar to \cite[Sections 2 and 4]{FKSS17}, but now adapted to the quintic and normal ordered case. To complete the time-independent analysis for unbounded potentials, we use a diagonalisation argument similarly as in \cite[Section 4]{RS22}; see also \cite[Section 5]{FKSS18}.

To treat the time-dependent case for bounded interaction potentials, we use a Schwinger-Dyson expansion, first used in \cite{FKSS18} and also used in \cite{RS22}. This allows us to deduce Theorem \ref{q_bounded_time_thm} from Theorem \ref{q_bounded_state_convergence_thm}. Finally, to prove Theorems \ref{q_L1_time_thm} and \ref{q_delta_time_thm}, we need to prove an approximation result analogous to \cite[Lemma 5.4]{RS22}.  A challenge that presents itself is that is that there is no local well-posedness result for \eqref{q_Hartree_defn} in $\mfh$.
To overcome this, we need to explicitly use the almost sure global well-posedness of \eqref{q_Hartree_defn} and a suitable approximation result; see Lemma \ref{q_Hartree_approximation_lemma} for precise details. The corresponding approximation result for \eqref{focusing_quintic_NLS} requires slightly different arguments, and is shown in Lemma \ref{q_Hartree_approximation_lemma}.

\subsection{Structure of the paper}
In Section \ref{Cauchy problem}, we prove local well-posedness and the existence of the Gibbs measure for the quintic Hartree equation. Section \ref{q_Power_series_expansions_of_the_quantum_and_classical_states} contains the analysis of the Duhamel expansions of the classical and quantum states. This includes proving bounds for both the explicit and remainder terms in both the classical and quantum cases. In Section \ref{q_untruncated}, we prove convergence for the untruncated explicit terms, which completes the proof of Theorem \ref{q_bounded_state_convergence_thm}. In Section \ref{q_Unbounded_cases_section}, we study the time-independent problem for interaction potentials as in Assumption \ref{w_assumption}. Section \ref{q_Time_dependent_case} is devoted to the analysis of the time-dependent problem, both for bounded and unbounded interaction potentials. 
In Appendix \ref{Appendix A}, we summarise our results on a slightly different model for the quintic Hartree equation that is studied in detail in \cite[Chapter 4]{Rout_Thesis_2023} and \cite{RS23}.



\section{Analysis of the quintic Hartree equation \eqref{q_Hartree_defn}}
\label{Cauchy problem}


In this section, we consider $w$ as in Assumption \ref{w_assumption} and study the Cauchy problem for the quintic Hartree equation \eqref{q_Hartree_defn}, written as
\begin{align}
\label{q_Hartree_Cauchy_problem}
\begin{cases}
\mathrm{i}\partial_t u + (\Delta - \kappa)u =-\frac{1}{3}\,\int \dd y \, \dd z \, w(x-y)\,w(x-z)\,|u(y)|^2\,|u(z)|^2\,u(x)&
\\
-\frac{2}{3}\,\int \dd y \, \dd z \, w(x-y)\,w(y-z)\,|u(y)|^2\,|u(z)|^2\,u(x)&
\\
u|_{t=0}=u_0 \in H^s(\La)\,.
\end{cases}
\end{align}

\subsection{Deterministic local well-posedness and invariance of the Gibbs measure}

We first prove the following deterministic local well-posedness result.
\begin{proposition}[Deterministic local existence for \eqref{q_Hartree_Cauchy_problem}]
\label{Cauchy_problem_1}
The Cauchy problem \eqref{q_Hartree_Cauchy_problem} is locally well-posed in $H^s(\La)$ for $s>0$.
\end{proposition}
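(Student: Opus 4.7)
The goal is a standard local-in-time contraction-mapping argument in Bourgain's $X^{s,b}$ spaces on $\T$, following Bourgain's treatment of the focusing quintic NLS in \cite{Bou93}. Write \eqref{q_Hartree_Cauchy_problem} in Duhamel form
\begin{equation*}
u(t) = \e^{\mathrm{i} t(\Delta - \kappa)} u_0 - \mathrm{i}\int_0^t \e^{\mathrm{i}(t-t')(\Delta-\kappa)}\,\mcN(u)(t')\,\dd t'\,,
\end{equation*}
where $\mcN(u)$ denotes the full quintic right-hand side of \eqref{q_Hartree_Cauchy_problem}. For fixed $s>0$, I would fix $b \in (1/2, 1/2 + \delta)$ for sufficiently small $\delta > 0$ and set up a fixed-point map on a ball in $X^{s,b}_T$ for small $T>0$, using the standard linear estimates (homogeneous solution and Duhamel operator into $X^{s,b}_T$, with an admissible gain $T^{\theta}$ when passing from $X^{s,b-1}_T$ back to $X^{s,b}_T$).

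The analytic heart of the argument is the quintic multilinear estimate
\begin{equation*}
\|\mcN(u)\|_{X^{s,b-1}_T} \;\lesssim\; \|u\|_{X^{s,b}_T}^5\,,
\end{equation*}
together with its multilinear version for differences. This is precisely what is provided later by Lemmas \ref{Multlinear_estimates_1} and \ref{Multlinear_estimates_2}, which I would invoke as black boxes here. Combined with the usual time-localization gain, this reduces the Cauchy problem to a contraction on a small ball in $X^{s,b}_T$, yielding existence, uniqueness, and continuous dependence on data in the standard way.

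The real work is the multilinear estimate itself. To produce it, I would first exploit the convolution structure: under Assumption \ref{w_assumption} (i) with $w \in L^1(\Lambda)$, Young's convolution inequality (applied pointwise in $t$) lets one bound $\|w*|u|^2\|_{L^p} \lesssim \|w\|_{L^1}\,\|u\|_{L^{2p}}^2$, and similarly for the nested convolution $w*(|u|^2(w*|u|^2))$ that appears in the second term on the right-hand side of \eqref{q_Hartree_Cauchy_problem}. This reduces matters to an estimate on pure monomials of the form $\prod_{j=1}^5 u^{\#_j}$, which is handled by Bourgain's $L^6_{t,x}$ Strichartz estimate on $\T$ together with the embedding $X^{0,b} \hookrightarrow L^6_{t,x}$ for $b>1/2$, losing $|k|^{0+}$ only on the highest frequency (which is absorbed into the $H^s$ norm for $s>0$). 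Under Assumption \ref{w_assumption} (ii) with $w = c\delta$, the convolutions disappear and $\mcN(u)$ reduces to a multiple of $|u|^4 u$, so the estimate is immediate from Bourgain.

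The main obstacle is organizing the bookkeeping for the nested-convolution second term cleanly at the multilinear level (for differences), and in particular matching the frequency-localized weights in $X^{s,b-1}$ to the convolution structure so that no derivative loss on $w$ is incurred. Since $w \in L^1$ only, all derivatives must land on $u$-factors, but the worst frequency is always controlled via the $X^{s,b}$ weight on the highest-frequency factor by the standard quintic case-analysis; the other four factors are absorbed by $L^6_{t,x}$. Once the multilinear estimate is secured, the contraction argument is routine and gives local well-posedness in $H^s(\La)$ for every $s>0$, as claimed.
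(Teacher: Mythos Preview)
Your proposal is correct and follows essentially the same route as the paper: a contraction in $X^{s,b}_{[0,\delta]}$ with $b=\tfrac12+$, using the Duhamel linear estimates together with the quintic multilinear bound of Lemma~\ref{Multlinear_estimates_2} as a black box, exactly as the paper does.

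One small comment on your sketch of how to \emph{prove} Lemma~\ref{Multlinear_estimates_2}: the physical-space Young inequality ``pointwise in $t$'' on $\|w*|u|^2\|_{L^p_x}$ does not by itself reduce the $X^{s,b-1}$ norm of the Hartree nonlinearity to that of a pure monomial, because the $X^{s,b}$ weights live in the spacetime Fourier side and mix $k$ and $\eta$. The paper instead works directly in Fourier: taking the spacetime Fourier transform of $\mathcal{N}_1,\mathcal{N}_2$, the factors $\widehat{w}(\cdot)$ appear multiplicatively and are bounded by $\|w\|_{L^1}$ via Hausdorff--Young, yielding a \emph{pointwise} bound $|\widetilde{\mathcal{N}_j}(k,\eta)| \le \|w\|_{L^1}^2\,(F_1\bar F_2F_3\bar F_4F_5)\,\widetilde{\,}\,(k,\eta)$ with $\widetilde F_j=|\widetilde v_j|$. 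This pointwise domination is what transfers the $X^{s,b-1}$ norm to the pure quintic, after which one cites the known estimate \eqref{quintic_estimate_known}. Since you invoke the lemma as a black box anyway, this does not affect the correctness of your proof of Proposition~\ref{Cauchy_problem_1}.
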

The above result was shown for local interactions $w$ (as in Assumption \ref{w_assumption} (ii)) in \cite[Theorem 1]{Bou93}. Our aim is to show it for nonlocal interactions $w$ as in Assumption \ref{w_assumption} (i).

We then prove the following probabilistic result, which was shown for local interactions in \cite[Section 4]{Bou94}.

\begin{proposition}[Invariance of truncated Gibbs measure and almost sure global existence for \eqref{q_Hartree_Cauchy_problem}]
\label{Cauchy_problem_2}
The following claims hold.
\begin{itemize}
\item[(i)]  Recall the probability space $(\mathbb{C}, \mc{G}, \mu)$ defined in \eqref{q_rigorous_gibbs_defn}, $\vph \equiv \vph^\om$ defined in \eqref{q_class_free_field}, and $\mathcal{N}$ defined in \eqref{q_mass}. For $B>0$ sufficiently small, we have 
\begin{equation*}
\e^{\frac{1}{3} \int \dd x \, \dd y \, \dd z \, w(x-y)\,w(x-z)\,|\vph(x)|^2|\vph(y)|^2|\vph(z)|^2}  \chi_{(\mcN \leq B)} \in L^1(\dd\mu)\,.
\end{equation*}
In particular, taking $K=B$ in \eqref{q_cut-off_support}, we get that the probability measure $\mbP^f_{\mathrm{Gibbs}}$ in \eqref{q_gibbs_cutoff_defn_rigorous} is well-defined.
\item[(ii)] Consider $s \in (0,\frac{1}{2})$. The measure $\mbP^f_{\mathrm{Gibbs}}$ is invariant under the flow of \eqref{q_Hartree_Cauchy_problem}. Furthermore, \eqref{q_Hartree_Cauchy_problem} admits global solutions for $\mbP^f_{\mathrm{Gibbs}}$-almost every $u_0 \in H^s(\Lambda)$.
\end{itemize}
\end{proposition}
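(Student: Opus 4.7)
The proof has two parts, and for both I would follow Bourgain's approach from \cite{Bou94} to the local focusing quintic NLS on $\mathbb{T}$, adapted to the nonlocal convolution structure of \eqref{q_Hartree_Cauchy_problem}.

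For part (i), the key reduction is the Young--H\"older estimate
\begin{equation*}
\int \dd x\, \dd y\, \dd z\, w(x-y)\,w(x-z)\,|\vph(x)|^2|\vph(y)|^2|\vph(z)|^2 = \int \dd x\, (w*|\vph|^2)^2(x)\,|\vph(x)|^2 \leq \|w\|_{L^1}^2 \|\vph\|_{L^6}^6,
\end{equation*}
which reduces everything to the bound $\e^{C \|\vph\|_{L^6}^6}\chi_{(\mc{N} \leq B)} \in L^1(\dd\mu)$ for $C = \|w\|_{L^1}^2/3$ and $B$ sufficiently small. This is precisely the integrability statement Bourgain proves for the focusing local quintic. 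I would establish it via Galerkin truncation $\vph_N := P_{\leq N}\vph$ combined with the mass-critical Gagliardo--Nirenberg inequality $\|u\|_{L^6}^6 \leq C_*\|u\|_{L^2}^4\|\partial_x u\|_{L^2}^2$ on $\mathbb{T}$. On the cutoff set this gives $C\|\vph_N\|_{L^6}^6 \leq CC_*B^2\|\partial_x\vph_N\|_{L^2}^2$, and for $B$ small enough that $CC_*B^2 < 1$, Gaussian concentration for the quadratic form $\|\partial_x \vph_N\|_{L^2}^2$ yields uniform-in-$N$ integrability. The claim transfers to $\vph$ by Fatou's lemma, using the $\mu$-a.s.\ convergence $\mc{W}(\vph_N) \to \mc{W}(\vph)$, which follows from the Sobolev embedding $H^{1/3}(\mathbb{T}) \hookrightarrow L^6(\mathbb{T})$ and $\vph \in H^{1/2-\eps}$ $\mu$-a.s.

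For part (ii), I would apply Bourgain's globalization scheme. First, the Galerkin-truncated Hartree equation obtained by projecting onto Fourier modes $|k| \leq N$ is a Hamiltonian ODE on the finite-dimensional phase space $P_{\leq N}\mfh$; by Liouville's theorem and preservation of both the truncated mass and the truncated Hamiltonian, the truncated Gibbs measure $\dd\mbP_{\mathrm{Gibbs}}^{f,N}$ is invariant under the truncated flow $S^N_t$. Next, combining Proposition \ref{Cauchy_problem_1} with an approximation argument based on the multilinear estimates used to prove it, I would show that $S^N_t u_0 \to S_t u_0$ in $H^s$ on the local existence interval, uniformly for $u_0$ in $H^s$-balls. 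A standard Borel--Cantelli argument---iterating the local theory over short time intervals while using invariance of the truncated measures to control the $H^s$-ball containing the solution at each step---then yields almost-sure global existence for $\mbP^f_{\mathrm{Gibbs}}$-a.e.\ initial data, and invariance of $\mbP^f_{\mathrm{Gibbs}}$ under $S_t$ follows by passing to the limit $N \to \infty$.

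The main obstacle I anticipate is the mass-critical nature of the quintic nonlinearity on $\mathbb{T}^1$: the Gibbs measure construction is only possible for a cutoff threshold $B$ strictly below a critical value determined by the sharp Gagliardo--Nirenberg constant (the soliton mass in the local case). This forces the constant $K$ in Assumption \ref{f_assumption} to be chosen small, with the admissible threshold depending on $\|w\|_{L^1}$; quantitatively tracking this dependence is the central analytic task. A secondary obstacle for (ii) is the absence of any deterministic conservation law controlling $H^s$-norms of solutions in the focusing regime for $s < 1/2$, so globalization must rely entirely on invariance of the truncated Gibbs measures together with quantitative control of the approximation error between truncated and full Hartree flows.
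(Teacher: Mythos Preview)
Your overall strategy matches the paper's for both parts: for (i), reduce via the Young--H\"older bound to Bourgain's $L^6$-exponential-integrability result \cite[Lemma 3.10]{Bou94}; for (ii), run Bourgain's Galerkin-truncation / approximation / invariance-iteration scheme from \cite[Sections 3--4]{Bou94}, using the multilinear estimates behind Proposition \ref{Cauchy_problem_1} to compare the truncated and full Hartree flows. The paper does exactly this and refers the remaining details to \cite{Bou94}.

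There is, however, a genuine gap in your sketch of the integrability step in (i). The argument ``on the cutoff set $C\|\varphi_N\|_{L^6}^6 \le CC_*B^2\|\partial_x\varphi_N\|_{L^2}^2$, and Gaussian concentration for $\|\partial_x\varphi_N\|_{L^2}^2$ gives uniform-in-$N$ integrability'' does not work as stated. Under $\mu$ one has $\|\partial_x\varphi_N\|_{L^2}^2=\sum_{|k|\le N}\frac{4\pi^2|k|^2}{\lambda_k}|\omega_k|^2$ with coefficients tending to $1$, so $\mathbb{E}\|\partial_x\varphi_N\|_{L^2}^2\sim 2N$ diverges, and for any $\alpha\in(0,1)$ the exponential moment $\int e^{\alpha\|\partial_x\varphi_N\|_{L^2}^2}\,d\mu=\prod_{|k|\le N}(1-\alpha\,4\pi^2|k|^2/\lambda_k)^{-1}$ blows up like $(1-\alpha)^{-(2N+1)}$. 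The mass cutoff $\sum_k|\omega_k|^2/\lambda_k\le B$ does not rescue this, since the weights $1/\lambda_k\sim k^{-2}$ impose essentially no constraint on high-frequency modes. This is precisely the mass-critical obstruction you flag in your final paragraph: Gagliardo--Nirenberg gives the correct heuristic threshold $CC_*B^2<1$, but it cannot be converted into a uniform bound by a direct exponential-moment computation.

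Bourgain's actual proof (summarised in \cite[Appendix A]{RS22} and, for the related model, in Appendix \ref{Appendix A} of the present paper) instead establishes the tail bound
\[
\mu\bigl(\|\varphi\|_{L^6}>\lambda,\ \mathcal N\le B\bigr)\ \lesssim\ \exp\bigl(-c(\lambda/B)^{\beta}\lambda^2\bigr)
\]
for a suitable $\beta>0$, via a Littlewood--Paley decomposition: low frequencies are controlled by the mass cutoff, while each high-frequency dyadic block is handled by Bernstein's inequality together with a union bound over an $\varepsilon$-net in the dual sphere. The exponential integrability then follows from layer-cake. Your sketch should be amended to point to this large-deviation argument rather than a direct Gaussian-quadratic-form computation.
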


Before proving Propositions \ref{Cauchy_problem_1} and \ref{Cauchy_problem_2}, we note several multilinear estimates. 

\begin{lemma}
\label{Multlinear_estimates_1}
Let $w_1,w_2 \in L^1(\La)$ be given. Then, for $q \in H^{\frac{1}{3}}(\La)$, we have
\begin{equation*}
\biggl|\int \dd x\,\dd y\,\dd z\, w_1(x-y)\,w_2(y-z)\,|q(x)|^2\,|q(y)|^2\,|q(z)|^2\biggr| 
\leq \|w_1\|_{L^1}\,\|w_2\|_{L^1}\,\|q\|_{L^6}^6\,.
\end{equation*}
\end{lemma}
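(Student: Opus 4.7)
The plan is to reduce the integral to a product of three translated copies of $|q|^2$ by a change of variables, and then apply Hölder's inequality. First I would perform the substitution $u := x-y$, $v := y-z$, which gives $x = y+u$, $z = y-v$, with unit Jacobian on $\Lambda^3 \simeq \T^3$. After taking absolute values and using Fubini, the triple integral becomes
\begin{equation*}
\int \dd u \, \dd v \, |w_1(u)|\,|w_2(v)| \int \dd y \, |q(y+u)|^2\,|q(y)|^2\,|q(y-v)|^2\,.
\end{equation*}

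Next I would bound the inner integral uniformly in $u,v$. Applying Hölder's inequality with three factors in $L^3(\La)$ yields
\begin{equation*}
\int \dd y \, |q(y+u)|^2\,|q(y)|^2\,|q(y-v)|^2 \leq \bigl\||q(\cdot+u)|^2\bigr\|_{L^3}\, \bigl\||q|^2\bigr\|_{L^3}\, \bigl\||q(\cdot-v)|^2\bigr\|_{L^3} = \|q\|_{L^6}^6\,,
\end{equation*}
where the last equality uses translation invariance of the $L^3$ norm on the torus and $\bigl\||q|^2\bigr\|_{L^3} = \|q\|_{L^6}^2$. Substituting this bound back in yields $\|w_1\|_{L^1}\|w_2\|_{L^1}\|q\|_{L^6}^6$, as desired.

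The only reason $H^{1/3}$ enters the hypothesis is to ensure the right-hand side is finite: in one dimension, the Sobolev embedding $H^{1/3}(\La) \hookrightarrow L^6(\La)$ (which corresponds to the endpoint $1/6 = 1/2 - 1/3$) guarantees $\|q\|_{L^6} \lesssim \|q\|_{H^{1/3}}$. There is no real obstacle here; the estimate is essentially a symmetric version of Young's convolution inequality followed by Hölder, and it works uniformly because all three factors $|q|^2$ contribute the same $L^3$ norm after translation.
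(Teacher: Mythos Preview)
Your proof is correct and essentially the same as the paper's: the paper rewrites the integral as $\int \dd y\, (w_1*|q|^2)(y)\,(w_2*|q|^2)(y)\,|q(y)|^2$, applies H\"older with three $L^3$ factors, and then Young's inequality $\|w_j*|q|^2\|_{L^3}\le\|w_j\|_{L^1}\|q\|_{L^6}^2$, which is exactly your change of variables plus translation invariance unpacked. The only cosmetic difference is that you perform the substitution and H\"older directly instead of passing through the convolution/Young formulation.
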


\begin{proof}
We write 
\begin{multline*}
\int \dd x\,\dd y\,\dd z\, w_1(x-y)\,w_2(y-z)\,|q(x)|^2\,|q(y)|^2\,|q(z)|^2
\\
=\int \dd x\, \bigl(w_1*|q|^2\bigr)(x)\,\bigl(w_2*|q|^2\bigr)(x)\,|q(x)|^2\,.
\end{multline*}
By H\"{o}lder's and Young's inequality, the above expression is in absolute value
\begin{equation*}
\leq \|w_1*|q|^2\|_{L^3}\, \|w_2*|q|^2\|_{L^3}\,\|q\|_{L^6}^2 \leq \|w_1\|_{L^1}\,\|w_2\|_{L^1}\,\|q\|_{L^6}^6\,.
\end{equation*}
\end{proof}

We recall the definition of the $X^{s,b}$ spaces (or the dispersive Sobolev spaces). For a function $v: \T \times \R \to \C$, we denote its spacetime Fourier transform by
\begin{equation}
\label{spacetime_Fourier_transform}
\tilde{v}(k,\eta) := \int_\R \dd t\, \int_\T\dd x\,  v(x,t) \, \e^{-2\pi \mathrm{i} k x - 2 \pi \mathrm{i} \eta t}.
\end{equation}
\begin{definition}
\label{q_Xsb_defn}
Given $v:\T \times \R \to \C$ and $s,b \in \R$, we define
\begin{equation}
\label{X^{s,b}_norm}
\|v\|_{X^{s,b}} := \left\| \left(1+|2\pi k|\right)^s\left(1 + |\eta + 2\pi k^2|\right)^b\tilde{v}(k,\eta)\right\|_{\ell^2_k L^2_\eta}=\|\e^{-\mathrm{i} t \Delta} \,v \|_{H^s_x H^b_t}\,.
\end{equation}
For an interval $I \subset \R$, we define the local $X^{s,b}$ space norm as
\begin{equation}
\label{local_X^{s,b}_norm}
\|v\|_{X^{s,b}_I} := \inf_V \|V\|_{X^{s,b}}\,,
\end{equation}
where the infimum is taken over all $V:\T \times \R \to \C$ satisfying $V|_{\T \times I} = v|_{\T \times I}$.
\end{definition}

\begin{lemma}
\label{Multlinear_estimates_2}
Consider $w_1,w_2 \in L^1(\La)$.
Given $s,\varepsilon>0$ and $v_j \in X^{s,1/2+\varepsilon}$, for $j=1,\ldots,5$, we let
\begin{multline}
\label{N_1_definition}
\mathcal{N}_1(v_1,v_2,v_3,v_4,v_5)(x,t):=
\\
\int \dd y\,\dd z\,w_1(x-y)\,w_2(x-z)\,v_1(y,t)\,\overline{v_2(y,t)}\,v_3(z,t)\,\overline{v_4(z,t)}\,v_5(x,t)
\end{multline}
and 
\begin{multline}
\label{N_2_definition}
\mathcal{N}_2(v_1,v_2,v_3,v_4,v_5)(x,t):=
\\
\int \dd y\,\dd z\,w_1(x-y)\,w_2(y-z)\,v_1(y,t)\,\overline{v_2(y,t)}\,v_3(z,t)\,\overline{v_4(z,t)}\,v_5(x,t)\,.
\end{multline}
For $\varepsilon>0$ sufficiently small, we have that for all $t_0 \in \R$ and $\delta>0$ small
\begin{multline}
\label{Multlinear_estimates_2_claim}
\|\mathcal{N}_1(v_1,v_2,v_3,v_4,v_5)\|_{X^{s,-\frac{1}{2}+\varepsilon}_{[t_0,t_0+\delta]}} +
\|\mathcal{N}_2(v_1,v_2,v_3,v_4,v_5)\|_{X^{s,-\frac{1}{2}+\varepsilon}_{[t_0,t_0+\delta]}} 
\\
\lesssim_{s,\varepsilon} \delta^{\varepsilon}\, \|w_1\|_{L^1}\,\|w_2\|_{L^1}\,\prod_{j=1}^{5}\|v_j\|_{X^{s,\frac{1}{2}+\varepsilon}_{[t_0,t_0+\delta]}}\,.
\end{multline}

\end{lemma}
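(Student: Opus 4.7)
The plan is to reduce both bounds to Bourgain's quintic $X^{s,b}$ multilinear estimate on $\T$ from \cite{Bou93}, which asserts that for $s > 0$ and $\varepsilon > 0$ sufficiently small,
\begin{equation*}
\|u_1 \overline{u_2}\, u_3\, \overline{u_4}\, u_5\|_{X^{s,-\frac{1}{2}+\varepsilon}_{[t_0, t_0+\delta]}} \lesssim_{s,\varepsilon} \delta^{\varepsilon} \prod_{j=1}^{5}\|u_j\|_{X^{s,\frac{1}{2}+\varepsilon}_{[t_0, t_0+\delta]}}\,.
\end{equation*}
Taking this as a black box, the remainder of the argument is a clean reduction via a change of variables, Minkowski's integral inequality, and translation invariance of the $X^{s,b}$ norm.

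First, I will rewrite $\mathcal{N}_1$ and $\mathcal{N}_2$ as integrals of pointwise products of spatial translates of the $v_j$'s against $w_1 \otimes w_2$. The substitution $(a,b) = (x-y,\, x-z)$ in \eqref{N_1_definition} recasts $\mathcal{N}_1(x,t)$ as the integral over $(a,b) \in \La^2$ of $w_1(a)\,w_2(b)\, v_1(x-a,t)\, \overline{v_2(x-a,t)}\, v_3(x-b,t)\, \overline{v_4(x-b,t)}\, v_5(x,t)$, while the substitution $(a,b) = (x-y,\, y-z)$ in \eqref{N_2_definition} similarly recasts $\mathcal{N}_2(x,t)$ as the integral of $w_1(a)\,w_2(b)\, v_1(x-a,t)\, \overline{v_2(x-a,t)}\, v_3(x-a-b,t)\, \overline{v_4(x-a-b,t)}\, v_5(x,t)$. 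In both cases, the integrand at fixed $(a,b)$ is a pentaproduct of spatial translates of the $v_j$'s.

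Next, I apply Minkowski's integral inequality for the Banach space $X^{s,-\frac{1}{2}+\varepsilon}_{[t_0,t_0+\delta]}$ to pull the norm inside the $(a,b)$-integral. The $X^{s,b}$ norm is invariant under spatial translations, since the map $u(\cdot,\cdot)\mapsto u(\cdot-c,\cdot)$ only multiplies the spacetime Fourier transform \eqref{spacetime_Fourier_transform} by a phase factor of modulus one; this invariance carries over to the local version \eqref{local_X^{s,b}_norm} by taking infima over extensions. Applying Bourgain's quintic estimate to each pentaproduct of translates and then invoking translation invariance bounds the internal $X^{s,-\frac{1}{2}+\varepsilon}_{[t_0,t_0+\delta]}$-norm by $\delta^{\varepsilon}\prod_{j=1}^{5}\|v_j\|_{X^{s,\frac{1}{2}+\varepsilon}_{[t_0,t_0+\delta]}}$ uniformly in $(a,b)$. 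Integrating the weight $|w_1(a)||w_2(b)|$ then yields the factor $\|w_1\|_{L^1}\|w_2\|_{L^1}$, which is \eqref{Multlinear_estimates_2_claim}.

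The main obstacle in principle is the quintic multilinear $X^{s,b}$ estimate itself, which rests on Bourgain's $L^6$ Strichartz estimate on $\T$ and the associated number-theoretic counting lemmas from \cite{Bou93}. Once this is cited as a black box, the additional convolution structure coming from $w_1, w_2 \in L^1$ is absorbed essentially for free by the change-of-variables/Minkowski/translation-invariance trick above; the conjugation pattern $v_1 \bar v_2 v_3 \bar v_4 v_5$ dictated by the quintic nonlinearity is exactly the one covered by Bourgain's estimate, so it causes no further difficulty.
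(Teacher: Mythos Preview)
Your argument is correct and takes a genuinely different route from the paper. The paper works entirely on the Fourier side: it writes $\mathcal{N}_1=[w_1*(v_1\bar v_2)][w_2*(v_3\bar v_4)]\,v_5$ (and an analogous nested convolution for $\mathcal{N}_2$), computes the spacetime Fourier transform explicitly, bounds $|\widehat w_i|\le\|w_i\|_{L^1}$ by Hausdorff--Young, and thereby dominates $|\widetilde{\mathcal N_i}(k,\eta)|$ by the spacetime Fourier transform of a pure pentaproduct $F_1\bar F_2 F_3\bar F_4 F_5$ with $\widetilde F_j=|\widetilde v_j|$. It then invokes the global quintic estimate (cited from \cite{Erdogan_Tzirakis}) and passes to the local norm via the standard localisation bound $\|v\|_{X^{s,b_1}_{[0,\delta]}}\lesssim\delta^{b_2-b_1}\|v\|_{X^{s,b_2}_{[0,\delta]}}$. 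Your physical-space approach---change of variables to exhibit $\mathcal N_i$ as an $(a,b)$-integral of translated pentaproducts, Minkowski to pull the norm inside, translation invariance of the $X^{s,b}$ norm, then the local quintic estimate as a black box---is arguably more direct and avoids any Fourier computation. What the paper's Fourier argument buys is a template that transfers to nonlinearities without a clean convolution structure (cf.\ the alternative three-body model in Appendix~\ref{Appendix A}, where the kernel $w(x-y)w(y-z)w(z-x)$ does not factor through simple translations and the Fourier-side H\"older step is essential); your method is tied to the fact that both $\mathcal N_1$ and $\mathcal N_2$ can be written as $L^1$ superpositions of translated local quintic nonlinearities.
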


\begin{proof}
It suffices to prove the result for smooth $v_j$. The general claim follows by density.
We first prove
\begin{equation}
\label{Multlinear_estimates_2_claim_2_1}
\|\mathcal{N}_1(v_1,v_2,v_3,v_4,v_5)\|_{X^{s,-\frac{1}{2}+\varepsilon}_{[t_0,t_0+\delta]}}
\lesssim_{s,\varepsilon} \delta^{\varepsilon}\, \|w_1\|_{L^1}\,\|w_2\|_{L^1}\,\prod_{j=1}^{5}\|v_j\|_{X^{s,\frac{1}{2}+\varepsilon}_{[t_0,t_0+\delta]}}\,.
\end{equation}
Before proving \eqref{Multlinear_estimates_2_claim_2_1}, we show the following global in time estimate.
\begin{equation}
\label{Multlinear_estimates_2_claim_1}
\|\mathcal{N}_1(v_1,v_2,v_3,v_4,v_5)\|_{X^{s,-\frac{1}{2}+\varepsilon}} \lesssim_{s,\varepsilon} \|w_1\|_{L^1}\,\|w_2\|_{L^1}\,\prod_{j=1}^{5}\|v_j\|_{X^{s,\frac{1}{2}-\varepsilon}}\,.
\end{equation}
Note that we can rewrite \eqref{N_1_definition} as
\begin{equation}
\label{N_1_convolution}
\mathcal{N}_1(v_1,v_2,v_3,v_4,v_5)(x,t)=\bigl[ w_1*(v_1 \overline{v}_2) \bigr]\,\bigl[ w_2*(v_3 \overline{v}_4) \bigr]\,v_5(x,t)\,,
\end{equation}
where $*$ denotes convolution in $x$. Using \eqref{N_1_convolution}, we obtain that for all $k \in \Z$ and $\eta \in \R$
\begin{multline}
\label{N_1_claim}
\bigl(\mathcal{N}_1(v_1,v_2,v_3,v_4,v_5)\bigr)\,\widetilde{\,}\,(k,\eta)
\\
= \sum_{k_1,\ldots,k_5} \int \dd \eta_1 \cdots \dd \eta_5\, \delta(k_1-k_2+k_3-k_4+k_5-k)\,\delta(\eta_1-\eta_2+\eta_3-\eta_4+\eta_5-\eta)\,
\\
\times \widehat{w}_1(k_1-k_2)\,\widehat{w}_2(k_3-k_4)\,\widetilde{v}_1(k_1,\eta_1)\,\overline{\widetilde{v}_2(k_2,\eta_2)}\,\widetilde{v}_3(k_3,\eta_3)\,\overline{\widetilde{v}_4(k_4,\eta_4)}\,\widetilde{v}_5(k_5,\eta_5)\,.
\end{multline}
In particular, from \eqref{N_1_claim} and the Hausdorff-Young inequality, we deduce that
\begin{multline}
\label{N_1_claim_A}
\big|\bigl(\mathcal{N}_1(v_1,v_2,v_3,v_4,v_5)\bigr)\,\widetilde{\,}\,(k,\eta)\big|
\leq \|w_1\|_{L^1}\,\|w_2\|_{L^1}\,
\\
\times \sum_{k_1,\ldots,k_5} \int \dd \eta_1 \cdots \dd \eta_5\, \delta(k_1-k_2+k_3-k_4+k_5-k)\,\delta(\eta_1-\eta_2+\eta_3-\eta_4+\eta_5-\eta)\,
\\
\times |\widetilde{v}_1(k_1,\eta_1)|\,|\widetilde{v}_2(k_2,\eta_2)|\,|\widetilde{v}_3(k_3,\eta_3)|\,|\widetilde{v}_4(k_4,\eta_4)|\,|\widetilde{v}_5(k_5,\eta_5)|\,.
\end{multline}
By Parseval's theorem for the spacetime Fourier transform, we note that the right-hand side of \eqref{N_1_claim_A} can be written as 
\begin{equation}
\label{N_1_claim_B}
\|w_1\|_{L^1}\,\|w_2\|_{L^1}\, (F_1F_2F_3F_4F_5)\,\widetilde{\,}\,(k,\eta)\,,
\end{equation}
where for $j=1,\ldots,5$, the function $F_j$ is chosen such that 
\begin{equation}
\label{F_j_choice}
\widetilde{F}_j=|\widetilde{v}_j|\,.
\end{equation}
Recalling Definition \ref{q_Xsb_defn} and using \eqref{N_1_claim_A}--\eqref{N_1_claim_B}, we obtain that
\begin{equation}
\label{N_2_claim_corollary}
\|\mathcal{N}_1(v_1,v_2,v_3,v_4,v_5)\|_{X^{s,-\frac{1}{2}+\varepsilon}} \leq \|w_1\|_{L^1}\,\|w_2\|_{L^1}\,\,\|F_1F_2 F_3 F_4 F_5\|_{X^{s,-\frac{1}{2}+\varepsilon}}\,.
\end{equation}
We now use the known quintic estimate 
\begin{equation}
\label{quintic_estimate_known}
\|F_1F_2F_3F_4F_5\|_{X^{s,-\frac{1}{2}+\varepsilon}} \lesssim_{s,\varepsilon} \prod_{j=1}^{5}\|F_j\|_{X^{s,\frac{1}{2}-\varepsilon}}\,,
\end{equation}
for $\varepsilon>0$ sufficiently small.
For a proof, see \cite[Proof of (3.56)]{Erdogan_Tzirakis}. Strictly speaking, the claim \cite[(3.56)]{Erdogan_Tzirakis} is stated for all $F_j$ being equal, but the proof based on the trilinear estimates \cite[Lemma 3.29, Corollary 3.30]{Erdogan_Tzirakis} extends to the general case verbatim. We omit the details. By \eqref{F_j_choice} and Definition \ref{q_Xsb_defn}, we have that for $j=1,\ldots,5$
\begin{equation}
\label{F_j_choice_norm}
\|F_j\|_{X^{s,\frac{1}{2}-\varepsilon}}=\|v_j\|_{X^{s,\frac{1}{2}-\varepsilon}}\,.
\end{equation}
The estimate \eqref{Multlinear_estimates_2_claim_1} then follows from \eqref{N_2_claim_corollary}--\eqref{F_j_choice_norm}.

We now use \eqref{Multlinear_estimates_2_claim_1} to show \eqref{Multlinear_estimates_2_claim_2_1} when $t_0=0$. The claim \eqref{Multlinear_estimates_2_claim_2_1} for general $t_0$ follows by a suitable translation in time. We recall \eqref{local_X^{s,b}_norm}, use \eqref{Multlinear_estimates_2_claim_1} for $V_j \in  X^{s,b}$ such that $V_j|_{\T \times [0,\delta]}=v_j|_{\T \times [0,\delta]}$ for $j=1,\ldots,5$, take infima over $V_j$, and deduce the claim \eqref{Multlinear_estimates_2_claim_2_1} for $t_0=0$ by using the localisation property\footnote{In fact, this argument shows \eqref{Multlinear_estimates_2_claim_2} with $\delta^{\varepsilon}$ replaced by $\delta^{5\varepsilon-}$, but we will not need this estimate in the sequel.}
\begin{equation}
\label{X^{s,b}_localisation}
\|v\|_{X^{s,b_1}_{[0,\delta]}} \lesssim_{s} \delta^{b_2-b_1}\,\|v\|_{X^{s,b_2}_{[0,\delta]}}\,,
\end{equation}
for $-\frac{1}{2}<b_1 \leq b_2<\frac{1}{2}$. For a self-contained proof of \eqref{X^{s,b}_localisation}, we refer the reader to 
\cite[Lemma 3.11]{Erdogan_Tzirakis}; see also \cite[Lemma 2.11]{Tao}. The estimate \eqref{Multlinear_estimates_2_claim_2_1} now follows.

We now show 
\begin{equation}
\label{Multlinear_estimates_2_claim_2_2}
\|\mathcal{N}_2(v_1,v_2,v_3,v_4,v_5)\|_{X^{s,-\frac{1}{2}+\varepsilon}_{[t_0,t_0+\delta]}}
\lesssim_{s,\varepsilon} \delta^{\varepsilon}\, \|w_1\|_{L^1}\,\|w_2\|_{L^1}\,\prod_{j=1}^{5}\|v_j\|_{X^{s,\frac{1}{2}+\varepsilon}_{[t_0,t_0+\delta]}}\,.
\end{equation}
By arguing as for \eqref{Multlinear_estimates_2_claim_2_1}, \eqref{Multlinear_estimates_2_claim_2_2} follows if we show
\begin{equation}
\label{Multlinear_estimates_2_claim_2}
\|\mathcal{N}_2(v_1,v_2,v_3,v_4,v_5)\|_{X^{s,-\frac{1}{2}+\varepsilon}} \lesssim_{s,\varepsilon} \|w_1\|_{L^1}\,\|w_2\|_{L^1}\,\prod_{j=1}^{5}\|v_j\|_{X^{s,\frac{1}{2}-\varepsilon}}\,.
\end{equation}
Note that we can rewrite \eqref{N_2_definition} as
\begin{equation}
\label{N_2_convolution}
\mathcal{N}_2(v_1,v_2,v_3,v_4,v_5)(x,t)=
\biggl(w_1* \Bigl\{v_1 \overline{v}_2 \bigl[w_2*\bigl(v_3 \overline{v}_4\bigr)\bigr] \Bigr\}\biggr)v_5
(x,t)\,.
\end{equation}
Using \eqref{N_2_convolution}, we obtain that for all $k \in \Z$ and $\eta \in \R$
\begin{multline}
\label{N_2_claim}
\bigl(\mathcal{N}_2(v_1,v_2,v_3,v_4,v_5)\bigr)\,\widetilde{\,}\,(k,\eta)
\\
= \sum_{k_1,\ldots,k_5} \int \dd \eta_1 \cdots \dd \eta_5\, \delta(k_1-k_2+k_3-k_4+k_5-k)\,\delta(\eta_1-\eta_2+\eta_3-\eta_4+\eta_5-\eta)\,
\\
\times \widehat{w}_1(k_1-k_2+k_3-k_4)\,\widehat{w}_2(k_3-k_4)
\\
\times
\widetilde{v}_1(k_1,\eta_1)\,\overline{\widetilde{v}_2(k_2,\eta_2)}\,\widetilde{v}_3(k_3,\eta_3)\,\overline{\widetilde{v}_4(k_4,\eta_4)}\,\widetilde{v}_5(k_5,\eta_5)\,.
\end{multline}
Using \eqref{N_2_claim}, and arguing as for \eqref{N_1_claim_A}, we obtain 
\begin{multline}
\label{N_2_claim_A}
\big|\bigl(\mathcal{N}_2(v_1,v_2,v_3,v_4,v_5)\bigr)\,\widetilde{\,}\,(k,\eta)\big|
\leq \|w_1\|_{L^1}\,\|w_2\|_{L^1}\,
\\
\times \sum_{k_1,\ldots,k_5} \int \dd \eta_1 \cdots \dd \eta_5\, \delta(k_1-k_2+k_3-k_4+k_5-k)\,\delta(\eta_1-\eta_2+\eta_3-\eta_4+\eta_5-\eta)\,
\\
\times |\widetilde{v}_1(k_1,\eta_1)|\,|\widetilde{v}_2(k_2,\eta_2)|\,|\widetilde{v}_3(k_3,\eta_3)|\,|\widetilde{v}_4(k_4,\eta_4)|\,|\widetilde{v}_5(k_5,\eta_5)|\,.
\end{multline}
We deduce \eqref{Multlinear_estimates_2_claim_2} by using \eqref{N_2_claim_A} and arguing analogously as in the proof of \eqref{Multlinear_estimates_2_claim_1}.
The claim \eqref{Multlinear_estimates_2_claim} follows from  \eqref{Multlinear_estimates_2_claim_2_1} and \eqref{Multlinear_estimates_2_claim_2_2}.
\end{proof}

Let us recall several general properties of $X^{s,b}$ spaces.  

\begin{lemma}[Properties of $X^{s,b}$ spaces]
\label{X^{s,b}_space_properties}
Let us fix $b=\frac{1}{2}+\varepsilon$ for $\varepsilon>0$ small and $s \in \R$. The following estimates hold for all $t_0 \in \R$ and $\delta>0$.
\begin{itemize}
\item[(i)] For all $u \in X^{s,b}_{[t_0,t_0+\delta]}$, we have $\|u\|_{L^{\infty}_{t \in [t_0,t_0+\delta]} H^s_x} \lesssim_b \|u\|_{X^{s,b}_{[t_0,t_0+\delta]}}$.
\item[(ii)] For all $\Phi \in H^s$, we have 
\begin{equation}
\label{(ii)_claim}
\|\mathrm{e}^{\mathrm{i}(t-t_0)\Delta} \Phi\|_{X^{s,b}_{[t_0,t_0+\delta]}} \lesssim_{b} \|\Phi\|_{H^s}\,.
\end{equation}
\item[(iii)] For all $F \in X^{s,b-1}_{[t_0,t_0+\delta]}$, we have
\begin{equation}
\label{(iii)_claim}
\biggl\|\int_{t_0}^t \dd t'\, \mathrm{e}^{\mathrm{i}(t-t')\Delta}\,F(\cdot,t')\biggr\|_{X^{s,b}_{[t_0,t_0+\delta]}} \lesssim_{b} \|F\|_{X^{s,b-1}_{[t_0,t_0+\delta]}}\,.
\end{equation}
\end{itemize}

\end{lemma}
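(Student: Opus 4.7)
The plan is to reduce all three claims to the standard global-in-time theory of $X^{s,b}$ spaces. Since the $X^{s,b}$ norm is invariant under time translations (a shift $t \mapsto t - t_0$ produces only a phase $e^{-2\pi\mathrm{i}\eta t_0}$ in the dual variable $\eta$, which does not affect absolute values), it suffices to treat the case $t_0 = 0$. Fix once and for all a cutoff $\psi \in C_c^\infty(\mathbb{R})$ with $\psi \equiv 1$ on $[0,\delta]$ and compact support; the implicit constants may depend on $\psi$ and hence on $\delta$, but this is harmless since $\delta$ is a fixed parameter.

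For claim (i), let $V \in X^{s,b}$ be any extension of $u$ from $[0,\delta]$. Using Fourier inversion in $t$ and the Cauchy--Schwarz inequality in $\eta$,
\begin{equation*}
|\hat V(k,t)|^2 \le \left(\int_{\mathbb{R}} \frac{d\eta}{(1 + |\eta + 2\pi k^2|)^{2b}}\right)\left(\int_{\mathbb{R}} (1 + |\eta + 2\pi k^2|)^{2b}\, |\tilde V(k,\eta)|^2\, d\eta\right).
\end{equation*}
For $b = \tfrac{1}{2} + \varepsilon$ the first factor is a finite constant $C_b$ that does not depend on $k$; multiplying by $(1 + |2\pi k|)^{2s}$, summing over $k$, taking the supremum in $t$, and then the infimum over extensions $V$ yields (i).

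For claim (ii), the free propagation $e^{\mathrm{i}t\Delta}\Phi$ is not globally in $X^{s,b}$ since its spacetime Fourier transform is supported on the paraboloid $\eta + 2\pi k^2 = 0$. The truncated object $\psi(t)\, e^{\mathrm{i}t\Delta}\Phi$ has spacetime Fourier transform $\hat\Phi(k)\, \hat\psi(\eta + 2\pi k^2)$, so a direct calculation gives
\begin{equation*}
\|\psi\, e^{\mathrm{i}t\Delta}\Phi\|_{X^{s,b}}^2 = \|\psi\|_{H^b}^2 \cdot \|\Phi\|_{H^s}^2.
\end{equation*}
Restricting to $[0,\delta]$ yields \eqref{(ii)_claim}.

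The main obstacle is claim (iii), which is the classical Bourgain inhomogeneous estimate. Given $F \in X^{s,b-1}_{[0,\delta]}$, pick an extension $G \in X^{s,b-1}$ and define
\begin{equation*}
U(x,t) := \psi(t) \int_0^t e^{\mathrm{i}(t - t')\Delta}\, G(\cdot,t')\, dt'.
\end{equation*}
Since $U$ agrees with the desired Duhamel term on $[0,\delta]$, it suffices to show $\|U\|_{X^{s,b}} \lesssim_b \|G\|_{X^{s,b-1}}$, after which one takes the infimum over extensions $G$. On the Fourier side one splits according to whether $|\eta + 2\pi k^2| \lesssim 1$ or $|\eta + 2\pi k^2| \gtrsim 1$: in the low-modulation regime, Taylor-expanding $e^{\mathrm{i}t'\lambda_k}$ about $t' = 0$ together with the Schwartz decay of $\hat\psi$ absorbs the contribution, while in the high-modulation regime a single integration by parts in $t'$ produces the factor $(1 + |\eta + 2\pi k^2|)^{-1}$ that upgrades the exponent from $b - 1$ to $b$. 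This is the technically heaviest step; rather than reproducing the full paraproduct-type analysis I would invoke the self-contained treatments in \cite[Proposition 3.12]{Erdogan_Tzirakis} or \cite[Lemma 2.12]{Tao}, as is customary in the NLS literature.
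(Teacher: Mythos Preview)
Your proposal is correct and follows essentially the same approach as the paper: reduce to $t_0=0$ by time-translation invariance, obtain (i) via Sobolev embedding in time (you spell out the Cauchy--Schwarz version, the paper just says ``Sobolev embedding''), handle (ii) by inserting a smooth time cutoff and computing directly, and for (iii) cite \cite[Lemma 3.12]{Erdogan_Tzirakis}. One small correction: you write that the constant may depend on $\psi$ ``and hence on $\delta$, but this is harmless since $\delta$ is a fixed parameter'' --- in fact the lemma is stated for all $\delta>0$ and the later local well-posedness argument needs the constant uniform as $\delta\to 0$, so you should instead take a single fixed $\psi$ equal to $1$ on $[0,1]$ (valid for all $\delta\le 1$), exactly as the paper implicitly does.
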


\begin{proof}[Proof of Lemma \ref{X^{s,b}_space_properties}]
The estimates above are standard. Claim (i) follows by Sobolev embedding in the time variable, using the assumption that $b>\frac{1}{2}$. The particular form of the estimates given by (ii) and (iii) is proved in a self-contained way in \cite[Lemma 3.10]{Erdogan_Tzirakis} and \cite[Lemma 3.12]{Erdogan_Tzirakis} respectively\footnote{The results in \cite{Erdogan_Tzirakis} are proved for the Airy semigroup $W_t=\e^{-t \delta_x^3}$ on the real line. The arguments for the Schr\"{o}dinger semigroup $\e^{\mathrm{i} t \Delta}$ on the torus follow analogously.}.
In particular, the arguments from the proof of \cite[Lemma 3.10]{Erdogan_Tzirakis} imply that for all $\Phi \in H^s$ and $\psi \in C_c^{\infty}(\R)$, we have 
\begin{equation*}
\|\psi(t)\, \mathrm{e}^{\mathrm{i}t\Delta} \Phi\|_{X^{s,b}_{[t_0,t_0+\delta]}} \lesssim_{b,\psi} \|\Phi\|_{H^s}\,.
\end{equation*}
We now consider $\psi$ which is equal to $1$ on $[t_0,t_0+\delta]$ to deduce that
\begin{equation}
\label{(ii)_claim_auxiliary}
\|\mathrm{e}^{\mathrm{i}t\Delta} \Phi\|_{X^{s,b}_{[t_0,t_0+\delta]}} \lesssim_{b} \|\Phi\|_{H^s}\,.
\end{equation}
The estimate \eqref{(ii)_claim_auxiliary} follows from \eqref{(ii)_claim} and the unitarity of $\e^{\mathrm{i}t\Delta}$.

The argument for (iii) is similar. When $t_0=0$, it follows from the proof of \cite[Lemma 3.12]{Erdogan_Tzirakis}. More precisely, it suffices to show that for all nonnegative $\psi \in C_c^{\infty}(\R)$, we have
\begin{equation}
\label{X^{s,b}_(iii)}
\biggl\|\psi(t)\,\int_0^t \dd t'\, \mathrm{e}^{\mathrm{i}(t-t')\Delta}\,F(\cdot,t')\biggr\|_{X^{s,b}} \lesssim_{b,\psi} \|F\|_{X^{s,b-1}}\,.
\end{equation}
The claim of (iii) then follows from \eqref{X^{s,b}_(iii)} by taking infima over $F$ as in \eqref{local_X^{s,b}_norm}.
Using the fact that $\psi(t)$ and $\e^{\mathrm{i} t \Delta}$ commute, as well as \eqref{X^{s,b}_norm}, we have that the expression on the left-hand side of \eqref{X^{s,b}_(iii)} equals
\begin{equation}
\label{X^{s,b}_(iii)_2}
\biggl\|\psi(t)\,\int_0^t \dd t'\, \mathrm{e}^{-\mathrm{i} t' \Delta}\,F(\cdot,t')\biggr\|_{H^s_x H^b_t}
\leq \biggl\|\psi(t)\,\int_0^t \dd t'\, \biggl\|\mathrm{e}^{-\mathrm{i} t' \Delta}\,F(\cdot,t')\biggr\|_{H^s_x}\biggr\|_{H^b_t}
\,.
\end{equation}
Using \eqref{X^{s,b}_norm} and \eqref{X^{s,b}_(iii)_2}, we note that \eqref{X^{s,b}_(iii)} follows from the following inequality.
\begin{equation}
\label{X^{s,b}_(iii)_3}
\biggl\|\psi(t)\, \int_{0}^{t} \dd t'\, g(t')\, \biggr\|_{H^b_t} \lesssim_{b,\psi} \|g\|_{H^{b'}_t}\,.
\end{equation}
The estimate \eqref{X^{s,b}_(iii)_3} corresponds to \cite[(3.18)]{Erdogan_Tzirakis}, which is shown in the proof of \cite[Lemma 3.12]{Erdogan_Tzirakis}. This shows \eqref{(iii)_claim} when $t_0=0$.

The proof of \eqref{(iii)_claim} for general $t_0$ follows by translation.
In order to explain the last step in more detail, let us consider $F \in X^{s,b-1}_{[t_0,t_0+\delta]}$. We then consider $G(x,t):=F(x,t+t_0)$ and observe that 
\begin{equation}
\label{Duhamel_X^{s,b}_1}
\|G\|_{X^{s,b-1}_{[0,\delta]}}=\|F\|_{X^{s,b-1}_{[t_0,t_0+\delta]}}\,.
\end{equation} 
A direct calculation shows that
\begin{equation}
\label{Duhamel_X^{s,b}_2}
\int_{t_0}^{t} \dd t'\, \e^{\mathrm{i} (t-t') \Delta}\, F(t')=\int_{0}^{t-t_0} \dd t'\, \e^{\mathrm{i} (t-t_0-t') \Delta}\, G(t')\,,
\end{equation}
from where we deduce
\begin{equation}
\label{Duhamel_X^{s,b}_3}
\biggl\|\int_{t_0}^t \dd t'\, \mathrm{e}^{\mathrm{i}(t-t')\Delta}\,F(\cdot,t')\biggr\|_{X^{s,b}_{[t_0,t_0+\delta]}}=\biggl\|\int_{0}^{t} \dd t'\, \e^{\mathrm{i} (t-t') \Delta}\, G(t')\biggr\|_{X^{s,b}_{[0,\delta]}}\,.
\end{equation}
The result of claim (iii) then follows from \eqref{Duhamel_X^{s,b}_1}--\eqref{Duhamel_X^{s,b}_3} and \eqref{(iii)_claim} when $t_0=0$.
\end{proof}

\begin{remark}
We refer the reader to \cite[Lemma 5.3]{FKSS18}, \cite[Appendix A]{FKSS18} for a self-contained summary of similar estimates in $X^{s,b}$ spaces, based on \cite{KPV1,KPV2}. We also refer the reader to 
\cite[Section 2.6]{Tao}.
Note that, in contrast to \cite[Lemma 5.3]{FKSS18}, in (ii)--(iii), we are working in local $X^{s,b}$ spaces instead of with a cut-off function $\psi$ in the $t$ variable.  
\end{remark}

\begin{proof}[Proof of Proposition \ref{Cauchy_problem_1}]
By replacing $u$ with $\e^{\mathrm{i}t\kappa}u$, we can reduce to considering the case\footnote{Note that this transformation does not change the $H^s$ norm.} when $\kappa=0$.
We consider solutions for non-negative times. The argument for negative times is analogous.
For the remainder of this section, we consider $w_1=w_2=w$ in \eqref{N_1_definition}--\eqref{N_2_definition}, and so the nonlinearity in \eqref{q_Hartree_Cauchy_problem} is equal to
\begin{equation}
\label{nonlinearity_N}
\mathcal{N}_{1,2}(u,u,u,u,u):=-\frac{1}{3}\mathcal{N}_1(u,u,u,u,u)-\frac{2}{3}\mathcal{N}_2(u,u,u,u,u)\,.
\end{equation}

Let us fix $b=\frac{1}{2}+\varepsilon$ for $\varepsilon>0$ sufficiently small. 
With $\mathcal{N}_{1,2}(u,u,u,u,u)$ as in \eqref{nonlinearity_N}, we consider the map
\begin{equation}
\label{map_L}
(Lv)(\cdot,t):=\e^{\mathrm{i} t \Delta} \,u_0 -\mathrm{i} \,\int_{0}^{t}\, \dd t'\, \e^{\mathrm{i}(t-t')\Delta}\,\mathcal{N}_{1,2}(v,v,v,v,v)(t')\,.
\end{equation}
In order to show the local existence of a solution, by the Banach fixed point theorem, it suffices to show that for suitable $\alpha>0$ and $\delta \sim \|u_0\|_{H^s}^{-\alpha}$ sufficiently small, the map $L$ is a contraction on the ball
\begin{equation}
\label{ball_B}
\mathcal{B}:=\Bigl\{v \in X^{s,b}_{[0,\delta]}\,,\quad \|v\|_{X^{s,b}_{[0,\delta]}} \leq \mathcal{M} \|u_0\|_{H^s}\Bigr\}
\end{equation}
in the Banach space $X^{s,b}_{[0,\delta]}$ for suitable $\mathcal{M}>0$. By construction, each such fixed point $u$ will be a mild solution of \eqref{q_Hartree_Cauchy_problem} on the time interval $[0,\delta]$, meaning that for all $t \in [0,\delta]$, we have
\begin{equation}
\label{mild_solution_NLS}
u(\cdot,t)=\e^{\mathrm{i} t \Delta} \,u_0 -\mathrm{i} \,\int_{0}^{t}\, \dd t'\, \e^{\mathrm{i}(t-t')\Delta}\,\mathcal{N}_{1,2}(u,u,u,u,u)(t')\,.
\end{equation}
Furthermore, by Lemma \ref{X^{s,b}_space_properties} (i), it follows that $\|u\|_{L^{\infty}_{t \in [0,\delta]}H^s_x} \lesssim_{\mathcal{M},b} \|u_0\|_{H^s}$.

We now show that \eqref{map_L} is a contraction on \eqref{ball_B}. By using Lemma \ref{X^{s,b}_space_properties} (ii), (iii) and Lemma \ref{Multlinear_estimates_2} with $v_1=v_2=v_3=v_4=v_5=v$, we deduce that
\begin{equation}
\label{Lv_estimate_1}
\|Lv\|_{X^{s,b}_{[0,\delta]}} \leq C_1 \|u_0\|_{H^s}+C_2 \,\delta^{\varepsilon} \|v\|_{X^{s,b}_{[0,\delta]}}^5\,,
\end{equation}
for suitable constants $C_1,C_2>0$. Note that $C_1$ is the implied constant in Lemma \ref{X^{s,b}_space_properties} (ii).
Similarly, we have
\begin{equation}
\label{Lv_estimate_2}
\|Lv^{(1)}-Lv^{(2)}\|_{X^{s,b}_{[0,\delta]}} \leq C_3 \,\delta^{\varepsilon}  (\|v^{(1)}\|_{X^{s,b}_{[0,\delta]}}^4+\|v^{(2)}\|_{X^{s,b}_{[0,\delta]}}^4)\|v^{(1)}-v^{(2)}\|_{X^{s,b}_{[0,\delta]}}\,,
\end{equation}
for a suitable constant $C_3>0$.
In order to deduce \eqref{Lv_estimate_2}, we used Lemma \ref{X^{s,b}_space_properties} (iii), as well as the precise multilinear form of $\mathcal{N}_{1,2}$ given by \eqref{nonlinearity_N}, and Lemma  \ref{Multlinear_estimates_2} with $v_j$ taking values $v^{(1)}, v^{(2)}$, or $v^{(1)}-v^{(2)}$. From \eqref{Lv_estimate_1}--\eqref{Lv_estimate_2}, it follows that \eqref{map_L} is a contraction on \eqref{ball_B} if we take $\mathcal{M}=2C_1$ and $\delta \sim \|u_0\|_{H^s}^{-\frac{4}{\varepsilon}}$ sufficiently small. 
More precisely, we choose $\delta>0$ such that
\begin{equation}
\label{delta_choice}
C_2 \,\delta^{\varepsilon} \mathcal{M}^5 \|u_0\|_{H^s}^4\leq C_1\,,\qquad
2 C_3 \,\delta^{\varepsilon} \mathcal{M}^4 \|u_0\|_{H^s}^4 \leq \frac{1}{2}\,.
\end{equation}

The above argument also shows the conditional uniqueness of mild solutions of \eqref{q_Hartree_Cauchy_problem} in \eqref{ball_B}. Namely, suppose that $u^{(1)}, u^{(2)} \in \mathcal{B}$ both satisfy \eqref{mild_solution_NLS} for $t \in [0,\delta]$. 
By repeating the earlier arguments, we deduce that 
\begin{multline}
\label{u_difference_estimate}
\|u^{(1)}-u^{(2)}\|_{X^{s,b}_{[0,\delta]}} \leq C_3 \,\delta^{\varepsilon}  (\|u^{(1)}\|_{X^{s,b}_{[0,\delta]}}^4+\|u^{(2)}\|_{X^{s,b}_{[0,\delta]}}^4)\|u^{(1)}-u^{(2)}\|_{X^{s,b}_{[0,\delta]}}
\\
\leq \frac{1}{2}\,\|u^{(1)}-u^{(2)}\|_{X^{s,b}_{[0,\delta]}}\,.
\end{multline}
For the second inequality in \eqref{u_difference_estimate}, we used the second condition in \eqref{delta_choice}. From \eqref{u_difference_estimate}, it indeed follows that $u^{(1)}=u^{(2)}$. This concludes the proof.
\end{proof}

\begin{proof}[Proof of Proposition \ref{Cauchy_problem_2}]
The result when $w$ is as in Assumption \ref{w_assumption} (ii) is shown in \cite{Bou94}. Therefore, we need to show that it holds for $w$ as in Assumption \ref{w_assumption} (i).
Claim (i) follows by combining Lemma \ref{Multlinear_estimates_1} with $w_1=w_2=w$ and \cite[Lemma 3.10]{Bou94}. (For a summary of the proof of the latter claim, we refer the reader to \cite[Appendix A]{RS22}).

We now prove claim (ii). Once we have the local well-posedness given by Proposition \ref{Cauchy_problem_1} above and the tools used in the proof (most notably the multilinear estimate given by Lemma \ref{Multlinear_estimates_2}), the argument follows that of \cite[Section 4]{Bou94}, which corresponds to formally taking $w=\delta$ in \eqref{q_Hartree_Cauchy_problem}. We outline only the main differences needed to consider the nonlocal problem \eqref{q_Hartree_Cauchy_problem}. The main idea is to approximate \eqref{q_Hartree_Cauchy_problem} by a finite-dimensional system and to prove a suitable approximation result.

\textbf{Step 1. Introducing the finite-dimensional system.} 

Given $N \in \N^*$, we denote by $P_N$ the operator 
\begin{equation*}
P_N g(x):=\sum_{|k| \leq N} \widehat{g}(k) \,\e^{2\pi \mathrm{i} k x}\,,
\end{equation*}
i.e.\ the projection onto frequencies $|k| \leq N$. We then compare \eqref{q_Hartree_Cauchy_problem_N}
with its finite-dimensional truncation given by the following.
\begin{equation}
\label{q_Hartree_Cauchy_problem_N}
\begin{cases}
\mathrm{i} \partial_t u^N + (\Delta - \kappa)u^N=
\\
-\frac{1}{3} P_N\bigl[\int \dd y \, \dd z \, w(x-y)\,w(x-z)\,|u^N(y)|^2\,|u^N(z)|^2\,u^N(x)\bigr]
\\ -\frac{2}{3} P_N\bigl[\int \dd y \, \dd z \, w(x-y)\,w(y-z)\,|u^N(y)|^2\,|u^N(z)|^2\,u^N(x)\bigr]
\\u^N|_{t=0}=P_N u_0\,.
\end{cases}
\end{equation}
Let us note that, in \eqref{q_Hartree_Cauchy_problem_N}, we are applying $P_N(\cdot)$ in the $x$ variable.
We write the solution $u^N$ of the finite-dimensional system \eqref{q_Hartree_Cauchy_problem_N} as 
\begin{equation}
\label{u^N_Fourier_series}
u^N(x,t)=\sum_{|k| \leq N} a_k(t)\,\e^{2\pi \mathrm{i} k x}\,.
\end{equation}
In light of \eqref{u^N_Fourier_series}, we identity $u^N$ with $a=(a_k)_{|k| \leq N}$. With this identification, we can write \eqref{q_Hartree_Cauchy_problem_N} as a Hamiltonian system 
\begin{equation}
\label{Hamiltonian_system_N}
\frac{\dd a_k}{\dd t}=-\mathrm{i} \,\frac{\partial H_N(a)}{\partial \bar{a}_k}\,, \quad |k| \leq N\,,
\end{equation}
where the Hamiltonian is given by 
\begin{multline}
\label{H^N}
H_N(a)=\sum_{|k| \leq N} (4\pi^2 |k|^2+\kappa) |a_k|^2
-\frac{1}{3} \int \dd x\,\dd y\,\dd z\, w(x-y)\,w(x-z)\, 
\\
\times
\Biggl|\sum_{|k| \leq N} a_k\,\e^{2\pi \mathrm{i}k x}\Biggr|^2
\, \Biggl|\sum_{|k| \leq N} a_k\,\e^{2\pi \mathrm{i}k y}\Biggr|^2\, \Biggl|\sum_{|k| \leq N} a_k\,\e^{2\pi \mathrm{i}k z}\Biggr|^2
\end{multline}
and 
\begin{equation}
\label{del_bar}
\frac{\partial}{\partial \bar{a}_k}=\frac{1}{2}\biggl(\frac{\partial}{\partial\mathrm{Re}\,a_k}+\mathrm{i} \,\frac{\partial}{\partial\mathrm{Im}\,a_k}\biggr)\,.
\end{equation}
Let us note that \eqref{H^N} and \eqref{del_bar} differ from the corresponding quantities used in \cite{Bou94}. This is because the convention of the Poisson structure in \cite{Bou94} amounts to taking $\{u(x),\bar{u}(y)\}=2\mathrm{i} \delta(x-y)$, which differs from \eqref{Poisson_structure} by a factor of $2$. See Remark \ref{Hamiltonian_structure_remark} below for more details.

Let us show \eqref{Hamiltonian_system_N} in detail. By \eqref{q_Hartree_Cauchy_problem_N} and \eqref{H^N}, it suffices to show that for $|k| \leq N$, the $k$-th Fourier coefficient of 
\begin{multline}
\label{Q_N}
Q_N(a):=-\frac{1}{3} P_N\Biggl[ \int \dd y \,\dd z\, w(x-y)\,w(x-z)\, 
\\
\times
\Biggl|\sum_{|k_2| \leq N} a_{k_2}\e^{2\pi \mathrm{i}k_2 y}\Biggr|^2\,\Biggl|\sum_{|k_3| \leq N} a_{k_3}\e^{2\pi \mathrm{i}k_3 z}\Biggr|^2\,
\sum_{|k_1| \leq N} a_{k_1}\e^{2\pi \mathrm{i}k_1 x}\Biggr]
\\
-\frac{2}{3}P_N\Biggl[ \int \dd y \,\dd z\, w(x-y)\,w(y-z)\, 
\\
\times
\Biggl|\sum_{|k_2| \leq N} a_{k_2}\e^{2\pi \mathrm{i}k_2 y}\Biggr|^2\,\Biggl|\sum_{|k_3| \leq N} a_{k_3}\e^{2\pi \mathrm{i}k_3 z}\Biggr|^2\,
\sum_{|k_1| \leq N} a_{k_1}\e^{2\pi \mathrm{i}k_1 x}\Biggr]
\end{multline}
equals $\frac{\partial W_N(a)}{\partial \bar{a}_k}$, where
\begin{multline}
\label{W_N}
W_N(a):=-\frac{1}{3} \int \dd x\,\dd y\,\dd z\, w(x-y)\,w(x-z)\,
\\
\times \Biggl|\sum_{|k_1| \leq N} a_{k_1} \e^{2\pi \mathrm{i} k_1 x}\Biggr|^2\,\Biggl|\sum_{|k_2| \leq N} a_{k_2} \e^{2\pi \mathrm{i} k_2 y}\Biggr|^2\,\Biggl|\sum_{|k_3| \leq N} a_{k_3} \e^{2\pi \mathrm{i} k_3 z}\Biggr|^2\,.
\end{multline}
(The terms corresponding to the kinetic energy are easily seen to be the same and we omit the proof).

Expanding the factors of $w$ in \eqref{W_N} as a Fourier series, using $\frac{\partial}{\partial \bar{a}_k} \bar{a}_k=1, \frac{\partial}{\partial \bar{a}_k} a_k=0$, and the assumption that $w$ is even, we compute for $|k| \leq N$
\begin{multline}
\label{W_N_derivative}
\frac{\partial W_N(a)}{\partial \bar{a}_k}=-\frac{1}{3} \mathop{\sum_{k_1,\ldots,k_5}}_{|k_j| \leq N} \sum_{\zeta_1,\zeta_2} \int \dd x\,\dd y\,\dd z\,\widehat{w}(\zeta_1)\,\widehat{w}(\zeta_2)\,a_{k_1}\,a_{k_2}\,\bar{a}_{k_3}\,a_{k_4}\,\bar{a}_{k_5}\,
\\
\times
\e^{2\pi \mathrm{i}(\zeta_1+\zeta_2+k_1-k)x}\,\e^{2\pi \mathrm{i}(-\zeta_1+k_2-k_3)y}\,\e^{2\pi \mathrm{i}(-\zeta_2+k_4-k_5)z}
\,\chi_{|k| \leq N}
\\
-\frac{2}{3} \mathop{\sum_{k_1,\ldots,k_5}}_{|k_j| \leq N} \sum_{\zeta_1,\zeta_2} \int \dd x\,\dd y\,\dd z\,\widehat{w}(\zeta_1)\,\widehat{w}(\zeta_2)\,a_{k_1}\,a_{k_2}\,\bar{a}_{k_3}\,a_{k_4}\,\bar{a}_{k_5}\,
\\
\times
\e^{2\pi \mathrm{i}(\zeta_1+k_1-k)x}\,\e^{2\pi \mathrm{i}(-\zeta_1+\zeta_2+k_2-k_3)y}\,\e^{2\pi \mathrm{i}(-\zeta_2+k_4-k_5)z}
\,\chi_{|k| \leq N}
\\
=-\frac{1}{3}\mathop{\sum_{k_1,\ldots,k_5}}_{|k_j| \leq N} \sum_{\zeta_1,\zeta_2,\zeta_3}\,\widehat{w}(\zeta_1)\,\widehat{w}(\zeta_2)\,a_{k_1}a_{k_2}\bar{a}_{k_3}a_{k_4}\,\bar{a}_{k_5}\,
\\
\times
\delta(\zeta_1+\zeta_2+k_1-k)\,\delta(-\zeta_1+k_2-k_3)\,\delta(-\zeta_2+k_4-k_5)\,
\chi_{|k| \leq N}
\\
-\frac{2}{3}\mathop{\sum_{k_1,\ldots,k_5}}_{|k_j| \leq N} \sum_{\zeta_1,\zeta_2}\,\widehat{w}(\zeta_1)\,\widehat{w}(\zeta_2)\,a_{k_1}a_{k_2}\bar{a}_{k_3}a_{k_4}\,\bar{a}_{k_5}\,
\\
\times
\delta(\zeta_1+k_1-k)\,\delta(-\zeta_1+\zeta_2+k_2-k_3)\,\delta(-\zeta_2+k_4-k_5)\,
\chi_{|k| \leq N}
\,.
\end{multline}

By similar arguments, we can rewrite \eqref{Q_N} as

\begin{multline}
\label{Q_N_2}
Q_N(a)=-\frac{1}{3}P_N\Biggl[\mathop{\sum_{k_1,\ldots,k_5}}_{|k_j| \leq N} \sum_{\zeta_1,\zeta_2}\,\widehat{w}(\zeta_1)\,\widehat{w}(\zeta_2)\,\widehat{w}(\zeta_3)\,a_{k_1}a_{k_2}\bar{a}_{k_3}a_{k_4}\bar{a}_{k_5}
\\
\times
\delta(-\zeta_1+k_2-k_3)\,\delta(-\zeta_2+k_4-k_5)\,
\e^{2\pi \mathrm{i}(\zeta_1+\zeta_2+k_1)x}\Biggr]
\\
-\frac{2}{3}P_N\Biggl[\mathop{\sum_{k_1,\ldots,k_5}}_{|k_j| \leq N} \sum_{\zeta_1,\zeta_2}\,\widehat{w}(\zeta_1)\,\widehat{w}(\zeta_2)\,a_{k_1}a_{k_2}\bar{a}_{k_3}a_{k_4}\bar{a}_{k_5}
\\
\times
\delta(-\zeta_1+\zeta_2+k_2-k_3)\,\delta(-\zeta_2+k_4-k_5)\,
\e^{2\pi \mathrm{i}(\zeta_1+k_1)x}\Biggr]
\,.
\end{multline}
We hence deduce \eqref{Hamiltonian_system_N} by noting that \eqref{W_N_derivative} is indeed equal to the $k$-th Fourier coefficient of \eqref{Q_N_2}.

By construction, the \emph{truncated Gibbs measure}
\begin{equation}
\label{q_gibbs_cutoff_defn_N}
\dd \mbP^f_{\mathrm{Gibbs},N}(a) := \frac{1}{z_{\mathrm{Gibbs},N}^f} \,\e^{-H_N(a)}\, f\Biggl(\sum_{|k| \leq N} |a_k|^2\Biggr)\,\prod_{|k| \leq N} \, \dd a_k\,,
\end{equation}
where 
\begin{equation}
\label{z_gibbs_cutoff_defn_N}
z_{\mathrm{Gibbs},N}^f:=\int \,\prod_{|k| \leq N} \, \dd a_k\,\e^{-H_N(a)}\, f\Biggl(\sum_{|k| \leq N} |a_k|^2\Biggr)
\end{equation}
is invariant under the finite-dimensional Hamiltonian flow \eqref{q_Hartree_Cauchy_problem_N}.
Here, we recall Assumption \ref{f_assumption} and \eqref{H^N}. The normalisation factor \eqref{z_gibbs_cutoff_defn_N} is chosen in such that \eqref{q_gibbs_cutoff_defn_N} is a probability measure.
In order to deduce the invariance stated above, we used the fact that the flow \eqref{q_Hartree_Cauchy_problem_N} conserves mass. This is true because $w$ is real-valued by Assumption \ref{w_assumption} (i).

\textbf{Step 2. Approximation by the finite-dimensional system.} 

Having defined the finite-dimensional approximation \eqref{q_Hartree_Cauchy_problem_N} of \eqref{q_Hartree_Cauchy_problem}, we want to compare the flow of the two. We prove an approximation result, which is an analogue of \cite[Lemma 2.27]{Bou94} proved for the local quintic NLS.
In order to state the claim, we need to introduce some notation. Suppose that $\psi \in C_c^{\infty}(\R)$ is a function such that 
\begin{equation}
\label{function_psi}
\psi(\xi)=
\begin{cases}
1 \text{\, if $|\xi|\leq \frac{1}{2}$}, \\
0 \text{\, if $|\xi|>1$}\,.
\end{cases}
\end{equation}
With $\psi$ as in \eqref{function_psi}, we define the following Fourier multiplier operators.
\begin{equation}
\label{R_N}
(R_N^{-} g)\,\widehat{\,}\,(k):=\psi\biggl(\frac{3k}{N}\biggr)\,\widehat{g}(k)\,,\quad R_N^{+}:=\mathbf{1}-R_N^{-}\,.
\end{equation}
We note the following result, which corresponds to Bourgain's approximation lemma \cite[Lemma 2.27]{Bou94}.

\textbf{Claim (*) :} \emph{Let $A,T>0$ be given. Fix $u_0 \in H^s$ with 
\begin{equation}
\label{u_0_bound_assumption}
\|u_0\|_{H^s} \leq A\,.
\end{equation}
Consider for large $N$ a solution $u^N$ of \eqref{q_Hartree_Cauchy_problem_N} on $[0,T]$ that satisfies
\begin{equation}
\label{approximation_lemma_statement_1}
\sup_{t \in [0,T]} \|u^N(t)\|_{H^s} \leq \mathcal{M} A\,,
\end{equation}
for some constant $\mathcal{M}>0$ independent of $N$.
Then the initial value problem \eqref{q_Hartree_Cauchy_problem} is well-posed on $[0,T]$ and the following approximation bound holds for all $s_1 \in (0,s)$
\begin{equation}
\label{approximation_lemma_statement_2}
\sup_{t \in [0,T]} \|u(t)-u^N(t)\|_{H^{s_1}} \leq C(s,s_1,A,T,w,\mathcal{M})\,\Bigl(\|R_N^{+}w\|_{L^1}+N^{s_1-s}\Bigr)\,,
\end{equation}
provided that the expression on the right-hand side of \eqref{approximation_lemma_statement_2} is strictly less than $1$.
 }


The well-posedness of \eqref{q_Hartree_Cauchy_problem} stated above is interpreted in $H^{s_1}$ for $s_1 \in (0,s)$. This claim follows immediately from \eqref{approximation_lemma_statement_2} and the local well-posedness in $H^{s_1}$ which we obtain from Proposition \ref{Cauchy_problem_1}. We present the details of the proof of Claim (*) in Section \ref{Proof of the approximation lemma} below.

Let us note that \eqref{approximation_lemma_statement_2} is an appropriate bound. It suffices to show that
\begin{equation}
\label{R_N_limit}
\lim_{N \rightarrow \infty} \|R_N^{+}w\|_{L^1}=0\,,
\end{equation}
In order to prove \eqref{R_N_limit}, we first recall \eqref{R_N} and use \cite[VII, Theorem 3.8]{Stein_Weiss} combined with Young's inequality to deduce that 
\begin{equation}
\label{R_N_bound}
\|R_N^{+}\|_{L^1 \rightarrow L^1} \lesssim 1\,,
\end{equation}
uniformly in $N$.
Now, we note that \eqref{R_N_limit} holds if $\widehat{w}(k)=0$ for finitely many $k \in \Z$ (i.e.\ when $\widehat{w}$ is compactly supported). Namely, in this case, we have that $R_N^{+}w=0$ for large enough $N$. By density of such functions in $L^1$ and by \eqref{R_N_bound}, we deduce that \eqref{R_N_limit} for general $w$ as in Assumption \ref{w_assumption} (i).

\textbf{Step 3. Conclusion of the proof.} 
Once we have the approximation result from Step 2, the proof follows identically to that given in \cite[Sections 3--4]{Bou94}. We refer the reader to \cite{Bou94} for details.
\end{proof}

\begin{remark}
\label{Hamiltonian_structure_remark}
In our convention, we write the fields as
\begin{equation*}
u (x)=\sum_{k} (p_k + \mathrm{i}q_k)\,\e^{2\pi \mathrm{i} k x}
\end{equation*}
where $p_k:=\mathrm{Re} \,\widehat{u}(k)\,, q_k:=\mathrm{Im}\,\widehat{u}(k)$. We work with $p_k$ and $q_k$ as the canonical coordinates. Our convention for the Hamiltonian system is 
\begin{equation}
\label{Hamiltonian_system_convention}
\begin{cases}
\dot{p}_k=\frac{1}{2}\,\frac{\partial H}{\partial q_k}\,\\
\dot{q}_k=-\frac{1}{2} \,\frac{\partial H}{\partial p_k}\,,
\end{cases}
\end{equation}
and our convention for the Poisson bracket is 
\begin{equation}
\label{Poisson_bracket_convention}
\{g_1,g_2\}:=\frac{1}{2}\, 
\sum_k \biggl(\frac{\partial g_1}{\partial q_k} \,\frac{\partial g_2}{\partial p_k}-\frac{\partial g_1}{\partial p_k} \,\frac{\partial g_2}{\partial q_k}\biggr)\,.
\end{equation}
We note that, with the Poisson bracket as in \eqref{Poisson_bracket_convention}, we have 
\begin{multline*}
\{u(x),\bar{u}(y)\}=
\\
\frac{1}{2}\, \sum_k \bigl(\mathrm{i} \e^{2\pi \mathrm{i} k x}\,\e^{-2\pi \mathrm{i} k y} 
-\e^{2\pi \mathrm{i} k x} (-\mathrm{i}) \e^{-2\pi \mathrm{i} ky}\bigr)=\mathrm{i} \sum_k \e^{2\pi \mathrm{i} k(x-y)}=\mathrm{i} \delta (x-y)\,.
\end{multline*}
Similarly, we have $\{u(x),u(y)\}=\{\bar{u}(x),\bar{u}(y)\}=0$. Therefore, the Poisson bracket convention in \eqref{Poisson_bracket_convention} coincides with \eqref{Poisson_structure}. 
\end{remark}

\subsection{Proof of the approximation lemma from Step 2}
\label{Proof of the approximation lemma}

In this section, we give the details of the proof of Claim (*) given in Step 2 of the proof of Proposition \ref{Cauchy_problem_2} (ii). In particular, we prove the estimate \eqref{approximation_lemma_statement_2}. Before proceeding with the proof, we prove a multilinear estimate.

\begin{lemma} 
\label{Difference_N}
Let us fix $b=\frac{1}{2}+\varepsilon$ for $\varepsilon>0$ small and $s_1>0$. For $t_0 \in \R$, $\delta>0$, and $v \in X^{s_1,b}_{[0,\delta]}$, and $\mathcal{N}_{1,2}(v,v,v,v,v)$ as in \eqref{nonlinearity_N} with $w_1=w_2=w$, we have
\begin{multline}
\label{Lemma_2.9_bound}
\|\mathcal{N}_{1,2}(v,v,v,v,v)-P_N \mathcal{N}_{1,2}(v,v,v,v,v)\|_{X^{s_1,b-1}_{[t_0,t_0+\delta]}} 
\\
\lesssim \delta^{\varepsilon}\,\|w\|_{L^1}\,\|v\|_{X^{s_1,b}_{[t_0,t_0+\delta]}}^4\, \Bigl(\|R_N^{+}w\|_{L^1} \|v\|_{X^{s_1,b}_{[t_0,t_0+\delta]}}+\|w\|_{L^1} \|R_N^{+} v\|_{X^{s_1,b}_{[t_0,t_0+\delta]}}\Bigr)\,.
\end{multline}
Here, we recall \eqref{R_N}.
\end{lemma}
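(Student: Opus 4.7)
The strategy is a paraproduct-style decomposition using the Fourier multipliers $R_N^{\pm}$ from \eqref{R_N}. Writing $w = R_N^- w + R_N^+ w$ in each of the two $w$-slots and $v = R_N^- v + R_N^+ v$ in the fifth $v$-slot of $\mathcal{N}_{1,2}(v,v,v,v,v)$, multilinearity yields $2^3 = 8$ pieces indexed by $(\alpha,\beta,\gamma) \in \{-,+\}^3$, where $\mathcal{N}^{\alpha\beta\gamma}$ denotes the piece with $w$-slots replaced by $R_N^\alpha w$ and $R_N^\beta w$ and the fifth $v$-slot by $R_N^\gamma v$, while the first four $v$-slots remain $v$.

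The key observation is that $(\mathbf{1} - P_N)\mathcal{N}^{---} = 0$. From \eqref{N_1_claim}, the output of $\mathcal{N}_1$ at frequency $k$ vanishes unless $k = (k_1 - k_2) + (k_3 - k_4) + k_5$. Since $\mathrm{supp}\,\psi \subset [-1,1]$ by \eqref{function_psi}, the $R_N^-$-restrictions force $|k_1 - k_2|, |k_3 - k_4|, |k_5| \leq N/3$, and hence $|k| \leq N$. For $\mathcal{N}_2$, the first $w$-factor instead has argument $k_1 - k_2 + k_3 - k_4$ by \eqref{N_2_claim}, so the constraints become $|k_1 - k_2 + k_3 - k_4|, |k_3 - k_4|, |k_5| \leq N/3$, giving $|k| \leq 2N/3 \leq N$. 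In either case $P_N$ acts as the identity on $\mathcal{N}^{---}$.

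For the remaining seven pieces, Lemma \ref{Multlinear_estimates_2} applied with $w_1 = R_N^\alpha w$, $w_2 = R_N^\beta w$, $v_1 = v_2 = v_3 = v_4 = v$, and $v_5 = R_N^\gamma v$, together with the fact that $\mathbf{1} - P_N$ is a bounded Fourier multiplier on $X^{s_1, b-1}_{[t_0, t_0 + \delta]}$, yields
\begin{equation*}
\|(\mathbf{1} - P_N)\mathcal{N}^{\alpha\beta\gamma}\|_{X^{s_1, b-1}_{[t_0, t_0 + \delta]}} \lesssim \delta^\varepsilon \|R_N^\alpha w\|_{L^1} \|R_N^\beta w\|_{L^1} \|v\|_{X^{s_1, b}_{[t_0, t_0 + \delta]}}^4 \|R_N^\gamma v\|_{X^{s_1, b}_{[t_0, t_0 + \delta]}}.
\end{equation*}
The bound $\|R_N^{\pm}\|_{L^1 \to L^1} \lesssim 1$ (as in \eqref{R_N_bound}, using $R_N^- = \mathbf{1} - R_N^+$) lets me replace $\|R_N^- w\|_{L^1}$ by $\|w\|_{L^1}$, and the fact that $R_N^{\pm}$ are bounded on $X^{s_1, b}_{[t_0, t_0 + \delta]}$ lets me replace $\|R_N^- v\|_{X^{s_1, b}_{[t_0, t_0+\delta]}}$ by $\|v\|_{X^{s_1, b}_{[t_0, t_0+\delta]}}$. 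The four pieces with $\alpha = +$ or $\beta = +$ then contribute to the first term on the right-hand side of \eqref{Lemma_2.9_bound}, and the three pieces with $\alpha = \beta = -$ (which force $\gamma = +$) contribute to the second.

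The only delicate step is the Fourier-support verification for $\mathcal{N}^{---}$; the rest is routine multilinear bookkeeping.
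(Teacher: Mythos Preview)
Your approach is correct and is essentially the paper's: decompose the $x$-dependent factors via $R_N^{\pm}$, observe that the all-minus piece has output frequency $\le N$ so $(\mathbf{1}-P_N)$ annihilates it, and estimate the remaining pieces by Lemma~\ref{Multlinear_estimates_2} together with the uniform boundedness of $R_N^{\pm}$ on $L^1$ and on $X^{s_1,b}_{[t_0,t_0+\delta]}$. Two small remarks: your piece count is off---there are six pieces with $\alpha=+$ or $\beta=+$ and a single piece $(\alpha,\beta,\gamma)=(-,-,+)$, not four and three; and for $\mathcal{N}_2$ the paper is slightly more economical, decomposing only the two $x$-dependent factors $w(x-y)$ and $v(x,t)$ (so four pieces rather than eight), since $w(y-z)$ does not depend on the output variable $x$.
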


\begin{proof}
Recalling \eqref{N_1_definition} and \eqref{N_2_definition}, we write
\begin{multline}
\label{Difference_N_1}
\mathcal{N}_1(v,v,v,v,v)-P_N \mathcal{N}_1(v,v,v,v,v)=
\\
-\frac{1}{3} \int \dd y\,\dd z\, \Bigl\{ w(x-y)\,w(x-z)\,v(x,t)-P_N \big[w(x-y)\,w(x-z)\,v(x,t)\bigr]\Bigr\}\,
\\
\times
|v(y,t)|^2\,|v(z,t)|^2\,.
\end{multline}
\begin{multline}
\label{Difference_N_1_second_term}
\mathcal{N}_2(v,v,v,v,v)-P_N \mathcal{N}_2(v,v,v,v,v)=
\\
-\frac{2}{3} \int \dd y\,\dd z\, \Bigl\{ w(x-y)\,v(x,t)-P_N \big[w(x-y)\,v(x,t)\bigr]\Bigr\}\,
\\
\times
w(y-z)\,|v(y,t)|^2\,|v(z,t)|^2\,.
\end{multline}
Let us first estimate \eqref{Difference_N_1}. To this end, we write 
\begin{equation}
\label{Difference_N_2}
w=R_N^{+}w+R_N^{-}w\,,\quad v(\cdot,t)=R_N^{+}v(\cdot,t)+R_N^{-}v(\cdot,t)
\end{equation}
for all the terms appearing in the curly brackets in \eqref{Difference_N_1}.
By \eqref{function_psi}--\eqref{R_N}, it follows that 
\begin{equation}
\label{Difference_N_3}
R_N^{-} w(x-y)\, R_N^{-} w(y-z)\,R_N^{-} v(x,t)-P_N \bigl[R_N^{-} w(x-y)\,R_N^{-} w(y-z)\,R_N^{-} v(x,t)\bigr]=0\,.
\end{equation}
Let us note that $R_N^{+}$ and $R_N^{-}$ bounded operators on $L^1$ and $X^{s_1,b}_{[t_0,t_0+\delta]}$, uniformly in $N$. The former follows from \eqref{R_N_bound} and the latter follows directly from \eqref{R_N}. Combining this observation with \eqref{Difference_N_1}, \eqref{Difference_N_2}--\eqref{Difference_N_3}, and Lemma \ref{Multlinear_estimates_2}, we deduce that
\begin{multline}
\label{Lemma_2.9_bound_1}
\|\mathcal{N}_{1}(v,v,v,v,v)-P_N \mathcal{N}_{1}(v,v,v,v,v)\|_{X^{s_1,b-1}_{[t_0,t_0+\delta]}} 
\\
\lesssim \delta^{\varepsilon}\,\|w\|_{L^1}\,\|v\|_{X^{s_1,b}_{[t_0,t_0+\delta]}}^4\, \Bigl(\|R_N^{+}w\|_{L^1} \|v\|_{X^{s_1,b}_{[t_0,t_0+\delta]}}+\|w\|_{L^1} \|R_N^{+} v\|_{X^{s_1,b}_{[t_0,t_0+\delta]}}\Bigr)\,.
\end{multline}
In order to estimate \eqref{Difference_N_1_second_term}, we argue analogously. When considering the term in curly brackets in \eqref{Difference_N_1_second_term}, instead of \eqref{Difference_N_3}, we write
\begin{equation}
\label{Difference_N_3_second_term}
R_N^{-} w(x-y)\,R_N^{-} v(x,t)-P_N \bigl[R_N^{-} w(x-y)\,R_N^{-} v(x,t)\bigr]=0\,.
\end{equation}
Using \eqref{Difference_N_3_second_term}, and arguing as for \eqref{Lemma_2.9_bound_1}, we obtain
\begin{multline}
\label{Lemma_2.9_bound_2}
\|\mathcal{N}_{2}(v,v,v,v,v)-P_N \mathcal{N}_{2}(v,v,v,v,v)\|_{X^{s_1,b-1}_{[t_0,t_0+\delta]}} 
\\
\lesssim \delta^{\varepsilon}\,\|w\|_{L^1}\,\|v\|_{X^{s_1,b}_{[t_0,t_0+\delta]}}^4\, \Bigl(\|R_N^{+}w\|_{L^1} \|v\|_{X^{s_1,b}_{[t_0,t_0+\delta]}}+\|w\|_{L^1} \|R_N^{+} v\|_{X^{s_1,b}_{[t_0,t_0+\delta]}}\Bigr)\,.
\end{multline}
The estimate \eqref{Lemma_2.9_bound} now follows from \eqref{Lemma_2.9_bound_1} and  \eqref{Lemma_2.9_bound_2} by recalling \eqref{nonlinearity_N}.
\end{proof}

\begin{proof}[Proof of Claim (*) from the proof of Proposition \ref{Cauchy_problem_2} (ii)]
By arguing analogously as in the proof of Proposition \ref{Cauchy_problem_1}, we can reduce to the case when $\kappa=0$. 
Let us note that the local well-posedness argument in the proof of Proposition \ref{Cauchy_problem_1} carries over immediately to the finite-dimensional approximation \eqref{q_Hartree_Cauchy_problem_N}. This is because the operator $P_N$ is a contraction on $X^{s,b}$ spaces (and hence on their local variants). In particular, there exists $\delta \sim_A 1$ such that both \eqref{q_Hartree_Cauchy_problem} and \eqref{q_Hartree_Cauchy_problem_N} are well-posed on $[0,\delta]$ in $H^{s_1}$ whenever the initial data is bounded in the $H^{s_1}$ norm by $\mathcal{M} A+1$. By construction, this is possible if we have
\begin{equation}
\label{delta_A_condition}
\delta^{\varepsilon} A^4 \ll 1\,.
\end{equation}

By recalling \eqref{q_Hartree_Cauchy_problem_N}, \eqref{u_0_bound_assumption}, and by using frequency localisation, we have that

\begin{equation}
\label{Claim*_proof_1}
\|u(0)-u^N(0)\|_{H^{s_1}} \lesssim N^{s_1-s}\,A\,.
\end{equation}
Define $U:= u - u^N$. Then, $U$ solves the following difference equation.
\begin{equation}
\label{q_difference_equation}
\begin{cases}
\textrm{i} \partial_t U+ (\Delta - \kappa)U= \mcN_{1,2}(u,u,u,u,u) - P_N\mcN_{1,2}(u^N,u^N,u^N,u^N,u^N), \\
U|_{t=0} = u(0) - u^N(0)\,.
\end{cases}
\end{equation}
Here, we recall \eqref{nonlinearity_N}.
Arguing analogously as in the proof of Proposition \ref{Cauchy_problem_1}, but now in the context of \eqref{q_difference_equation} (instead of \eqref{q_Hartree_Cauchy_problem}), 
and using \eqref{approximation_lemma_statement_1}, it follows that there exists
$T_0 \lesssim \delta$ depending only on $A$ such that for all $N$, we have
\begin{equation}
\label{Claim*_proof_2}
\sup_{t \in [0,T_0]} \|u(t)-u^N(t)\|_{H^{s_1}} \lesssim \|u(0)-u^N(0)\|_{H^{s_1}}\,.
\end{equation}
We note that the implied constant in \eqref{Claim*_proof_2} is independent of $N$.
Let us fix $\sigma_0 \in (0,1)$ arbitrarily small.
From  \eqref{Claim*_proof_1} and \eqref{Claim*_proof_2},  we have that
\begin{equation}
\label{sigma_0_bound}
\sup_{t \in [0,T_0]} \|u(t)-u^N(t)\|_{H^{s_1}} \leq \sigma_0
\end{equation}
for all $N$ large enough (depending on $\sigma_0,A ,s,s_1$). For the remainder of the proof, we consider such $N$.

Combining \eqref{approximation_lemma_statement_1} and \eqref{sigma_0_bound}, it follows that 
\begin{equation}
\label{Claim*_proof_3}
\|u(T_0)\|_{H^{s_1}} \leq \|u^N(T_0)\|_{H^{s_1}}+\sigma_0 \leq \mathcal{M} A+1\,.
\end{equation}
We now introduce the following Cauchy problem.
\begin{equation}
\label{q_Hartree_Cauchy_problem_v^N}
\begin{cases}
\mathrm{i} \partial_t v^N + (\Delta - \kappa)v^N = \mathcal{N}_{1,2}(v^N,v^N,v^N,v^N,v^N)\\
v^N|_{t=T_0}=u^N(T_0)\,.
\end{cases}
\end{equation}
Let us compare \eqref{q_Hartree_Cauchy_problem_v^N} with the flow of \eqref{q_Hartree_Cauchy_problem} started at $t=T_0$. Again arguing analogously as in the proof of Proposition \ref{Cauchy_problem_1}, but now in the context of \eqref{q_Hartree_Cauchy_problem_v^N}, it follows that we can take $\delta \sim_A 1$ possibly smaller satisfying \eqref{delta_A_condition} and obtain that \eqref{q_Hartree_Cauchy_problem_v^N} has a solution on the time interval $[T_0,T_0+\delta]$ which satisfies
\begin{multline}
\label{Claim*_proof_4}
\sup_{t \in [T_0,T_0+\delta]} \|u(t)-v^N(t)\|_{H^{s_1}} \leq \mathcal{K} \|u(T_0)-v^N(T_0)\|_{H^{s_1}}
\\
=\mathcal{K}\|u(T_0)-u^N(T_0)\|_{H^{s_1}}\leq \mathcal{K} \sigma_0\,,
\end{multline}
for some constant $\mathcal{K}>0$. 
In \eqref{Claim*_proof_4}, we recalled the initial condition in \eqref{q_Hartree_Cauchy_problem_v^N}, as well as \eqref{sigma_0_bound}. 

Let us now compare $v^N(t)$ and $u^N(t)$ for $t \in [T_0,T_0+\delta]$. 
By \eqref{q_Hartree_Cauchy_problem_N}, \eqref{q_Hartree_Cauchy_problem_v^N}, and Duhamel's principle, we have that for all $t \in [T_0,T_0+\delta]$
\begin{equation}
\label{Claim*_proof_5}
v^N(t)-u^N(t)=-\mathrm{i}\int_{T_0}^{t}\dd t'\, \e^{\mathrm{i}(t-t') \Delta}\,\Gamma(\cdot,t')\,,
\end{equation}
where 
\begin{equation}
\label{Claim*_proof_6}
\Gamma:=\mathcal{N}_{1,2}(v^N,v^N,v^N,v^N,v^N)-P_N \mathcal{N}_{1,2}(u^N,u^N,u^N,u^N,u^N)\,.
\end{equation}
Using \eqref{Claim*_proof_5}--\eqref{Claim*_proof_6} and Lemma \ref{X^{s,b}_space_properties} (iii), we deduce that
\begin{multline}
\label{Claim*_proof_7}
\|v^N-u^N\|_{X^{s_1,b}_{[T_0,T_0+\delta]}} \lesssim \|\Gamma\|_{X^{s_1,b-1}_{[T_0,T_0+\delta]}} 
\\
\leq
\bigl\|\mathcal{N}_{1,2}(v^N,v^N,v^N,v^N,v^N)-P_N \mathcal{N}_{1,2}(v^N,v^N,v^N,v^N,v^N)\bigr\|_{X^{s_1,b-1}_{[T_0,T_0+\delta]}} 
\\
+\bigl\|P_N\mathcal{N}_{1,2}(v^N,v^N,v^N,v^N,v^N)-P_N \mathcal{N}_{1,2}(u^N,u^N,u^N,u^N,u^N)\bigr\|_{X^{s_1,b-1}_{[T_0,T_0+\delta]}}
\\
=:I+II\,.
\end{multline}
We estimate the terms $I$ and $II$ in \eqref{Claim*_proof_7} separately.

Let us first estimate $I$. By Lemma \ref{Difference_N}, we have that
\begin{equation}
\label{Claim*_proof_8}
I \lesssim \delta^{\varepsilon}\,\|w\|_{L^1}\,\|v^N\|_{X^{s_1,b}_{[T_0,T_0+\delta]}}^4\, \Bigl(\|R_N^{+}w\|_{L^1} \|v^N\|_{X^{s_1,b}_{[T_0,T_0+\delta]}}+\|w\|_{L^1} \|R_N^{+} v^N\|_{X^{s_1,b}_{[T_0,T_0+\delta]}}\Bigr)\,.
\end{equation}
Let us note that
\begin{equation}
\label{Claim*_proof_9}
\|v^N\|_{X^{s,b}_{[T_0,T_0+\delta]}} \lesssim A\,.
\end{equation}
In order to obtain \eqref{Claim*_proof_9}, we note that for $\delta$ as in \eqref{delta_A_condition}, we can argue as in the proof of Proposition \ref{Cauchy_problem_1} and obtain well-posedness of \eqref{q_Hartree_Cauchy_problem_v^N} on the time interval $[T_0,T_0+\delta]$ since $\|v^N(T_0)\|_{H^s}= \|u^N(T_0)\|_{H^s} \leq \mathcal{M}A$ by \eqref{approximation_lemma_statement_1}.
From \eqref{Claim*_proof_9}, we obtain 
\begin{equation}
\label{Claim*_proof_10}
\|v^N\|_{X^{s_1,b}_{[T_0,T_0+\delta]}} \lesssim A
\end{equation}
and 
\begin{equation}
\label{Claim*_proof_11}
\|R_N^{+} v^N\|_{X^{s_1,b}_{[T_0,T_0+\delta]}} \lesssim N^{s_1-s}\, A\,.
\end{equation}
In order to deduce \eqref{Claim*_proof_11} from \eqref{Claim*_proof_9}, we recall \eqref{R_N} and use frequency localisation.
Combining \eqref{delta_A_condition}, \eqref{Claim*_proof_8}, and \eqref{Claim*_proof_10}--\eqref{Claim*_proof_11}, we obtain that
\begin{equation}
\label{Claim*_proof_12}
I \lesssim \|w\|_{L^1}\, A \Bigl(\|R_N^{+}w\|_{L^1}+N^{s_1-s}\, \|w\|_{L^1}\Bigr)\,.
\end{equation}
Let us now estimate $II$. Since $P_N$ is a contraction on $X^{s_1,b-1}_{[T_0,T_0+\delta]}$, we have that 
\begin{equation}
\label{Claim*_proof_14}
II \leq \bigl\|\mathcal{N}_{1,2}(v^N,v^N,v^N,v^N,v^N)-\mathcal{N}_{1,2}(u^N,u^N,u^N,u^N,u^N)\bigr\|_{X^{s_1,b-1}_{[T_0,T_0+\delta]}}\,.
\end{equation}
By recalling \eqref{N_1_definition}, using multilinearity, and using Lemma \ref{Multlinear_estimates_2}, it follows that the right-hand side of \eqref{Claim*_proof_14} is
\begin{equation}
\label{Claim*_proof_15}
\lesssim \delta^{\varepsilon}\,\|w\|_{L^1}^2\,\|u^N-v^N\|_{X^{s_1,b}_{[T_0,T_0+\delta]}}
\,\Bigl(\|u^N\|_{X^{s_1,b}_{[T_0,T_0+\delta]}}^4+\|v^N\|_{X^{s_1,b}_{[T_0,T_0+\delta]}}^4\Bigr)\,.
\end{equation}
By arguing analogously as for \eqref{Claim*_proof_10}, we have
\begin{equation}
\label{Claim*_proof_16}
\|u^N\|_{X^{s_1,b}_{[T_0,T_0+\delta]}} \lesssim A\,.
\end{equation}
Using \eqref{Claim*_proof_10}, and \eqref{Claim*_proof_16} in \eqref{Claim*_proof_15}, it follows that
\begin{equation}
\label{Claim*_proof_17}
II \lesssim \delta^{\varepsilon} A^4 \,\|w\|_{L^1}^2\,\|u^N-v^N\|_{X^{s_1,b}_{[T_0,T_0+\delta]}}\,.
\end{equation}
We combine \eqref{Claim*_proof_7}, \eqref{Claim*_proof_12}, \eqref{Claim*_proof_17}, and choose $\delta \sim_A 1$ possibly smaller satisfying \eqref{delta_A_condition} to deduce that 
\begin{equation}
\label{Claim*_proof_18}
\|v^N-u^N\|_{X^{s_1,b}_{[T_0,T_0+\delta]}} \lesssim
\|w\|_{L^1}\, A \bigl(\|R_N^{+}w\|_{L^1}+N^{s_1-s}\, \|w\|_{L^1}\bigr)\,.
\end{equation}
Combining \eqref{Claim*_proof_18} and Lemma \ref{X^{s,b}_space_properties} (i), it follows that 
\begin{equation}
\label{Claim*_proof_19}
\sup_{t \in [T_0,T_0+\delta]} \|v^N(t)-u^N(t)\|_{H^s} \lesssim \|w\|_{L^1}\, A \bigl(\|R_N^{+}w\|_{L^1}+N^{s_1-s}\, \|w\|_{L^1}\bigr)\,.
\end{equation}
Using \eqref{Claim*_proof_4} and \eqref{Claim*_proof_19}, it follows that 
\begin{equation}
\label{Claim*_proof_20}
\sup_{t \in [T_0,T_0+\delta]} \|u(t)-u^N(t)\|_{H^s} \leq  C_0\,\|w\|_{L^1}\, A \bigl(\|R_N^{+}w\|_{L^1}+N^{s_1-s}\, \|w\|_{L^1}\bigr) + \mathcal{K} \sigma_0=:\sigma_1\,.
\end{equation}
for some constant $C_0$ (which depends on $s$ and $s_1$, but we suppress this dependence here).

We now iterate this construction. Namely, we start from the time interval $[T_0+(j-1)\delta,T_0+j\delta]$ on which we have 
\begin{equation*}
\sup_{t \in [T_0+(j-1)\delta,T_0+j\delta]} \|u(t)-u^N(t)\|_{H^s} \leq \sigma_j
\end{equation*}
and use the above arguments to deduce that 
\begin{equation*}
\sup_{t \in [T_0+j\delta,T_0+(j+1)\delta]} \|u(t)-u^N(t)\|_{H^s} \leq \sigma_{j+1}\,,
\end{equation*}
where 
\begin{equation}
\label{Claim*_proof_21}
\sigma_{j+1}:=C_0\,\|w\|_{L^1}\, A \bigl(\|R_N^{+}w\|_{L^1}+N^{s_1-s}\, \|w\|_{L^1}\bigr) + \mathcal{K} \sigma_j\,.
\end{equation}
The iteration step is possible provided that 
\begin{equation}
\label{sigma_j_condition}
\sigma_j \leq 1\,.
\end{equation}
(Recall \eqref{sigma_0_bound} and \eqref{Claim*_proof_3} above).

From \eqref{Claim*_proof_21}, by recalling \eqref{R_N_limit}, it follows that \eqref{sigma_j_condition} holds for all $j \leq \lceil T/\delta \rceil$ provided that $\sigma_0$ is chosen sufficiently small and provided that $N$ is chosen large enough. Note that
\begin{equation*}
[0,T] \subset [0,T_0] \cup  \bigcup_{j=1}^{\lceil T/\delta \rceil} [T_0+(j-1)\delta,T_0+j\delta]\,.
\end{equation*}
The claim then follows.

\end{proof}

\section{The time-independent problem with bounded interaction potentials. Proof of Theorem \ref{q_bounded_state_convergence_thm}} 
\label{q_Power_series_expansions_of_the_quantum_and_classical_states}
In this section, we consider $w$ as in Assumption \ref{w_assumption_bounded} above. Our goal is to prove Theorem \ref{q_bounded_state_convergence_thm}. As a preliminary step in the analysis, we argue and expand the quantum and classical states into a power series. As in \cite{RS22}, we note that, due to the presence of the cut-off, the resulting series are analytic in the complex plane. The precise series in the quantum and classical setting are respectively given in \eqref{q_quantum_power_series_defn} and \eqref{q_classical_power_series_defn} below. In Section \ref{q_subsection_basic_estimates}, we state several estimates that will be used in the analysis. In Section \ref{q_subsection_power_series}, we explicitly compute the expansion for the quantum and classical states mentioned above. In Sections \ref{q_subsection_quantum_power_series_bounds} and \ref{q_subsection_classical_power_series_bounds}, we prove bounds on the explicit and remainder terms of the resulting series. In Section \ref{q_subsection_bounded_proof}, we comment on how to use this to prove Theorem \ref{q_bounded_state_convergence_thm}, provided that we have convergence of the untruncated explicit terms given by Proposition \ref{q_expansion_nugeq0}. 
We prove Proposition \ref{q_expansion_nugeq0} by using graphical methods in Section \ref{q_untruncated}.

\subsection{Basic Estimates}
\label{q_subsection_basic_estimates}
Throughout this section, we fix $p \in \mbb{N}^*$ and take $\xi \in \mc{L}(\mfhp)$ unless stated otherwise. We have the following estimates on the quantities $\Theta_{\tau}(\xi)$ and $\Theta(\xi)$ defined in \eqref{q_quantum_lift} and \eqref{q_random_var_defn} respectively.
\begin{lemma}
\label{q_quantum_lift_estimate}
For any $n \in \mbb{N}^*$, we have
\begin{equation*}
\left\|\Th_\tau(\xi)\big|_{\mf{h}^{(n)}}\right\| \leq \left(\frac{n}{\tau}\right)^p \|\xi\|.
\end{equation*}
\end{lemma}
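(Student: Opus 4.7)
The strategy is to expand $\Theta_\tau(\xi)$ in terms of the unscaled creation and annihilation operators, recognize the resulting object as the standard second-quantized lift of $\xi$, and then bound it by a counting argument.

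Substituting the definitions \eqref{varphi_tau^sharp}--\eqref{varphi_tau^sharp_2} into \eqref{q_quantum_lift} gives
$$\Theta_\tau(\xi) = \frac{1}{\tau^{p}}\int dx_1 \cdots dx_p\, dy_1 \cdots dy_p\, \xi(x_1,\ldots,x_p;y_1,\ldots,y_p)\, b^*(x_1)\cdots b^*(x_p)\,b(y_1)\cdots b(y_p).$$
The central step is to establish the identity
$$\int \xi(x;y)\, b^*(x_1)\cdots b^*(x_p)\,b(y_1)\cdots b(y_p)\,\Big|_{\mfh^{(n)}} = \sum_{\substack{(j_1,\ldots,j_p)\in\{1,\ldots,n\}^p \\ \text{pairwise distinct}}} \xi_{j_1,\ldots,j_p},$$
interpreted as zero when $n<p$, where $\xi_{j_1,\ldots,j_p}$ denotes the natural action of $\xi$ on the coordinates indexed by $j_1,\ldots,j_p$ inside $\mfh^{(n)}\subset \mfh^{\otimes n}$. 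This identity is a direct calculation from the explicit kernels \eqref{Operator_b^*}--\eqref{Operator_b}: iterated application of $b(y_1)\cdots b(y_p)$ to a symmetric vector in $\mfh^{(n)}$ produces a combinatorial factor $\sqrt{n!/(n-p)!}$, while iterated application of $b^*(x_1)\cdots b^*(x_p)$ to a vector in $\mfh^{(n-p)}$ produces the reciprocal factor $1/\sqrt{n!/(n-p)!}$ together with a sum over injections of $\{1,\ldots,p\}$ into $\{1,\ldots,n\}$. The two prefactors cancel, the Dirac distributions collapse the $x$-integrations, and the bosonic symmetry of $\mfh^{(n)}$ shows that only the symmetrized part of $\xi$ contributes, so no symmetry assumption on $\xi$ is required.

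Given the identity, the bound follows immediately. Each $\xi_{j_1,\ldots,j_p}$ is, up to a permutation of tensor factors, just $\xi\otimes\mathbf{1}^{\otimes(n-p)}$, so its operator norm on $\mfh^{(n)}\subset\mfh^{\otimes n}$ is at most $\|\xi\|$. Since the number of ordered $p$-tuples of pairwise distinct indices in $\{1,\ldots,n\}$ equals $n!/(n-p)!$, the triangle inequality yields
$$\left\|\Theta_\tau(\xi)\big|_{\mfh^{(n)}}\right\| \leq \frac{1}{\tau^{p}}\cdot \frac{n!}{(n-p)!}\,\|\xi\| \leq \left(\frac{n}{\tau}\right)^{p}\|\xi\|,$$
using $n!/(n-p)! = n(n-1)\cdots(n-p+1) \leq n^{p}$. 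I expect no genuine analytic obstacle here; the only point requiring care is the bookkeeping of the combinatorial identity above.
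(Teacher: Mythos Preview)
Your argument is correct and is precisely the standard second-quantization computation. The paper itself does not give a proof but simply cites \cite[(3.88)]{Kno09}; your write-up unpacks exactly what that reference contains, so there is no genuine difference in approach.
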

\begin{lemma}
\label{q_random_var_estimate}
We have
\begin{equation*}
\left|\Th(\xi)\right| \leq \|\vph\|^{2p}_{\mf{h}}\|\xi\|.
\end{equation*}
\end{lemma}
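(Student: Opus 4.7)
The plan is to recognize $\Theta(\xi)$ as a single matrix element of the operator $\xi$ between $p$-fold tensor products of $\varphi$ with itself, and then apply the standard operator norm bound.

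First I would introduce the tensor $\varphi^{\otimes p} \in \mathfrak{h}^{(p)}$ defined by $\varphi^{\otimes p}(y_1,\ldots,y_p) := \varphi(y_1)\cdots\varphi(y_p)$. The estimate is trivial unless $\varphi \in \mathfrak{h}$, in which case Fubini gives
\begin{equation*}
\|\varphi^{\otimes p}\|_{\mathfrak{h}^{(p)}}^2 = \int \dd y_1 \cdots \dd y_p \, |\varphi(y_1)|^2 \cdots |\varphi(y_p)|^2 = \|\varphi\|_{\mathfrak{h}}^{2p}.
\end{equation*}

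Next I would unfold definition \eqref{q_random_var_defn} using the inner product convention \eqref{inner_product} to identify
\begin{equation*}
\Theta(\xi) = \langle \varphi^{\otimes p}, \xi \, \varphi^{\otimes p}\rangle_{\mathfrak{h}^{(p)}}.
\end{equation*}
Here the factors $\overline{\varphi}(x_1)\cdots\overline{\varphi}(x_p)$ supply the complex conjugate in the antilinear slot of $\langle \cdot,\cdot \rangle_{\mathfrak{h}^{(p)}}$, while the integration against the kernel $\xi(x_1,\ldots,x_p; y_1,\ldots,y_p)$ against $\varphi(y_1)\cdots\varphi(y_p)$ produces $\xi \varphi^{\otimes p}$ in the linear slot.

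Finally, by Cauchy--Schwarz together with the definition of the operator norm,
\begin{equation*}
|\Theta(\xi)| = \bigl|\langle \varphi^{\otimes p}, \xi \, \varphi^{\otimes p}\rangle\bigr| \leq \|\xi\| \, \|\varphi^{\otimes p}\|_{\mathfrak{h}^{(p)}}^2 = \|\xi\| \, \|\varphi\|_{\mathfrak{h}}^{2p},
\end{equation*}
which is the stated bound. There is no real obstacle here: the only point requiring a small verification is that the arrangement of complex conjugates in \eqref{q_random_var_defn} matches the antilinearity convention of \eqref{inner_product}, which it does by direct inspection.
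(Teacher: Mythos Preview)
Your proof is correct and is precisely the argument the paper has in mind: the paper simply says that Lemma~\ref{q_random_var_estimate} ``is a consequence of \eqref{q_random_var_defn}'', and your write-up spells out that consequence by rewriting $\Theta(\xi)=\langle \varphi^{\otimes p},\xi\,\varphi^{\otimes p}\rangle_{\mathfrak{h}^{(p)}}$ and applying Cauchy--Schwarz together with the operator-norm bound.
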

Lemma \ref{q_quantum_lift_estimate} follows from \cite[(3.88)]{Kno09}, and Lemma \ref{q_random_var_estimate} is a consequence of \eqref{q_random_var_defn}. We also have the following estimates on the classical interaction, which follows immediately from H\"{o}lder's inequality.
\begin{lemma}
\label{q_classical_interaction_estimate_prop}
Suppose that the classical interaction $\mc{W}$ is defined as in \eqref{q_classical_interaction}. Then for $w \in L^\infty(\La)$, we have
\begin{equation*}
|\mc{W}| \leq \frac{1}{3} \|w\|_{L^\infty}^2 \|\vph\|_{L^2}^6.
\end{equation*}
\end{lemma}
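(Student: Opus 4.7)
The plan is to estimate directly from the defining expression
\begin{equation*}
\mc{W} = -\frac{1}{3} \int \dd x \, \dd y \, \dd z \, w(x-y)\,w(x-z)\, |\vph(x)|^2\,|\vph(y)|^2\,|\vph(z)|^2
\end{equation*}
given in \eqref{q_classical_interaction}, without invoking the alternative convolution form. Taking absolute values inside the integral and using $|w(x-y)|,|w(x-z)| \leq \|w\|_{L^\infty}$ pointwise (valid since $w \in L^{\infty}(\La)$) immediately yields
\begin{equation*}
|\mc{W}| \leq \frac{1}{3}\,\|w\|_{L^\infty}^{2} \int \dd x\,\dd y\,\dd z\,|\vph(x)|^2\,|\vph(y)|^2\,|\vph(z)|^2.
\end{equation*}

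The triple integral on the right factors by Fubini (all factors are nonnegative, so there is no issue of integrability before the bound is established) into $\bigl(\int \dd x\,|\vph(x)|^2\bigr)\bigl(\int \dd y\,|\vph(y)|^2\bigr)\bigl(\int \dd z\,|\vph(z)|^2\bigr) = \|\vph\|_{L^2}^{6}$, which gives the claimed bound. There is really no obstacle here; the only thing to note is that the argument makes sense even when $\vph \notin L^2$ in the usual pointwise sense, since both sides may equal $+\infty$ and the estimate remains vacuously true. (On the support of $\mbP^f_{\mathrm{Gibbs}}$ the right-hand side is of course finite thanks to the cut-off $f(\mc{N})$.) Alternatively, one can phrase the last step as a H\"older application on $\La^3$ with exponents $(\infty,\infty,1)$ against the nonnegative weight $|\vph(x)|^2|\vph(y)|^2|\vph(z)|^2$, which is essentially what the lemma statement alludes to.
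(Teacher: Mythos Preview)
Your proof is correct and matches the paper's approach: the paper simply states that the lemma ``follows immediately from H\"{o}lder's inequality,'' which is exactly what you carry out (pointwise bound on the two $w$ factors, then factor the remaining integral).
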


We also collect some estimates about Schatten space operators. The following result follows from the spectral decomposition of $|\mc{A}| = \sqrt{\mc{A}^*\mc{A}}$.
\begin{lemma}
\label{q_schatten_embedding}
Let $\mc{H}$ be a separable Hilbert space. Suppose $1 \leq p_1 \leq p_2 \leq \infty$ and $\mc{A} \in \mf{S}^{p_1}(\mc{H}) \cap \mf{S}^{p_2}(\mc{H})$. Then
\begin{equation*}
\|\mc{A}\|_{\mfS^{p_2}(\mc{H})} \leq \|\mc{A}\|_{\mfS^{p_1}(\mc{H})}\,.
\end{equation*}
\end{lemma}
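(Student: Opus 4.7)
The plan is to reduce the statement to the monotonicity of $\ell^p$ norms applied to the singular values of $\mathcal{A}$. Recall that for $p < \infty$, any $\mathcal{A} \in \mathfrak{S}^p(\mathcal{H})$ is compact, hence $|\mathcal{A}| = \sqrt{\mathcal{A}^*\mathcal{A}}$ admits a spectral decomposition $|\mathcal{A}| = \sum_{k} s_k\,|e_k\rangle\langle e_k|$, where $(s_k)_k$ is the non-increasing sequence of singular values of $\mathcal{A}$ and $(e_k)$ is an orthonormal basis of $(\ker |\mathcal{A}|)^{\perp}$. By definition of the Schatten norm, $\|\mathcal{A}\|_{\mathfrak{S}^q(\mathcal{H})} = \|(s_k)_k\|_{\ell^q}$ for $q \in [1,\infty)$, while $\|\mathcal{A}\|_{\mathfrak{S}^\infty(\mathcal{H})} = \sup_k s_k$.

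First I would dispense with the case $p_2 = \infty$. Since $s_k \geq 0$ and the sequence is non-increasing, we have $s_1 = \sup_k s_k$ and
\begin{equation*}
\|\mathcal{A}\|_{\mathfrak{S}^\infty(\mathcal{H})} = s_1 = \bigl(s_1^{p_1}\bigr)^{1/p_1} \leq \Bigl(\sum_k s_k^{p_1}\Bigr)^{1/p_1} = \|\mathcal{A}\|_{\mathfrak{S}^{p_1}(\mathcal{H})}\,.
\end{equation*}

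Next, for $1 \leq p_1 \leq p_2 < \infty$, I would split $s_k^{p_2} = s_k^{p_1}\cdot s_k^{p_2 - p_1}$ and bound the second factor by $\|\mathcal{A}\|_{\mathfrak{S}^\infty(\mathcal{H})}^{p_2-p_1}$, yielding
\begin{equation*}
\|\mathcal{A}\|_{\mathfrak{S}^{p_2}(\mathcal{H})}^{p_2} = \sum_k s_k^{p_2} \leq \|\mathcal{A}\|_{\mathfrak{S}^\infty(\mathcal{H})}^{p_2-p_1}\sum_k s_k^{p_1} = \|\mathcal{A}\|_{\mathfrak{S}^\infty(\mathcal{H})}^{p_2-p_1}\,\|\mathcal{A}\|_{\mathfrak{S}^{p_1}(\mathcal{H})}^{p_1}\,.
\end{equation*}
Combining this with the already established bound $\|\mathcal{A}\|_{\mathfrak{S}^\infty(\mathcal{H})} \leq \|\mathcal{A}\|_{\mathfrak{S}^{p_1}(\mathcal{H})}$ and taking $p_2$-th roots yields the claim.

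There is essentially no obstacle here: the only minor point to check is the compactness of $\mathcal{A}$ when $p_1 < \infty$, which guarantees the singular value decomposition; the case $p_1 = p_2 = \infty$ is trivial. No assumption about $\mathcal{A}$ being self-adjoint is needed because the Schatten norms depend only on $|\mathcal{A}|$.
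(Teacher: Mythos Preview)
Your proof is correct and is precisely the argument the paper has in mind: it simply asserts that the result ``follows from the spectral decomposition of $|\mathcal{A}| = \sqrt{\mathcal{A}^*\mathcal{A}}$'' without further detail, and your write-up fills in exactly that reduction to the $\ell^p$ monotonicity of the singular-value sequence.
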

We also have the following form of H\"older's inequality for Schatten spaces \cite{Sim05}.
\begin{proposition}[H\"older's inequality for Schatten spaces]
\label{q_schatten_holder_inequality}
Let $\mc{H}$ be a separable Hilbert space, and let $p,q \in [1,\infty]$ be such that $\frac{1}{p} + \frac{1}{q} = \frac{1}{r}$. Suppose $\mc{A}_1 \in \mf{S}^p(\mc{H})$ and $\mc{A}_2 \in \mf{S}^q(\mc{H})$. Then
\begin{equation*}
\|\mc{A}_1\mc{A}_2\|_{\mf{S}^r(\mc{H})} \leq \|\mc{A}_1\|_{\mfS^{p}(\mc{H})}\, \|\mc{A}_2\|_{\mf{S}^q(\mc{H})}\,.
\end{equation*}
\end{proposition}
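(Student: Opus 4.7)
The plan is to reduce the inequality to ordinary H\"older on sequences of singular values, after handling the boundary cases. First, consider the boundary cases. If $p = q = r = \infty$, the inequality is just submultiplicativity of the operator norm: $\|\mc{A}_1 \mc{A}_2\| \leq \|\mc{A}_1\| \|\mc{A}_2\|$. If exactly one exponent is infinite, say $q = \infty$ (so $r = p$), the bound $\|\mc{A}_1 \mc{A}_2\|_{\mfS^p} \leq \|\mc{A}_1\|_{\mfS^p} \|\mc{A}_2\|$ follows from the elementary singular value inequality $\sigma_k(\mc{A}_1 \mc{A}_2) \leq \sigma_k(\mc{A}_1) \|\mc{A}_2\|$, which is immediate from the min-max characterisation of $\sigma_k$. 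So it suffices to treat $p, q, r \in [1, \infty)$.

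The key input is then the Ky Fan / Horn singular value inequality
\begin{equation*}
\sum_{k \geq 1} \sigma_k(\mc{A}_1 \mc{A}_2)^r \leq \sum_{k \geq 1} \sigma_k(\mc{A}_1)^r\, \sigma_k(\mc{A}_2)^r\,.
\end{equation*}
This follows from the weak log-majorisation $\prod_{k=1}^n \sigma_k(\mc{A}_1 \mc{A}_2) \leq \prod_{k=1}^n \sigma_k(\mc{A}_1)\,\sigma_k(\mc{A}_2)$, a standard consequence of the min-max formula for singular values, combined with the fact that weak log-majorisation passes to weak majorisation under any function $f$ such that $f \circ \exp$ is convex and non-decreasing; the function $t \mapsto t^r$ for $r > 0$ is of this form.

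Finally, since $\frac{r}{p} + \frac{r}{q} = 1$, the classical H\"older inequality for sequences yields
\begin{equation*}
\sum_{k \geq 1} \sigma_k(\mc{A}_1)^r\, \sigma_k(\mc{A}_2)^r \leq \biggl(\sum_{k \geq 1} \sigma_k(\mc{A}_1)^p\biggr)^{r/p} \biggl(\sum_{k \geq 1} \sigma_k(\mc{A}_2)^q\biggr)^{r/q} = \|\mc{A}_1\|_{\mfS^p}^r\, \|\mc{A}_2\|_{\mfS^q}^r\,.
\end{equation*}
Combining with the previous step and taking $r$-th roots gives the desired inequality. The only non-routine step is the Ky Fan / Horn singular value inequality, which is a classical but nontrivial fact whose full justification can be found in \cite{Sim05}; every other part of the argument reduces to ordinary H\"older and elementary min-max reasoning.
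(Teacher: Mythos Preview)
Your argument is correct and is essentially the standard proof. The paper itself does not prove this proposition at all: it simply states the result and cites \cite{Sim05}, so there is no paper-proof to compare against beyond that reference. Your sketch via the Horn--Weyl multiplicative majorisation $\prod_{k=1}^n \sigma_k(\mc{A}_1\mc{A}_2) \leq \prod_{k=1}^n \sigma_k(\mc{A}_1)\sigma_k(\mc{A}_2)$, followed by passage to additive majorisation and then scalar H\"older, is exactly the route taken in Simon's book, so you have effectively reproduced what the paper defers to the citation.
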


\subsection{Power series expansions of the classical and quantum states}
\label{q_subsection_power_series}
In this section, we compute the power series for the quantum and classical states. Let us recall \eqref{q_quantum_state_defn} and \eqref{q_quantum_lift}. We note the following identities.
\begin{equation}
	\label{q_quantum_state_rewrite_1}
\rt(\Th_\tau(\xi)) = \frac{\tr_{\tau,1}(\Th_\tau(\xi))}{\tr_{\tau,1}(\mathbf{1})}\,,
\end{equation}
where for $z \in \C$, we let
\begin{equation}
\label{quantum_state_series} 
\tr_{\tau,z}(\mc{A}) := \frac{1}{\Zf} \,\Tr \left(\mc{A} \,\e^{-\Hf - z \mc{W}_\tau}\, f(\mc{N}_\tau)\right)\,,
\end{equation}
where we recall \eqref{Quantum_interaction}--\eqref{Quantum_free_Hamiltonian}
and note that $\mathbf{1}$ is the identity operator on $\mc{F}$. Let us also define
\begin{equation}
	\label{q_quantum_power_series_defn}
F^\xi_{\tau} (z) := \tr_{\tau,z} (\Th_\tau(\xi))\,.
\end{equation}
By performing a Duhamel expansion up to order $M$, we obtain the following result.
\begin{lemma}
\label{q_quantum_duhamel_expansion_lem}
For $M \in \mbb{N}^*$, we have $F^\xi_\tau(z) = \sum_{m=0}^{M-1} a^\xi_{\tm} z^m + R^\xi_{\tM}(z)$. Here, for $m \in \{0,1,\ldots,M-1\}$, we let
\begin{multline}
	\label{q_quantum_explicit_defn}
a^\xi_{\tm} := \frac{1}{\Zf} \,\mathrm{Tr} \bigg( (-1)^m \int_0^1 \dd t_1 \int_0^{t_1} \dd t_2 \, \cdots \int_0^{t_{m-1}} \dd t_m \, \Th_\tau(\xi) \,\e^{-(1-t_1)\Hf} \,\mc{W}_\tau \\
\times \e^{-(t_1-t_2)\Hf} \cdots \mc{W}_\tau\, \e^{-(t_{m-1}-t_m)\Hf} \,\mc{W}_\tau\, \e^{-t_m\Hf}\, f(\mc{N}_\tau)\bigg)\,,
\end{multline}
and, moreover
\begin{multline}
	\label{q_quantum_remainder_defn}
R^\xi_{\tau,M}(z) := \frac{1}{\Zf} \,\mathrm{Tr} \bigg( (-z)^{M} \int_0^1 \dd t_1 \int_0^{t_1} \dd t_2 \, \cdots \int_0^{t_{m-1}} \dd t_M \, \Th_\tau(\xi)\, \e^{-(1-t_1)\Hf}\, \mc{W}_\tau \\
\times \e^{-(t_1-t_2)\Hf} \cdots \mc{W}_\tau \, \e^{-(t_{M-1}-t_M)\Hf} \,\mc{W}_\tau \,\e^{-t_m(\Hf + z \mc{W}_\tau)} \,f(\mc{N}_\tau)\bigg)\,.
\end{multline}
\end{lemma}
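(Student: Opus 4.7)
The plan is to establish the identity by iterating the standard Duhamel formula for the perturbed semigroup $\mathrm{e}^{-(H_{\tau,0}+z\mathcal{W}_\tau)}$, and then inserting the resulting operator identity into the trace defining $F^\xi_\tau(z)$. The key ingredient is the elementary first-order Duhamel relation: for $s\in[0,1]$,
\begin{equation*}
\mathrm{e}^{-s(H_{\tau,0}+z\mathcal{W}_\tau)} = \mathrm{e}^{-sH_{\tau,0}} - z\int_0^s \dd u\, \mathrm{e}^{-(s-u)H_{\tau,0}}\,\mathcal{W}_\tau\,\mathrm{e}^{-u(H_{\tau,0}+z\mathcal{W}_\tau)}\,,
\end{equation*}
which is obtained by differentiating $s\mapsto \mathrm{e}^{-(s-u)H_{\tau,0}}\mathrm{e}^{-u(H_{\tau,0}+z\mathcal{W}_\tau)}$ in $u$ and integrating. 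On each $n$-particle sector of $\mc{F}$ the operators $H_{\tau,0}$ and $\mathcal{W}_\tau$ are bounded (they act on $\mfh^{(n)}$ only and $\mathcal{W}_\tau$ is a bounded multiplication operator by Assumption \ref{w_assumption_bounded}), so this identity holds sectorwise without domain subtleties.

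Next, I would iterate this identity $M$ times by induction on $M$. At step $m$, Duhamel is applied to the rightmost factor $\mathrm{e}^{-t_m(H_{\tau,0}+z\mathcal{W}_\tau)}$ appearing in the previously obtained remainder, producing the simplex integral over $0\leq t_{m+1}\leq t_m\leq\cdots\leq t_1\leq 1$ and an extra insertion of $\mathcal{W}_\tau$, with a new factor $-z$. The result after $M$ iterations is
\begin{equation*}
\mathrm{e}^{-(H_{\tau,0}+z\mathcal{W}_\tau)} = \sum_{m=0}^{M-1}(-z)^m I_m + (-z)^M I^{\mathrm{rem}}_M(z)\,,
\end{equation*}
where $I_m$ and $I^{\mathrm{rem}}_M(z)$ are precisely the nested integrals appearing inside the traces in \eqref{q_quantum_explicit_defn} and \eqref{q_quantum_remainder_defn} (without the $\Theta_\tau(\xi)$ and $f(\mcN_\tau)$). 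Multiplying on the left by $\Theta_\tau(\xi)$ and on the right by $f(\mcN_\tau)$ (which commutes with $H_{\tau,0}$, $\mathcal{W}_\tau$, and hence with every factor in sight), taking the trace over $\mc{F}$, and dividing by $Z_{\tau,0}$ produces the claimed decomposition with $a^\xi_{\tau,m}$ and $R^\xi_{\tau,M}(z)$ as in the statement.

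The one thing requiring a brief check is that the various manipulations (interchange of trace and integral, commutation of $f(\mcN_\tau)$ past the factors) are justified. This follows because the cut-off $f(\mcN_\tau)$ restricts to particle numbers $n\lesssim \tau$, so on the support of $f(\mcN_\tau)$ the operators $\Theta_\tau(\xi)$, $H_{\tau,0}$, and $\mathcal{W}_\tau$ are all bounded uniformly in the integration variables by Lemma \ref{q_quantum_lift_estimate} and the analogous bound on $\mathcal{W}_\tau$; combined with $\mathrm{e}^{-H_{\tau,0}} \in \mfS^1$ from \eqref{q_trace_finite} and the semigroup decomposition $\mathrm{e}^{-H_{\tau,0}} = \prod_j \mathrm{e}^{-(t_{j-1}-t_j)H_{\tau,0}}$, Hölder in Schatten spaces (Proposition \ref{q_schatten_holder_inequality}) yields trace-class bounds allowing Fubini. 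I expect no serious obstacle here; the whole proof is a direct book-keeping argument, with the main (purely notational) burden being the correct indexing of the simplex variables in the inductive step.
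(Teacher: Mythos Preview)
Your proposal is correct and follows essentially the same approach as the paper: the paper's proof consists of the single sentence ``We use the identity $\e^{X+Y}=\e^X+\int_{0}^{1}\dd t\, \e^{(1-t)X}\,Y\,\e^{t(X+Y)}$ $M$ times,'' which is precisely the first-order Duhamel relation you wrote (after the change of variables $u=ts$), iterated as you describe. Your additional remarks on commuting $f(\mc{N}_\tau)$ and justifying Fubini via Schatten bounds are welcome supplements that the paper leaves implicit.
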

\begin{proof}
We use the identity
\begin{equation*}
\e^{X+Y}=\e^X+\int_{0}^{1}\dd t\, \e^{(1-t)X}\,Y\,\e^{X+tY}
\end{equation*}
$M$ times in \eqref{quantum_state_series}--\eqref{q_quantum_power_series_defn}.
\end{proof}
Given $g:\mbb{C} \rightarrow \mbb{C}$, and any operator $\mc{A}: \mc{F} \to \mc{F}$, that commutes with $\mc{N}_{\tau}$, we note that $\mc{A}$ also commutes with $g(\mc{N}_{\tau})$. This is because the operator $g(\mc{N}_{\tau})$ acts on the $n-th$ sector of Fock space as multiplication by $\frac{n}{\tau}$.
In particular, 
it follows that that, for every $\alpha>0$, $f^{\alpha}(\mc{N}_\tau)$ commutes with any operators of the form $\mc{W}_\tau$, $\e^{-t\Hf}$, $\e^{-t(\Hf+z\mc{W}_\tau)}$, occurring as factors in the integrands in \eqref{q_quantum_explicit_defn}--\eqref{q_quantum_remainder_defn} above. We use this fact without further mention in the sequel.

Let us recall \eqref{q_classical_state_defn} and \eqref{q_random_var_defn}. By analogy with \eqref{q_quantum_state_rewrite_1}, we rewrite the classical state as
\begin{equation}
\label{q_classical_state_rewrite_1}
\rho(\Th(\xi)) = \frac{\tr_1(\Th(\xi))}{\tr_1(\mathbf{1})}\,,
\end{equation}
where for a random variable $X$ and $z \in \C$, $\tr_z$ is defined as
\begin{equation}
\label{classical_state_series} 
\tr_z(X) := \int \dd \mu \, X \,\e^{-z \mc{W}} \,f(\mc{N})\,.
\end{equation}
For $z \in \C$, we define 
\begin{equation}
	\label{q_classical_power_series_defn}
	F^\xi (z) := \tr_z(\Th(\xi))\,.
\end{equation}
Then, recalling \eqref{q_classical_interaction}, we have the following analogue of Lemma \ref{q_quantum_duhamel_expansion_lem}.
\begin{lemma}
\label{q_classical_duhamel_expansion_lem}
For $M \in \N$, we have $F^\xi(z) = \sum_{m=0}^{M-1} a^\xi_{m} z^m + R^\xi_{M}(z)$. Here
\begin{equation}
	\label{q_classical_explicit_defn}
a^\xi_m := \frac{(-1)^m}{m!} \int \dd \mu \, \Th(\xi)\, \mc{W}^m\, f(\mc{N})\,,
\end{equation}
and
\begin{equation}
\label{q_classical_remainder_defn}
R^\xi_{M}(z) := \frac{(-z)^M}{M!} \int \dd \mu \, \Th(\xi) \,\mc{W}^M\, \e^{-\tilde{z} \, \mc{W}} \,f(\mc{N})\,,
\end{equation}
for some $\tilde{z} \in [0,z]$.
\end{lemma}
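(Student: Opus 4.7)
The lemma is essentially Taylor's theorem applied pointwise in $\omega$ to the exponential $z \mapsto \e^{-z\mc{W}(\omega)}$, combined with integration against $\mu$. The plan is as follows.

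First, I fix $\omega$ such that $\vph^\omega \in H^{1/3} \subset L^6$, which holds $\mu$-a.s.\ by \eqref{q_regularity_vph} and Sobolev embedding. For such $\omega$, Lemma \ref{Multlinear_estimates_1} gives $|\mc{W}^\omega| \leq \|w\|_{L^1}^2 \|\vph^\omega\|_{L^6}^6 < \infty$, and in fact $\mc{W}^\omega \leq 0$ since the integrand in \eqref{q_classical_interaction} equals $|\vph|^2 (w*|\vph|^2)^2$ up to the global minus sign. I then apply Taylor's theorem with Lagrange remainder to the real-analytic function $z \mapsto \e^{-z\mc{W}^\omega}$ expanded at $z=0$, noting that its $m$-th derivative is $(-\mc{W}^\omega)^m \e^{-z\mc{W}^\omega}$, to obtain
\begin{equation*}
\e^{-z\mc{W}^\omega} = \sum_{m=0}^{M-1}\frac{(-z\mc{W}^\omega)^m}{m!} + \frac{(-z\mc{W}^\omega)^M}{M!}\, \e^{-\tilde{z}(\omega)\mc{W}^\omega},
\end{equation*}
for some $\tilde{z}(\omega) \in [0,z]$.

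Next, I multiply through by $\Th(\xi)(\omega)\,f(\mc{N}(\omega))$ and integrate against $\mu$. Linearity of the integral and finiteness of the sum then trivially yield the identity $F^\xi(z) = \sum_{m=0}^{M-1} a_m^\xi z^m + R_M^\xi(z)$ with $a_m^\xi$ and $R_M^\xi(z)$ as in \eqref{q_classical_explicit_defn}--\eqref{q_classical_remainder_defn}, once I verify the absolute integrability of each summand and of the remainder.

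The key tool for integrability is the cutoff $f(\mc{N})$ from Assumption \ref{f_assumption}, which restricts the integration to $\{\mc{N} \leq K\}$. On this set, Lemma \ref{q_random_var_estimate} gives $|\Th(\xi)| \leq \|\xi\|\,K^p$ uniformly. For each fixed $m$, combining $|\mc{W}|^m \leq \|w\|_{L^1}^{2m}\|\vph\|_{L^6}^{6m}$ from Lemma \ref{Multlinear_estimates_1} with standard Gaussian moment bounds on $\|\vph\|_{L^6}$ yields $a_m^\xi \in \mbb{C}$. For the remainder term, the crucial observation is that $\mc{W} \leq 0$ and $\tilde{z}(\omega) \in [0,z]$ with $z$ real give $\e^{-\tilde{z}(\omega)\mc{W}} \leq \e^{-z\mc{W}}$, so absolute integrability of the remainder reduces to that of $\Th(\xi)\mc{W}^M \e^{-z\mc{W}} f(\mc{N})$; this in turn follows by combining the above moment bounds with the exponential integrability Proposition \ref{Cauchy_problem_2}(i), taking $K$ small as in Assumption \ref{f_assumption}. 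I do not anticipate any serious obstacle; the one mild subtlety is that the Lagrange $\tilde{z}$ is genuinely $\omega$-dependent and should be read that way inside the integral \eqref{q_classical_remainder_defn}.
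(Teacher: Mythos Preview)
Your proposal is correct and is essentially the argument the paper has in mind; the paper states this lemma without proof, treating it as an immediate consequence of Taylor's theorem applied to $z\mapsto\e^{-z\mc{W}}$. Your observation that $\tilde z$ must be read as $\omega$-dependent inside the integral is the right way to interpret \eqref{q_classical_remainder_defn}.

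One small simplification: since Section~\ref{q_Power_series_expansions_of_the_quantum_and_classical_states} works under Assumption~\ref{w_assumption_bounded}, the paper's Lemma~\ref{q_classical_interaction_estimate_prop} gives the pointwise bound $|\mc{W}|\le\tfrac{1}{3}\|w\|_{L^\infty}^2\|\vph\|_{L^2}^6$, which combined with the cutoff $f(\mc{N})$ yields $|\mc{W}|\,f^{1/(m+1)}(\mc{N})\le\tfrac{1}{3}K^3\|w\|_{L^\infty}^2$ uniformly in $\omega$ (this is exactly \eqref{q_classical_explicit_needed} in Lemma~\ref{q_classical_explicit_bounds_lemma}). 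With this, all integrands are bounded pointwise and the integrability checks become trivial, so you do not need Gaussian moment bounds on $\|\vph\|_{L^6}$, the sign of $\mc{W}$, or Proposition~\ref{Cauchy_problem_2}(i). Your route via $\|w\|_{L^1}$ and $\|\vph\|_{L^6}$ is valid and more general, but overkill here.
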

\subsection{Analysis of the quantum series \eqref{q_quantum_power_series_defn}}
\label{q_subsection_quantum_power_series_bounds} 
In this section we prove that the explicit and remainder terms defined in \eqref{q_quantum_explicit_defn} and \eqref{q_quantum_remainder_defn} satisfy sufficient bounds for \eqref{q_quantum_power_series_defn} to be analytic.
\begin{lemma}
\label{q_quantum_explicit_bounds_lem}
For any $m \in \mbb{N}$ and $a^\xi_{\tm}$ defined as in \eqref{q_quantum_explicit_defn}, we have
\begin{equation}
\label{q_quantum_explicit_bounds_1}
|a_{\tm}^\xi| \leq \frac{(K^3\|w\|^2_{L^\infty})^mK^p \|\xi\|}{3^m \, m!}.
\end{equation}
\end{lemma}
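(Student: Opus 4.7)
The plan is to bound the integrand of \eqref{q_quantum_explicit_defn} uniformly on the time-ordered simplex $\{0\leq t_m\leq t_{m-1}\leq\cdots\leq t_1\leq 1\}$, which has volume $1/m!$, and then integrate.

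The first step is to distribute the cut-off. Since $\mcN_\tau$ commutes with each of $\Hf$, $\mcW_\tau$, $\Th_\tau(\xi)$, and $f(\mcN_\tau)$, I factor
\[
f(\mcN_\tau)=\prod_{i=0}^{m} f^{1/(m+1)}(\mcN_\tau)
\]
and, using cyclicity of the trace together with the commutativity just noted, move one factor of $f^{1/(m+1)}(\mcN_\tau)$ next to $\Th_\tau(\xi)$ and one next to each of the $m$ copies of $\mcW_\tau$ in \eqref{q_quantum_explicit_defn}. This leaves the heat semigroups $\e^{-(1-t_1)\Hf},\e^{-(t_1-t_2)\Hf},\ldots,\e^{-t_m\Hf}$ untouched.

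The second step is to bound the $m+1$ resulting operators. Since $f$ is supported in $[0,K]$ (Assumption \ref{f_assumption}), the operator $f^{1/(m+1)}(\mcN_\tau)$ vanishes on any sector $\mfh^{(n)}$ with $n/\tau>K$. By Lemma \ref{q_quantum_lift_estimate},
\[
\bigl\|f^{1/(m+1)}(\mcN_\tau)\,\Th_\tau(\xi)\bigr\|_{\mcL(\mcF)}\leq K^p\|\xi\|.
\]
Moreover, from $\mcW_\tau=\tfrac13\Th_\tau(W)$ with $W\in\mcL(\mfh^{(3)})$ acting as multiplication by a function bounded by $\|w\|_{L^\infty}^2$ (see \eqref{W_kernel}), Lemma \ref{q_quantum_lift_estimate} with $p=3$ gives
\[
\bigl\|f^{1/(m+1)}(\mcN_\tau)\,\mcW_\tau\bigr\|_{\mcL(\mcF)}\leq \frac{K^3\|w\|_{L^\infty}^2}{3}.
\]

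The third step is to apply the Schatten H\"older inequality (Proposition \ref{q_schatten_holder_inequality}). I assign $\mfS^\infty$ to each of the $m+1$ bounded operators from Step~2, and $\mfS^{1/s_j}$ to the $j$-th heat semigroup $\e^{-s_j\Hf}$, where $s_0:=1-t_1,\ s_1:=t_1-t_2,\ \ldots,\ s_m:=t_m$ satisfy $\sum_{j=0}^m s_j=1$; the corresponding reciprocals then sum to $1$, as H\"older requires. Using
\[
\bigl\|\e^{-s\Hf}\bigr\|_{\mfS^{1/s}}=\bigl(\Tr\,\e^{-\Hf}\bigr)^{s}=\Zf^{\,s},
\]
the product of these Schatten norms telescopes to $\Zf^{\,\sum_j s_j}=\Zf$. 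Thus the trace in \eqref{q_quantum_explicit_defn} is, for every fixed $(t_1,\ldots,t_m)$ in the simplex, bounded by
\[
\Zf\cdot K^p\|\xi\|\cdot\Bigl(\frac{K^3\|w\|_{L^\infty}^2}{3}\Bigr)^m.
\]

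The final step is to combine: the prefactor $1/\Zf$ cancels the $\Zf$ above, and integrating the uniform bound over the simplex contributes a factor $1/m!$, yielding precisely \eqref{q_quantum_explicit_bounds_1}. The only non-routine point is the correct distribution of $f^{1/(m+1)}(\mcN_\tau)$ so that each unbounded $\Th_\tau(\cdot)$-type operator is turned into a bounded operator before invoking H\"older; this is made possible entirely by the fact that $\mcN_\tau$ commutes with every operator appearing in the integrand.
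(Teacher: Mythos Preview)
Your proof is correct and follows essentially the same approach as the paper: distribute $f(\mcN_\tau)=\prod_{i=0}^m f^{1/(m+1)}(\mcN_\tau)$ across the operators using commutativity, apply Schatten--H\"older with exponents $1/s_j$ on the heat semigroups and $\infty$ on the cut-off operators, use Lemma~\ref{q_quantum_lift_estimate} together with the support of $f$ to bound the $\mfS^\infty$ factors, and integrate over the simplex. The only cosmetic difference is that you route the bound on $\mcW_\tau$ through $\mcW_\tau=\tfrac13\Th_\tau(W)$ with $\|W\|\leq\|w\|_{L^\infty}^2$, whereas the paper computes the action of $\mcW_\tau$ on $\mfh^{(n)}$ directly; both yield the same inequality.
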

\begin{proof}
We argue similarly to the proof of \cite[Lemma 3.5]{RS22}. 
Proposition \ref{q_schatten_holder_inequality} implies
\begin{multline}
	\label{q_quantum_explicit_rewrite1}
|a_{\tm}^\xi| \leq \frac{1}{\Zf} \int_0^1 \dd t_1 \int^{t_1}_{0} \dd t_2 \cdots \int^{t_{m-1}}_0 \dd t_m  \left\| \Th_\tau(\xi) f^{\frac{1}{m+1}}(\mc{N}_\tau) \right\|_{\mfS^{\infty}(\mfh)} \\
\times \left\| \mc{W}_\tau f^{\frac{1}{m+1}}(\mc{N}_\tau) \right\|^m_{\mfS^{\infty}(\mfh)}
\prod_{j=0}^m \left\|\e^{-(t_j-t_{j+1})H_{\tau,0}}\right\|_{\mfS^{\frac{1}{t_j-t_{j+1}}}(\mfh)}\,,
\end{multline}
where we take the convention $t_0 := 1$ and $t_{m+1} := 0$. Noting that 
\begin{equation*}
\|e^{-s\Hf}\|_{\mfS^{1/s}(\mfh)} = (\Zf)^{s}\,, 
\end{equation*}
since $\e^{-s\Hf}$ is a positive operator, it follows from \eqref{q_quantum_explicit_rewrite1} that
\begin{equation}
\label{q_quantum_explicit_rewrite2}
|a_{\tm}^\xi| \leq \frac{1}{m!} \left\| \Th_\tau(\xi)  f^{\frac{1}{m+1}}(\mc{N}_\tau) \right\|_{\mfS^{\infty}(\mfh)} \left\| \mc{W}_\tau f^{\frac{1}{m+1}}(\mc{N}_\tau) \right\|^m_{\mfS^{\infty}(\mfh)}\,.
\end{equation}
Noting that, by \eqref{Quantum_interaction}, $\mc{W}_\tau$ acts on $\mf{h}^{(n)}$ as multiplication by
\begin{equation*}
-\frac{1}{3 \tau^3}\mathop{\sum_{i,j,k}^{n}}_{i \neq j \neq k \neq i} w(x_i-x_j)\,w(x_i-x_k)\,.
\end{equation*}
Recalling Lemma \ref{q_quantum_lift_estimate}, we have
\begin{equation}
\label{q_mcWf_bound}
\left\| \mc{W}_{\tau} f^\frac{1}{m+1} (\mcN_\tau) \right\|_{\mfS^\infty(\mfh^{(n)})} \leq \frac{1}{3}\left(\frac{n}{\tau}\right)^3 \|w\|_{L^\infty}^2 \left|f\left(\frac{n}{\tau}\right)\right| \leq \frac{1}{3} K^3 \|w\|_{L^\infty}^2\,.
\end{equation}
Here we have also used the support properties of $f$ as in Assumption \ref{f_assumption}, namely \eqref{q_cut-off_support}, and $\|f\|_{L^\infty} \leq 1$. Using Lemma \ref{q_quantum_lift_estimate} and Assumption \ref{f_assumption}, we have
\begin{equation}
\label{q_quantum_lift_cutoff_bound}
\left\| \Th_\tau(\xi) f^{\frac{1}{m+1}}(\mc{N}_\tau) \right\|_{\mfS^{\infty}(\mfh)} \leq K^p \|\xi\|\,.
\end{equation}
Then \eqref{q_quantum_explicit_bounds_1} follows from \eqref{q_quantum_explicit_rewrite2}, \eqref{q_mcWf_bound}, and \eqref{q_quantum_lift_cutoff_bound}.
\end{proof}
In order to estimate $R^\xi_{\tau,M}(z)$, we apply the Feynman-Kac formula, which we recall now. Let $T > 0$, and let $\Om^T$ be the space of continuous paths $\om : [0,T] \rightarrow \La$. 
Given $x,x' \in \La$, we let $\Omega^T_{x,x'}$ denote the set of all paths $\omega \in \Omega^T$ such that $\omega(0)=x'$ and $\omega(T)=x$.
For $x,x' \in \La$, we characterise the Wiener measure $\mbb{W}_{x,x'}^T$ on $\Omega^T_{x,x'}$ by its finite-dimensional distribution. More precisely, if $0 < t_1 < \cdots < t_n < T$ and $g : \La^n \rightarrow \R$ is continuous, we have
\begin{multline*}
\int \mbb{W}_{x,x'}^T(\dd \om) \, g(\om(t_1,),\cdots,\om(t_n)) \\
= \int \dd x_1 \cdots \dd x_n \, \e^{t_1(x_1-x')\Delta}\,\e^{(t_2-t_1)(x_2 - x_1)\Delta}\cdots \\ 
\times \e^{(t_n-t_{n-1})(x_n - x_{n-1})\Delta}\, \e^{(T-t_n)(x - x_n)\Delta}\,g(x_1,\ldots,x_n)\,.
\end{multline*}
Here for $t > 0$, we define the heat periodic heat kernel on $\La$ as
\begin{equation*}
\label{q_heat_kernel}
\e^{t\Delta}(y) = \sum_{k \in \Z} \frac{1}{(4 \pi t)^{1/2}} \e^{-|y-k|^2/4t}\,.
\end{equation*}
The following formula holds \cite[Theorem X.68]{RS75}.
\begin{proposition}
\label{q_FK_formula}
Suppose $V: \La \rightarrow \mbb{C}$ is continuous and bounded below. Then for $t > 0$
\begin{equation*}
\e^{-t(V-\Delta)}(x;x') = \int \mbb{W}^t_{x,x'} (\dd \om) \, \e^{-\int^t_0 \dd s \, V(\om(s))}\,.
\end{equation*}
\end{proposition}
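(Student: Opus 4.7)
The plan is to derive the identity via the Trotter product formula and recognize the resulting limit as a path-space integral against $\mbb{W}^t_{x,x'}$. First, since $V$ is continuous on the compact torus and bounded below, the Trotter product formula gives
\begin{equation*}
\e^{-t(V-\Delta)} = \lim_{n \to \infty} \bigl(\e^{-tV/n}\,\e^{t\Delta/n}\bigr)^n
\end{equation*}
in the strong operator topology on $L^2(\La)$. Computing the kernel of the $n$-fold product by successive composition and inserting the periodic heat kernel formula recalled just above the proposition, one obtains
\begin{equation*}
\bigl[(\e^{-tV/n}\,\e^{t\Delta/n})^n\bigr](x; x') = \int \dd x_1 \cdots \dd x_{n-1} \prod_{j=0}^{n-1} \e^{(t/n)\Delta}(x_{j+1}-x_j)\, \e^{-(t/n) V(x_j)},
\end{equation*}
with the boundary conditions $x_0 = x'$ and $x_n = x$.

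Next I would identify the product of heat kernels as the joint density of Brownian motion on $\La$ at the equally spaced times $jt/n$, conditioned to start at $x'$ and end at $x$ at time $t$. By the finite-dimensional characterization of $\mbb{W}^t_{x,x'}$ stated immediately before the proposition, the iterated integral above is therefore equal to
\begin{equation*}
\int \mbb{W}^t_{x,x'}(\dd \om)\, \exp\Bigl(-\frac{t}{n}\sum_{j=0}^{n-1} V(\om(jt/n))\Bigr).
\end{equation*}
For each continuous path $\om$, the Riemann sum in the exponent converges to $\int_0^t V(\om(s))\,\dd s$ by continuity of $V$; and since $V$ is bounded below, the integrand is uniformly bounded above by $\e^{-t \inf V}$, so the dominated convergence theorem allows us to pass the limit inside the path integral, yielding the right-hand side of the proposition.

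The main obstacle will be in transferring the operator-level Trotter convergence, which a priori is only in the strong topology on $L^2(\La)$, into a pointwise convergence of integral kernels at specified points $x,x'$. I would handle this by exploiting the smoothing effect of the heat semigroup: both sides of the claimed identity are continuous functions of $(x,x')$ on $\La \times \La$ for each fixed $t>0$ (the left-hand side via spectral theory and elliptic regularity, the right-hand side by continuity of the periodic heat kernel appearing in the finite-dimensional distributions of $\mbb{W}^t_{x,x'}$), so the equality of kernels can be recovered from the equality of associated bilinear forms by a density argument against continuous test functions. An alternative, perhaps cleaner, route is to verify directly that the right-hand side, as a function of $(t, x)$ with $x'$ fixed, solves the initial-value problem $\partial_t u = (\Delta - V)u$ with $u(0,\cdot) = \delta_{x'}$, by invoking the Markov property of Brownian motion on $\La$; uniqueness of the solution then forces the two expressions to coincide.
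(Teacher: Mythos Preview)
The paper does not prove this proposition at all; it simply cites \cite[Theorem X.68]{RS75} (Reed--Simon, Volume II) and moves on. Your Trotter--product approach is in fact the standard route to the Feynman--Kac formula and is essentially what Reed--Simon do in the cited reference, so in that sense you are reproducing the textbook argument rather than diverging from the paper.

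Your sketch is sound. The obstacle you flag---upgrading strong Trotter convergence to pointwise kernel convergence---is the genuine technical point, and both remedies you propose are viable. A small refinement: rather than appealing to elliptic regularity for continuity of the left-hand side, it is cleaner to sandwich the Trotter approximants between two short heat propagators, i.e.\ write $\e^{-t(V-\Delta)} = \e^{\epsilon\Delta}\,\e^{-(t-2\epsilon)(V-\Delta)}\,\e^{\epsilon\Delta}$ up to controllable errors, which immediately yields uniform convergence of kernels. Also note that the proposition is stated for complex-valued $V$ bounded below (presumably meaning $\mathrm{Re}\,V$ bounded below), which is needed in the paper to apply the formula with $zV$ for complex $z$; your dominated-convergence step should use the bound $\bigl|\e^{-\int_0^t V}\bigr| \le \e^{-t\inf \mathrm{Re}\,V}$ in that case.
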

We now use Proposition \ref{q_FK_formula} to bound the quantum remainder term \eqref{q_quantum_remainder_defn}.
\begin{lemma}
	\label{q_quantum_remainder_bounds_lem}
For any $M \in \mbb{N}$ and $R^\xi_{\tM}(z)$ defined as in \eqref{q_quantum_remainder_defn}, we have
\begin{equation}
		\label{q_quantum_remainder_bounds_1}
\left|R_{\tM}^\xi(z)\right| \leq \e^{\frac{1}{3}K^3 |\mathrm{Re}(z)|\|w\|^2_{L^\infty}} \frac{(K^3\|w\|^2_{L^\infty})^M K^p \|\xi\|}{3^M \, M!}\,|z|^M\,.
\end{equation}
\end{lemma}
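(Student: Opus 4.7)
The strategy parallels the proof of Lemma \ref{q_quantum_explicit_bounds_lem} for the explicit terms, with the extra work concentrated in handling the non-self-adjoint final factor $\e^{-t_M(\Hf+z\mc{W}_\tau)}$. First, using that $f(\mc{N}_\tau)$ commutes with both $\Hf$ and $\mc{W}_\tau$ since each of them preserves particle number, I would write $f(\mc{N}_\tau) = \prod_{j=0}^{M+1} f^{1/(M+2)}(\mc{N}_\tau)$ and move one such factor next to each of $\Th_\tau(\xi)$, the $M$ occurrences of $\mc{W}_\tau$, and the final exponential. Applying Proposition \ref{q_schatten_holder_inequality} with Schatten exponents $\infty$ on each of the $M+1$ operator-times-cutoff factors, $1/(t_{j-1}-t_j)$ on each free exponential for $j = 1, \ldots, M$ (with the convention $t_0 := 1$), and $1/t_M$ on the final interacting-exponential-times-cutoff factor, one obtains
\begin{multline*}
|R^\xi_{\tM}(z)| \leq \frac{|z|^M}{\Zf} \int_0^1 \dd t_1 \int_0^{t_1} \dd t_2 \cdots \int_0^{t_{M-1}} \dd t_M \, \left\|\Th_\tau(\xi) f^{1/(M+2)}(\mc{N}_\tau)\right\|_{\mfS^{\infty}(\mc{F})}\\
\times \left\|\mc{W}_\tau f^{1/(M+2)}(\mc{N}_\tau)\right\|^M_{\mfS^{\infty}(\mc{F})} \prod_{j=1}^M (\Zf)^{t_{j-1}-t_j}\, \left\|\e^{-t_M(\Hf+z\mc{W}_\tau)} f^{1/(M+2)}(\mc{N}_\tau)\right\|_{\mfS^{1/t_M}(\mc{F})}.
\end{multline*}
The first three types of factors are bounded exactly as in the proof of Lemma \ref{q_quantum_explicit_bounds_lem}: by $K^p\|\xi\|$, $\frac{1}{3}K^3\|w\|^2_{L^\infty}$, and $(\Zf)^{t_{j-1}-t_j}$ respectively.

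The crucial new ingredient is the estimate
\begin{equation*}
\left\|\e^{-t_M(\Hf+z\mc{W}_\tau)} f^{1/(M+2)}(\mc{N}_\tau)\right\|_{\mfS^{1/t_M}(\mc{F})} \leq \e^{t_M|\mathrm{Re}(z)|K^3\|w\|^2_{L^\infty}/3}\,(\Zf)^{t_M}.
\end{equation*}
I would prove this by invoking Proposition \ref{q_FK_formula} sector by sector: on $\mfh^{(n)}$, the operator $\mc{W}_\tau$ acts as multiplication by a real-valued potential $V_n$ with $\|V_n\|_{L^\infty} \leq \frac{1}{3}(n/\tau)^3 \|w\|^2_{L^\infty}$, so the Feynman--Kac representation together with $|\e^{-z\int V_n}| = \e^{-\mathrm{Re}(z)\int V_n}$ yields the pointwise kernel bound
\begin{equation*}
\bigl|\e^{-t_M(H^{(n)}_\tau + zV_n)}(x_1,\ldots,x_n;x_1',\ldots,x_n')\bigr| \leq \e^{t_M|\mathrm{Re}(z)|\|V_n\|_{L^\infty}}\,\e^{-t_M H^{(n)}_\tau}(x_1,\ldots,x_n;x_1',\ldots,x_n').
\end{equation*}
On the support of $f$ one has $n/\tau \leq K$ and hence $\|V_n\|_{L^\infty} \leq K^3 \|w\|^2_{L^\infty}/3$; combining this pointwise domination with the positivity of $\e^{-t_M H^{(n)}_\tau}$ then upgrades to the asserted Schatten bound. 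An alternative route, independent of Proposition \ref{q_FK_formula}, is to use Trotter's product formula $\e^{-t_M(\Hf+z\mc{W}_\tau)} = \lim_{n\to\infty}(\e^{-t_M\Hf/n}\e^{-t_M z\mc{W}_\tau/n})^n$, distribute the cutoff evenly among the $n$ interaction slots, and apply Proposition \ref{q_schatten_holder_inequality} with exponents $n/t_M$ on the free factors and $\infty$ on the interaction-with-cutoff factors, exploiting that $\|\e^{-t_M z\mc{W}_\tau/n} f^{\alpha}(\mc{N}_\tau)\|_{\mfS^\infty(\mc{F})} \leq \e^{t_M|\mathrm{Re}(z)|K^3\|w\|^2_{L^\infty}/(3n)}$ for any $\alpha>0$ since the imaginary part of $z$ contributes only a unitary factor.

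Assembling the pieces, the $\Zf$ factors telescope to $\Zf^{1-t_M} \cdot \Zf^{t_M} = \Zf$, cancelling the $1/\Zf$ prefactor; the $t_M$-dependent exponential is bounded by $\e^{|\mathrm{Re}(z)| K^3 \|w\|^2_{L^\infty}/3}$ uniformly on $t_M \in [0,1]$ and pulled out of the integral; and the remaining simplex integral equals $1/M!$, producing exactly \eqref{q_quantum_remainder_bounds_1}. The main obstacle in this proof is handling the complex-valued Boltzmann weight for the non-self-adjoint operator $z\mc{W}_\tau$, which is resolved by the observation that $\mc{W}_\tau$ is self-adjoint on each sector, so only $\mathrm{Re}(z)$ contributes to the relevant operator norm bounds.
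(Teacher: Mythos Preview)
Your strategy matches the paper's: both use a H\"older decomposition of the trace together with the Feynman--Kac representation to control the final interacting factor $\e^{-t_M(\Hf+z\mcW_\tau)}$, reducing everything to the explicit-term estimate of Lemma~\ref{q_quantum_explicit_bounds_lem}. The paper distributes $f$ as $f^{1/2}\cdot f^{1/2}$ rather than your $f^{1/(M+2)}$ splitting, but this is immaterial.

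One point deserves care. Your primary route asserts that the pointwise kernel bound
\[
\bigl|\bigl(\e^{-t_M(\Hf+z\mcW_\tau)}f^{\alpha}(\mcN_\tau)\bigr)^{(n)}(\mbx;\mby)\bigr|\le C\,\bigl(\e^{-t_M\Hf}\bigr)^{(n)}(\mbx;\mby)
\]
``upgrades to the asserted Schatten bound''. This implication is not automatic: pointwise kernel domination $|A(x,y)|\le c\,B(x,y)$, even with $B$ positive self-adjoint with non-negative kernel, does \emph{not} in general imply $\|A\|_{\mfS^{p}}\le c\,\|B\|_{\mfS^{p}}$ for $p\ne 2$ (e.g.\ on $\C^4$, take $B$ the all-ones matrix and $A=B-\mathbf{1}$; then $\|A\|_{\mfS^1}=6>4=\|B\|_{\mfS^1}$). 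Your Trotter alternative, however, is a clean and rigorous way to get exactly the needed Schatten estimate: the H\"older bound on the approximants is uniform in $N$, and Schatten norms are lower semicontinuous under strong convergence, so the bound passes to the limit. With that route, the proof is complete. The paper's reduction (deferred to \cite{RS22}) instead inserts the pointwise kernel bound directly into the full trace written as a kernel integral, exploiting that all remaining factors---the heat propagators $\e^{-(t_{j-1}-t_j)\Hf}$ and the multiplication operators $\mcW_\tau$---have non-negative or diagonal kernels, so that after the replacement one is back to an expression of exactly the form bounded in Lemma~\ref{q_quantum_explicit_bounds_lem}.
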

\begin{proof}
By arguing similarly as the proof of \cite[Lemma 3.6 and (3.22)]{RS22}, it suffices to show that for $t \in [0,1]$, we have
\begin{equation}
\label{q_quantum_remainder_needed}
\left| \left(\e^{-t(\Hf + z\mc{W}_\tau)}f^{\frac{1}{2}}(\mc{N}_\tau)\right)^{(n)}(\mbx;\mby)\right| \leq \e^{\frac{1}{3} K^3|\rm{Re}(z)|\|w\|^2_{L^\infty}}\left(\e^{-t\Hf}\right)^{(n)}(\mbx;\mby)\,,
\end{equation}
where $\mc{A}^{(n)}$ denotes the kernel of $\mc{A}$ restricted to the $n^{th}$ sector of Fock space. Noting that by \eqref{Quantum_interaction}, we have
\begin{equation*}
\left(\mc{W}_\tau\right)^{(n)}(\mbx;\mby) = -\frac{1}{3\tau^3} \mathop{\sum_{i,j,k}^{n}}_{i \neq j \neq k \neq i} w(x_i-x_j)\,w(x_i-x_k)\, \prod_{l=1}^n \delta(x_l-y_l)\,,
\end{equation*}
we can rewrite $\left(\e^{-t(\Hf + z\mc{W}_\tau)}f^{\frac{1}{2}}(\mc{N}_\tau)\right)^{(n)}(\mbx;\mby)$ using Proposition \ref{q_FK_formula} as 
\begin{equation}
\label{q_exp_rewrite_1}
\int \mbb{W}_{\mbx,\mby}^t (\dd \widetilde{\om}) \, \exp\biggl\{-\frac{t \kappa n }{\tau} + \frac{z}{3 \tau^3}\,\int_0^t \dd s {\sum^n_{ i \neq j \neq k \neq i}} w_{i,j}(\widetilde{\om}(s))\,w_{i,k}(\widetilde{\om}(s))\biggr\}\, f^{\frac{1}{2}}\left( \frac{n}{\tau} \right)\,.
\end{equation}
In \eqref{q_exp_rewrite_1}, we have written 
\begin{equation}
\label{q_exp_rewrite_1_B}
\widetilde{\om} \equiv (\omega_1,\ldots,\omega_n) \in \prod_{l=1}^{n} \Omega^t_{x_l,y_l}\,\quad
\mbb{W}^t_{\mbx,\mby} (\dd \widetilde{\om}) \equiv \prod_{l=1}^n \mbb{W}^t_{x_l,y_l}(\dd \om_{l})\,,
\end{equation}
and where we have abbreviated $w_{i,j}(\mbx) \equiv w(x_i-x_j)$, as well as 
\begin{equation*}
\sum^n_{ i \neq j \neq k \neq i} (\cdots) \;\equiv \mathop{\sum_{i,j,k}^{n}}_{i \neq j \neq k \neq i}(\cdots)\,,
\end{equation*}
all of which we use in the sequel.

Using the triangle inequality, it follows that \eqref{q_exp_rewrite_1} is in absolute value
\begin{equation}
\label{q_exp_rewrite_2}
\leq \int \mbb{W}_{\mbx,\mby}^t (\dd \widetilde{\om}) \,
\, \biggl|\exp\biggl\{-\frac{t \kappa n }{\tau} + \frac{z}{3 \tau^3}\,\int_0^t \dd s {\sum^n_{ i \neq j \neq k \neq i}} w_{i,j}(\widetilde{\om}(s))\,w_{i,k}(\widetilde{\om}(s))\biggr\}\biggr|\, f^{\frac{1}{2}}\left(\frac{n}{\tau}\right)\,.
\end{equation}
Using Proposition \ref{q_FK_formula} once more,  we deduce that \eqref{q_exp_rewrite_2} is 
\begin{equation}
\label{q_exp_rewrite_3}
\leq \sup_{\widetilde{\om}}\,  \biggl|\exp\biggl\{\frac{z}{3 \tau^3}\,\int_0^t \dd s {\sum^n_{ i \neq j \neq k \neq i}} w_{i,j}(\widetilde{\om}(s))\,w_{i,k}(\widetilde{\om}(s))\biggr\}\biggr|\,f^{\frac{1}{2}}\left(\frac{n}{\tau}\right)\, \left(\e^{-t\Hf}\right)^{(n)}(\mbx;\mby)\,,
\end{equation}
where the supremum is taken over all $\tilde{\omega}$ as in \eqref{q_exp_rewrite_1_B} above.
Arguing as in \eqref{q_mcWf_bound}, we have
\begin{equation}
\label{q_exp_rewrite_4}
 \sup_{\widetilde{\om}}\,  \biggl|\exp\biggl\{\frac{z}{3 \tau^3}\,\int_0^t \dd s {\sum^n_{ i \neq j \neq k \neq i}} w_{i,j}(\widetilde{\om}(s))\,w_{i,k}(\widetilde{\om}(s))\biggr\}\biggr|\,f^{\frac{1}{2}}\left(\frac{n}{\tau}\right) \leq \e^{\frac{1}{3}K^3 |\mathrm{Re}(z)| \|w\|_{L^\infty}^2}\,.
\end{equation}
Combining \eqref{q_exp_rewrite_3} and \eqref{q_exp_rewrite_4}, we obtain \eqref{q_quantum_remainder_needed}, thus completing the proof.
\end{proof}
Combining the bounds proved in Lemmas \ref{q_quantum_explicit_bounds_lem} and \ref{q_quantum_remainder_bounds_lem} with Taylor's theorem yields the following corollary.
\begin{corollary}
\label{q_quantum_power_series_analytic}
$F^\xi_{\tau}(z) = \sum_{m=0}^\infty a_{\tm}^\xi z^m$ is analytic on the whole of $\mbb{C}$.
\end{corollary}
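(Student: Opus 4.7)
The plan is to extract analyticity from the two bounds already stated, together with the Duhamel decomposition in Lemma \ref{q_quantum_duhamel_expansion_lem}. First, I would fix an arbitrary $z \in \mathbb{C}$ and verify that the formal series $\sum_{m=0}^\infty a^\xi_{\tm} z^m$ converges absolutely. By Lemma \ref{q_quantum_explicit_bounds_lem}, the general term is bounded by
\begin{equation*}
|a_{\tm}^\xi|\,|z|^m \;\leq\; \frac{K^p\,\|\xi\|}{m!}\,\left(\frac{K^3\|w\|^2_{L^\infty}}{3}\right)^{m}|z|^m,
\end{equation*}
and summing in $m$ yields the finite value $K^p\|\xi\|\,\exp\!\bigl(\tfrac{1}{3}K^3\|w\|^2_{L^\infty}|z|\bigr)$. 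Hence the radius of convergence of the series is infinite.

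Next, I would use Lemma \ref{q_quantum_duhamel_expansion_lem} to write, for every $M \in \mathbb{N}^*$,
\begin{equation*}
F^\xi_\tau(z) - \sum_{m=0}^{M-1} a^\xi_{\tm}\,z^m \;=\; R^\xi_{\tM}(z).
\end{equation*}
By Lemma \ref{q_quantum_remainder_bounds_lem}, for each fixed $z$,
\begin{equation*}
|R^\xi_{\tM}(z)| \;\leq\; \e^{\frac{1}{3}K^3|\mathrm{Re}(z)|\|w\|^2_{L^\infty}}\,\frac{K^p\,\|\xi\|}{M!}\,\left(\frac{K^3\|w\|^2_{L^\infty}}{3}\right)^{M}|z|^M \;\longrightarrow\; 0
\end{equation*}
as $M \to \infty$, since the prefactor depending on $z$ is independent of $M$ and the remainder behaves like $|z|^M/M!$. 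Therefore the partial sums converge to $F^\xi_\tau(z)$ pointwise on $\mathbb{C}$, which together with the absolute convergence above identifies $F^\xi_\tau$ with its Taylor series everywhere in $\mathbb{C}$.

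Finally, I would conclude that $F^\xi_\tau$ is entire: the uniform bound from Lemma \ref{q_quantum_explicit_bounds_lem} gives locally uniform convergence of $\sum_m a^\xi_{\tm} z^m$ on every compact set (apply the bound with $|z|$ replaced by any fixed $R>0$ and use the Weierstrass $M$-test), so the limit is holomorphic on $\mathbb{C}$. No step here is a genuine obstacle; the only mild care is to keep the $M$-dependent and $M$-independent factors in the remainder estimate distinct, so that one sees the remainder vanish as $M \to \infty$ for each $z$.
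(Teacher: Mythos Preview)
Your proof is correct and follows essentially the same approach as the paper, which simply states that the corollary follows by combining the bounds in Lemmas \ref{q_quantum_explicit_bounds_lem} and \ref{q_quantum_remainder_bounds_lem} with Taylor's theorem. You have spelled out the details of that one-line argument accurately.
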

\subsection{Analysis of the classical series \eqref{q_classical_power_series_defn}}
\label{q_subsection_classical_power_series_bounds}
We now prove that the explicit and remainder terms defined in \eqref{q_classical_explicit_defn} and \eqref{q_classical_remainder_defn} satisfy sufficient bounds for the function defined in \eqref{q_classical_power_series_defn} to be analytic on $\mbb{C}$.

\begin{lemma}
\label{q_classical_explicit_bounds_lemma}
Let $m \in \N$ and $a^\xi_m$ be defined as in \eqref{q_classical_explicit_defn}. Then
\begin{equation}
\left| a^\xi_{m} \right| \leq \frac{(K^3\|w\|_{L^\infty}^2)^m K^p \|\xi\|}{3^m \, m!}\,.
\end{equation}
\end{lemma}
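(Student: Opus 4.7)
The plan is to argue by direct pointwise bounds on the integrand defining $a^\xi_m$, exploiting the cut-off $f(\mathcal{N})$ to control the powers of $\mathcal{N}$ that arise from $\Theta(\xi)$ and $\mathcal{W}$. This is in spirit analogous to the Schatten-H\"older strategy used in the proof of Lemma \ref{q_quantum_explicit_bounds_lem}, but simpler because we are in the classical setting and all relevant objects are just random variables rather than operators.

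First I would start from \eqref{q_classical_explicit_defn} and apply the triangle inequality to obtain
\begin{equation*}
|a^\xi_m| \leq \frac{1}{m!} \int \dd\mu\, |\Theta(\xi)|\,|\mathcal{W}|^m\,f(\mathcal{N})\,.
\end{equation*}
Next I would use Lemma \ref{q_random_var_estimate} in the form $|\Theta(\xi)| \leq \mathcal{N}^p \|\xi\|$ (recalling $\|\varphi\|^2_{\mathfrak{h}} = \mathcal{N}$ by \eqref{q_mass}) together with Lemma \ref{q_classical_interaction_estimate_prop} in the form $|\mathcal{W}| \leq \tfrac{1}{3}\|w\|_{L^\infty}^2 \mathcal{N}^3$, to obtain
\begin{equation*}
|\Theta(\xi)|\,|\mathcal{W}|^m \leq \frac{\|w\|_{L^\infty}^{2m}}{3^m}\,\|\xi\|\,\mathcal{N}^{p+3m}\,.
\end{equation*}

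The key observation, which is what makes the bound finite, is that by Assumption \ref{f_assumption} the cut-off $f$ vanishes for $s>K$, so on the support of $f(\mathcal{N})$ one has $\mathcal{N} \leq K$. Consequently $\mathcal{N}^{p+3m}\,f(\mathcal{N}) \leq K^{p+3m}\,f(\mathcal{N})$ pointwise in $\omega$. Substituting back,
\begin{equation*}
|a^\xi_m| \leq \frac{(K^3\|w\|_{L^\infty}^2)^m\,K^p\,\|\xi\|}{3^m\,m!}\,\int \dd\mu\,f(\mathcal{N})\,.
\end{equation*}
Finally, since $0 \leq f \leq 1$ and $\mu$ is a probability measure, $\int \dd\mu\,f(\mathcal{N}) \leq 1$, and the claimed bound follows immediately. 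I do not anticipate any real obstacle here: both multiplicative estimates are already packaged as lemmas, and the only role of the cut-off is to convert powers of $\mathcal{N}$ into powers of $K$, which is direct.
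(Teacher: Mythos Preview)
Your proof is correct and follows essentially the same approach as the paper: both use Lemma \ref{q_random_var_estimate} and Lemma \ref{q_classical_interaction_estimate_prop} together with the support property of $f$ from Assumption \ref{f_assumption} to convert powers of $\mathcal{N}$ into powers of $K$. The paper phrases this by distributing $f(\mathcal{N}) = \bigl(f^{\frac{1}{m+1}}(\mathcal{N})\bigr)^{m+1}$ across the factors (to parallel the quantum proof of Lemma \ref{q_quantum_explicit_bounds_lem}), whereas you keep $f$ intact and bound $\mathcal{N}^{p+3m}f(\mathcal{N}) \leq K^{p+3m}f(\mathcal{N})$ directly; this is a cosmetic difference only.
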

\begin{proof}
Using Lemma \ref{q_random_var_estimate} as in the proof of \cite[Lemma 3.8]{RS22}, it is sufficient to prove that
\begin{equation}
\label{q_classical_explicit_needed}
\left|\mc{W} f^{\frac{1}{m+1}}(\mc{N})\right| \leq \frac{1}{3} K^3 \|w\|_{L^\infty}^2\,.
\end{equation}
Using Lemma \ref{q_classical_interaction_estimate_prop} and \eqref{q_cut-off_support}, we have
\begin{equation*}
\left|\mc{W} f^{\frac{1}{m+1}}(\mc{N})\right| \leq \frac{1}{3} K^3\|w\|_{L^\infty}^2 \|f^{\frac{1}{m+1}}\|_{L^\infty}\,.
\end{equation*}
Here we recall that $\mc{N} = \|\vph\|_{\mfh}^2$. Noting that $\|f\|_{L^\infty} \leq 1$, \eqref{q_classical_explicit_needed} follows.
\end{proof}
\begin{lemma}
\label{q_classical_remainder_lemma}
Let $M \in \N$ and $R^\xi_m(z)$ be defined as in \eqref{q_classical_remainder_defn}. Then
\begin{equation}
	\label{q_classical_remainder_bound}
\left|R^\xi_M(z)\right| \leq \e^{\frac{1}{3}K^3 \left|\mathrm{Re}(z)\right|\|w\|_{L^\infty}^2} \frac{\left(K^3 \|w\|_{L^\infty}^2\right)^MK^p \|\xi\|}{M! \, 3^M}\,|z|^M\,.
\end{equation}
\end{lemma}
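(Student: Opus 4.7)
The plan is to follow the same pointwise-bounding strategy used in the proof of Lemma \ref{q_classical_explicit_bounds_lemma}, with one extra ingredient to handle the new factor $\e^{-\tilde z\,\mc W}$ appearing in \eqref{q_classical_remainder_defn}. Starting from the definition, I would apply the triangle inequality under the integral to write
\begin{equation*}
|R^\xi_M(z)| \;\leq\; \frac{|z|^M}{M!}\int \dd\mu\; |\Theta(\xi)|\;|\mc W|^M\;|\e^{-\tilde z\,\mc W}|\;f(\mc N)\,,
\end{equation*}
and then estimate each of the four factors pointwise on the support of $f(\mc N)$, which by Assumption \ref{f_assumption} is contained in the set $\{\mc N \leq K\}$.

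On this set, Lemma \ref{q_random_var_estimate} combined with $\mc N=\|\vph\|_{\mfh}^2\leq K$ gives $|\Theta(\xi)|\leq \|\vph\|_{\mfh}^{2p}\|\xi\|\leq K^p\|\xi\|$, and Lemma \ref{q_classical_interaction_estimate_prop} yields $|\mc W|\leq \tfrac13 \|w\|_{L^\infty}^2\|\vph\|_{L^2}^6 \leq \tfrac13 K^3\|w\|_{L^\infty}^2$, hence $|\mc W|^M\leq (\tfrac13 K^3\|w\|_{L^\infty}^2)^M$. The only non-routine factor is $|\e^{-\tilde z\,\mc W}|$. Since $\mc W$ is real-valued, we have the identity $|\e^{-\tilde z\,\mc W}|=\e^{-\mathrm{Re}(\tilde z)\,\mc W}\leq \e^{|\mathrm{Re}(\tilde z)|\,|\mc W|}$; and because $\tilde z$ lies on the segment from $0$ to $z$, we may use $|\mathrm{Re}(\tilde z)|\leq |\mathrm{Re}(z)|$ together with the bound on $|\mc W|$ on the support of $f(\mc N)$ to obtain $|\e^{-\tilde z\,\mc W}|\leq \e^{\frac{1}{3}K^3|\mathrm{Re}(z)|\|w\|_{L^\infty}^2}$.

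Combining these four bounds and noting that $\int f(\mc N)\,\dd\mu \leq 1$, since $\mu$ is a probability measure and $0\leq f\leq 1$, gives precisely \eqref{q_classical_remainder_bound}. There is no real obstacle here; the only point worth highlighting is the sign-free treatment of the exponential, which works in the focusing regime only because the cut-off $f(\mc N)$ makes $\mc W$ uniformly bounded on its support. The argument is strictly parallel to Lemma \ref{q_classical_explicit_bounds_lemma}, with $\e^{-\tilde z\,\mc W}$ providing the single additional harmless prefactor.
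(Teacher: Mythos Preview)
Your proof is correct and follows essentially the same approach as the paper: bound $|\e^{-\tilde z\,\mc W}|$ on the support of $f(\mc N)$ via Lemma~\ref{q_classical_interaction_estimate_prop} and \eqref{q_cut-off_support}, then reuse the pointwise bounds from Lemma~\ref{q_classical_explicit_bounds_lemma}. The paper phrases this more tersely by splitting $f=f^{\frac{1}{M+2}}\cdots f^{\frac{1}{M+2}}$ and invoking the previous lemma directly, but the content is identical.
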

\begin{proof}
We note that Lemma \ref{q_classical_interaction_estimate_prop} and \eqref{q_cut-off_support} imply that for any $\tilde{z} \in [0,z]$, we have
\begin{equation*}
\left|\e^{-z \mc{W}}f^{\frac{1}{M+2}}(\mc{N})\right| \leq \e^{\frac{1}{3} K^3 \left|\mathrm{Re}(z)\right| \|w\|_{L^\infty}^2}.
\end{equation*}
Recalling \eqref{q_classical_remainder_defn} and using Lemma \ref{q_classical_explicit_bounds_lemma}, we obtain \eqref{q_classical_remainder_bound}. 
\end{proof}
Combining Lemmas \ref{q_classical_explicit_bounds_lemma} and \ref{q_classical_remainder_lemma}, we have the following corollary.
\begin{corollary}
\label{q_classical_power_series_analytic}
$F^\xi(z) = \sum_{m=0}^\infty a_{m}^\xi z^m$ is analytic on the whole of $\mbb{C}$.
\end{corollary}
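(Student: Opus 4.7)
The plan is to combine Lemmas \ref{q_classical_explicit_bounds_lemma} and \ref{q_classical_remainder_lemma} with the Duhamel-type identity from Lemma \ref{q_classical_duhamel_expansion_lem}. This mirrors the quantum argument giving Corollary \ref{q_quantum_power_series_analytic}, so the overall structure is essentially dictated by what has already been done.

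First, I would observe that Lemma \ref{q_classical_explicit_bounds_lemma} provides the bound $|a_m^\xi| \leq \frac{(K^3 \|w\|_{L^\infty}^2)^m K^p \|\xi\|}{3^m\, m!}$. By the ratio test (or the standard estimate for the exponential series), the power series $\sum_{m=0}^\infty a_m^\xi z^m$ then converges absolutely and uniformly on compact subsets of $\mathbb{C}$, so it defines an entire function on $\mathbb{C}$. It remains to show that, for every fixed $z \in \mathbb{C}$, this entire function coincides with $F^\xi(z)$ as defined in \eqref{q_classical_power_series_defn}.

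For this step, I would fix $z \in \mathbb{C}$ and apply Lemma \ref{q_classical_duhamel_expansion_lem} to write, for each $M \in \mathbb{N}^*$,
\begin{equation*}
F^\xi(z) - \sum_{m=0}^{M-1} a_m^\xi z^m = R_M^\xi(z).
\end{equation*}
By Lemma \ref{q_classical_remainder_lemma}, we have
\begin{equation*}
|R_M^\xi(z)| \leq \mathrm{e}^{\frac{1}{3}K^3 |\mathrm{Re}(z)| \|w\|_{L^\infty}^2}\, \frac{(K^3 \|w\|_{L^\infty}^2\, |z|)^M K^p \|\xi\|}{3^M\, M!}.
\end{equation*}
The prefactor is independent of $M$, and $\frac{c^M}{M!} \to 0$ as $M \to \infty$ for any fixed $c \geq 0$, so $R_M^\xi(z) \to 0$ as $M \to \infty$. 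Passing to the limit yields $F^\xi(z) = \sum_{m=0}^\infty a_m^\xi z^m$ for every $z \in \mathbb{C}$, which combined with the first paragraph gives analyticity on $\mathbb{C}$.

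There is no real obstacle here; the content of the statement is already packaged into Lemmas \ref{q_classical_explicit_bounds_lemma}--\ref{q_classical_remainder_lemma}, and the role of the cut-off $f$ (via Assumption \ref{f_assumption}) is what allows the $1/m!$ factorial decay to beat any fixed $|z|^m$, giving an infinite radius of convergence rather than a merely formal expansion. The only point to be mildly careful about is that the $\tilde{z} \in [0,z]$ appearing in \eqref{q_classical_remainder_defn} still yields the exponential factor $\mathrm{e}^{\frac{1}{3}K^3|\mathrm{Re}(z)| \|w\|_{L^\infty}^2}$ uniformly in $M$, which is precisely what Lemma \ref{q_classical_remainder_lemma} already guarantees.
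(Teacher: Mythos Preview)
Your proposal is correct and follows exactly the approach the paper indicates: the corollary is stated as an immediate consequence of Lemmas \ref{q_classical_explicit_bounds_lemma} and \ref{q_classical_remainder_lemma}, and you have simply written out the standard details of that combination.
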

\subsection{Proof of Theorem \ref{q_bounded_state_convergence_thm}}

We note the following result, whose proof we defer to Section \ref{q_untruncated} below.
\begin{proposition}
	\label{q_expansion_nugeq0}
Let $\nu > 0$ be fixed. Let $\mcCp$ be as in \eqref{q_mathcalC_defn}. Define
\begin{align}
\notag
\alpha^{\xi,\nu}_{\tau,m} := \frac{(-1)^m}{Z_{\tau,0}} &\mathrm{Tr}\, \bigg( \int^1_0 \dd t_1 \int_0^{t_1} \dd t_2 \cdots \int_0^{t_{m-1}} \dd t_m \, \Th_\tau(\xi)\, \e^{-(1-t_1)\HnuN} \,\mcW_\tau \\
\notag
&\times \e^{-(t_1-t_2)\HnuN} \,\mcW_\tau \, \e^{-(t_2-t_3)\HnuN} \cdots \\
\label{alpha_1}
&\times \e^{-(t_{m-1}-t_m)\HnuN}\, \mcW_\tau\, \e^{-t_M\HnuN}\bigg), \\
\label{alpha_2}
\alpha_{m}^{\xi,\nu} := \frac{(-1)^m}{m!} &\int \dd \mu \, \Th(\xi)\,\mathcal{W}^m\, \e^{-\nu\mcN}.
\end{align}
Then, the following results hold uniformly in $\xi \in \mathcal{C}_p$.
\begin{enumerate}
	\item[(i)] $\left|\alpha_{\#,m}^{\xi,\nu}\right| \leq C(m,p,\nu)$.
	\item[(ii)] $\lim_{\tau \to \infty} \alpha^{\xi,\nu}_{\tau,m} = \alpha_{m}^{\xi,\nu}$.
\end{enumerate}
\end{proposition}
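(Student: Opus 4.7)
Proof proposal: The plan is to exhibit both $\alpha^{\xi,\nu}_{\tau,m}$ and $\alpha^{\xi,\nu}_m$ as finite sums over the \emph{same} combinatorial set of Wick pairings (``Feynman diagrams'') and then to establish convergence and uniform bounds pairing-by-pairing. The mass parameter $\nu > 0$ is used crucially to provide trace-class bounds on propagators uniformly in $\tau$, replacing the role played by the cutoff $f$ in Lemmas \ref{q_quantum_explicit_bounds_lem}--\ref{q_classical_explicit_bounds_lemma}.

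For the quantum expression, first I would absorb $\nu \mcNt$ into the free evolution by writing $\Hf + \nu \mcNt = \int \vphts(x)\,(h+\nu)(x,y)\,\vpht(y)\,\dd x\,\dd y$, and rewrite
\begin{equation*}
\alpha^{\xi,\nu}_{\tau,m} \;=\; (-1)^m \, \frac{Z^\nu_{\tau,0}}{Z_{\tau,0}} \int_{\Delta_m} \dd t_1 \cdots \dd t_m \; \bigl\langle \Th_\tau(\xi) \, \mcW_\tau(t_1) \cdots \mcW_\tau(t_m) \bigr\rangle^{\nu}_{\tau,0},
\end{equation*}
where $\Delta_m = \{1 \geq t_1 \geq \cdots \geq t_m \geq 0\}$, $Z^\nu_{\tau,0} := \Tr\,\e^{-(\Hf + \nu\mcNt)}$, $\langle \cdot \rangle^\nu_{\tau,0}$ is expectation in the quasi-free state $\e^{-(\Hf+\nu\mcNt)}/Z^\nu_{\tau,0}$, and the $\mcW_\tau(t_j)$ are Heisenberg-evolved vertices obtained via the conjugation identity $\e^{s(\Hf+\nu\mcNt)} \vph^\#_\tau(x) \e^{-s(\Hf+\nu\mcNt)} = \vph^\#_\tau\bigl(\e^{\pm s(h+\nu)/\tau} x\bigr)$. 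Since the shifted state is quasi-free, the quantum Wick theorem then expands the expectation as a sum over all pairings between the $p+3m$ creation and $p+3m$ annihilation legs of the normal-ordered product $\Th_\tau(\xi) \,\mcW_\tau(t_1) \cdots \mcW_\tau(t_m)$, each pairing producing a spacetime integral of a product of two-point functions
\begin{equation*}
G^\nu_\tau(x,y) \;:=\; \bigl\langle \vpht(x) \vphts(y) \bigr\rangle^{\nu}_{\tau,0} \;=\; \bigl\langle x \bigm| \tfrac{1}{\tau(\e^{(h+\nu)/\tau}-1)} + \tfrac{1}{\tau} \bigm| y \bigr\rangle,
\end{equation*}
conjugated by heat semigroups $\e^{-s(h+\nu)/\tau}$, integrated against the kernels $\xi$ and $W$ from \eqref{W_kernel}.

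For the classical expression, $\e^{-\nu \mcN}\,\dd\mu$ is proportional to the Gaussian measure $\dd\mu^\nu$ on $\mfh$ with covariance $(h+\nu)^{-1}$, with proportionality constant $Z^\nu := \int \e^{-\nu\mcN}\,\dd\mu = \prod_k \lambda_k/(\lambda_k+\nu)$. Proposition \ref{q_Wick_thm}, applied with covariance $(h+\nu)^{-1}$ in place of $h^{-1}$, then expresses $\alpha^{\xi,\nu}_m$ as a sum over the very same set of pairings, now with two-point function $G^\nu(x,y) = (h+\nu)^{-1}(x,y)$ and prefactor $Z^\nu$ in place of $Z^\nu_{\tau,0}/Z_{\tau,0}$; the factor $1/m!$ in \eqref{alpha_2} matches the volume of $\Delta_m$, so the two sums are genuinely indexed by the same finite set of diagrams. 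The diagram-by-diagram limit $\alpha^{\xi,\nu}_{\tau,m} \to \alpha^{\xi,\nu}_m$ then follows from three ingredients: $Z^\nu_{\tau,0}/Z_{\tau,0} \to Z^\nu$ (direct from the infinite-product formulas and dominated convergence); $\e^{-s(h+\nu)/\tau} \to \mathbf{1}$ strongly as $\tau \to \infty$; and $G^\nu_\tau \to (h+\nu)^{-1}$ in $\mfS^1(\mfh)$, which I would deduce from pointwise convergence on the spectrum of $h$ together with the uniform summable bound $\frac{1}{\tau(\e^{(\lambda_k+\nu)/\tau}-1)} \leq \frac{1}{\lambda_k+\nu}$, using $\Tr(h+\nu)^{-1} < \infty$ from \eqref{q_trace_finite}.

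The main obstacle is the uniform bound (i) in the face of the combinatorial explosion from the quintic ($3$-body) vertex: each vertex has six legs rather than the four present in the cubic case of \cite{FKSS17,RS22}, so the number of pairings is bounded only by $(p+3m)!$, and each pairing involves two entangled copies of $w$ glued together by the kernel \eqref{W_kernel}. I would control each individual diagram by iterated use of H\"older's inequality for Schatten spaces (Proposition \ref{q_schatten_holder_inequality}), distributing Schatten exponents across the chain of propagators so that each dressed propagator factor lies in a Schatten class with $\tau$-uniform norm; the kernels are absorbed in operator norm using $w \in L^\infty$ for $W$ and $\xi \in \mcCp$ for $\xi$ (for $\xi = \mathbf{1}_p$, using the polynomial bound $\|\Th_\tau(\mathbf{1}_p)|_{\mfh^{(n)}}\| \leq (n/\tau)^p$ from Lemma \ref{q_quantum_lift_estimate} together with the exponential suppression $\e^{-\nu \mcNt}$). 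The requirement $\nu > 0$ enters decisively here: it is what makes $(h+\nu)^{-1}$ a trace-class operator with norm bounded uniformly in $\tau$, so that the time-ordered spacetime integrals defining each diagram remain finite without any mass cutoff.
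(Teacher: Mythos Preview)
Your high-level strategy---Wick-expand both sides over the same set of pairings and compare diagram-by-diagram---is exactly what the paper does. One minor misstatement: you say $\nu>0$ is ``decisive'' for making $(h+\nu)^{-1}$ trace-class, but $h^{-1}$ is already trace-class by \eqref{q_trace_finite} since $\kappa>0$. The paper in fact immediately reduces to $\nu=0$ by absorbing $\nu$ into $\kappa$, so $\nu$ plays no analytic role in this proposition.

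There are two genuine gaps in the sketch. First, the claim that ``each dressed propagator factor lies in a Schatten class with $\tau$-uniform norm'' is not true uniformly in the time variables. After Wick expansion the edge operators are $G_{\tau,\sigma(e)(s_a-s_b)}$ (see Lemma \ref{q_graph_computation_lem} and Definition \ref{q_Jte_defn}); for edges with $\sigma(e)=-1$ the time argument lies in $(-1,0]$ and $\|G_{\tau,t}\|_{\mfS^2}$ is \emph{not} bounded uniformly as $t\to -1^{+}$ (this is exactly why \eqref{q_quantum_Green_estimate3} carries the factor $(1+t)$). So naive iterated H\"older on the chain fails. The paper's mechanism is different: it decomposes each diagram into open and closed paths, observes that along any path the time differences telescope to zero (see \eqref{q_closed_path_bound_rewrite2} and \eqref{q_open_rewrite2}), and thereby reduces the product of time-dependent propagators to powers of the \emph{time-independent} $G_\tau$, which \emph{is} uniformly bounded in $\mfS^1$. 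This path-telescoping identity, not factor-by-factor H\"older, is what makes the estimate uniform in $\mbt\in\mathfrak{V}$.

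Second, your proposed treatment of $\xi=\mathbf{1}_p$ via the operator bound $\|\Th_\tau(\mathbf{1}_p)|_{\mfh^{(n)}}\|\le (n/\tau)^p$ together with $\e^{-\nu\mcNt}$ does not fit the diagram picture: after Wick expansion there is no surviving operator $\e^{-\nu\mcNt}$ (it has been absorbed into the propagators), and the kernel of $\mathbf{1}_p$ is $\prod_j\delta(x_j-y_j)$, so the Cauchy--Schwarz step in the $\mby_1$ variables used for $\xi\in\mathfrak{B}_p$ (Lemma \ref{q_I_bound2_lem}) breaks down. The paper handles this by a further graph modification (Definition \ref{q_delta_graph_defn}): the $2p$ observable vertices are collapsed in pairs to $p$ degree-two vertices, which closes all open paths and lets the closed-path bound (Lemma \ref{q_delta_path_bound_lem}) be reused.
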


\label{q_subsection_bounded_proof}
\begin{proof}[Proof of Theorem \ref{q_bounded_state_convergence_thm}]
By using Proposition \ref{q_expansion_nugeq0} and by arguing analogously as for \cite[Lemma 3.13]{RS22}, we deduce that for $m \in \N$, we have
\begin{equation}
\label{explicit_convergence_RS22}
\lim_{\tau \to \infty}a^{\xi}_{\tau,m} = a^{\xi}_{m}\,,
\end{equation}
uniformly in $\xi \in\mathcal{C}_p$.
We then combine \eqref{explicit_convergence_RS22} with Corollary \ref{q_quantum_power_series_analytic}, Corollary \ref{q_classical_power_series_analytic}, Lemma \ref{q_quantum_explicit_bounds_lem}, Lemma \ref{q_classical_explicit_bounds_lemma}, and the dominated convergence theorem to deduce that for all $z \in \C$, we have
\begin{equation}
\label{explicit_convergence_RS22_2}
\lim_{\tau \rightarrow \infty} \sup_{\xi \in \mathcal{C}_{p}}\bigl|F^{\xi}_{\tau}(z) - F^{\xi}(z)\bigr| \leq \lim_{\tau \rightarrow \infty} \sum_m \sup_{{\xi} \in \mathcal{C}_p}| a^{\xi}_{\tau,m} - a^{\xi}_m| |z|^{m} = 0\,.
\end{equation}

The claim \eqref{partition_function_convergence_bounded} follows from \eqref{explicit_convergence_RS22_2} by taking $p=0$ and recalling \eqref{q_classical_partn_functn}, \eqref{q_quantum_partition_functions}--\eqref{q_relative quantum_partition_fn}, \eqref{quantum_state_series}--\eqref{q_quantum_power_series_defn}, and \eqref{classical_state_series}--\eqref{q_classical_power_series_defn} above.
Note that by convention, when $p=0$, there is no observable $\xi$ in the analysis above. Moreover, we take $\mathcal{A}=\mathbf{1}$ in \eqref{quantum_state_series} and $X=1$ in \eqref{classical_state_series},  respectively.

The claim \eqref{correlation_function_convergence_bounded}  follows from \eqref{explicit_convergence_RS22_2} by a duality argument. 
More precisely, by using \eqref{q_quantum_state_defn} and \eqref{q_quantum_correlation_fn_defn}--\eqref{q_quantum_lift}, we have that for all $\xi \in \mathcal{L}(\mfhp)$
\begin{equation}
\label{duality_1}
\rho_{\tau}(\Theta_{\tau}(\xi))=\mathrm{Tr}\,(\gamma_{\tau,p}\,\xi)\,.
\end{equation}
Analogously, by using \eqref{q_classical_state_defn}--\eqref{q_random_var_defn}, we have 
\begin{equation}
\label{duality_2}
\rho(\Theta(\xi))=\mathrm{Tr}\,(\gamma_{p}\,\xi)\,.
\end{equation}
Recalling \eqref{q_quantum_state_rewrite_1} and \eqref{q_classical_state_rewrite_1}, we see that \eqref{explicit_convergence_RS22_2} implies
\begin{equation}
\label{duality_3}
\lim_{\tau \rightarrow \infty} \sup_{\xi \in \mathcal{C}_p} \bigl|\rho_{\tau}(\Theta_{\tau}(\xi))-\rho(\Theta(\xi))\bigr|=0\,.
\end{equation}

From \eqref{duality_1}--\eqref{duality_2} (and taking suprema over $\xi \in \mathfrak{B}_p$), we immediately deduce the weaker analogue of \eqref{correlation_function_convergence_bounded}  given by 
\begin{equation}
\label{correlation_function_convergence_bounded_weaker}
\lim_{\tau \rightarrow \infty} \|\gamma_{\tau,p} - \gamma_p\|_{\mfS^2(\mfhp)} = 0\,.
\end{equation}
We upgrade \eqref{correlation_function_convergence_bounded_weaker} to 
\eqref{correlation_function_convergence_bounded} by noting that $\gamma_{\tau,p} \geq 0$ and $\gamma_{p} \geq 0$ in the sense of operators and that $\lim_{\tau \rightarrow \infty} \mathrm{Tr} \gamma_{\tau,p} = \mathrm{Tr} \gamma_p$. The former claim follows from \cite[Lemma 3.17]{RS22}, whose proof carries over directly to the quintic setting. The latter claim from \eqref{duality_1}--\eqref{duality_3} by taking $\xi=\mathbf{1}_p \in \mathcal{C}_p$.
For the details of the last step, we refer the reader to \cite[Lemma 4.10]{FKSS17} and \cite[Lemma 3.19]{RS22}.
\end{proof}

\subsection{Graphical analysis of the untruncated explicit terms. Proof of Proposition \ref{q_expansion_nugeq0}}
\label{q_untruncated}

In this section, we prove Proposition \ref{q_expansion_nugeq0} stated above.
For the proof, we use a graphical argument similar to that used in \cite[Sections 2.3-2.6 and 4.1]{FKSS17}. The graphs will be different, due to the three-body interaction. The essence of the argument is quite similar.
For completeness, we review the proof and refer the reader to \cite{FKSS17} for more details and motivation.
In what follows, we denote
\begin{equation}
\label{q_bxi_defn_remk}
b^{\xi}_{\#,m}:=\alpha^{\xi,0}_{\#,m}\,,
\end{equation}
where we recall \eqref{alpha_1}--\eqref{alpha_2}.
Let us note that it suffices to show Proposition \ref{q_expansion_nugeq0}, when $\nu=0$. The general claim follows from this one (with possibly different constants depending on $\nu$) by replacing $\kappa$ in \eqref{q_1-body_Hamn_defn} with $\kappa+\nu$ (see the proof of \cite[Lemma 3.11]{RS23} for more details). In Subsection \ref{Subsection_3.6.1}, we show that 
\begin{equation}
\label{Proposition_3.15_1_1}
\left|b^{\xi}_{\tau,m}\right| \leq C(m,p)\,,
\end{equation}
uniformly in $\xi \in \mathfrak{B}_p$.
In Subsection \ref{Subsection_3.6.2}, we show that 
\begin{equation}
\label{Proposition_3.15_1_2}
\lim_{\tau \to \infty} b^{\xi}_{\tau,m} =b^{\xi}_m\,,
\end{equation}
uniformly in $\xi \in \mathfrak{B}_p$.
In Subsection \ref{Subsection_3.6.3}, we show that \eqref{Proposition_3.15_1_1}--\eqref{Proposition_3.15_1_2} hold for $\xi=\mathbf{1}_p$. The latter requires a slightly modified graphical structure.
Putting these steps together, we complete the proof of Proposition \ref{q_expansion_nugeq0}.
\subsubsection{Proof of \eqref{Proposition_3.15_1_1} uniformly in $\xi \in \mathfrak{B}_p$}
\label{Subsection_3.6.1}
We begin by recalling a number of definitions and results from \cite{FKSS17}. Throughout, we abbreviate $\vph_{\tau,k} := \vph_\tau(e_{k})$, where the $e_{k}$ are defined as in \eqref{q_eigenvalues_defn} above.
\begin{definition}
For $t \in \R$, we define the operator valued distributions $\left(\e^{th/\tau} \vph_\tau\right)(x)$ and $\left(\e^{th/\tau} \vph_\tau^*\right)(x)$ as
\begin{align*}
\left(\e^{th/\tau} \vph_\tau\right)(x) &:= \sum_{k \in \N} \e^{t\lambda_{k}/\tau}e_{k}(x)\vph_{\tau,k}\,, \\
\left(\e^{th/\tau} \vph^*_\tau\right)(x) &:= \sum_{k \in \N} \e^{t\lambda_{k}/\tau}\overline{e}_{k}(x)\vph^*_{\tau,k}\,.
\end{align*}
Here, we recall \eqref{lambda_k}.
\end{definition}
In what follows, we use the result of \cite[Lemma 2.3]{FKSS17}.
\begin{lemma}
	\label{q_conjugation_creatann}
For $t \in \R$ we have,
\begin{equation*}
\e^{tH_{\tau,0}} \vph^*_\tau(x)\e^{-tH_{\tau,0}} = \left(\e^{th/\tau}\vph^*_\tau\right)(x), \quad \e^{tH_{\tau,0}} \vph_\tau(x)\e^{-tH_{\tau,0}} = \left(\e^{-th/\tau}\vph_\tau\right)(x)\,.
\end{equation*}
Here, we recall the definition \eqref{Quantum_free_Hamiltonian} of $H_{\tau,0}$.
\end{lemma}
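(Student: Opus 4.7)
The plan is to reduce the identity to the eigenbasis, compute a commutator with $H_{\tau,0}$, and then exponentiate via Hadamard's lemma.

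First, using \eqref{q_Hamn_spectral_decomp} together with the definition \eqref{Quantum_free_Hamiltonian} of $H_{\tau,0}$ and the rescaling \eqref{varphi_tau^sharp}, I would rewrite
\begin{equation*}
H_{\tau,0} \;=\; \sum_{k \in \N} \lambda_k \,\vph^*_{\tau,k}\,\vph_{\tau,k}\,,
\end{equation*}
which, acting on the $n$-th sector of Fock space, reduces to a finite sum thanks to the factor $\frac{1}{\tau}$ absorbed into $\vph^{(*)}_{\tau,k}$. From the canonical commutation relations \eqref{CCR_formal}, one obtains $[\vph_{\tau,k},\vph^*_{\tau,l}] = \tfrac{1}{\tau}\delta_{kl}$, hence a direct computation yields
\begin{equation*}
[H_{\tau,0},\vph^*_{\tau,l}] \;=\; \frac{\lambda_l}{\tau}\,\vph^*_{\tau,l}\,,\qquad [H_{\tau,0},\vph_{\tau,l}] \;=\; -\frac{\lambda_l}{\tau}\,\vph_{\tau,l}\,.
\end{equation*}

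Next I would apply Hadamard's formula $\e^{tA}Be^{-tA}=\sum_{n\geq 0}\frac{t^n}{n!}\mathrm{ad}_A^n(B)$ on the dense subspace of finite-particle vectors, where no domain issues arise since $H_{\tau,0}$ preserves the particle-number sectors and acts as a bounded operator on each. Iterating the commutator identities gives $\mathrm{ad}_{H_{\tau,0}}^n(\vph^*_{\tau,l}) = (\lambda_l/\tau)^n\,\vph^*_{\tau,l}$ and analogously for $\vph_{\tau,l}$ with sign $-\lambda_l/\tau$, so summing the series produces
\begin{equation*}
\e^{tH_{\tau,0}}\vph^*_{\tau,l}\e^{-tH_{\tau,0}} \;=\; \e^{t\lambda_l/\tau}\vph^*_{\tau,l}\,,\qquad \e^{tH_{\tau,0}}\vph_{\tau,l}\e^{-tH_{\tau,0}} \;=\; \e^{-t\lambda_l/\tau}\vph_{\tau,l}\,.
\end{equation*}

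Finally, to obtain the statement at the level of operator-valued distributions, I would pair with a test function $g \in \mfh$, expand $g = \sum_l \langle e_l,g\rangle e_l$, and use linearity and continuity of $\vph^{(*)}_{\tau}(\cdot)$. Formally evaluating at the delta distribution $g = \delta_x$ as in \eqref{varphi_tau^sharp_2} then yields the two identities claimed in the lemma. The only conceptual point requiring care is justifying the exchange of the infinite sum over $l$ with the conjugation by $\e^{\pm tH_{\tau,0}}$; on finite-particle vectors this is automatic, and the extension to $\mcF$ follows by density, which I expect to be the only mildly delicate step.
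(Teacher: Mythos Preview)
Your proposal is correct and follows the standard route: diagonalize $H_{\tau,0}$ in the eigenbasis, compute the commutators $[H_{\tau,0},\vph^{(*)}_{\tau,l}]$, exponentiate via Hadamard on finite-particle vectors, and pass to the distributional statement. The paper does not supply its own argument for this lemma but simply cites \cite[Lemma 2.3]{FKSS17}; your sketch is essentially the proof one finds there, so there is no discrepancy to discuss.
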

The following result follows from Lemma \ref{q_conjugation_creatann} and the definition \eqref{Quantum_interaction}
of $\mcW_{\tau}$; see also \cite[Corollary 2.4]{FKSS17}.
\begin{lemma}
	\label{q_conjugated_mcWt}
For $t \in \R$, we have
\begin{multline}
	\label{q_conjugated_mcWt_equation}
\e^{tH_{\tau,0}} \,\mcWt\, \e^{-tH_{\tau,0}} = -\frac{1}{3} \int \dd x \, \dd y \, \dd z \,  \left(\e^{th/\tau} \vph^*_\tau\right)(x)\left(\e^{th/\tau} \vphts\right)(y) \left(\e^{th/\tau} \vph^*_\tau\right)(z) \\
\times w(x-y)\, w(x-z)\, \left(\e^{-th/\tau} \vpht\right)(x)\left(\e^{-th/\tau} \vph_\tau\right)(y)\left(\e^{-th/\tau} \vph_\tau\right)(z)\,.
\end{multline}
\end{lemma}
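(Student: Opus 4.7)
The plan is to reduce the statement to a direct application of Lemma \ref{q_conjugation_creatann} after commuting the conjugation $\e^{tH_{\tau,0}}(\cdot)\e^{-tH_{\tau,0}}$ past the triple integral in the definition \eqref{Quantum_interaction} of $\mcW_\tau$. Since $H_{\tau,0}$ does not depend on the integration variables $x,y,z$, conjugation commutes with $\int \dd x \, \dd y \, \dd z\, w(x-y)\,w(x-z)\,(\cdot)$, so the problem is reduced to computing
\begin{equation*}
\e^{tH_{\tau,0}}\,\vphts(x)\vphts(y)\vphts(z)\vpht(x)\vpht(y)\vpht(z)\,\e^{-tH_{\tau,0}}\,.
\end{equation*}

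The key step is to insert five factors of the identity $\mathbf{1} = \e^{-tH_{\tau,0}}\e^{tH_{\tau,0}}$ between consecutive operator-valued distributions in this product. This turns the above expression into the ordered product of six factors of the form $\e^{tH_{\tau,0}}\vph_\tau^{\#}(\cdot)\e^{-tH_{\tau,0}}$, each of which is rewritten using Lemma \ref{q_conjugation_creatann}: the three creation factors become $(\e^{th/\tau}\vphts)(x)$, $(\e^{th/\tau}\vphts)(y)$, $(\e^{th/\tau}\vphts)(z)$, while the three annihilation factors become $(\e^{-th/\tau}\vpht)(x)$, $(\e^{-th/\tau}\vpht)(y)$, $(\e^{-th/\tau}\vpht)(z)$. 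Reassembling these factors in the original order, pulling the conjugated product back inside the triple integral, and restoring the overall prefactor $-\tfrac{1}{3}w(x-y)\,w(x-z)$ yields exactly \eqref{q_conjugated_mcWt_equation}.

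There is no essential obstacle in this argument; the only care needed is the standard one about interpreting $\vpht(x), \vphts(x)$ as operator-valued distributions and the identity \eqref{q_conjugated_mcWt_equation} as one between sesquilinear forms on a suitable dense domain of Fock space (for instance, finite-particle vectors with smooth components). Since the spectral expansion of $\vph_{\tau}^{\#}$ in the eigenbasis $\{e_k\}$ converges in the distributional sense and $\e^{\pm tH_{\tau,0}}$ preserves sectors of fixed particle number, the insertion of identities and the rearrangement described above are justified term-by-term in this expansion, and then summed. No new estimates are required beyond what was already used to make sense of $\mcW_\tau$ in \eqref{Quantum_interaction}.
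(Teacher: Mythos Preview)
Your proof is correct and is exactly the approach the paper indicates: the lemma is stated as following directly from Lemma \ref{q_conjugation_creatann} and the definition \eqref{Quantum_interaction} of $\mcW_\tau$, which is precisely the identity-insertion argument you wrote out.
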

Given a closed operator $\mathcal{A}$ on $\mathcal{F}$, we define
\begin{equation}
\label{rho_{tau,0}}
\rho_{\tau,0}(\mathcal{A}) :=\frac{\Tr(\mathcal{A}\e^{-H_{\tau,0}})}{\Tr(\e^{-H_{\tau,0}})}\,.
\end{equation}
Furthermore, let us recall \eqref{q_bxi_defn_remk} and \eqref{alpha_1}.
Using Lemma \ref{q_conjugated_mcWt} and the cyclicity of the trace, it follows that for all $m,p \in \N$ and $\xi \in \mathfrak{B}_p$, we have
\begin{equation}
	\label{q_explicit_b_rewrite_f1}
b^\xi_{\tau,m} = \frac{1}{3^m} \int^1_0 \dd t_1 \int^{t_1}_0 \dd t_2 \cdots \int^{t_{m-1}}_0 \dd t_m \, g_{\tau,m}^\xi(\mbt)\,, 
\end{equation}
where $\mbt := (t_1,\ldots,t_m)$ and
\begin{multline}
\label{q_gtau_defn}
g^\xi_{\tau,m}(\mbt) := \int \dd x_1 \cdots \dd x_{m+p} \, \dd y_1 \cdots \dd y_{m+p} \, \dd z_1 \cdots \dd z_m \\
\times \left(\prod_{i=1}^m w(x_i-y_i)\, w(x_i-z_i) \right)
\xi(x_{m+1},\ldots,x_{m+p};y_{m+1},\ldots,y_{m+p}) \\
\times \rho_{\tau,0}\Biggl( \prod_{i=1}^m \bigg[ \left(\e^{t_i h_\tau/\tau} \vphts\right)(x_i)\left(\e^{t_i h_\tau/\tau} \vphts\right)(y_i) \left(\e^{t_i h_\tau/\tau} \vphts\right)(z_i) \\
\times \left(\e^{-t_i h_\tau/\tau} \vpht\right)(x_i)\left(\e^{-t_i h_\tau/\tau} \vpht\right)(y_i) \left(\e^{-t_i h_\tau/\tau} \vpht\right)(z_i) \bigg] \\
\times \prod_{i=1}^p \vphts(x_{m+i}) \prod_{i=1}^p \vpht(y_{m+i}) \Biggr)\,.
\end{multline}

Here and throughout all of our products are taken in the order of increasing indices.
We now fix $m,p \in \N$ and define an abstract vertex set $\Sigma$ containing $(6m+2p)$ elements as follows.
\begin{definition}
	\label{q_vertex_set_defn}
Given $m,p \in \N$, we define $\Sigma \equiv \Sigma(m,p)$ to be the set of triples $(i,r,\delta)$ with $i \in \{1,\ldots,m+1\}$. If $i \in \{1,\ldots,m\}$, we consider $r \in \{1,2,3\}$ and if $i = m+1$, we take $r \in \{1,\ldots, p\}$. Finally, we take $\delta \in \{\pm 1\}$. We write\footnote{We emphasise that this is a different object than the $\alpha_{\#,m}^{\xi,\nu}$ in \eqref{alpha_1}--\eqref{alpha_2} above.}  $\alpha = (i,r,\delta)$ and write the components of $\alpha$ as $i_\alpha,r_\alpha,\delta_\alpha$.
We use the lexicographical ordering on $\Sigma$ to order the vertices, which we denote by $\leq$. If $\alpha \leq \beta$ and $\alpha \neq \beta$, we write $\alpha < \beta$.
To each vertex $\alpha = (i,r,\delta) \in \Sigma$, we assign a spatial integration variable $x_\alpha$. Moreover, to each $i \in \{1,\ldots,m\}$, we assign a time $t_i$, and take $t_{m+1} := 0$ by convention. Where convenient, we write $x_{i,r,\delta}$ or $t_{i,r,\delta}$ instead of $x_\alpha$ or $t_\alpha$ respectively. Let us write
\begin{equation*}
\mbx := (x_\alpha)_{\alpha \in \Sigma} \in \Lambda^{\Sigma}, \qquad \mbt := (t_{\alpha})_{\alpha \in \Sigma} \in \R^{\Sigma}\,.
\end{equation*} 
We only consider $(t_1,\ldots, t_m)$ to be in the support of the integral from \eqref{q_explicit_b_rewrite_f1}. In other words, we always take $\mbt \in \mathfrak{V} \equiv \mathfrak{V}(m)$, where
\begin{equation}
\label{V_simplex}
\mathfrak{V} := \bigl\{\mbt \in \R^{\Sigma} : t_{i,r,\delta} = t_i \textrm{ with } 0= t_{m+1} < t_{m} < \cdots < t_1 < 1\bigr\}\,.
\end{equation}
\end{definition}
In the discussion that follows, we fix $m,p \in \N$ as well as $\xi \in \mathfrak{B}_p$.

\begin{remark}
We interpret the integrand in \eqref{q_gtau_defn} in terms of the set $\Sigma$ in Definition \ref{q_vertex_set_defn} as follows. Each occurrence of $ \vphts(\cdot)$ or $\vphts(\cdot)$ corresponds to an element of $\Sigma$. For $i \in \{1, \ldots, m\}$, $i$ denotes that we are working with the $i^{th}$ interaction, and hence that we are considering a factor of the form $\left(\e^{t_i h_\tau/\tau} \vphts\right)(\cdot)$ or  $\left(\e^{-t_i h_\tau/\tau} \vpht\right)(\cdot)$. When $\delta=+1$, it is the former and when $\delta=-1$, it is the latter. The index $r=1,2,3$ refers to the integration variable $x_i,y_i,z_i$ respectively.
Furthermore, when $i=m+1$, we consider the factor $\vphts(x_{m+r})$ and when $\delta=+1$, and the factor $\vpht(y_{m+r})$ when $\delta=-1$; see Figure \ref{Fig:uncollapsed} for a graphical representation.
Let us also note that
\begin{equation}
	\label{q_time_variable_identity}
\alpha < \beta \quad \Rightarrow \quad 0 \leq t_{\alpha} - t_{\beta} < 1\,.
\end{equation}
\end{remark}

\begin{figure}[htbp]
\centering
\includegraphics[scale=0.75]{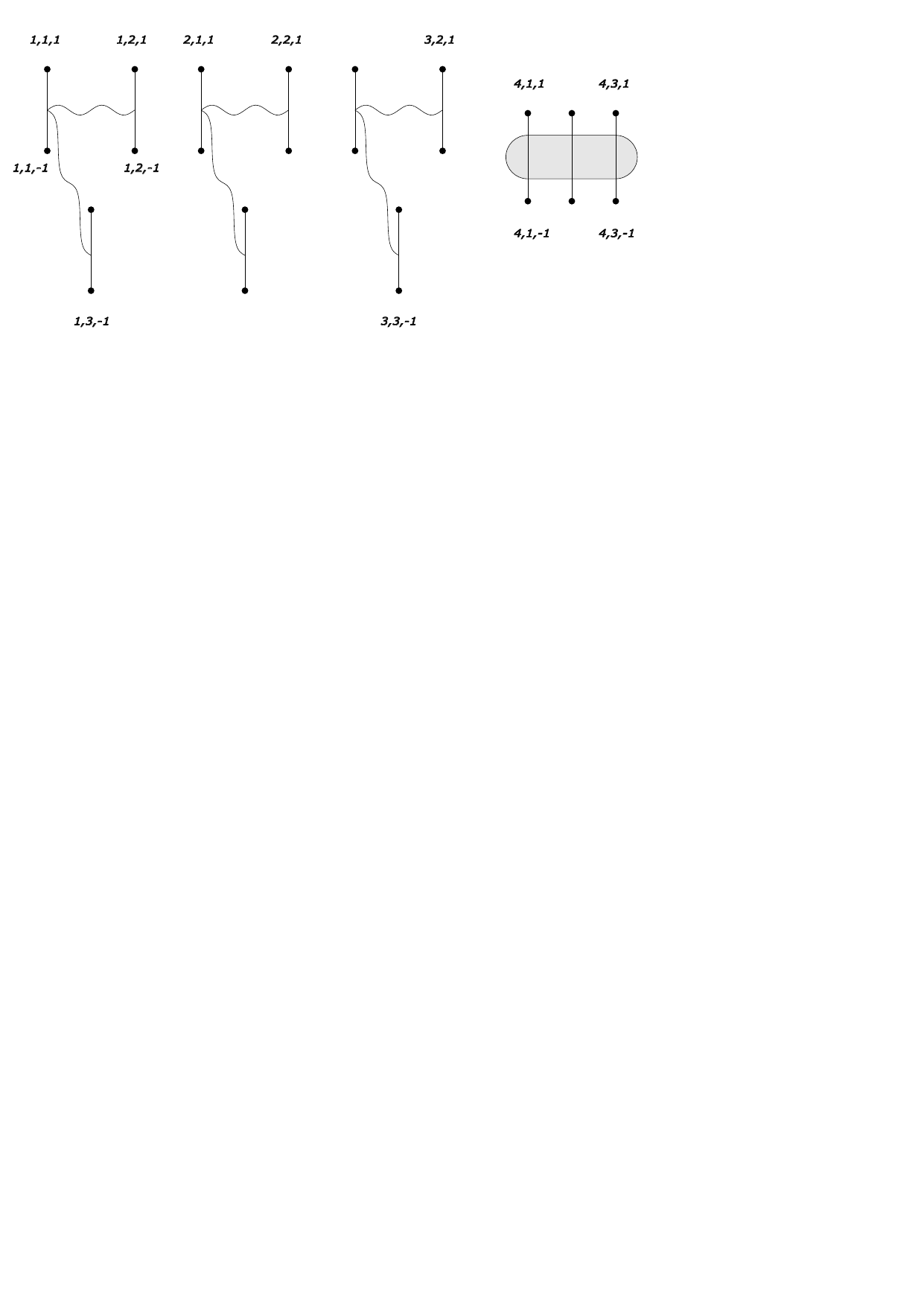}
\vspace*{-160mm}
\caption{An unpaired graph from Definition \ref{q_vertex_set_defn} with $m=p=3$. The black dots correspond to factors of $\vpht(\cdot)$ and $\vphts(\cdot)$. The wavy lines correspond to factors of the interaction potential $w$.}
\label{Fig:uncollapsed}
\end{figure}

Let us recall the \emph{Quantum Wick theorem}.

\begin{lemma}[Quantum Wick theorem]
\label{Quantum_Wick_theorem}
Let $\mathcal{A}_1, \ldots, \mathcal{A}_n$ be operators of the form $\mathcal{A}_i=b(f_i)$ or $\mathcal{A}_i=b^*(f_i)$, for $f_1, \dots, f_n \in \mathfrak{h}$. Here, we recall \eqref{Operator_b^*}--\eqref{Operator_b}. We then have
\begin{equation*}
\rho_{\tau,0} (\mathcal{A}_1 \cdots \mathcal{A}_n) = \sum_{\Pi \in M(n)} \prod_{(i,j) \in \Pi} \rho_{\tau,0}(\mathcal{A}_i \mathcal{A}_j)\,,
\end{equation*}
where, as in Proposition \ref{q_Wick_thm}, $M(n)$ denotes  the set of complete pairings of $\{1,\ldots,n\}$. The edges of $\Pi$ are now labelled using ordered pairs $(i,j)$ with $i < j$. Here, we also recall \eqref{rho_{tau,0}}. 
\end{lemma}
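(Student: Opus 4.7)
The plan is to prove the identity by induction on $n$, adapting the standard argument for Wick's theorem on gauge-invariant quasi-free states. The three essential ingredients are the canonical commutation relations \eqref{CCR}, the cyclicity of the trace, and the conjugation formula of Lemma \ref{q_conjugation_creatann}.

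A first observation is that, since $\mathcal{N}_\tau$ commutes with $H_{\tau,0}$, the state $\rho_{\tau,0}$ is invariant under the $U(1)$-rotations $b^\#(f) \mapsto \mathrm{e}^{\pm \mathrm{i}\theta/\tau} b^\#(f)$. This forces both sides of the claim to vanish whenever the numbers of creation and annihilation operators among $\mathcal{A}_1,\ldots,\mathcal{A}_n$ do not coincide; in particular, we may assume $n$ is even, and on the right-hand side only pairings in which one factor is a creation operator and the other an annihilation operator survive. The base case $n=2$ will be handled by an explicit computation carried out in tandem with the inductive step.

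For the inductive step from $n-2$ to $n$, I would move $\mathcal{A}_1$ rightward through the product using
\begin{equation*}
\mathcal{A}_1 \mathcal{A}_2 \cdots \mathcal{A}_n = \mathcal{A}_2 \cdots \mathcal{A}_n \mathcal{A}_1 + \sum_{k=2}^n \mathcal{A}_2 \cdots \mathcal{A}_{k-1}\,[\mathcal{A}_1, \mathcal{A}_k]\,\mathcal{A}_{k+1} \cdots \mathcal{A}_n,
\end{equation*}
where each commutator $[\mathcal{A}_1, \mathcal{A}_k]$ is a scalar by \eqref{CCR}. Taking the expectation and using the cyclicity of the trace together with Lemma \ref{q_conjugation_creatann} at $t=1$ rewrites the trailing term as $\rho_{\tau,0}(\widetilde{\mathcal{A}}_1 \mathcal{A}_2 \cdots \mathcal{A}_n)$, where $\widetilde{\mathcal{A}}_1$ is obtained from $\mathcal{A}_1$ by substituting $f_1$ with $\mathrm{e}^{-h/\tau} f_1$ (if $\mathcal{A}_1 = b(f_1)$) or $\mathrm{e}^{h/\tau} f_1$ (if $\mathcal{A}_1 = b^*(f_1)$). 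Viewing $f_1 \mapsto \rho_{\tau,0}(\mathcal{A}_1 \cdots \mathcal{A}_n)$ as an (anti)linear functional on $\mathfrak{h}$, one obtains a linear equation of the form $(I-T)G_n = \Psi_{n-2}$, where $T$ denotes the appropriate substitution and $\Psi_{n-2}$ collects the commutator contributions. Since $\kappa > 0$, the operator $I-T$ is diagonalised in the eigenbasis \eqref{q_Hamn_spectral_decomp} with spectrum bounded away from zero, hence invertible; running the same argument in the base case $n=2$ identifies the inverse kernel precisely with the two-point function. This yields the contraction identity
\begin{equation*}
\rho_{\tau,0}(\mathcal{A}_1 \cdots \mathcal{A}_n) = \sum_{k=2}^n \rho_{\tau,0}(\mathcal{A}_1 \mathcal{A}_k)\, \rho_{\tau,0}(\mathcal{A}_2 \cdots \widehat{\mathcal{A}}_k \cdots \mathcal{A}_n),
\end{equation*}
and the inductive hypothesis applied to each $(n-2)$-point factor assembles the full sum over pairings in $M(n)$.

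The main technical point is the careful justification of the trace manipulations and the inversion of $I-T$ for unbounded operators on the infinite-dimensional Fock space. I would handle this in the standard way by first establishing the identity on the dense subspace of finite-particle vectors in $\mathcal{F}$, where all computations are elementary, and then extending by continuity using the basic bound $\|b^\#(f)\Psi\| \lesssim \|f\|_{\mathfrak{h}}\|(\mathcal{N}_\tau + \tau^{-1})^{1/2}\Psi\|$ together with the trace-class property of $\mathrm{e}^{-H_{\tau,0}}(\mathcal{N}_\tau+1)^N$ for every $N$, which follows from \eqref{q_trace_finite}. Alternatively, one may simply invoke the general theory of gauge-invariant quasi-free KMS states on bosonic Fock space, of which $\rho_{\tau,0}$ is the canonical example.
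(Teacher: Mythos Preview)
The paper does not supply its own proof of this lemma; it simply cites \cite[Lemma B.1]{FKSS17}. Your proposal is the standard pull-through (KMS-type) argument for gauge-invariant quasi-free states, and it is correct in outline and in its key mechanism: commute $\mathcal{A}_1$ to the right picking up scalar commutators, use cyclicity and the conjugation formula to bring it back to the front with the substitution $f_1\mapsto \mathrm{e}^{\mp h/\tau}f_1$, then invert $I-\mathrm{e}^{\mp h/\tau}$ and compare with the $n=2$ case to extract the contraction identity.

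One small point worth flagging: when $\mathcal{A}_1=b^*(f_1)$ your substitution operator is $T=\mathrm{e}^{h/\tau}$, which is unbounded, so $\widetilde{\mathcal{A}}_1=b^*(\mathrm{e}^{h/\tau}f_1)$ is not a priori defined for general $f_1\in\mathfrak{h}$. This is harmless---you can either always elect to move an annihilation operator (gauge invariance guarantees at least one is present in every nonvanishing term), or restrict to $f_1$ analytic for $h$ (e.g.\ finite linear combinations of the $e_k$) and extend at the end. Your final paragraph already gestures at exactly this kind of density-plus-continuity fix, so the argument closes.
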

For a self-contained proof of Lemma \ref{Quantum_Wick_theorem}, we refer the reader to \cite[Lemma B.1]{FKSS17}.
As in \cite[Section 2]{FKSS17}, we use Lemma \ref{Quantum_Wick_theorem} to simplify the expression 
\eqref{q_gtau_defn}. Before proceeding, we define a few objects which we will use in the analysis.

\begin{definition}
\label{q_pairing_defn}
Given $\Pi$ a pairing of $\Sigma$, i.e. a one-regular graph on $\Sigma$, we regard its edges as ordered pairs $(\alpha,\beta)$ such that $\alpha<\beta$.
We then define $\mathfrak{P} \equiv \mathfrak{P}(m,p)$ to be the set of pairings $\Pi$ of $\Sigma$ satisfying $\delta_\alpha\delta_\beta = -1$ whenever $(\alpha,\beta) \in \Pi$; see Figure \ref{Fig:uncollapsed_paired}.
\end{definition}

\begin{figure}[htbp]
\centering
\includegraphics[scale=0.75]{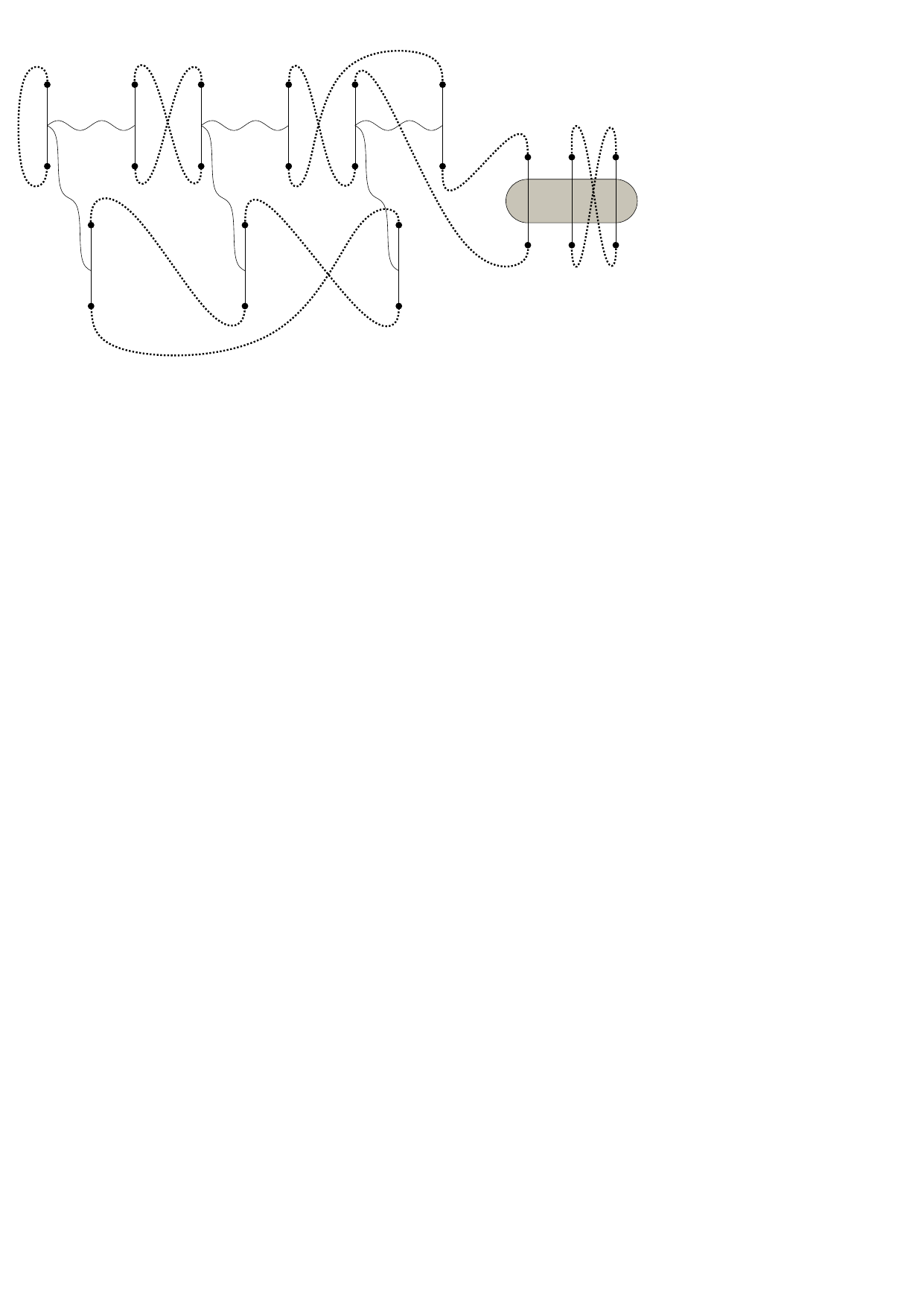}
\vspace*{-167mm}
\caption{A graph with $m=p=3$ with a valid pairing from Definition \ref{q_pairing_defn}.}
\label{Fig:uncollapsed_paired}
\end{figure}

\begin{definition}
For $\alpha \in \Sigma$ and $\mbt \in \mathfrak{V}$, define
\begin{align}
	\label{q_mcB_defn}
\mcB_{\alpha}(\mbx,\mbt) := 
\begin{cases}
\etavstar(x_\alpha) &\text{ if $\delta = 1$}\,, \\
\left(\e^{-t_\alpha h /\tau} \vpht \right) (x_\alpha) &\text{ if $\delta = -1$}\,.
\end{cases}
\end{align}
\end{definition}
\begin{definition}
For $\Pi \in \mfP$ and $\mbt \in \mathfrak{V}$, we define
\begin{multline}
\label{I_{tau,Pi}_definition}
I_{\tau,\Pi}^\xi(\mbt):= \\
\int_{\Lambda^{\Sigma}} \dd\mbx\,\Biggl( \prod^{m}_{i=1}  w(x_{i,1,1}-x_{i,2,1})\,w(x_{i,1,1}-x_{i,3,1}) \prod^{3}_{r=1} \delta(x_{i,r,1}-x_{i,r,-1})  \Biggr) \\
\times \xi(x_{m+1} , \ldots, x_{m+p};y_{m+1} , \ldots, y_{m+p})
\prod_{(\alpha, \beta) \in \Pi} \rho_{\tau,0}(\mcB_{\alpha}(\mbx,\mbt)\mcB_{\beta}(\mbx,\mbt))\,.
\end{multline} 
\end{definition}
We can now state the simplification of \eqref{q_gtau_defn} that we will use in the sequel.
\begin{lemma}
	\label{q_gtau_rewrite_lem}
	For $\mbt \in \mathfrak{V}$, we have $\g = \sum_{\Pi \in \mfP} I_{\tau,\Pi}(\mbt)$.
\end{lemma}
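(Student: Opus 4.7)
\medskip

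\noindent\textbf{Proof sketch.} The plan is to apply the quantum Wick theorem (Lemma \ref{Quantum_Wick_theorem}) to the free quantum state $\rho_{\tau,0}(\cdots)$ appearing in \eqref{q_gtau_defn}, and then to identify the resulting sum of pairings with the expression $\sum_{\Pi \in \mfP} I^\xi_{\tau,\Pi}(\mbt)$ from \eqref{I_{tau,Pi}_definition} by an appropriate relabelling of variables.

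First, I would rewrite the integrand of \eqref{q_gtau_defn} in the notation of Definition \ref{q_vertex_set_defn}. Each of the $6m+2p$ operator-valued factors inside $\rho_{\tau,0}$ is attached to a unique vertex $\alpha = (i,r,\delta) \in \Sigma$: for $i \in \{1,\ldots,m\}$ the index $r \in \{1,2,3\}$ selects which of the three integration variables $x_i,y_i,z_i$ is used, and $\delta = +1$ (resp.\ $\delta=-1$) selects the creation factor $(\e^{t_i h/\tau}\vphts)(\cdot)$ (resp.\ the annihilation factor $(\e^{-t_i h/\tau}\vpht)(\cdot)$); for $i=m+1$ the index $r \in \{1,\ldots,p\}$ selects the pair $(\vphts(x_{m+r}),\vpht(y_{m+r}))$. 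After this relabelling, the two copies of $x_i$ (one inside a creation, one inside an annihilation) become $x_{i,1,+1}$ and $x_{i,1,-1}$, and similarly for the pairs $(y_i,z_i)$. To keep the product of $w$'s unchanged, I introduce the product of Dirac deltas $\prod_{i=1}^m \prod_{r=1}^3 \delta(x_{i,r,+1}-x_{i,r,-1})$, which also enlarges the integration domain from the original variables to $\Lambda^{\Sigma}$. With this bookkeeping the integrand of \eqref{q_gtau_defn} becomes exactly the integrand of \eqref{I_{tau,Pi}_definition} before the Wick step (with $\rho_{\tau,0}(\cdots)$ still applied to the full product).

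Next, I would apply Lemma \ref{Quantum_Wick_theorem}, using the convention that the operators are ordered according to the lexicographic ordering of $\Sigma$ (this is the ordering already built into Definition \ref{q_vertex_set_defn}). The canonical commutation relations \eqref{CCR} imply
\begin{equation*}
\rho_{\tau,0}\bigl(\mcB_\alpha \mcB_\beta\bigr) = 0 \qquad \text{whenever } \delta_\alpha\delta_\beta = +1,
\end{equation*}
because $\rho_{\tau,0}$ preserves particle number while a pair of two creations or two annihilations does not. Hence, in the Wick expansion of $\rho_{\tau,0}$ applied to the product over $\Sigma$, only pairings $\Pi$ satisfying $\delta_\alpha\delta_\beta = -1$ for every $(\alpha,\beta)\in \Pi$ survive, that is, only $\Pi \in \mfP$. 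For such $\Pi$, the contracted two-point function $\rho_{\tau,0}(\mcB_\alpha(\mbx,\mbt)\mcB_\beta(\mbx,\mbt))$ is precisely the factor appearing in \eqref{I_{tau,Pi}_definition}. Summing over $\Pi \in \mfP$ yields $g^\xi_{\tau,m}(\mbt)=\sum_{\Pi \in \mfP} I^\xi_{\tau,\Pi}(\mbt)$.

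The only mildly delicate point is the bookkeeping with the ordering: each edge of $\Pi$ must be an ordered pair $(\alpha,\beta)$ with $\alpha<\beta$, matching the ordering in which the corresponding factors appear in the original product inside $\rho_{\tau,0}$. Because the definition of $\Sigma$ uses lexicographic ordering, and because the factors in \eqref{q_gtau_defn} were written in order of increasing $i$ and, within each $i$, in the order $(x_i,y_i,z_i)$ for creations followed by annihilations, this ordering is consistent and no sign issues arise (all contractions are between a creation and an annihilation, and the canonical commutation relations are abelian in the sense required by Lemma \ref{Quantum_Wick_theorem}). This completes the identification.
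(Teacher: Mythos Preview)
Your proposal is correct and follows essentially the same approach as the paper: relabel the variables using $\Sigma$, insert the delta functions to pass to $\Lambda^\Sigma$, apply the quantum Wick theorem, and observe that only pairings in $\mfP$ survive. The one difference is that the paper inserts a spectral expansion $\mcB_\alpha(\mbx,\mbt)=\sum_{k_\alpha}\e^{\delta_\alpha t_\alpha\lambda_{k_\alpha}/\tau}u^\alpha_{k_\alpha}(x_\alpha)\mcA_\alpha(\mbk)$ before applying Lemma~\ref{Quantum_Wick_theorem}, because that lemma is stated for genuine operators $b(f),b^*(f)$ with $f\in\mfh$ rather than for the operator-valued distributions $\mcB_\alpha$; after the Wick step the spectral sums are recombined into $\rho_{\tau,0}(\mcB_\alpha\mcB_\beta)$. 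Your direct application is morally the same but, strictly speaking, to match the hypotheses of Lemma~\ref{Quantum_Wick_theorem} as stated you would need either this spectral expansion or a smearing/limiting argument.
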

\begin{proof}
We follow the proof and use the notation used in \cite[Lemma 2.8]{FKSS17}. Namely for $\alpha \in \Sigma$, to each $x_\alpha$, we define a corresponding spectral label $k_\alpha \equiv k_{i,r,\delta}$, and write $\mbk := (k_\alpha)_{\alpha \in \Sigma}$. Let us recall \eqref{q_eigenvalues_defn}. For $l \in \N$, we let
\begin{equation*}
	u_{l}^{\alpha} := 
\begin{cases}
\overline{e}_{l} \text{ if $\delta = +1$}\,, \\
{e}_{l} \text{ if $\delta = -1$}\,.
\end{cases}
\end{equation*}
From \eqref{q_mcB_defn}, it follows that
\begin{equation}
	\label{q_mcB_rewrite}
\mcB_{\alpha}(\mbx, \mbt) = \sum_{k_{\alpha } \in \N} \e^{\delta_\alpha t_{\alpha } \lambda_{k_\alpha}/\tau} u_{k_{\alpha}}^{\alpha}(x_\alpha) \mcA_{\alpha}(\mbk)\,,
\end{equation}
where we define
\begin{equation*}
	\label{q_mcA_defn}
\mcA_{\alpha}(\mbk) := 
\begin{cases}
\vph_{\tau,k_\alpha}^* \text{ if $\delta = +1$}\,, \\
\vph_{\tau,k_\alpha} \text{ if $\delta = -1$}\,.
\end{cases}
\end{equation*}
Using \eqref{q_mcB_rewrite} in \eqref{q_gtau_defn}, we have
\begin{multline}
		\label{q_gtau_rewrite}
\g = \\
\int_{\Lambda^{\Sigma}} \dd\mbx \, \Biggl(\prod^{m}_{i=1}  w(x_{i,1,1}-x_{i,2,1})\,w(x_{i,1,1}-x_{i,3,1}) \prod^{3}_{r=1} \delta(x_{i,r,1}-x_{i,r,-1})  \Biggr) \\
\times \xi(x_{m+1,1,1} , \ldots, x_{m+1,p,1};y_{m+1,1,-1} , \ldots, y_{m+1,p,-1}) \\
\times \sum_{\mbk} \left[ \left( \prod_{\alpha \in \Sigma} \e^{\delta_\alpha t_{\alpha} \lambda_{\tau,k_\alpha}/\tau} u_{k_{\alpha}}^{\alpha}(x_\alpha)\right) \, \rho_{\tau,0}\left(\prod_{\alpha \in \Sigma}\mcA_{\alpha}(\mbk)\right) \right]\,.
\end{multline}
The result follows from applying Lemma \ref{Quantum_Wick_theorem} to \eqref{q_gtau_rewrite} and using \eqref{q_mcB_rewrite}.
\end{proof}
We define the following bounded operators.
\begin{align*}
	S_{\tau,t} &:= \e^{-th/\tau} \text{ if $t \geq 0$}\,, \\
	G_{\tau,t} &:= \frac{\e^{-th/\tau}}{\tau(\e^{h/\tau} - 1)} \text{ if $t \geq -1$}\,.
\end{align*}
Here we note that $G_{\tau,t}$ is the time-evolved Green function. 
\begin{lemma}
	\label{q_GS_positive_lem}
	For $t \geq 0$, both $G_{\tau,t}$ and $S_{\tau,t}$ have symmetric, non-negative kernels. 
\end{lemma}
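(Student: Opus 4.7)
I would reduce both claims to the explicit structure of the periodic heat kernel on $\La$. For $S_{\tau,t}$, writing $h = -\Delta + \kappa$ gives $S_{\tau,t} = \e^{-\kappa t/\tau}\,\e^{(t/\tau)\Delta}$, and for $t>0$ the periodic heat kernel admits the representation
\begin{equation*}
\e^{s\Delta}(x-y) \;=\; \sum_{j \in \Z} \frac{1}{\sqrt{4\pi s}}\,\e^{-|x-y-j|^2/(4s)}\,, \qquad s>0\,,
\end{equation*}
which is manifestly non-negative and invariant under $x-y \mapsto y-x$ (the substitution $j \mapsto -j$ reindexes the sum). Multiplying by the positive scalar $\e^{-\kappa t/\tau}$ preserves both properties. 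At $t=0$, $S_{\tau,0}=\mathbf{1}$ has distributional kernel $\delta(x-y)$, which is trivially symmetric and non-negative. Alternatively, one can invoke the Feynman--Kac formula (Proposition~\ref{q_FK_formula}) with $V \equiv \kappa$ to read off the same properties directly.

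For $G_{\tau,t}$ with $t \geq 0$, I would exploit the fact that $h \geq \kappa > 0$, which yields the operator-norm bound $\|\e^{-h/\tau}\|_{\mfS^\infty(\mfh)} \leq \e^{-\kappa/\tau} < 1$. Hence the Neumann series
\begin{equation*}
\bigl(\e^{h/\tau} - \mathbf{1}\bigr)^{-1} \;=\; \sum_{n=1}^{\infty} \e^{-nh/\tau}
\end{equation*}
converges in operator norm, and multiplication by $\tau^{-1}\e^{-th/\tau}$ gives the operator identity
\begin{equation*}
G_{\tau,t} \;=\; \frac{1}{\tau}\sum_{n=1}^{\infty} S_{\tau,t+n}\,.
\end{equation*}
For $t \geq 0$ and $n \geq 1$, each exponent $t+n \geq 1 > 0$, so every summand $S_{\tau,t+n}$ has a symmetric, non-negative kernel by the first step. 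Moreover, by the spectral decomposition \eqref{q_Hamn_spectral_decomp}, $\|S_{\tau,t+n}\|_{\mfS^1(\mfh)} = \sum_k \e^{-(t+n)\lk/\tau}$ decays geometrically in $n$, so the series converges in $\mfS^1(\mfh)$. Consequently the corresponding integral kernels converge in $L^1(\La \times \La)$ to the kernel of $G_{\tau,t}$, and since the pointwise a.e.\ (resp.\ $L^1$) limit of sums of non-negative, symmetric kernels is itself non-negative and symmetric, the conclusion follows.

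The only mildly delicate step is the interchange of summation with kernel extraction, but the trace-class convergence combined with the positivity of each summand makes this routine; I do not foresee a genuine obstacle beyond this bookkeeping.
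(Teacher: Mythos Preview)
Your argument is correct and matches the paper's approach: the paper invokes Feynman--Kac (Proposition~\ref{q_FK_formula}) and self-adjointness, deferring details to \cite[Lemma 2.9]{FKSS17}, and your explicit Neumann series $G_{\tau,t}=\tau^{-1}\sum_{n\geq 1} S_{\tau,t+n}$ is exactly how one makes Feynman--Kac apply to the Green function. The only cosmetic point is that trace-class convergence gives $L^2(\La\times\La)$ (via $\mfS^2$) rather than $L^1$ convergence of kernels, but this suffices for passing non-negativity and symmetry to the limit.
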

\begin{proof}
	Since both $S_{\tau,t}$ and $G_{\tau,t}$ are self-adjoint, it follows from the Proposition \ref{q_FK_formula} that their kernels are non-negative and thus symmetric. See \cite[Lemma 2.9]{FKSS17} for more details.
\end{proof}
We thus have the following result for computing the free quantum states of products of pairs of $\mcB_\alpha$; see \cite[Lemma 2.10]{FKSS17}.
\begin{lemma}
	\label{q_graph_computation_lem}
Suppose $\alpha, \beta \in \Sigma$ with $\alpha < \beta$.
\begin{enumerate}
\item \label{q_graph_computation_1} If $\delta_{\alpha} = 1$ and $\delta_{\beta} = - 1$ with $t_\alpha-t_{\beta} < 1$, then
\begin{equation*}
\rho_{\tau,0}\left(\mcB_{\alpha}(\mbx,\mbt)\mcB_{\beta}(\mbx,\mbt)\right) = G_{\tau,-(t_{\alpha}-t_{\beta})}(x_{\alpha} ; x_\beta)\,.
\end{equation*}
\item \label{q_graph_computation_2} If $\delta_{\alpha} = -1$ and $\delta_{\beta} = 1$ with $t_\alpha-t_{\beta} \geq 0$, then
\begin{equation*}
\rho_{\tau,0}\left(\mcB_{\alpha}(\mbx,\mbt)\mcB_{\beta}(\mbx,\mbt)\right) = G_{\tau,t_{\alpha}-t_{\beta}}(x_{\alpha} ; x_\beta) + \frac{1}{\tau}S_{\tau,t_{\alpha}-t_{\beta}}(x_{\alpha} ; x_{\beta})\,.
\end{equation*}
\item \label{q_graph_computation_3} For both \eqref{q_graph_computation_1} and \eqref{q_graph_computation_2}, by Lemma \ref{q_GS_positive_lem}, we have $\rho_{\tau,0}\left(\mcB_{\alpha}(\mbx,\mbt)\mcB_{\beta}(\mbx,\mbt)\right) \geq 0$.
\end{enumerate}
\end{lemma}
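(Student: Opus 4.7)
The claim amounts to a direct computation of the time-evolved two-point function of the rescaled creation/annihilation operators in the free quasi-free thermal state $\rho_{\tau,0}$. The plan is to diagonalize everything in the Fourier basis $\{e_k\}_{k \in \N}$ and reduce the statement to the thermal two-point functions of individual modes. I would start by using the spectral expansion \eqref{q_mcB_rewrite} from the proof of Lemma~\ref{q_gtau_rewrite_lem}, namely
\begin{equation*}
\mcB_\alpha(\mbx,\mbt) = \tau^{-1/2}\sum_{k \in \N} \e^{\delta_\alpha t_\alpha \lambda_k/\tau}\, \uatk(x_\alpha)\, \mcA_{\alpha}(\mbk)\,,
\end{equation*}
to expand $\rho_{\tau,0}(\mcB_\alpha \mcB_\beta)$ as a double sum over Fourier modes. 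The problem then reduces to the evaluation of $\rho_{\tau,0}(b^*(e_k) b(e_l))$ and $\rho_{\tau,0}(b(e_k) b^*(e_l))$.

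Next, I would exploit the facts that $\Hf$ in \eqref{Quantum_free_Hamiltonian} is diagonal in $\{e_k\}$ and that by \eqref{CCR} the operators $\{b(e_k), b^*(e_k)\}$ commute across distinct modes. A standard direct calculation (writing $\e^{-\Hf}$ as a product over modes and summing a geometric series on each factor) then yields the well-known Bose--Einstein formulas
\begin{equation*}
\rho_{\tau,0}\bigl(b^*(e_k)\,b(e_l)\bigr) = \frac{\delta_{kl}}{\e^{\lambda_k/\tau} - 1}\,,\qquad \rho_{\tau,0}\bigl(b(e_k)\,b^*(e_l)\bigr) = \delta_{kl}\,\frac{\e^{\lambda_k/\tau}}{\e^{\lambda_k/\tau} - 1}\,,
\end{equation*}
the second identity being a consequence of the first combined with \eqref{CCR}. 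Substituting and collapsing the double sum via the Kronecker delta, case \eqref{q_graph_computation_1} (where $\delta_\alpha=+1$, $\delta_\beta=-1$) produces
\begin{equation*}
\rho_{\tau,0}\bigl(\mcB_\alpha(\mbx,\mbt)\,\mcB_\beta(\mbx,\mbt)\bigr) = \sum_{k \in \N} \bar e_k(x_\alpha)\,e_k(x_\beta)\,\frac{\e^{(t_\alpha - t_\beta)\lambda_k/\tau}}{\tau(\e^{\lambda_k/\tau}-1)}\,,
\end{equation*}
which is precisely the spectral representation of the kernel of $\e^{(t_\alpha - t_\beta)h/\tau}/(\tau(\e^{h/\tau}-1)) = G_{\tau,-(t_\alpha - t_\beta)}$. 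The hypothesis $t_\alpha - t_\beta < 1$ ensures $-(t_\alpha - t_\beta) > -1$, which is exactly the range on which $G_{\tau,\cdot}$ was defined and which guarantees absolute convergence of the series (the summand decays like $\e^{(t_\alpha-t_\beta-1)\lambda_k/\tau}/\lambda_k$ for large $\lambda_k$). Case \eqref{q_graph_computation_2} is analogous: the extra $+1$ in the decomposition $\e^{\lambda_k/\tau}/(\e^{\lambda_k/\tau}-1) = 1/(\e^{\lambda_k/\tau}-1) + 1$ produces, in addition to $G_{\tau,t_\alpha - t_\beta}(x_\alpha;x_\beta)$, the extra summand
\begin{equation*}
\frac{1}{\tau}\sum_{k \in \N}\e^{-(t_\alpha - t_\beta)\lambda_k/\tau}\,e_k(x_\alpha)\,\bar e_k(x_\beta) = \frac{1}{\tau}\,S_{\tau,\,t_\alpha - t_\beta}(x_\alpha;x_\beta)\,.
\end{equation*}

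Finally, part \eqref{q_graph_computation_3} is immediate from \eqref{q_graph_computation_1}, \eqref{q_graph_computation_2}, and Lemma~\ref{q_GS_positive_lem}. There is no genuine obstacle: the calculation is routine once one identifies that $\rho_{\tau,0}$ is a quasi-free state and carefully tracks the ordering of creation and annihilation operators, so as to pick up the Bose--Einstein distribution in case \eqref{q_graph_computation_1} and its increment by $1$ (reflecting the canonical commutator) in case \eqref{q_graph_computation_2}.
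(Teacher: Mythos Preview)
Your proposal is correct and is precisely the standard computation. The paper itself does not supply a proof but simply cites \cite[Lemma~2.10]{FKSS17}; your argument---Fourier-diagonalize, use the Bose--Einstein occupation numbers for each mode, and identify the resulting spectral sums with the kernels of $G_{\tau,t}$ and $S_{\tau,t}$---is exactly what one finds there. One small notational slip: in your displayed expansion of $\mcB_\alpha$ you inserted an extra prefactor $\tau^{-1/2}$ in front of the sum, but in the paper's convention that factor is already absorbed into $\mcA_\alpha(\mbk)=\vph^\#_{\tau,k_\alpha}$ (cf.\ \eqref{q_mcB_rewrite}); this does not affect the substance of the argument since you then compute with $b,b^*$ directly.
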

We define a second, coloured graph, where we collapse the left-hand side vertices of the original graph. In other words, we identify $x_{i,r,1}$ and $x_{i,r,-1}$ for $i \leq m$. In order to make this precise, we use first define the following equivalence relation.
\begin{definition}
\label{Definition_3.28}
For $\alpha,\beta \in \Sigma$, we write $\alpha \sim \beta$ if and only if $i_{\alpha} = i_{\beta} \leq m$ and $r_{\alpha} =r_{\beta}$.
\end{definition}
Let us now implement Definition \ref{Definition_3.28} in the graphical structure.
\begin{definition}
\label{q_colored_graph_defn}
For each $\Pi \in \mfP$, we define the coloured graph $(V_{\Pi},E_{\Pi},\sigma_{\Pi}) = (V,E,\sigma)$ as follows. $V := \Sigma/\sim$ is the set of equivalence classes of $\Sigma$ under $\sim$. For $\alpha \in \Sigma$, let us denote by $[\alpha]$ its equivalence class under $\sim$. For each edge $(\alpha, \beta) \in \Pi$, we obtain an edge $e = \{[\alpha],[\beta]\}$, and we denote by $E$ the set of edges obtained in this way. Finally, we define the colour of an edge $e \in E$ as 
\begin{equation}
\label{sigma_e_definition}
\sigma(e) := \delta_{\beta}\,.
\end{equation}
This is well-defined by construction.
\end{definition}

\begin{figure}[htbp]
    \centering
    \includegraphics[scale=0.75]{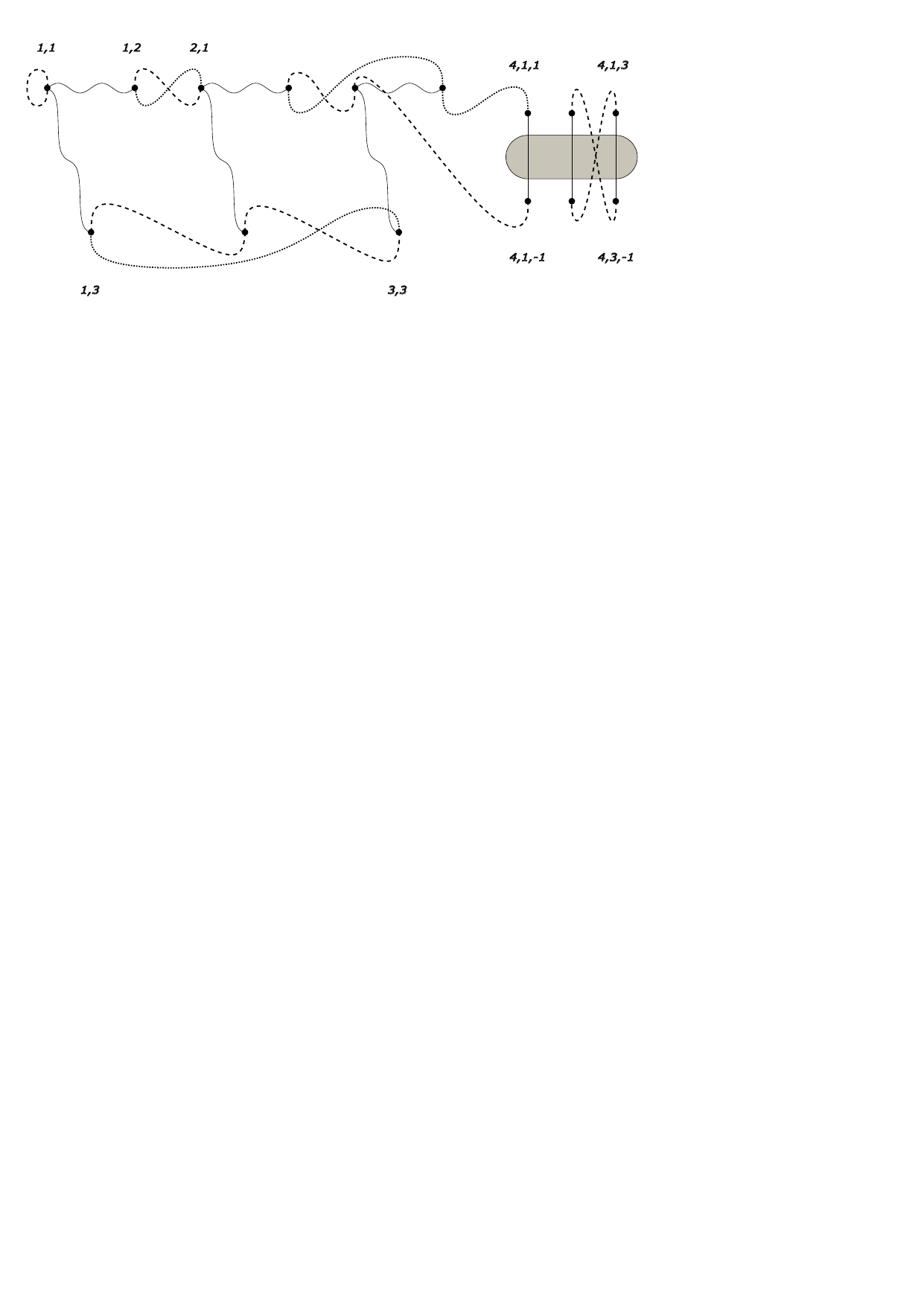}
    \vspace*{-167mm}
    \caption{An example of the coloured graph from Definition \ref{q_colored_graph_defn} with the same pairing as Figure \ref{Fig:uncollapsed_paired}. The wider dotted edges have colour $-1$, and the finer dotted lines have colour $1$.}
    \label{fig:collapsed_pairing}
\end{figure}

\begin{remark}
We make the following observations about $(V,E,\sigma)$.
\begin{enumerate}
\item The set $V$ inherits a well-defined total order from $\Sigma$ defined by $[\alpha] \leq [\beta]$ if $\alpha \leq \beta$. We also adopt the same convention as before to write $[\alpha] < [\beta]$ if $\alpha < \beta$.
\item We can write $V$ as the  disjoint union $V = V_2 \sqcup V_1$, where
\begin{equation*}
V_2 :=\{(i,r) \mid i \in \{1,\ldots m\}, r \in \{1,2,3\}\}, \quad V_1 := \{(m+1,r, \pm 1)\}.
\end{equation*}
Note that then each vertex in $V_j$ has degree $j$. 
\item We write $\mathrm{conn}(E)$ to denote the set of connected components of $E$. Thus $E = \bigsqcup_{\mcP \in \mathrm{conn}(E)} \mcP$. We call $\mcP \in \mathrm{conn}(E)$ a {\it path} of $E$.
\end{enumerate}
\end{remark}
We now fix $m,p \in \N$ and $\Pi \in \mfP$, and let $(V,E,\sigma)$ be the associated graph defined in Definition \ref{q_colored_graph_defn}. For each $\mbx \in \Lambda^\Sigma$ and $\mbt \in \R^{\Sigma}$, we associate integration labels $\mby := (y_a)_{a \in V} \in \Lambda^V$ and $\mbs := (s_a)_{a \in V} \in \R^V$ defined by
\begin{equation}
	\label{q_mbys_defn}
y_\alpha := x_{[\alpha]}, \qquad s_{\alpha}:= t_{[\alpha]}
\end{equation}
for any $\alpha \in \Sigma$. It follows from Definition \ref{q_vertex_set_defn} that the definition above does not depend on the choice of vertex $\alpha$. We note that \eqref{q_time_variable_identity} implies
\begin{equation}
	\label{q_s_variable_identity_1}
a<b \quad \Rightarrow \quad 0 \leq s_a - s_b < 1\,,
\end{equation}
and
\begin{equation}
	\label{q_s_variable_identity_2}
s_a = s_b \textrm{ if and only if } i_a = i_b\,,
\end{equation}
where we have used a slight abuse of notation to write $i_{\alpha} := i_{[\alpha]}$.

\begin{definition}
\label{Open_and_closed_paths}
We say that a path $\mcP \in \mathrm{conn}(E)$ is {\it closed} if all of its vertices are in $V_2$. Otherwise we call it {\it open}. We also denote by $V(\mcP) := \bigcup_{e \in \mcP} e$ and $V_i(\mcP) := V(\mcP) \cap V_i$.
\end{definition}
\begin{definition}
	\label{q_Jte_defn}
For $\mby \in \Lambda^{V}$ and $\mbs \in \R^V$ satisfying \eqref{q_s_variable_identity_1}, and $e = \{a,b\} \in E$, we define $\mby_e := (y_a;y_b) \in \Lambda^e$ and the integral kernels
\begin{align}
	\label{q_mcJ_defn}
\mathcal{J}_{\tau,e}(\mby_e,\mbs) &:= G_{\tau,\sigma(e)(s_a-s_b)}(\mby_e) +\frac{\chi(\sigma(e) = 1)\chi(s_a \neq s_b)}{\tau}S_{\tau,s_a-s_b}(\mby_e)\,, \\
	\label{q_mcJhat_defn}
\hat{\mathcal{J}}_{\tau,e}(\mby_e,\mbs) &:= G_{\tau,\sigma(e)(s_a-s_b)}(\mby_e)\,.
\end{align}
Here, we recall \eqref{sigma_e_definition}.
\end{definition}
Although both of the operators \eqref{q_mcJ_defn}--\eqref{q_mcJhat_defn}  always Hilbert-Schmidt, we never estimate them in the Hilbert-Schmidt norm as we want to prove estimates which are uniform in $\mbs$.
In the sequel, let us write 
\begin{equation}
\label{y_variable_splitting}
\mby = (\mby_1,\mby_2)\,,
\end{equation}
with $\mby_i := (\mby_a)_{a \in V_i}$. Slightly abusing notation, let us also write
\begin{equation*}
\xi(y_{m+1,1,1},\ldots,y_{m+1,p,1};y_{m+1,1,-1},\ldots,y_{m+1,p,-1}) =: \xi(\mby_1)\,.
\end{equation*}
\begin{lemma}
\label{q_graph_rewrite_1_lem}
For $\mbt \in \mathfrak{V}$ and $\mbs$ defined as in \eqref{q_mbys_defn}, we have
\begin{equation}
	\label{q_I_rewrite1}
I^\xi_{\tau,\Pi}(\mbt) = \int_{\Lambda^V} \dd\mby \Biggl(\prod_{i=1}^m w(y_{i,1}-y_{i,2})\, w(y_{i,1}-y_{i,3})  \Biggr) \xi(\mby_1) \prod_{e \in E} \Jte(\mby_e,\mbs)\,.
\end{equation}
\end{lemma}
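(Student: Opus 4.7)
The strategy is to start from the explicit definition \eqref{I_{tau,Pi}_definition}, eliminate the delta functions by collapsing the pairs of variables $x_{i,r,1}$ and $x_{i,r,-1}$ for $i \leq m$, and then evaluate each of the two-point free quantum correlators $\rho_{\tau,0}(\mcB_\alpha \mcB_\beta)$ using Lemma \ref{q_graph_computation_lem}. After the identification induced by the equivalence relation $\sim$ of Definition \ref{Definition_3.28}, the resulting integrand matches the right-hand side of \eqref{q_I_rewrite1} edge by edge.

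First, I would relabel the integration variables according to the quotient $V = \Sigma/\sim$. The delta factors $\prod_{r=1}^{3}\delta(x_{i,r,1}-x_{i,r,-1})$ in \eqref{I_{tau,Pi}_definition} identify $x_{i,r,1}$ with $x_{i,r,-1}$ for every $i \in \{1,\dots,m\}$ and $r \in \{1,2,3\}$; this reduces the $\Lambda^{\Sigma}$-integration to a $\Lambda^V$-integration in the single collapsed variable $y_{i,r} := x_{[(i,r,\pm 1)]}$ for each class in $V_2$, while leaving the six $V_1$-variables $y_{m+1,r,\pm 1} := x_{m+1,r,\pm 1}$ untouched. Under this change of variables, the weight $\prod_{i=1}^{m} w(x_{i,1,1}-x_{i,2,1})\, w(x_{i,1,1}-x_{i,3,1})$ becomes $\prod_{i=1}^{m} w(y_{i,1}-y_{i,2})\, w(y_{i,1}-y_{i,3})$, and the kernel $\xi$ is unaffected, producing the factor $\xi(\mby_1)$ in the notation of \eqref{y_variable_splitting}.

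Next, for each pair $(\alpha,\beta) \in \Pi$ (with $\alpha<\beta$ in lex order), I would invoke Lemma \ref{q_graph_computation_lem} to evaluate $\rho_{\tau,0}(\mcB_\alpha(\mbx,\mbt)\,\mcB_\beta(\mbx,\mbt))$ at the collapsed arguments $(y_{[\alpha]},y_{[\beta]})$ and the collapsed times $(s_{[\alpha]},s_{[\beta]})$ defined by \eqref{q_mbys_defn}. Recalling that $\sigma(e) = \delta_\beta$ by \eqref{sigma_e_definition}, the two cases of Lemma \ref{q_graph_computation_lem} combine into the single expression
\begin{equation*}
\rho_{\tau,0}(\mcB_\alpha \mcB_\beta) \;=\; G_{\tau,\sigma(e)(s_a-s_b)}(\mby_e) + \frac{\chi(\sigma(e)=1)\,\chi(s_a\ne s_b)}{\tau}\, S_{\tau,s_a-s_b}(\mby_e),
\end{equation*}
which is exactly $\Jte(\mby_e,\mbs)$ as defined in \eqref{q_mcJ_defn}; multiplying over all $(\alpha,\beta)\in\Pi$ then yields $\prod_{e \in E}\Jte(\mby_e,\mbs)$.

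The only delicate point is the matching of the $S$-term contribution with the indicator $\chi(s_a \neq s_b)$. This arises precisely when the two endpoints of an edge lie in the same interaction block ($i_\alpha=i_\beta\le m$, so $s_a=s_b$), and here one has to verify using the normal ordering of $\mcW_\tau$ in \eqref{Quantum_interaction} and the lex-vs-operator-product ordering within a block that the $\tfrac{1}{\tau}S$ piece in Lemma \ref{q_graph_computation_lem} is effectively suppressed, so that only the $G$-term survives. Modulo this bookkeeping, each factor in the product over $(\alpha,\beta)\in\Pi$ agrees with the factor $\Jte(\mby_e,\mbs)$ associated with the corresponding edge $e\in E$, and assembling all ingredients gives \eqref{q_I_rewrite1}. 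The main obstacle is this last verification; the rest is essentially the Wick-pairing bookkeeping that has been carried out in \cite[Lemma 2.11]{FKSS17} for the two-body case, and the argument adapts to the three-body setting after accounting for the additional $w$-factor and the larger vertex set $\Sigma$.
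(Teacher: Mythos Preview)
Your proposal is correct and follows essentially the same route as the paper: define the collapsing map $\Lambda^V \to \Lambda^\Sigma$ induced by the delta functions, then identify each pairwise correlator $\rho_{\tau,0}(\mcB_\alpha\mcB_\beta)$ with $\Jte(\mby_e,\mbs)$ via Lemma~\ref{q_graph_computation_lem}. The paper's proof simply asserts the identity \eqref{q_quantum_free_computn} without comment, whereas you correctly isolate the one nontrivial verification, namely that whenever $\delta_\alpha=-1$, $\delta_\beta=+1$ with $\alpha<\beta$ (so that Lemma~\ref{q_graph_computation_lem}\,(2) produces the $\tfrac{1}{\tau}S$ term), one necessarily has $i_\alpha\neq i_\beta$ and hence $s_a\neq s_b$; this is exactly what the normal ordering of each block in \eqref{q_gtau_defn} guarantees (within a fixed $i$ all $\delta=+1$ vertices precede all $\delta=-1$ vertices), so the indicator $\chi(s_a\neq s_b)$ in \eqref{q_mcJ_defn} is not lossy.
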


\begin{proof}
We define a map $T: \Lambda^{V} \to \Lambda^{\Sigma}$ by $T\mby := (y_{[\alpha]})_{\alpha  \in \Sigma}$. Then $T$ is a bijection from $\Lambda^V$ onto the subset of $\Lambda^{\Sigma}$ defined by
\begin{equation*}
\{\mbx \in \Lambda^{\Sigma} \mid x_{i,r,1} = x_{i,r,-1} \text{ for all } i \in \{1,\ldots,m\}, r \in \{1,2,3\}\}\,.
\end{equation*}
From Lemma \ref{q_graph_computation_lem} and Definition \ref{q_Jte_defn}, it follows that
\begin{equation}
	\label{q_quantum_free_computn}
\rho_{\tau,0}\left(\mcB_{\alpha}(T\mby,\mbt)\mcB_{\alpha}(T\mby,\mbt)\right) = \Jte(\mby_e,\mbs)\,,
\end{equation}
where $(\alpha, \beta) \in \Pi$ is chosen such that $\left[\alpha\right] = a$ and $\left[\beta\right] = b$. By making the change of variables, we obtain $\mbx = T\mby$, \eqref{q_I_rewrite1}, as was claimed.
\end{proof}
Let us note the following corollary.
\begin{corollary} For $\mbt \in \mathfrak{V}$, we have
	\label{q_I_bound_1_corollary}
\begin{equation}
	\label{q_I_bound_1}
\left|I^\xi_{\tau,\Pi}(\mbt)\right| \leq  \|w\|_{L^\infty}^{2m} \int_{\Lambda^V} \dd\mby\, |\xi(\mby_1)| \prod_{e \in E} \Jte(\mby_e,\mbs)\,.
\end{equation}
\end{corollary}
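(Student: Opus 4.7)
The plan is to obtain Corollary \ref{q_I_bound_1_corollary} as an essentially immediate consequence of the explicit representation of $I_{\tau,\Pi}^\xi(\mbt)$ provided by Lemma \ref{q_graph_rewrite_1_lem}, combined with the nonnegativity of the kernels $\Jte$. There are no analytic obstacles; the only thing to verify is that the absolute values can be dropped on the edge kernels.

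First I would apply the identity \eqref{q_I_rewrite1} from Lemma \ref{q_graph_rewrite_1_lem}, which rewrites $I_{\tau,\Pi}^\xi(\mbt)$ as an integral over $\Lambda^V$ of the product of the interaction factors $\prod_{i=1}^m w(y_{i,1}-y_{i,2})\,w(y_{i,1}-y_{i,3})$, the observable kernel $\xi(\mby_1)$, and the product $\prod_{e \in E} \Jte(\mby_e,\mbs)$. Then, by the triangle inequality, the absolute value of $I_{\tau,\Pi}^\xi(\mbt)$ is bounded by the integral of the absolute value of the integrand.

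Next I would bound each factor $|w(y_{i,1}-y_{i,2})|$ and $|w(y_{i,1}-y_{i,3})|$ by $\|w\|_{L^\infty}$, which is finite by Assumption \ref{w_assumption_bounded}. This contributes a total prefactor of $\|w\|_{L^\infty}^{2m}$, since each of the $m$ interaction vertices contributes two such factors.

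The key remaining observation is that $\Jte(\mby_e,\mbs) \geq 0$, so the absolute value can be dropped on these factors. This follows directly from Definition \ref{q_Jte_defn}: recalling \eqref{q_mcJ_defn}, each $\Jte$ is a sum of a factor $G_{\tau,\sigma(e)(s_a-s_b)}$ and a nonnegative multiple of $S_{\tau,s_a-s_b}$, both of which are nonnegative by Lemma \ref{q_GS_positive_lem} (noting that the time differences $\sigma(e)(s_a-s_b)$ lie in the admissible range by \eqref{q_s_variable_identity_1} and the structure of $\sigma$ coming from Definition \ref{q_pairing_defn}). Putting these estimates together yields \eqref{q_I_bound_1}. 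I do not anticipate any main obstacle; the work was already done in establishing the graphical representation and the nonnegativity of the kernels.
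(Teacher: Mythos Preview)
Your proof is correct and follows essentially the same approach as the paper: the paper's one-line proof cites Lemma \ref{q_graph_rewrite_1_lem} for the integral representation and Lemma \ref{q_graph_computation_lem}\,\eqref{q_graph_computation_3} for the nonnegativity of $\Jte$, which is exactly what you unpack via Definition \ref{q_Jte_defn} and Lemma \ref{q_GS_positive_lem}. One small remark: Lemma \ref{q_GS_positive_lem} is stated for $t\geq 0$, whereas when $\sigma(e)=-1$ the argument $\sigma(e)(s_a-s_b)$ lies in $(-1,0]$; the cleanest way to handle this is to cite Lemma \ref{q_graph_computation_lem}\,\eqref{q_graph_computation_3} directly (as the paper does), which already packages the nonnegativity of $\Jte$ in both cases.
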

\begin{proof}
This is a consequence of Lemmas \ref{q_graph_rewrite_1_lem} and \ref{q_graph_computation_lem} \eqref{q_graph_computation_3}.
\end{proof}

\begin{lemma}
	\label{q_closed_path_lem}
Suppose that $\mcP \in \mathrm{conn}(E)$ is a closed path as in Definition \ref{Open_and_closed_paths}. Then, we have
\begin{equation}
	\label{q_closed_path_bound}
\int_{\Lambda^{V(\mcP)}} \prod_{a \in V(\mcP)} \dd y_a \prod_{e \in \mcP} \Jte(\mby_e,\mbs) \leq C^{|V(\mcP)|}\,.
\end{equation}
Moreover
\begin{equation}
	\label{q_closed_path_convergence}
\left|\int_{\Lambda^{V(\mcP)}} \prod_{a \in V(\mcP)} \dd y_a \left[\prod_{e \in \mcP} \Jte(\mby_e,\mbs) - \prod_{e \in \mcP} \hat{\mathcal{J}}_{\tau,e}(\mby_e,\mbs)\right] \right| \to 0 \text{ as $\tau \to \infty\,$.}
\end{equation}
\end{lemma}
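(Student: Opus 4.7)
The plan is to recast the integral as a cyclic trace on $\mfh$ and then apply Schatten--H\"older. Since $\mcP$ is closed, every one of its vertices lies in $V_2$ and hence has degree $2$, so $\mcP$ is a cycle (possibly with self-loops or multi-edges) of length $n := |V(\mcP)|$; enumerate its edges cyclically as $e_1,\ldots,e_n$. For each edge $e\in\mcP$ let $T_e$ denote the operator on $\mfh$ with kernel $\Jte(\cdot,\cdot;\mbs)$. Checking the three cases arising in Definition~\ref{q_Jte_defn} (namely $\sigma(e)=-1$; $\sigma(e)=+1$ with $s_a>s_b$; and $s_a=s_b$), a short calculation shows that the two terms in \eqref{q_mcJ_defn} always combine into a single Fourier multiplier $T_e=G_{\tau,t_e}$ with $t_e\in(-1,0]$ (the $\sigma(e)=+1$, $s_a>s_b$ branch uses the identity $G_{\tau,s_a-s_b}+\tau^{-1}S_{\tau,s_a-s_b}=G_{\tau,-(1-(s_a-s_b))}$). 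In particular all the $T_e$ commute, and the left-hand side of \eqref{q_closed_path_bound} is precisely $\Tr_{\mfh}(T_{e_1}T_{e_2}\cdots T_{e_n})$.

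For \eqref{q_closed_path_bound}, I would apply Proposition~\ref{q_schatten_holder_inequality} with exponents $p_i=n$ to dominate $\Tr_{\mfh}(T_{e_1}\cdots T_{e_n})\leq\prod_{i=1}^n\|T_{e_i}\|_{\mfS^n(\mfh)}$, and then bound each Schatten norm via the eigenvalue sum $\|T_e\|_{\mfS^n}^n=\sum_k T_e(\lambda_k)^n$. Splitting that sum at $\lambda_k\sim\tau$ and using, on the low range, $\tau(\e^{\lambda_k/\tau}-1)\geq\lambda_k$ together with $\e^{-t_e\lambda_k/\tau}\leq\e$ to get $T_e(\lambda_k)\leq\e/\lambda_k$ (summable by \eqref{q_trace_finite}), and on the high range, $\e^{\lambda_k/\tau}-1\geq\e^{\lambda_k/\tau}/2$ combined with the one-dimensional heat-kernel sum $\sum_k\e^{-s\lambda_k/\tau}\leq 1+C\sqrt{\tau/s}$ to get a contribution of order $\tau^{-n+1/2}$, I obtain $\|T_e\|_{\mfS^n}\leq C$ uniformly in $\tau$ and $\mbs$. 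Multiplying over the $n$ edges yields \eqref{q_closed_path_bound}.

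For \eqref{q_closed_path_convergence}, I would expand the difference via the standard telescoping identity
\begin{equation*}
\prod_{i=1}^n\Jte_{e_i}-\prod_{i=1}^n\hat{\mathcal{J}}_{\tau,e_i}=\sum_{j=1}^n\Biggl(\prod_{i<j}\hat{\mathcal{J}}_{\tau,e_i}\Biggr)\bigl(\Jte_{e_j}-\hat{\mathcal{J}}_{\tau,e_j}\bigr)\Biggl(\prod_{i>j}\Jte_{e_i}\Biggr).
\end{equation*}
By Definition~\ref{q_Jte_defn}, $\Jte_{e_j}-\hat{\mathcal{J}}_{\tau,e_j}=\tau^{-1}\chi(\sigma(e_j)=1)\chi(s_a\neq s_b)\,S_{\tau,s_a-s_b}$, which is either identically zero or carries an explicit factor of $\tau^{-1}$ together with $s_a-s_b$ strictly positive (and fixed, since $\mbs$ is fixed). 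Applying Schatten--H\"older exactly as in the previous paragraph isolates the small factor $\|\tau^{-1}S_{\tau,s_a-s_b}\|_{\mfS^n}^n=\tau^{-n}\sum_k\e^{-n(s_a-s_b)\lambda_k/\tau}=O(\tau^{-n+1/2})$ by the same heat-kernel bound, which tends to $0$ as $\tau\to\infty$; summing over $j$ delivers \eqref{q_closed_path_convergence}.

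The step I expect to be the main obstacle is the uniform Schatten-norm bound on $T_e$ when $1+t_e$ is close to $0$ (corresponding to an edge where $s_a-s_b$ is near $0$ with $\sigma(e)=+1$, or near $1$ with $\sigma(e)=-1$): the crude two-regime split above becomes borderline, and a finer analysis via the geometric-series expansion $(\e^{\lambda/\tau}-1)^{-n}=\sum_{m\geq 0}\binom{n+m-1}{m}\e^{-(n+m)\lambda/\tau}$ combined with the same heat-kernel sum is needed to extract a constant independent of $\mbs$.
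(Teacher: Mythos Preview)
Your combination identity $G_{\tau,s}+\tau^{-1}S_{\tau,s}=G_{\tau,s-1}$ is correct and elegant, and you are right that the left-hand side of \eqref{q_closed_path_bound} equals $\Tr_{\mfh}(T_{e_1}\cdots T_{e_n})$ with all $T_{e_i}=G_{\tau,t_{e_i}}$ commuting Fourier multipliers. The gap is in what you do next. The bound \eqref{q_closed_path_bound} has to be \emph{uniform in $\mbs$}: the paper says so explicitly just before Definition~\ref{q_Jte_defn}, and it is essential for the subsequent dominated-convergence step in Lemma~\ref{q_b^xi_convergence_lemma}. Once you apply Schatten--H\"older and separate the edges, you must bound each $\|G_{\tau,t_e}\|_{\mfS^n}$ uniformly in $t_e\in(-1,0]$, and this is simply false. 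Indeed, $\|G_{\tau,t_e}\|_{\mfS^n}^n=\sum_k e^{-nt_e\lambda_k/\tau}\tau^{-n}(e^{\lambda_k/\tau}-1)^{-n}$; the high-frequency part contributes $\tau^{-n}\sum_k e^{-n(1+t_e)\lambda_k/\tau}$ up to constants, which is of order $\tau^{-n+1/2}(1+t_e)^{-1/2}$ and blows up as $t_e\to-1^+$. Your proposed geometric-series refinement does not cure this: the $m=0$ term in that expansion \emph{is} exactly the divergent heat-kernel sum $\tau^{-n}\sum_k e^{-n(1+t_e)\lambda_k/\tau}$.

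The paper's proof avoids this by never decoupling the edges. Using the colouring relation \eqref{q_closed_colouring} it rewrites all time arguments as $\sigma(e_1)(s_{a_j}-s_{a_{j+1}})$, expands the product into $\sum_{I\subset J_\mcP}\tau^{-|I|}\prod_{j\notin I}G_{\tau,\cdot}\prod_{j\in I}S_{\tau,\cdot}$, and then uses the cyclic telescoping $\sum_j(s_{a_j}-s_{a_{j+1}})=0$ to collapse each term to $\tau^{-|I|}\Tr(G_\tau^{q-|I|})$, which is manifestly $\mbs$-independent. The key structural fact $J_\mcP\subsetneq\{1,\ldots,q\}$ (the smallest vertex always sees a $\sigma=-1$ edge) guarantees $q-|I|\geq 1$, so every surviving term has at least one $G_\tau$ and the trace is finite.

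You were actually one step away from this: since your $T_{e_i}$ commute, you could have computed the trace directly as $\sum_k e^{-(\sum_i t_{e_i})\lambda_k/\tau}\tau^{-n}(e^{\lambda_k/\tau}-1)^{-n}$, and the same cyclic telescoping shows $\sum_i t_{e_i}=-|J_\mcP|$, a fixed integer with $0\le|J_\mcP|\le n-1$ depending only on the graph structure (via \eqref{q_s_variable_identity_2}) and not on the numerical values of $\mbs$. Bounding that single sum is then straightforward and uniform. The lesson is that Schatten--H\"older throws away precisely the cancellation that makes the estimate work.
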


We note that $h$ defined as in \eqref{q_1-body_Hamn_defn} satisfies $h > 0$ and $h^{-1} \in \mfS^1(\mf{h}) \subset \mfS^2(\mf{h})$.


\begin{proof}[Proof of Lemma \ref{q_closed_path_lem}]
Since $\mcP$ is closed, we need to consider two cases depending on the length of $\mcP$. If $|\mcP|=1$, the path is a loop. Then the left-hand side of \eqref{q_closed_path_bound} equals 
\begin{multline}
	\label{q_norm_quantum_Green_fn}
\int_\Lambda \dd y_a \, G_\tau(y_a;y_a) = \|G_\tau\|_{\mfS^1(\mfh)} = \mathrm{Tr}(G_\tau) = \sum_{k \geq 0} \frac{1}{\tau(e^{\lambda_k/\tau} - 1)} \\
\leq \sum_{k \geq 0} \frac{1}{\lambda_k}  = \mathrm{Tr}(h^{-1})= \|G\|_{\mfS^1(\mfh)} < \infty\,. 
\end{multline}
Here we have used the positivity of $G_\tau$ as an operator as well as \eqref{lambda_k}.
	
We henceforth need to consider the case when $|\mcP| \geq 2$. Here, we argue as in \cite[Lemma 2.17]{FKSS17}. Let $|\mcP| = q$, and write $\mcP = \{e_1,\ldots,e_q\}$, where $e_j$ and $e_{j+1}$ are incident for $j=1,\ldots,q$. Throughout the proof we take the index $j$ to be modulo $q$. We denote by $a_j$ the unique vertex in $e_{j-1} \cap e_j$. Without loss of generality, we take $a_1 < a_2$.
	
An induction argument shows that the colour of $e_j$ is determined by the colour of $e_1$. Namely, recalling \eqref{sigma_e_definition}, we have
\begin{equation}
	\label{q_closed_colouring}
\sigma(e_j) = 
\begin{cases}
\sigma(e_1) \quad &\textrm{ if } a_j<a_{j+1}, \\
-\sigma(e_1) \quad &\textrm{\ if } a_j>a_{j+1}.
\end{cases}
\end{equation}
For $j=1,\ldots,q$, we define $a_{j,-} := \min\{a_j,a_{j+1}\}$ and $a_{j,+} := \max\{a_j,a_{j+1}\}$. From \eqref{q_closed_colouring}, we have
\begin{equation}
	\label{q_time_label_rewrite}
\sigma(e_j)(s_{a_{j,-}} - s_{a_{j,+}}) = \sigma(e_1)(s_{a_j}-s_{a_{j+1}})\,.
\end{equation}
Moreover, it is clear from \eqref{q_s_variable_identity_1} and \eqref{q_s_variable_identity_2} that $0 \leq \sigma(e_1)(s_{a_j}-s_{a_{j+1}}) < 1$. Thus, $G_{\tau, \sigma(e_1)(s_{a_j}-s_{a_{j+1}})}$ and $S_{\tau, \sigma(e_1)(s_{a_j}-s_{a_{j+1}})}$ are both well-defined. Substituting \eqref{q_time_label_rewrite} into \eqref{q_mcJ_defn}, we have
\begin{multline}
	\label{q_mcJ_rewrite}
\mcJ_{\tau,e_j}(\mby_{e_j};\mbs) = G_{\tau, \sigma(e_1)(s_{a_j}-s_{a_{j+1}})}(y_{a_j};y_{a_{j+1}}) \\ 
+ \frac{\chi(\sigma(e_j) = 1)\chi(s_{a_j} \neq s_{a_{j+1}})}{\tau}S_{\tau, \sigma(e_1)(s_{a_j}-s_{a_{j+1}})}(y_{a_j};y_{a_{j+1}})\,.
\end{multline}
Here we use that the kernels of $G_{\tau,t}$ and $S_{\tau,t}$ are symmetric. Rewriting the left-hand side of \eqref{q_closed_path_bound} using \eqref{q_mcJ_rewrite}, we get
\begin{equation}
	\label{q_closed_path_bound_rewrite}
\textrm{Tr} \left[\prod_{j=1}^q \left(G_{\tau, \sigma(e_1)(s_{a_j}-s_{a_{j+1}})}
+ \frac{\chi(\sigma(e_j) = 1)\chi(s_{a_j} \neq s_{a_{j+1}})}{\tau}S_{\tau, \sigma(e_1)(s_{a_j}-s_{a_{j+1}})}\right)\right].
\end{equation}
By definition, all of these operators commute, and hence the order of the above product does not matter. We define
\begin{equation*}
J_\mcP := \bigl\{j \in \{1,\ldots,q\} : \mcJ_{\tau,e_j} \neq \hat{\mathcal{J}}_{\tau,e_j}\bigr\}\,.
\end{equation*}
We note that $J_\mcP \bigl\{1,\ldots, q\}$. Namely, we recall \eqref{q_mcJ_defn}--\eqref{q_mcJhat_defn}
and note that by construction, the smallest vertex in $\mcP$ (with respect to $\leq$) is incident to an edge $e$ with $\sigma(e)=-1$. Therefore, we can rewrite \eqref{q_closed_path_bound_rewrite} as
\begin{multline}
	\label{q_closed_path_bound_rewrite2}
\sum_{I \subset J_\mcP} \textrm{Tr} \left[ \left( \prod_{j \in \{1,\ldots,q\} \backslash I} G_{\tau, \sigma(e_1)(s_{a_j}-s_{a_{j+1}})} \right)
\left( \prod_{j \in I} \frac{1}{\tau}S_{\tau, \sigma(e_1)(s_{a_j}-s_{a_{j+1}})} \right) \right] \\
= \sum_{I \subset J_\mcP} \textrm{Tr} \left[\left(\prod_{j \in \{1,\ldots,q\} \backslash I} G_{\tau,0} \right) \left(\prod_{j \in \{1,\ldots,q\} \backslash I} \frac{1}{\tau} S_{\tau,0} \right) \right] = \sum_{J \subset J_\mcP} \frac{1}{\tau^{|I|}} \textrm{Tr} (G_\tau^{q-|I|})\,.
\end{multline}
The first equality holds since $\sum_{j=1}^q \left( s_{a_j} - s_{a_{j+1}} \right) = s_{a_1} - s_{a_{q+1}} = 0$ because $s_{a_1} = s_{a_{q+1}}$. By Lemma \ref{q_schatten_embedding}, we note that for $|I| \leq q-2$,
\begin{equation}
	\label{q_quantumgreen_trace_estimate}
\textrm{Tr}(G_{\tau}^{q-|I|}) = \|G_\tau\|_{\mfS^{q-|I|}(\mfh)}^{q-|I|} \leq \|G_\tau\|^{q-|I|}_{\mfS^2(\mfh)}\,.
\end{equation}
Using \eqref{q_quantumgreen_trace_estimate} we have that \eqref{q_closed_path_bound_rewrite2} is
\begin{align}
\nonumber
&\leq \sum_{\substack{I \subset J_\mcP \\ |I| \leq q-2}} \frac{1}{\tau^{|I|}} \|G_\tau\|^{q-|I|}_{\mfS^2(\mfh)} + \frac{q}{\tau^{q-1}} \|G_\tau\|_{\mfS^1(\mfh)} \\
	\label{q_closed_path_bound_rewrite3}
&\leq C^{|V(\mcP)|} \left(1 + \|G_\tau\|_{\mfS^2(\mfh)} + \|G_\tau\|_{\mfS^1(\mfh)}\right)^{|V(\mcP)|}\,.
\end{align}
In the first inequality above, we used $q \geq 2$. We now deduce \eqref{q_closed_path_bound} from \eqref{q_closed_path_bound_rewrite3}, by using 
\begin{equation}
\label{G_tau_bound}
\|G_{\tau}\|_{\mfS^2(\mfh)}  \leq \|G_{\tau}\|_{\mfS^1(\mfh)} \leq C \,,
\end{equation}
for some $C>0$ independent of $\tau$, which follows from \eqref{q_norm_quantum_Green_fn}, and Lemma \ref{q_schatten_embedding}.
	
To obtain \eqref{q_closed_path_convergence}, we split into the same two cases. If $|\mcP| = 1$, then $\mcP = \{e\}$, so the path is a loop, so $s_a=s_a$. So $\mcJ_e = \hat{\mcJ}_e$, and there is nothing to prove. If $|\mcP| \geq 2$, we apply the same argument as used in the proof of \eqref{q_closed_path_bound}. The only difference is that we now sum over {\it non-empty} subsets $I$ of $J_\mcP$ in \eqref{q_closed_path_bound_rewrite2}. This results in the an extra power of $\frac{1}{\tau}$, and one less power of $\left(1 + \|G_\tau\|_{\mfS^2(\mfh)} + \|G_\tau\|_{\mfS^1(\mfh)}\right)$ in \eqref{q_closed_path_bound_rewrite3}. We hence deduce \eqref{q_closed_path_convergence}.
\end{proof}
\begin{lemma}
	\label{q_open_path_lem}
Suppose that $\mcP \in \mathrm{conn}(E)$ is an open path with endpoints $b_1, b_2 \in V_2(\mcP)$. Then, we have
\begin{equation}
	\label{q_open_path_bound}
\left\|\int_{\Lambda^{V_2(\mcP)}} \prod_{a \in V_2(\mcP)} \dd y_a \prod_{e \in \mcP} \Jte(\mby_e,\mbs)\right\|_{L_{y_{b_1},y_{b_2}}^2} \leq C^{|V_2(\mcP)|}\,.
\end{equation}
Moreover
\begin{equation}
	\label{q_open_path_convergence}
\left\|\int_{\Lambda^{V(\mcP)}} \prod_{a \in V(\mcP)} \dd y_a \left[\prod_{e \in \mcP} \Jte(\mby_e;\mbs) - \prod_{e \in \mcP} \hat{\mathcal{J}}_{\tau,e}(\mby_e;\mbs)\right] \right\|_{L^2_{y_{b_1},y_{b_2}}} \to 0 \text{ as $\tau \to \infty$\,.}
\end{equation}
\end{lemma}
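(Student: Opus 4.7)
I will adapt the strategy of Lemma~\ref{q_closed_path_lem}, now interpreting the integral over the internal vertices as the Hilbert-Schmidt kernel of a composed operator rather than a trace. Enumerate $\mcP = \{e_1,\ldots,e_q\}$ with $e_j = \{a_j, a_{j+1}\}$, taking $a_1 := b_1$ and $a_{q+1} := b_2$; these endpoints lie in $V_1(\mcP)$ while $a_2,\ldots,a_q \in V_2(\mcP)$. Since each kernel $\Jte(\cdot\,;\mbs)$ is symmetric in its two spatial arguments (by Lemma~\ref{q_GS_positive_lem}), I identify it with a self-adjoint operator $B_j$ on $\mfh$, whereupon the integral appearing on the left-hand side of \eqref{q_open_path_bound} is precisely the kernel of the composition $B_1\cdots B_q$ evaluated at $(y_{b_1}, y_{b_2})$. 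Consequently, the $L^2$ norm in \eqref{q_open_path_bound} equals $\|B_1\cdots B_q\|_{\mfS^2(\mfh)}$.

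Next, using \eqref{q_mcJ_defn}, I decompose $B_j = G_j + \tau^{-1}\tilde{S}_j$, where $G_j := G_{\tau, \sigma(e_j)(s_{a_j^-} - s_{a_j^+})}$ and $\tilde{S}_j := \chi(\sigma(e_j)=1)\chi(s_{a_j^-}\neq s_{a_j^+}) S_{\tau, s_{a_j^-} - s_{a_j^+}}$, with $a_j^- \leq a_j^+$ the ordered endpoints of $e_j$. All these operators are diagonal in the eigenbasis of $h$ and so commute, so I expand
\begin{equation*}
B_1\cdots B_q \;=\; \sum_{I \subseteq \{1,\ldots,q\}} \frac{1}{\tau^{|I|}} \prod_{j \notin I} G_j \prod_{j \in I} \tilde{S}_j.
\end{equation*}
The key structural observation, parallel to the closed-path argument, is that for $q \geq 2$ the smallest vertex of $\mcP$ in the $\Sigma$-order lies in $V_2(\mcP)$ and its two incident edges correspond to the pair $(i, r, +1), (i, r, -1)$ at that vertex; the pairing constraint $\delta_\alpha \delta_\beta = -1$ then forces these two edges to have colours $\sigma(\cdot)$ of opposite sign. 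Hence at least one edge has $\sigma(e_j) = +1$ and at least one has $\sigma(e_j) = -1$; the latter yields $\tilde{S}_j = 0$, so the $I = \{1,\ldots,q\}$ term vanishes. For $q = 1$ both endpoints satisfy $s = 0$, so $\tilde{S}_1 = 0$ directly and $B_1 = G_\tau$. Thus every non-zero term of the expansion contains at least one $G_{j_0}$ factor. Applying Proposition~\ref{q_schatten_holder_inequality} with $G_{j_0}$ placed in $\mfS^2(\mfh)$ and the remaining factors in $\mfS^\infty(\mfh)$, combined with the uniform-in-$(\tau,\mbs)$ spectral bounds $\|G_{\tau,t}\|_{\mfS^\infty} \leq C$ for the admissible range of $t$ (extending the argument of \eqref{G_tau_bound}), $\|S_{\tau,t}\|_{\mfS^\infty} \leq 1$ for $t \geq 0$, and a uniform $\mfS^2$ bound on $G_{j_0}$ (selecting $j_0$ judiciously as explained below), one bounds each surviving term by $C^q/\tau^{|I|}$. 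Summing over $I \subsetneq \{1,\ldots,q\}$ yields \eqref{q_open_path_bound}.

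For the convergence statement \eqref{q_open_path_convergence}, observe that $\prod_j \hat{\mcJ}_{\tau, e_j} = \prod_j G_j$ coincides exactly with the $I = \emptyset$ contribution; hence the difference is the sum over non-empty $I$, each term carrying an explicit factor $\tau^{-|I|}$ with $|I| \geq 1$. The same Schatten-H\"older estimate gives $\|\prod_j B_j - \prod_j \hat{\mcJ}_{\tau, e_j}\|_{\mfS^2} \leq C^q/\tau \to 0$ as $\tau \to \infty$. The main technical hurdle is choosing the $\mfS^2$ factor $G_{j_0}$: whenever some $\sigma(e_{j_0}) = +1$ edge lies outside $I$, one has $G_{j_0} = G_{\tau, t}$ with $t \in [0,1)$ and the bound $\|G_{\tau,t}\|_{\mfS^2} \leq \|G_\tau\|_{\mfS^2} \leq C$ is uniform; the delicate case is when $I$ exhausts all $+1$-coloured edges, leaving only $\sigma = -1$ factors as $G_j$'s, whose argument can approach $-1$. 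I expect this case to be handled by absorbing the $1/\tau$ factors in $I$ against the spectral blow-up of $\|G_{\tau,t}\|_{\mfS^2}$ as $t \to -1$, together with the path-structural fact (from the colour argument) that the time labels in $\mathfrak{V}$ are bounded away from $1$ in a controlled way.
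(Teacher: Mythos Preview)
Your setup is correct and closely parallels the paper's: the integral is indeed the Hilbert--Schmidt kernel of a product of commuting diagonal operators, the expansion over $I\subseteq\{1,\dots,q\}$ is exactly right, and your colour argument showing that at least one edge has $\sigma=-1$ (so the full set $I=\{1,\dots,q\}$ is excluded) is valid. The gap is in the final Schatten step.

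You try to bound the product via H\"older, placing one $G_{j_0}$ in $\mfS^2$ and the rest in $\mfS^\infty$. Your $\mfS^\infty$ bounds are fine, but there is no uniform choice of $j_0$: when $I$ contains every $\sigma=+1$ edge, the surviving $G_j$'s all have $\sigma(e_j)=-1$, hence time argument in $(-1,0]$, and $\|G_{\tau,t}\|_{\mfS^2}$ is \emph{not} bounded uniformly as $t\to-1$ (the eigenvalue sum $\sum_k e^{-2t\lambda_k/\tau}/[\tau(e^{\lambda_k/\tau}-1)]^2$ diverges). Your proposed rescue---that the time labels in $\mathfrak V$ are ``bounded away from $1$ in a controlled way''---is false: $t_1$ ranges freely in $(0,1)$, and the bound \eqref{q_open_path_bound} must be uniform in $\mbs\in\mathfrak V$.

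The paper fixes exactly this point by exploiting commutativity more sharply. From the colour induction it derives the identity $\sigma(e_j)(s_{a_{j,-}}-s_{a_{j,+}})=s_{a_j}-s_{a_{j-1}}$, so the time arguments \emph{telescope}: $\sum_j (s_{a_j}-s_{a_{j-1}})=s_{b_2}-s_{b_1}=0$. Multiplying the diagonal operators then gives exactly $G_\tau^{\,q+1-|I|}$ (time argument zero), whose $\mfS^2$ norm is bounded by $\|G_\tau\|_{\mfS^2}\le C$ uniformly. In other words, rather than estimating factors separately, you should first multiply them into a single $G_\tau$--power using the telescoping identity; after that the bound and the convergence \eqref{q_open_path_convergence} (which is the $I\neq\emptyset$ part, carrying an explicit $\tau^{-|I|}$) follow immediately.
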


\begin{proof}
We argue similarly as for \cite[Lemma 2.18]{FKSS17}. We first prove \eqref{q_open_path_bound}. Let $b_1,b_2$ be as in the statement of the lemma. Without loss of generality, suppose $b_1 < b_2$, so $\delta_{b_1} = 1$ and $\delta_{b_2} = -1$. Let $q:= |V_2(\mathcal{P})|$. If $q=0$, then $\mathcal{J}_{\tau,e}(\mby;\mbs) = G_{\tau}(y_{b_1},y_{b_2})$, since $\sigma(e) = -1$. Hence, \eqref{q_open_path_bound} follows from \eqref{G_tau_bound}.
	
Suppose that $q \geq 1$. Write $\mcP := \{e_1,\ldots,e_{q+1}\}$, where ${b_1} \in e_1$, $b_2 \in e_{q+1}$ and $a_j$ is the unique vertex in $e_j \cap e_{j+1}$. An induction argument shows that the colour of $e_j$ is determined by the colour of $e_1$. Namely, we have
\begin{equation}
	\label{q_open_colour}
\sigma(e_j) = 
\begin{cases}
-1 \quad &\textrm{if } a_{j-1} < a_j, \\
1 \quad &\textrm{if } a_{j} < a_{j-1}\,.
\end{cases}
\end{equation}
Define $a_{j,-} := \min\{a_{j-1},a_{j}\}$ and $a_{j,+} := \max\{a_{j-1},a_{j}\}$. Then \eqref{q_open_colour} implies
\begin{equation}
\label{q_open_time}
\sigma(e_j)(s_{a_{j,-}} - s_{a_{j,+}}) = s_{a_j}-s_{a_{j-1}}\,.
\end{equation}
As in the proof of Lemma \ref{q_closed_path_lem}, we use \eqref{q_s_variable_identity_1} and \eqref{q_s_variable_identity_2} to deduce that $0 \leq \sigma(e_j)(s_{a_{j,-}} - s_{a_{j,+}}) < 1$. Substituting \eqref{q_open_colour} into \eqref{q_Jte_defn}, we have
\begin{multline}
	\label{q_open_Jte_rewrite}
\mathcal{J}_{\tau,e}(\mby_{e_j},\mbs) = G_{\tau,s_{a_{j}} - s_{a_{j-1}}}(y_{a_{j-1}};y_{a_j}) \\ 
+ \frac{\chi(\sigma(e_j)=1) \chi(s_{a_j} \neq s_{a_{j-1}})}{\tau} S_{\tau,s_{a_{j}} - s_{a_{j-1}}}(y_{a_{j-1}};y_{a_j}).
\end{multline}
Here, we have used the symmetry of the kernels of $G_{\tau,t}$ and $S_{\tau,t}$. Using \eqref{q_open_Jte_rewrite}, we have
\begin{multline}
\label{q_open_rewrite}
\int_{\Lambda^{V_2(\mcP)}} \prod_{a \in V_{2}(\mcP)} \dd y_a \prod_{e \in \mcP} \mcJ_{\tau,e}(\mby,\mbs) \\
= \left[\prod^{q+1}_{j=1}\left(G_{\tau,s_{a_j} - s_{a_{j-1}}}\right) + \frac{\chi(\sigma(e_j)=1) \chi(s_{a_j} \neq s_{a_{j-1}})}{\tau} S_{\tau,s_{a_{j}} - s_{a_{j-1}}}\right](y_{b_1};y_{b_2})\,.
\end{multline}
Define $J_{\mcP} := \{j \in \{1,\ldots,q+1\} : \mcJ_{\tau,e_j} \neq \hat{\mcJ}_{\tau,e_j}\}$. Since $\sigma(e_1) = 1$ and $s_{b_1} \neq s_{a_1}$ (because $q \geq$ 1), we have that $1 \in J_{\mcP}$. Moreover, $\sigma(e_{q+1}) = -1$, so $q+1 \notin J_\mcP$, so $1 \leq |J_{\mcP}| \leq q$. We rewrite \eqref{q_open_rewrite} as
\begin{multline}
	\label{q_open_rewrite2}
\sum_{I \subset J_{\mcP}} \left[\left( \prod_{j \in \{1,\ldots,q\} \backslash I} G_{\tau, s_{a_j}-s_{a_{j-1}}} \right) \left( \prod_{j \in I} \frac{1}{\tau} S_{\tau, s_{a_j}-s_{a_{j-1}}} \right) \right](y_{b_1};y_{b_2}) \\
= \sum_{I \subset J_{\mcP}} \left[\left( \prod_{j \in \{1,\ldots,q\} \backslash I} G_{\tau, 0} \right) \left( \prod_{j \in I} \frac{1}{\tau} S_{\tau, 0} \right) \right](y_{b_1};y_{b_2}) \\
= \sum_{I \subset J_{\mcP}} \frac{1}{\tau^{|I|}}\left(G_{\tau}^{q+1-|I|}\right)(y_{b_1};y_{b_2})\,.
\end{multline}
In the first inequality, we uses $\sum_{j=1}^{q+1}(s_{a_{j}} - s_{a_{j-1}}) = s_{a_{q+1}} - s_{a_0} = 0$, which is true since $s_{a_{q+1}} = s_{a_0}=0$.
	
For $k \geq 2$, applying the Cauchy-Schwarz inequality to the operator kernels implies that
\begin{multline}
	\label{q_open_path_CS}
G^k_\tau(y_{b_1};y_{b_2}) 
\\
:= \int_{\Lambda^{k-1}} \prod_{j=1}^{k-1} \dd x_j \, G_{\tau}(y_{b_1};x_1) G_{\tau}(x_1;x_2) \cdots G_{\tau}(x_{k-2};x_{k-1}) G_{\tau}(x_{k-1};y_{b_2}) \\
\leq \|G_{\tau}\|_{\mfS^2(\mfh)}^{k-2} \|G_{\tau}(y_{b_1},\cdot)\|_{\mfh} \|G_{\tau}(y_{b_2},\cdot)\|_{\mfh}\,.
\end{multline}
Applying \eqref{q_open_path_CS}, we deduce that \eqref{q_open_rewrite2} is
\begin{multline}
	\label{q_open_rewrite3}
\leq \sum_{\substack{I \subset J_{\mcP} \\ |I| \leq q-1}} \frac{1}{\tau^{|I|}}\|G_\tau\|_{\mfS^2(\mfh)}^{q+1-|I|} \|G_{\tau}(y_{b_1},\cdot)\|_{\mfh} \|G_{\tau}(y_{b_2},\cdot)\|_{\mfh} + \frac{q+1}{\tau^q} G_{\tau}(y_{b_1};y_{b_2}) \\
\leq C^{|V_2(\mcP)|}\left(1+\|G_\tau\|_{\mfS^2(\mfh)}\right)^{|V_2(\mcP)|} \left( \|G_{\tau}(y_{b_1},\cdot)\|_{\mfh} \|G_{\tau}(y_{b_2},\cdot)\|_{\mfh} + G_{\tau}(y_{b_1};y_{b_2})\right).
\end{multline}
Then \eqref{q_open_path_bound} follows from \eqref{q_open_rewrite3} and \eqref{G_tau_bound}.
	
To prove \eqref{q_open_path_convergence}, we note that if $q=0$, we have $\mcJ_{\tau,e} = \hat{\mathcal{J}}_{\tau,e}$, since $\delta_{b_2} = -1$. In this case, \eqref{q_open_path_convergence} automatically holds. If $q \geq 1$, we argue as for \eqref{q_open_path_bound}, except we sum over {\it non-empty} subsets of $J_\mcP$ in \eqref{q_open_rewrite2}. This results in an extra factor of $\frac{1}{\tau}$ and one less power of $(1+\|G_\tau\|)_{\mfS^2(\mfh)}$ in \eqref{q_open_rewrite3}. \eqref{q_open_path_convergence} then follows from \eqref{G_tau_bound}.
\end{proof}
We can now bound the quantity \eqref{I_{tau,Pi}_definition}.
\begin{lemma}
	\label{q_I_bound2_lem}
For $\Pi \in \mfP$ and $\mbt \in \mathfrak{V}$, we have
\begin{equation}
	\label{q_I_bound2_bound}
\left|I^\xi_{\tau,\Pi}(\mbt)\right| \leq C^{m+p}\|w\|_{L^\infty}^{2m}\,.
\end{equation}
\end{lemma}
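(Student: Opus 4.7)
The plan is to combine Corollary \ref{q_I_bound_1_corollary} with the per-path estimates of Lemmas \ref{q_closed_path_lem} and \ref{q_open_path_lem}, factorising the remaining integral over the connected components of $E$. First, applying Corollary \ref{q_I_bound_1_corollary}, it suffices to bound
\begin{equation*}
\int_{\Lambda^V} \dd\mby\, |\xi(\mby_1)|\, \prod_{e \in E} \mcJ_{\tau,e}(\mby_e,\mbs)\,,
\end{equation*}
with $\mby=(\mby_1,\mby_2)$ as in \eqref{y_variable_splitting}. Since $E = \bigsqcup_{\mcP \in \mathrm{conn}(E)} \mcP$ and distinct paths share no vertices, the product $\prod_{e \in E} \mcJ_{\tau,e}(\mby_e,\mbs)$ factorises as a product over $\mcP \in \mathrm{conn}(E)$ of terms depending on disjoint collections of $y$-variables. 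I can therefore perform the $\mby_2$-integration path by path.

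For each closed path $\mcP$, Definition \ref{Open_and_closed_paths} gives $V(\mcP) \subset V_2$, so its entire integration is over $V_2(\mcP)$ and Lemma \ref{q_closed_path_lem} yields a constant $c_{\mcP}$ with $|c_\mcP|\leq C^{|V(\mcP)|}$. For each open path $\mcP$, by degree counting (vertices in $V_1$ have degree $1$ in $E$ while those in $V_2$ have degree $2$), the two endpoints $b_1(\mcP), b_2(\mcP)$ lie in $V_1$; integrating over $V_2(\mcP)$ leaves a function $F_\mcP(y_{b_1(\mcP)},y_{b_2(\mcP)})$, and Lemma \ref{q_open_path_lem} gives
\begin{equation*}
\|F_\mcP\|_{L^2_{y_{b_1(\mcP)},y_{b_2(\mcP)}}} \leq C^{|V_2(\mcP)|}\,.
\end{equation*}
After performing these $\mby_2$-integrations, the remaining integral in $\mby_1$ is
\begin{equation*}
\Biggl(\prod_{\mcP \text{ closed}} c_\mcP\Biggr) \int_{\Lambda^{V_1}} \dd\mby_1\, |\xi(\mby_1)|\,\prod_{\mcP \text{ open}} F_\mcP\bigl(y_{b_1(\mcP)},y_{b_2(\mcP)}\bigr)\,.
\end{equation*}

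To estimate this, observe that each vertex of $V_1$ is the endpoint of exactly one open path; hence the $p$ pairs $(y_{b_1(\mcP)},y_{b_2(\mcP)})_{\mcP \text{ open}}$ form a partition of the $2p$ coordinates of $\mby_1$. This means $\prod_{\mcP \text{ open}} F_\mcP$ is a product of functions of disjoint groups of variables, and its $L^2(\Lambda^{V_1})$ norm factorises:
\begin{equation*}
\Bigl\|\prod_{\mcP \text{ open}} F_\mcP\Bigr\|_{L^2(\Lambda^{V_1})} = \prod_{\mcP \text{ open}} \|F_\mcP\|_{L^2} \leq \prod_{\mcP \text{ open}} C^{|V_2(\mcP)|}\,.
\end{equation*}
Applying the Cauchy-Schwarz inequality in $\mby_1$ together with the assumption $\xi \in \mfB_p$ (so that $\|\xi\|_{L^2(\Lambda^{2p})}=\|\xi\|_{\mfS^2(\mfhp)}\leq 1$) and combining all factors yields a bound
\begin{equation*}
C^{\sum_{\mcP \text{ open}} |V_2(\mcP)| + \sum_{\mcP \text{ closed}} |V(\mcP)|} = C^{|V_2|} = C^{3m}\,,
\end{equation*}
which together with the prefactor $\|w\|_{L^\infty}^{2m}$ from Corollary \ref{q_I_bound_1_corollary} gives \eqref{q_I_bound2_bound} after enlarging $C$.

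The main (minor) obstacle is the bookkeeping of which vertices of each open path are endpoints: one needs the parity/degree argument to ensure endpoints lie in $V_1$ so that they can be paired with the $L^2$ information on $\xi$, and one needs the disjointness of the endpoint pairs to factorise the $L^2$ norm. Once these combinatorial facts are in place, the estimate is a direct assembly of Lemmas \ref{q_closed_path_lem} and \ref{q_open_path_lem} with Cauchy-Schwarz.
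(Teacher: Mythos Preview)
Your proposal is correct and follows essentially the same approach as the paper: start from Corollary \ref{q_I_bound_1_corollary}, factorise over connected components, apply Lemma \ref{q_closed_path_lem} to closed paths and Lemma \ref{q_open_path_lem} to open paths, and finish with Cauchy--Schwarz in the $\mby_1$ variables against $\|\xi\|_{\mfS^2(\mfhp)}\leq 1$. Your explicit observation that the open-path endpoints partition $V_1$ (so the $L^2$ norm of $\prod_{\mcP\text{ open}} F_\mcP$ factorises) makes precise a step the paper leaves implicit, and your final bound $C^{3m}$ is of the required form $C^{m+p}$ after enlarging $C$.
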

\begin{proof}
We argue as in \cite[Lemma 2.19]{FKSS17}. We use the splitting \eqref{y_variable_splitting} and Corollary \ref{q_I_bound_1_corollary} to rewrite \eqref{q_I_bound_1} as
\begin{equation}
	\label{q_I_openclosed_rewrite1}
\left|I^\xi_{\tau,\Pi}(\mbt)\right| \leq \|w\|_{L^\infty}^{2m} \int_{\Lambda^{V_1}} \dd\mby_1 \, |\xi(\mby_1)| \int_{\Lambda^{V_2}} \dd\mby_2 \prod_{e \in E} \mcJ_{\tau,e}(\mby_e,\mbs)\,.
\end{equation}
Let us introduce the partition $\textrm{Conn}(E) = \textrm{Conn}_c(E) \sqcup \textrm{Conn}_o(E)$. In other words, we partition $\textrm{Conn}(E)$ into the closed connected paths $\textrm{Conn}_c(E)$ and the open connected paths $\textrm{Conn}_o(E)$. Then, we have
\begin{multline}
	\label{q_openclosed_decomposition}
\int_{\Lambda^{V_2}} \dd\mby_2 \prod_{e \in E} \mcJ_{\tau,e}(\mby_e,\mbs)
= \prod_{\mcP \in \textrm{Conn}_c(E)} \left(\int_{\Lambda^{V(\mcP)}} \prod_{a \in V(\mcP)} \dd y_a \prod_{e \in \mcP}\mcJ_{\tau,e}(\mby_e,\mbs)\right) \\
\times \prod_{\mcP \in \textrm{Conn}_o(E)} \left(\int_{\Lambda^{V_2(\mcP)}} \prod_{a \in V_2(\mcP)} \dd y_a \prod_{e \in \mcP}\mcJ_{\tau,e}(\mby_e,\mbs)\right).
\end{multline}
Substituting \eqref{q_openclosed_decomposition} into \eqref{q_I_openclosed_rewrite1} and using \eqref{q_closed_path_bound}, we have
\begin{multline}
\label{3.73}
|I^\xi_{\tau,\Pi}(\mbt)| \\
\leq C^m \|w\|_{L^\infty}^{2m} \int_{\Lambda^{V_1}} \dd \mby_1 \, |\xi(\mby_1)| \prod_{\mcP \in \textrm{Conn}_o(E)}\left(\int_{\Lambda^{V_2(\mcP)}} \prod_{a \in V_2(\mcP)} \dd y_a \prod_{e \in \mcP}\mcJ_{\tau,e}(\mby_e,\mbs)\right).
\end{multline}
We note that \eqref{q_I_bound2_bound} follows from \eqref{3.73} by applying the Cauchy-Schwarz inequality in the $\mby_1$ variables, followed by \eqref{q_open_path_bound}. We also use that $|\textrm{Conn}_o(E)| = p$ to get the factor of $C^p$ on the right-hand side of \eqref{q_I_bound2_bound}.
\end{proof}
Using Lemma \ref{q_I_bound2_lem}, we can now bound the quantity $g^{\xi}_{\tau,m}(\mbt)$ defined in \eqref{q_gtau_defn}.
\begin{lemma}
\label{Lemma_3.39}
For $\mbt \in \mathfrak{V}$, we have 
\begin{equation*}
\left|\g\right| \leq C^{m+p}\|w\|_{L^\infty}^{2m}(3m+p)!\,.
\end{equation*}
\end{lemma}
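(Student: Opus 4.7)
The plan is to combine the graphical decomposition of $\g$ from Lemma \ref{q_gtau_rewrite_lem} with the uniform pairing bound from Lemma \ref{q_I_bound2_lem}, and then to bound the cardinality of the index set $\mfP$ by a direct combinatorial count.

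More precisely, Lemma \ref{q_gtau_rewrite_lem} writes $\g = \sum_{\Pi \in \mfP} I^\xi_{\tau,\Pi}(\mbt)$, and Lemma \ref{q_I_bound2_lem} gives $|I^\xi_{\tau,\Pi}(\mbt)| \leq C^{m+p}\|w\|_{L^\infty}^{2m}$ uniformly in $\Pi \in \mfP$ and $\mbt \in \mathfrak{V}$. Therefore it suffices to show
\begin{equation*}
|\mfP(m,p)| \leq (3m+p)!\,.
\end{equation*}
Recall from Definition \ref{q_vertex_set_defn} that $|\Sigma| = 6m + 2p$, with exactly $3m+p$ vertices having $\delta=+1$ and $3m+p$ vertices having $\delta=-1$. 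By Definition \ref{q_pairing_defn}, an element of $\mfP$ is a pairing of $\Sigma$ in which every edge connects a $\delta=+1$ vertex to a $\delta=-1$ vertex. Equivalently, $\mfP$ is in bijection with the set of bijections from the $\delta=+1$ vertices to the $\delta=-1$ vertices, of which there are exactly $(3m+p)!$.

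The main obstacle here is simply to book-keep the combinatorics correctly (in particular, to confirm the number of $\delta=\pm 1$ vertices coming from the three-body interaction, which accounts for the factor $(3m+p)!$ rather than the $(2m+p)!$ that appeared in the quartic case of \cite{RS22}); no new analytic input is required beyond Lemmas \ref{q_gtau_rewrite_lem} and \ref{q_I_bound2_lem}. Putting everything together yields
\begin{equation*}
|\g| \leq \sum_{\Pi \in \mfP} |I^\xi_{\tau,\Pi}(\mbt)| \leq (3m+p)!\, C^{m+p}\|w\|_{L^\infty}^{2m}\,,
\end{equation*}
which is the desired bound, possibly after absorbing the constant into $C$.
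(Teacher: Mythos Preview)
Your proposal is correct and follows essentially the same approach as the paper: decompose via Lemma \ref{q_gtau_rewrite_lem}, apply the uniform bound from Lemma \ref{q_I_bound2_lem}, and control $|\mfP|$ by $(3m+p)!$. Your combinatorial justification (bijection between $\delta=+1$ and $\delta=-1$ vertices) in fact gives the exact count, whereas the paper simply states the inequality $|\mfP|\leq (3m+p)!$.
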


\begin{proof}
The claim follows from Lemma \ref{q_gtau_rewrite_lem}, \eqref{q_I_bound2_bound}, and the observation that $|\mfP| \leq (3m+p)!$.
\end{proof}
We can now bound the quantity $b_{\tau,m}^{\xi}$ given by \eqref{q_bxi_defn_remk} and \eqref{alpha_1} above and obtain
\eqref{Proposition_3.15_1_1} uniformly in $\xi \in \mathfrak{B}_p$.
\begin{corollary}
	\label{q_bxi_tau_bound_cor}
Uniformly in $\xi \in \mathfrak{B}_p$, we have
\begin{equation}
	\label{q_bxi_tau_bound_cor_bound}
|b^\xi_{\tau,m}| \leq (Cp)^p C^m (m!)^2 \|w\|_{L^\infty}^{2m} =: C(m,p)\,.
\end{equation}
\end{corollary}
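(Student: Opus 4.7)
The plan is to put together the already-established structural decomposition of $b^{\xi}_{\tau,m}$ with the uniform pointwise bound on $\g$ from Lemma \ref{Lemma_3.39}, and to extract the claimed form $(Cp)^p C^m (m!)^2 \|w\|_{L^\infty}^{2m}$ by a direct combinatorial estimate on $(3m+p)!$. The main point to verify is that the $p$-dependence can indeed be isolated into an overall $(Cp)^p$ factor, independent of $m$, with the $m$-dependence absorbed into $C^m (m!)^2$. No new analytic input is needed beyond what has already been proved.

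First I would start from the identity \eqref{q_explicit_b_rewrite_f1}, namely
\begin{equation*}
b^{\xi}_{\tau,m} = \frac{1}{3^m} \int^{1}_{0} \dd t_1 \int^{t_1}_{0} \dd t_2 \cdots \int^{t_{m-1}}_{0} \dd t_m \, g^{\xi}_{\tau,m}(\mbt)\,.
\end{equation*}
The volume of the simplex $\mathfrak{V}$ in \eqref{V_simplex} is $\tfrac{1}{m!}$, so taking absolute values and using Lemma \ref{Lemma_3.39} yields
\begin{equation*}
|b^{\xi}_{\tau,m}| \leq \frac{1}{3^m m!}\, \sup_{\mbt \in \mathfrak{V}} |\g| \leq \frac{C^{m+p} \|w\|_{L^\infty}^{2m} (3m+p)!}{3^m m!}\,.
\end{equation*}

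Next I would rewrite $(3m+p)!/m!$. Splitting off the last $p$ factors gives
\begin{equation*}
\frac{(3m+p)!}{m!} = \frac{(3m)!}{m!}\,\prod_{j=1}^{p}(3m+j) \leq \frac{(3m)!}{m!}\,(3m+p)^{p}\,.
\end{equation*}
The multinomial identity (or Stirling) yields $(3m)! \leq (m!)^3\, 3^{3m}$, so $(3m)!/m! \leq 27^m (m!)^2$. For the remaining factor, I would use the elementary inequality $(1+x)^p \leq e^{xp}$ with $x = 3m/p$ to obtain
\begin{equation*}
(3m+p)^p = p^p \left(1 + \tfrac{3m}{p}\right)^p \leq p^p \, \e^{3m}\,.
\end{equation*}
Combining these two bounds gives $(3m+p)!/m! \leq p^p (27 \e^3)^m (m!)^2$.

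Substituting back, I obtain
\begin{equation*}
|b^{\xi}_{\tau,m}| \leq C^p p^p \cdot (9 \e^3 C)^m\, \|w\|_{L^\infty}^{2m}\, (m!)^2\,,
\end{equation*}
which is of the claimed form $(Cp)^p C^m (m!)^2 \|w\|_{L^\infty}^{2m}$ after absorbing the numerical constants into $C$. Uniformity in $\xi \in \mathfrak{B}_p$ is automatic, since Lemma \ref{Lemma_3.39} (and the bound \eqref{q_I_bound2_bound} it relies on) holds uniformly for $\xi$ in the unit Hilbert--Schmidt ball $\mathfrak{B}_p$. There is no real obstacle here; the only delicate point is the slightly asymmetric dependence on $m$ and $p$, which is why I would explicitly split off the last $p$ factors of $(3m+p)!$ rather than attempting a cruder bound like $(3m+p)! \leq (3m+p)^{3m+p}$.
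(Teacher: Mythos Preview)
Your proof is correct and follows essentially the same route as the paper: start from \eqref{q_explicit_b_rewrite_f1}, apply Lemma \ref{Lemma_3.39}, pick up the $1/m!$ from the simplex volume, and then massage $(3m+p)!/m!$ into the form $(Cp)^p C^m (m!)^2$. The paper simply invokes ``Stirling's formula'' for this last step, whereas you spell it out via the multinomial bound $(3m)! \le 3^{3m}(m!)^3$ and the inequality $(3m+p)^p \le p^p \e^{3m}$; this is a perfectly good (and arguably cleaner) substitute.
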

\begin{proof}
The claim follows from \eqref{q_explicit_b_rewrite_f1} and Lemma \ref{Lemma_3.39}, after integrating over the simplex \eqref{V_simplex} (which gives a factor of $\frac{1}{m!}$) and using Stirling's formula.
\end{proof}
\subsubsection{Proof of \eqref{Proposition_3.15_1_2} uniformly in $\xi \in \mathfrak{B}_p$}
\label{Subsection_3.6.2}
Let us make the following definition.
\begin{definition}
\label{J_{e}_definition}
For $e =\{a,b\} \in E$, we define
\begin{equation*}
\mathcal{J}_e(\mby_e) := G(y_a;y_b)\,.
\end{equation*}
\end{definition}
Proposition \ref{q_FK_formula} implies the following lemma.
\begin{lemma}
\label{q_G_positive_lemma}
The kernel of $G$ is non-negative and symmetric.
\end{lemma}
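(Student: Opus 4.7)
The plan is to mimic the argument of Lemma~\ref{q_GS_positive_lem}, reducing the statement about $G = h^{-1}$ to an application of the Feynman--Kac formula (Proposition~\ref{q_FK_formula}). Since $h = -\Delta + \kappa$ is a strictly positive self-adjoint operator on $\mfh$, we can represent its inverse as a Laplace transform,
\begin{equation*}
G = h^{-1} = \int_0^\infty \dd t \, \e^{-t h} = \int_0^\infty \dd t \, \e^{-t(-\Delta + \kappa)}\,,
\end{equation*}
which is justified by the spectral decomposition \eqref{q_Hamn_spectral_decomp}--\eqref{lambda_k} together with $\Tr_{\mfh}(h^{-1}) < \infty$ from \eqref{q_trace_finite}. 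Passing to integral kernels, this reads
\begin{equation*}
G(x;y) = \int_0^\infty \dd t \, \e^{-t(-\Delta+\kappa)}(x;y)\,.
\end{equation*}

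Next I would apply Proposition~\ref{q_FK_formula} with $V \equiv \kappa$ (a continuous, bounded-below potential), which gives
\begin{equation*}
\e^{-t(-\Delta+\kappa)}(x;y) = \e^{-t\kappa}\int \mbb{W}^t_{x,y}(\dd \omega)\,.
\end{equation*}
Because $\mbb{W}^t_{x,y}$ is a positive measure and $\e^{-t\kappa} > 0$, this kernel is non-negative for each $t > 0$. Integrating in $t$ preserves non-negativity, so $G(x;y) \geq 0$. Symmetry of $\mbb{W}^t_{x,y}$ under reversing the endpoints (equivalently, symmetry of the periodic heat kernel $\e^{t\Delta}(x-y)$ given in \eqref{q_heat_kernel}) yields symmetry of each $\e^{-t(-\Delta+\kappa)}(x;y)$, and hence of $G(x;y)$.

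There is essentially no obstacle here: the only minor point to be careful about is ensuring that the integral $\int_0^\infty \dd t \, \e^{-t(-\Delta+\kappa)}(x;y)$ converges pointwise (or at least in a suitable distributional sense) to the kernel of $h^{-1}$, which follows from the exponential decay factor $\e^{-t\kappa}$ together with the short-time heat-kernel asymptotics. Alternatively, one could bypass Feynman--Kac entirely and simply observe that $G = h^{-1} = \sum_{k \in \N} \lambda_k^{-1} e_k e_k^*$ has kernel $G(x;y) = \sum_{k \in \N} \lambda_k^{-1} \e^{2\pi \mathrm{i} k(x-y)}$, which by Poisson summation equals $\int_0^\infty \dd t \, \e^{-t\kappa} \e^{t\Delta}(x-y)$; both the symmetry $x \leftrightarrow y$ and positivity of the periodic heat kernel are then immediate from \eqref{q_heat_kernel}.
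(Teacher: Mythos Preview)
Your proof is correct and follows the same route as the paper, which simply records that the lemma is implied by Proposition~\ref{q_FK_formula} (exactly as in the proof of Lemma~\ref{q_GS_positive_lem}). You have merely supplied the details the paper leaves implicit.
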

\begin{definition}
For $\Pi \in \mfP$, we define
\begin{equation}
	\label{q_I_classical_defn}
I^\xi_{\Pi} := \int_{\Lambda^{V}} \dd\mby\,\Biggl( \prod_{i=1}^m w(y_{i,1}-y_{i,2})\,w(y_{i,1}-y_{i,3}) \Biggr) \xi(\mby_1) \prod_{e \in E} \mathcal{J}_e(\mby_e)\,.
\end{equation}
\end{definition}
\begin{lemma}
	\label{q_I_convergence_lemma}
For each $\Pi \in \mfP$ and $\mbt \in \mathfrak{V}$, we have
\begin{equation}
	\label{q_I_convergence}
I^\xi_{\tau,m}(\mbt) \to I_{m}^{\xi} \text{ uniformly in $\xi \in \mathfrak{B}_p$ as $\tau \to \infty$\,.}
\end{equation}
\end{lemma}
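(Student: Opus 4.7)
The plan is to establish convergence in two telescoping steps, based on the kernel representation from Lemma \ref{q_graph_rewrite_1_lem} and the path decomposition $E = \bigsqcup_{\mcP \in \mathrm{conn}(E)} \mcP$. For fixed $\mbt \in \mathfrak{V}$, the time differences $s_a - s_b$ appearing in the kernels take only finitely many values, all strictly in $(-1,1)$, so no $\mbt$-uniformity is required in the ensuing limits.

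First, I will pass from $\mcJ_{\tau,e}$ to $\hat{\mcJ}_{\tau,e}$, i.e.\ drop the $\frac{1}{\tau}S_{\tau,s_a-s_b}$ correction in \eqref{q_mcJ_defn}. Setting
\[
\hat{I}^\xi_{\tau,\Pi}(\mbt) := \int_{\Lambda^V} \dd\mby \Biggl(\prod_{i=1}^m w(y_{i,1}-y_{i,2})\, w(y_{i,1}-y_{i,3}) \Biggr) \xi(\mby_1) \prod_{e \in E} \hat{\mcJ}_{\tau,e}(\mby_e,\mbs),
\]
I will write $\prod_{e \in E} \mcJ_{\tau,e} - \prod_{e \in E} \hat{\mcJ}_{\tau,e}$ as a telescoping sum indexed by paths $\mcP \in \mathrm{conn}(E)$, each summand swapping the factors on a single path. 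The bystander paths are controlled by \eqref{q_closed_path_bound} and \eqref{q_open_path_bound}, while the modified path is handled by \eqref{q_closed_path_convergence} or \eqref{q_open_path_convergence}. Bounding the $w$-factors by $\|w\|_{L^\infty}^{2m}$ and using Cauchy-Schwarz in the $\mby_1$ variables to absorb $\xi$ against $\|\xi\|_{\mfS^2(\mfh^{(p)})} \leq 1$ yields $|I^\xi_{\tau,\Pi}(\mbt) - \hat{I}^\xi_{\tau,\Pi}(\mbt)| \to 0$ uniformly in $\xi \in \mathfrak{B}_p$.

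Second, I will replace each $\hat{\mcJ}_{\tau,e} = G_{\tau,\sigma(e)(s_a-s_b)}$ by $\mcJ_e = G$. The key analytic input is $\|G_{\tau,t} - G\|_{\mfS^1(\mfh)} \to 0$ as $\tau \to \infty$, uniformly for $t$ in any compact subset of $(-1,1)$; this follows from the spectral representation $G_{\tau,t} = \sum_k \frac{\e^{-t\lambda_k/\tau}}{\tau(\e^{\lambda_k/\tau}-1)} e_k e_k^*$, the inequality $\tau(\e^x-1) \geq \tau x$ (which yields the summable dominator $C\lambda_k^{-1}$ via \eqref{q_trace_finite}), and the dominated convergence theorem. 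On each path, using the reductions in \eqref{q_closed_path_bound_rewrite2} and \eqref{q_open_rewrite2} with $I = \emptyset$, the integral collapses to a trace (closed case) or an iterated kernel (open case) of a product of operators $G_{\tau,t_j}$. A second telescoping, replacing one factor at a time and applying Proposition \ref{q_schatten_holder_inequality} to place the difference $G_{\tau,t_j} - G$ in $\mfS^1$ and the remaining factors in $\mfS^\infty$ (using the uniform bound $\|G_{\tau,t}\|_{\mfS^\infty} \lesssim 1$), yields convergence of each path contribution; Cauchy-Schwarz at the endpoints of open paths absorbs $\xi$ uniformly as before.

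I expect the main obstacle to be the Schatten-class bookkeeping in the second telescoping: one must arrange the operator product so that the single difference factor sits in $\mfS^1$ while the other factors on a closed path of length $q$ (or $q$ out of $q+1$ on an open path, with one endpoint treated in $\mfS^2$ as in the proof of Lemma \ref{q_open_path_lem}) are jointly controlled in the operator norm, uniformly in $\tau$. Once this path-by-path convergence is established, it lifts immediately to the full integral over $\Lambda^V$ and delivers the claim.
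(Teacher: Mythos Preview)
Your two-step telescoping structure is correct and matches the paper's proof. Step 1 (passing from $\mcJ_{\tau,e}$ to $\hat{\mcJ}_{\tau,e}$ via a path-by-path telescope, invoking \eqref{q_closed_path_convergence} and \eqref{q_open_path_convergence} for the modified path and \eqref{q_closed_path_bound}, \eqref{q_open_path_bound} for the bystanders) is exactly how the paper handles $I^\xi_{\tau,\Pi}(\mbt) - \hat I^\xi_{\tau,\Pi}(\mbt) \to 0$.

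For Step 2 your organisation differs slightly from the paper's, and the Schatten bookkeeping you flagged does not quite close in the $\mfS^1/\mfS^\infty$ form you propose. The factors $w(y_{i,1}-y_{i,2})\,w(y_{i,1}-y_{i,3})$ couple vertices lying in different paths, so in order to factorize the $\mby_2$-integral over $\mathrm{conn}(E)$ you must first bound $|w|\leq \|w\|_{L^\infty}$. After that, on the modified path $\mcP_0$ you are left with $\int_{\Lambda^{V(\mcP_0)}}\bigl|\prod_{e\in\mcP_0}\hat{\mcJ}_{\tau,e}-\prod_{e\in\mcP_0}\mcJ_e\bigr|$, i.e.\ the pointwise absolute value of a difference of kernels. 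Edge-telescoping inside $\mcP_0$ produces a factor with kernel $|G_{\tau,t_{e_0}}-G|(y,y')$; the $\mfS^1$ norm of the operator with this kernel is \emph{not} controlled by $\|G_{\tau,t_{e_0}}-G\|_{\mfS^1}$, so your H\"older scheme $\mfS^1\times\mfS^\infty\times\cdots$ does not immediately apply.

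The paper sidesteps this by telescoping edge-by-edge over all of $E$ (see \eqref{q_I_telescope}), then applying Cauchy--Schwarz along each path to reach $\prod_{e\in E}\|\tilde{\mcJ}_{\tau,e}\|_{\mfS^2}$ (see \eqref{q_sigma_rewrite2}). The point is that the Hilbert--Schmidt norm equals the $L^2$ norm of the kernel and is therefore invariant under taking pointwise absolute values, so the difference factor is bounded by $\|G_{\tau,t_{e_0}}-G\|_{\mfS^2}\to 0$ (via \eqref{q_quantum_Green_estimate3}), while the remaining factors stay uniformly bounded in $\mfS^2$. Your argument is easily repaired by the same $\mfS^2$-based device; everything else in your plan goes through.
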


\begin{proof}
The proof is similar to the proof of \cite[Lemma 2.25]{FKSS17}. For $\mbt \in \mathfrak{V}$, we define
\begin{multline}
	\label{q_I_quantum_hat_defn}
\hat{I}_{\tau,\Pi}^\xi(\mbt) := 
\\
\int_{\Lambda^{V}} \dd \mby \,\Biggl(\prod_{i=1}^{n}w(y_{i,1}-y_{i,2}) \, w(y_{i,1}-y_{i,3}) \Biggr) \xi(\mby_1) \prod_{e \in E} \hat{\mathcal{J}}_{\tau,e}(\mby_e)\,.
\end{multline}
Let first show that 
\begin{equation}
	\label{q_I_convergence_subfact}
\hat{I}_{\tau,\Pi}^{\xi}(\mbt) \to I_{\Pi}^{\xi} \textrm{ uniformly in $\xi \in \mathfrak{B}_p$ as $\tau \to \infty$\,.}
\end{equation}
Namely, from \eqref{q_I_classical_defn} and \eqref{q_I_quantum_hat_defn}, we have
	\begin{multline}
	\label{q_I_difference}
\hat{I}_{\tau,\Pi}^{\xi}(\mbt) - I_{\Pi}^{\xi} = \int_{\Lambda^{V}} \dd \mby \Biggl( \prod_{i=1}^{m}  w(y_{i,1}-y_{i,2}) \,w(y_{i,1}-y_{i,3}) \Biggr) \xi(\mby_1) \\
\times \left[\prod_{e \in E} \hat{\mathcal{J}}_{\tau,e}(\mby_e) - \prod_{e \in E} \mathcal{J}_{e}(\mby_e)\right].
\end{multline}
By telescoping, we can write
\begin{multline}
	\label{q_I_telescope}
\prod_{\substack{e \in E}} \hat{\mcJ}_{\tau,e}(\mby_e,\mbs) - \prod_{e \in E} \mcJ_e(\mby_e) = \\
\sum_{e_0 \in E} \left[\prod_{\substack{e \in E \\ e<e_0}} \hat{\mcJ}_{\tau,e}(\mby_e,\mbs) \left(\hat{\mcJ}_{\tau,e_0}(\mby_{e_0},\mbs) - \mcJ_{e_0}(\mby_{e_0})\right) \prod_{\substack{e \in E \\ e>e_0}} \mcJ_e(\mby_e) \right],
\end{multline}
where we order the elements of $E$ arbitrarily. Substituting \eqref{q_I_telescope} into \eqref{q_I_difference}, we have
\begin{multline}
	\label{q_I_difference2}
\left| \hat{I}_{\tau,\Pi}^{\xi}(\mbt) - I_{\Pi}^{\xi} \right| \leq \sum_{e_0 \in E} \|w\|_{L^\infty}^{2m} \int_{\Lambda^{V}} \dd \mby \, |\xi(\mby_1)| \\
\times \left[\prod_{\substack{e \in E \\ e<e_0}} \hat{\mcJ}_{\tau,e}(\mby_e,\mbs) \left|\hat{\mcJ}_{\tau,e_0}(\mby_{e_0},\mbs) - \mcJ_{e_0}(\mby_{e_0})\right| \prod_{\substack{e \in E \\ e>e_0}} \mcJ_e(\mby_e)\right].
\end{multline}
Here, we have used that $\mcJ_{\tau,e}(\mby_e;\mbs)$ and $\mcJ_{e}(\mby_e) \geq 0$ by Lemmas \ref{q_GS_positive_lem} and \ref{q_G_positive_lemma} (recalling Definitions \ref{q_Jte_defn} and \ref{J_{e}_definition}).

Let us denote by $\sigma_{\tau,e_0}^\xi(\mbt)$ the summand in \eqref{q_I_difference2} corresponding to $e_0$. Since \eqref{q_I_difference2} is a finite sum, in order to obtain \eqref{q_I_convergence_lemma},  it suffices to show that for each $e_0 \in E$, we have
\begin{equation}
	\label{q_sigma_convergence}
\sigma^\xi_{\tau,e_0}(\mbt) \to 0 \textrm{ uniformly in $\xi \in \mathfrak{B}_p$ as $\tau \to \infty\,$.}
\end{equation}
We fix $e_0 \in E$. Let us define an integral kernel associated with an edge $e \in E$ by
\begin{equation}
	\label{q_Jtilde_defn}
\tilde{\mcJ}_{\tau,e}(\mby_e,\mbs) :=
\begin{cases}
\hat{\mathcal{J}}_{\tau,e}(\mby_e,\mbs) \quad &\textrm{if } e < e_0\,, \\
\left| \hat{\mathcal{J}}_{\tau,e}(\mby_e,\mbs) - \mcJ_e(\mby_e) \right| \quad &\textrm{if } e = e_0\,, \\
\mcJ_e(\mby_e) \quad &\textrm{if } e > e_0\,.
\end{cases}
\end{equation}
We have the following estimates for $\tilde{\mcJ}_{\tau,e}$.
\begin{enumerate}
\item If $e=\{a,a\}$ (i.e. a loop in the graph) and $e \neq e_0$, then
\begin{equation}
	\label{q_Jtilde_estimate_loop_not_e0}
\|\tilde{\mcJ}_{\tau,e}(\cdot,\mbs)\|_{\mfS^1(\mfh)} \leq \|G_\tau\|_{\mfS^1(\mfh)} + \|G\|_{\mfS^1(\mfh)} \leq C\,,
\end{equation}
which holds by \eqref{q_norm_quantum_Green_fn}. 
\item If $e=\{a,a\}$ and $e = e_0$, then
\begin{multline}
	\label{q_Jtilde_estimate_loop_e0}
\|\tilde{\mcJ}_{\tau,e}(\cdot,\mbs)\|_{\mfS^1(\mfh)} = 
\int \dd y \, \left|\hat{\mcJ}_{\tau,e_0}(\mby_{e_0},\mbs) - \mcJ_{e_0}(\mby_{e_0}) \right|  
\\
= \int \dd y \,  \left| \frac{1}{\tau(\e^{h/\tau}-1)}(y;y) - \frac{1}{h}(y;y) \right| 
= \int \dd y \, \sum_{k \in \N} \left(\frac{1}{\lambda_k} - \frac{1}{\tau(e^{\lambda_k/\tau} - 1)} \right)
\\
= \|G_\tau - G\|_{\mfS^1(\mfh)} \to 0\,,
\end{multline}
as $\tau \to \infty$ by spectral decomposition and the dominated convergence theorem. The third equality above follows by comparing Taylor series.
\item If $e=\{a,b\}$ with $a < b$ and $e < e_0$, then $\tilde{\mcJ}_{\tau,e}(\mby_e,\mbs) = G_{\tau,\sigma(e)(s_a-s_b)}(y_a;y_b)$. 

Let us note that 
\begin{equation}
	\label{q_quantum_Green_estimate3}
\lim_{\tau \to \infty} (1+t) \left\| \frac{\e^{-th/\tau}}{\tau(\e^{h/\tau} - 1)} - h^{-1} \right\|_{\mf{S}^2(\mfh)} \rightarrow 0\,,
\end{equation}
uniformly in $t \in (-1,1)$. The claim \eqref{q_quantum_Green_estimate3} follows by a spectral argument; see \cite[Lemma C.2]{FKSS17} for the proof of a more general claim.
Then \eqref{q_quantum_Green_estimate3} implies
\begin{equation}
	\label{q_Jtilde_estimate_nonloop_e<e0}
\|\tilde{\mcJ}_{\tau,e}(\cdot,\mbs)\|_{\mfS^2(\mfh)} \leq C_{\mbs}\,,
\end{equation}
where the constant depends on $\mbs$.
\item If $e=\{a,b\}$ with $a < b$ and $e = e_0$. Then 
\begin{equation}
\label{q_quantum_Green_estimate3_1}
\tilde{\mcJ}_{\tau,e}(\mby_e,\mbs) = \\ \left(G_{\tau,\sigma(e)(s_a-s_b)} - G\right)(y_a;y_b)\,. 
\end{equation}
Therefore, \eqref{q_quantum_Green_estimate3_1}--\eqref{q_quantum_Green_estimate3} imply that
\begin{equation}
	\label{q_Jtilde_estimate_nonloop_e=e0}
\|\tilde{\mcJ}_{\tau,e}(\cdot,\mbs)\|_{\mfS^2(\mfh)} \to 0 \textrm{ as $\tau \to \infty\,$.}
\end{equation}

\item If $e=\{a,b\}$ with $a < b$ and $e > e_0$. Then $\tilde{\mcJ}_{\tau,e}(\mby_e,\mbs) = G(y_a;y_b)$. Then
\begin{equation}
	\label{q_Jtilde_estimate_nonloop_e>e0}
\|\tilde{\mcJ}_{\tau,e}(\cdot,\mbs)\|_{\mfS^2(\mfh)} = \|G\|_{\mfS^2(\mfh)} \leq C\,.
\end{equation}
\end{enumerate}
Applying the same decomposition from the proof of Lemma \ref{q_I_bound2_lem}, we have
\begin{multline}
	\label{q_sigma_rewrite}
\sigma^{\xi}_{\tau,e_0}(\mbt) = \|w\|_{L^\infty}^{2m} \int_{\Lambda^{V_1}} \dd\mby_1 \prod_{\mcP \in \textrm{Conn}_c(E)}\left(\int_{\Lambda^{V(\mcP)}} \prod_{a \in V(\mcP)} \dd y_a \prod_{e \in \mcP} \tilde{\mcJ}_{\tau,e}(\mby_e,\mbs)\right) \\
\times \prod_{\mcP \in \textrm{Conn}_o(E)}\left(\int_{\Lambda^{V_2(\mcP)}} \prod_{a \in V_2(\mcP)} \dd y_a \prod_{e \in \mcP} \tilde{\mcJ}_{\tau,e}(\mby_e,\mbs)\right)\,.
\end{multline}
For $\mcP \in \textrm{Conn}_c(E)$, apply the Cauchy-Schwarz inequality in $y_a$ for $a \in V(\mcP)$ to deduce that
\begin{equation}
	\label{q_Jtilde_closed_estimate}
\int_{\Lambda^{V(\mcP)}} \prod_{a \in V(\mcP)} \dd y_a \prod_{e \in \mcP} \tilde{\mcJ}_{\tau,e}(\mby_e,\mbs) \leq \prod_{e \in \mcP} \|\tilde{\mcJ}_{\tau,e}(\cdot,\mbs)\|_{\mfS^2(\mfh)}\,.
\end{equation}
For $\mcP \in \textrm{Conn}_o(E)$, we recall the notation from the proof of Lemma \ref{q_open_path_lem}. Namely
\begin{equation*}
\mcP = \{e_1,\ldots,e_{q+1}\}, \quad V_1(\mcP) = \{b_1,b_2\}, \quad \quad V_2(\mcP) = \{a_1,\ldots,a_q\}.
\end{equation*}
We apply the Cauchy-Schwarz inequality in the $y_a$ for $a \in V_2(\mcP)$ to obtain
\begin{multline}
	\label{q_Jtilde_open_estimate}
\int_{\Lambda^{V_2(\mcP)}} \prod_{a \in V_2(\mcP)} \dd y_a \prod_{e \in \mcP} \tilde{\mcJ}_{\tau,e}(\mby_e,\mbs) \leq \\
\prod_{j=2}^q \|\tilde{\mcJ}_{\tau,e_j}(\cdot,\mbs)\|_{\mfS^2(\mfh)} \|\tilde{\mcJ}_{\tau,e_1}(y_{b_1},\cdot)\|_{\mfh} \|\tilde{\mcJ}_{\tau,e_{q+1}}(y_{b_2},\cdot)\|_{\mfh}\,.
\end{multline}
Substituting \eqref{q_Jtilde_closed_estimate} and \eqref{q_Jtilde_open_estimate} into \eqref{q_sigma_rewrite} and applying the Cauchy-Schwarz inequality in the $\mby_1$ variables yields
\begin{equation}
	\label{q_sigma_rewrite2}
\sigma^\xi_{\tau,e_0}(\mbt) \leq \|w\|_{L^\infty}^{2m} \|\xi\|_{\mfS^2(\mfh)} \prod_{e \in E} \|\tilde{\mcJ}_{\tau,e}(\cdot,\mbs)\|_{\mfS^2(\mfh)}\,.
\end{equation}
Recalling \eqref{q_Jtilde_estimate_loop_not_e0}--\eqref{q_Jtilde_estimate_nonloop_e>e0} and Lemma \ref{q_schatten_embedding}, we obtain \eqref{q_sigma_convergence} from \eqref{q_sigma_rewrite2}. Hence, \eqref{q_I_convergence_subfact} follows.

Let us also note that 
\begin{equation}
	\label{q_I_hat_convergence}
I_{\tau,\Pi}^\xi(\mbt) \to \hat{I}_{\tau,\Pi}^{\xi}(\mbt) \textrm{ uniformly in $\xi \in \mathfrak{B}_p$ as $\tau \to \infty$\,.}
\end{equation}
In order to obtain \eqref{q_I_hat_convergence}, we use a telescoping argument analogous to \eqref{q_I_telescope} above, perform a decomposition into open and closed paths as in the proof of Lemma \ref{q_I_bound2_lem}, and use Lemmas \ref{q_closed_path_lem} and \ref{q_open_path_lem}. We omit the details.
We obtain the claim of Lemma \ref{q_I_convergence_lemma} from \eqref{q_I_convergence_subfact} and \eqref{q_I_hat_convergence}.
\end{proof}
For $m \in \N$, let us define
\begin{equation}
	\label{q_Wick_bm}
b^\xi_{\infty,m} := \frac{(-1)^m}{m!\,3^m}\sum_{\Pi \in \mathfrak{P}} I^\xi_{\Pi}\,.
\end{equation}
We now conclude the claimed convergence result.
\begin{lemma}
	\label{q_b^xi_convergence_lemma}
For $b^\xi_{\tau,m}, b^{\xi}_m$ defined as in \eqref{q_bxi_defn_remk}, we have
\begin{equation}
	\label{q_b^xi_convergence}
b^\xi_{\tau,m} \to b^\xi_{m} \textrm{ as $\tau \to \infty$ uniformly in $\xi \in \mathfrak{B}_p\,$}.
\end{equation}
\end{lemma}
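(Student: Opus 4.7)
The plan is to combine the graph-theoretic rewrites already established for $b^\xi_{\tau,m}$ with the pointwise convergence result of Lemma \ref{q_I_convergence_lemma} and the uniform bound of Lemma \ref{q_I_bound2_lem}, and then to identify the resulting limit with the classical quantity $b^\xi_m$ via Wick's theorem. All estimates will automatically be uniform in $\xi \in \mathfrak{B}_p$ because each intermediate estimate already is.

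First, by \eqref{q_explicit_b_rewrite_f1} together with Lemma \ref{q_gtau_rewrite_lem}, we have the representation
\begin{equation*}
b^\xi_{\tau,m} \;=\; \frac{1}{3^m}\int_{\mathfrak{V}} \dd\mbt \sum_{\Pi \in \mathfrak{P}} I^\xi_{\tau,\Pi}(\mbt),
\end{equation*}
where $\mathfrak{V}$ is the simplex \eqref{V_simplex}, of volume $1/m!$. By Lemma \ref{q_I_bound2_lem} we have $|I^\xi_{\tau,\Pi}(\mbt)| \leq C^{m+p}\|w\|_{L^\infty}^{2m}$ uniformly in $\tau$, $\mbt \in \mathfrak{V}$, and $\xi \in \mathfrak{B}_p$. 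Since the set $\mathfrak{P}$ is finite (with cardinality $\leq (3m+p)!$) and $\mathfrak{V}$ has finite measure, the integrand $\mbt \mapsto \sum_{\Pi} I^\xi_{\tau,\Pi}(\mbt)$ is dominated by a fixed integrable (in fact constant) function on $\mathfrak{V}$, uniformly in $\tau$ and $\xi$.

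Next, Lemma \ref{q_I_convergence_lemma} provides the pointwise limit $I^\xi_{\tau,\Pi}(\mbt) \to I^\xi_{\Pi}$ as $\tau \to \infty$, uniformly in $\xi \in \mathfrak{B}_p$, for each fixed $\mbt \in \mathfrak{V}$ and $\Pi \in \mathfrak{P}$. Since $\mathfrak{P}$ is finite, summing gives
\begin{equation*}
\sum_{\Pi \in \mathfrak{P}} I^\xi_{\tau,\Pi}(\mbt) \;\longrightarrow\; \sum_{\Pi \in \mathfrak{P}} I^\xi_{\Pi}
\quad\text{as } \tau \to \infty,
\end{equation*}
pointwise in $\mbt$ and uniformly in $\xi \in \mathfrak{B}_p$. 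Applying the dominated convergence theorem to interchange the limit with the integration over $\mathfrak{V}$, and using that the right-hand side is independent of $\mbt$ together with $|\mathfrak{V}| = 1/m!$, we conclude
\begin{equation*}
b^\xi_{\tau,m} \;\longrightarrow\; \frac{1}{m!\,3^m}\sum_{\Pi \in \mathfrak{P}} I^\xi_{\Pi} \quad \text{as } \tau \to \infty, \text{ uniformly in } \xi \in \mathfrak{B}_p.
\end{equation*}

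It remains to identify this limit with $b^\xi_m$ as defined by the $\nu = 0$ case of \eqref{alpha_2}, i.e., $b^\xi_m = \frac{(-1)^m}{m!}\int \dd\mu\, \Theta(\xi)\,\mathcal{W}^m$. Here we use the classical Wick theorem (Proposition \ref{q_Wick_thm}): expanding $\mathcal{W}^m$ via \eqref{q_classical_interaction} produces a prefactor $(-1/3)^m$ times an integral over the spatial variables of products of six-fold monomials in $\varphi,\bar{\varphi}$, together with the kernel $\xi$ from $\Theta(\xi)$. Applying Proposition \ref{q_Wick_thm} to the resulting Gaussian integral yields a sum over the same set of admissible pairings $\mathfrak{P}$ (thanks to the constraint $\delta_\alpha \delta_\beta = -1$ in Definition \ref{q_pairing_defn}, which corresponds to the Gaussian covariance identities \eqref{q_covariance_ Wiener}), with each pair $(\alpha,\beta)$ contributing the classical Green function factor $G(\cdot;\cdot) = \mathcal{J}_e(\mby_e)$. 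After accounting for the collapsing of vertices $x_{i,r,+1} = x_{i,r,-1}$ in Definition \ref{q_colored_graph_defn}, each pairing contributes exactly $I^\xi_\Pi$ as defined in \eqref{q_I_classical_defn}. Combining the two factors of $(-1)^m$ (from \eqref{alpha_2} and from expanding $\mathcal{W}^m$) gives $b^\xi_m = \frac{1}{m!\,3^m}\sum_{\Pi \in \mathfrak{P}} I^\xi_\Pi$, which matches the identified limit and proves \eqref{q_b^xi_convergence}. No step appears to be a real obstacle: all the analytic work (the Feynman--Kac based Schatten estimates and the spectral convergence of $G_\tau$ to $G$) has already been done in Lemmas \ref{q_I_bound2_lem} and \ref{q_I_convergence_lemma}; the present lemma amounts to a clean application of dominated convergence together with bookkeeping of the Wick-expansion signs.
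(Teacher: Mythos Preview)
Your proof is correct and follows essentially the same approach as the paper: both combine the uniform bound from Lemma \ref{q_I_bound2_lem} with the pointwise convergence of Lemma \ref{q_I_convergence_lemma}, apply the dominated convergence theorem over the simplex $\mathfrak{V}$, and then identify the limit with $b^\xi_m$ via the classical Wick theorem (Proposition \ref{q_Wick_thm}). Your version is slightly more explicit about the sign bookkeeping, but the strategy is the same.
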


\begin{proof}
By Lemma \ref{q_gtau_rewrite_lem} and Proposition \ref{q_I_bound2_lem}, 
we have that $g^\xi_{\tau,m}(\mbt)$ is bounded uniformly in $\tau > 0$, $\xi \in \mathfrak{B}_p$, and $\mbt \in \mathfrak{V}$. Moreover, Lemmas \ref{q_gtau_rewrite_lem} and \ref{q_I_convergence_lemma} imply that for any $\mbt \in \mathfrak{V}$
\begin{equation}
\label{q_b^xi_convergence_1}
g^\xi_{\tau,m}(\mbt) \to \sum_{\Pi \in \mfP} I^\xi_{\Pi} \textrm{ as $\tau \to \infty$ uniformly in $\xi \in \mathfrak{B}_p\,$}.
\end{equation}
Recalling \eqref{q_explicit_b_rewrite_f1} and applying \eqref{q_Wick_bm}, as well as \eqref{q_b^xi_convergence_1}, combined with the dominated convergence theorem, we obtain that $b^\xi_{\tau,m} \to b^\xi_{\infty,m}$ as $\tau \to \infty$ uniformly in $\xi \in \mathfrak{B}_p$. The claim of the lemma follows by noting that, by Wick's theorem (Proposition \ref{q_Wick_thm}), we have $b^\xi_{\infty,m} = b^\xi_m$.
\end{proof}

\subsubsection{Proof of \eqref{Proposition_3.15_1_1}--\eqref{Proposition_3.15_1_2} for $\xi=\mathbf{1}_p$}
\label{Subsection_3.6.3}
To conclude the proof of Proposition \ref{q_expansion_nugeq0}, it remains to prove \eqref{Proposition_3.15_1_1}--\eqref{Proposition_3.15_1_2} for $\xi=\mathbf{1}_p$.
We hence consider the operator with kernel
\begin{equation}
	\label{q_identity_kernel}
\xi(x_1,\ldots,x_p;y_1,\ldots,y_p) := \prod_{j=1}^p \delta(x_j-y_j)\,.
\end{equation}
To do this, we need to introduce a slightly modified version of the graphs defined in Definition \ref{q_vertex_set_defn} above. The modification is analogous to that used in \cite[Section 4.2]{FKSS17}.
\begin{definition}
	\label{q_delta_graph_defn}
Let $m,p \in \N$ be given. We consider the same abstract vertex set $\Sigma \equiv \Sigma(m,p)$ with $6m+2p$ elements and set of matchings of $\mfP \equiv \mfP(m,p)$ as in Definition \ref{q_vertex_set_defn}.
For each $\Pi \in \mfP$, we consider a coloured multigraph $(\tilde{V},\tilde{E},\tilde{\sigma}) \equiv (\tilde{V}_{\Pi},\tilde{E}_{\Pi},\tilde{\sigma}_{\Pi})$, with $\tilde{\sigma}: \tilde{E} \to \{\pm 1\}$, defined as follows.
\begin{enumerate}
\item We say $\alpha \sim \beta$ if and only if $i_{\alpha} = i_{\beta}$ and $r_{\alpha} = r_{\beta}$. We define the set $\tilde{V} := \{[\alpha]:\alpha \in \Sigma\}$ and write $\tilde{V} = \tilde{V}_1 \sqcup \tilde{V}_2$, where 
\begin{equation*}
\tilde{V}_1 := \{(m+1,r): r \in \{1,\ldots,p\}\}
\end{equation*}
and 
\begin{equation*}
\tilde{V}_2 := \{(i,r) : i \in \{1,\ldots,m\}, r \in \{1,2,3\}\}\,.
\end{equation*}
The set $\tilde{V}$ inherits an order from the lexicographical ordering (given in Definition \ref{q_vertex_set_defn}). Namely $[\alpha] \leq [\beta]$ if $\alpha \leq \beta$. 
\item For $\Pi \in \mfP$, we note that $(\alpha,\beta)$ induces an edge $e:=\{[\alpha],[\beta]\}$ in $\tilde{E}$. Let us $\tilde{\sigma}(e):= \delta_{\beta}$. This is well-defined by construction.
\item Let $\mathrm{conn}(\tilde{E})$ denote the set of connected components of $\tilde{E}$. Note we can write $\tilde{E} = \bigsqcup_{\mcP \in \mathrm{conn}(\tilde{E})} \mcP$. We call the connected components $\mcP$ of $\tilde{E}$ paths. 
\end{enumerate}
\end{definition}

\begin{remark}
In a slight abuse of notation, we denote the equivalence relation in Definition \ref{q_colored_graph_defn} and in Definition \ref{q_delta_graph_defn} by $\sim$. From context, it will always be clear to which equivalence relation we are referring. The same holds for the order $\leq$ induced by lexicographical order on $\Sigma$.
\end{remark}
\begin{remark}
The difference between the graph structure in Definition \ref{q_delta_graph_defn} and the one in Definition \ref{q_colored_graph_defn} is that, in the former, we identify the nodes corresponding to the observable $\xi$ with kernel \eqref{q_identity_kernel}; see Figure \ref{Fig:delta_graph}.
\end{remark}

\begin{figure}[htbp]
\centering
\includegraphics[scale=0.75]{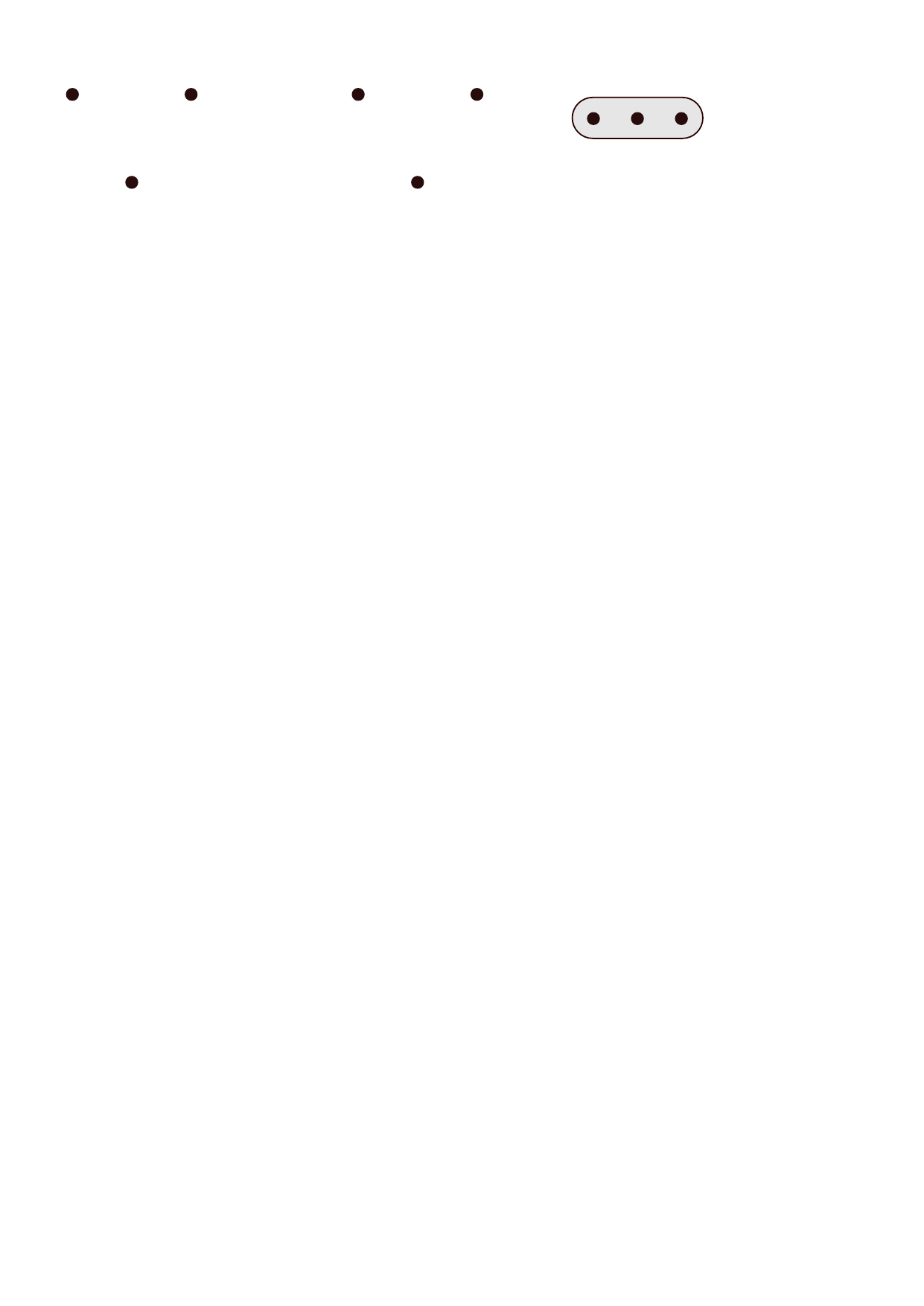}
\vspace*{-175mm}
\caption{An unpaired graph with $m=2$, $p=3$ corresponding to Definition 
 \ref{q_delta_graph_defn}.}
\label{Fig:delta_graph}
\end{figure}

By construction, every vertex in $\tilde{V}$ has degree 2. Hence all paths $\mcP \in \mathrm{conn}(\tilde{E})$ are closed.
For fixed $\Pi \in \mathfrak{P}$ and $\mbt \in \mathfrak{V}$, we define
\begin{equation}
I^\xi_{\tau,\Pi}(\mbt) 
:= \int_{\Lambda^{\tilde{V}}} \prod_{a \in \tilde{V}} \dd y_a \Biggl(\prod_{i=1}^m w(y_{i,1}-y_{i,2}) \,w(y_{i,1}-y_{i,3})\, \Biggr)\prod_{e \in \tilde{E}} \mcJ_{\tau,e}(\mby_e; \mbs)\,.
\end{equation}
As in Definition \ref{q_vertex_set_defn}, we consider the spatial labels $\mby = (y_a)_{a \in \tilde{V}}$ and time labels $\mbs = (s_a)_{a \in \tilde{V}}$, except that we now adopt $\tilde{V}$ as the vertex set and let the equivalence relation be the one from Definition \ref{q_delta_graph_defn} above. Moreover, we adopt the convention that $\mby_i := (y_a)_{a \in \tilde{V}_i}$. Given $\mcP \in \mathrm{conn}(\tilde{E})$, we denote the set of vertices of $\mcP$ by $\tilde{V}(\mcP)$, and write $\tilde{V}_i(\mcP) := \tilde{V}(\mcP) \cap \tilde{V}_i$ for $i =1,2$, analogously as in Definition \ref{Open_and_closed_paths}.
Let us also define $\mcJ_{\tau,e}$ analogously to \eqref{q_mcJ_defn}, replacing $(V,E)$ with $(\tilde{V},\tilde{E})$.
\begin{lemma}
	\label{q_delta_path_bound_lem}
Suppose that $\mcP \in \mathrm{conn}(\tilde{E})$. Then
\begin{equation}
	\label{q_delta_path_bound}
\int_{\Lambda^{\tilde{V}(\mcP)}} \prod_{a \in \tilde{V}(\mcP)} \dd y_a \prod_{e \in \mcP} \mcJ_{\tau,e}(\mby_e,\mbs) \leq C^{|\tilde{V}(\mcP)|}\,.
\end{equation}
\end{lemma}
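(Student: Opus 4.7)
The plan is to mirror almost verbatim the proof of Lemma \ref{q_closed_path_lem}, exploiting the crucial fact that in the delta-graph structure of Definition \ref{q_delta_graph_defn} every vertex has degree exactly $2$ (the $\tilde{V}_2$ vertices have two incident edges coming from the collapse of $(i,r,+1)$ and $(i,r,-1)$, and the $\tilde{V}_1$ vertices now also have two incident edges because the identification $\xi = \mathbf{1}_p$ collapses $(m+1,r,+1)$ and $(m+1,r,-1)$ together). Consequently, every connected component $\mcP \in \mathrm{conn}(\tilde{E})$ is a closed cycle, and the dichotomy is exactly the same as before: either $|\mcP|=1$ (a loop) or $|\mcP| \geq 2$.

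First, I will dispatch the loop case $|\mcP|=1$. The left-hand side of \eqref{q_delta_path_bound} reduces to $\int_\Lambda \dd y\, G_{\tau,0}(y;y) = \|G_\tau\|_{\mfS^1(\mfh)}$, which is bounded uniformly in $\tau$ by \eqref{q_norm_quantum_Green_fn}. Next, for $|\mcP|=q \geq 2$, I will write $\mcP = \{e_1,\ldots,e_q\}$ cyclically and denote by $a_j$ the unique vertex of $e_{j-1} \cap e_j$ (indices mod $q$). The induction argument leading to \eqref{q_closed_colouring}--\eqref{q_time_label_rewrite} carries over unchanged, yielding
\[
\sigma(e_j)(s_{a_{j,-}}-s_{a_{j,+}}) = \sigma(e_1)(s_{a_j}-s_{a_{j+1}})\,,
\]
with $0 \leq \sigma(e_1)(s_{a_j}-s_{a_{j+1}}) < 1$ by \eqref{q_s_variable_identity_1}--\eqref{q_s_variable_identity_2}.

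Substituting into the definition of $\mcJ_{\tau,e_j}$ and using the fact that the telescoping sum $\sum_j (s_{a_j}-s_{a_{j+1}}) = 0$ kills the exponential factors after the trace, I will expand the product as a sum over subsets $I \subset J_\mcP$ of edges contributing the $\frac{1}{\tau}S_{\tau,\cdot}$ piece, obtaining (as in \eqref{q_closed_path_bound_rewrite2})
\[
\int_{\Lambda^{\tilde{V}(\mcP)}}\prod_{a \in \tilde{V}(\mcP)}\dd y_a \prod_{e \in \mcP}\mcJ_{\tau,e}(\mby_e,\mbs) = \sum_{I \subset J_\mcP} \frac{1}{\tau^{|I|}}\, \Tr\bigl(G_\tau^{q-|I|}\bigr)\,.
\]
For $|I| \leq q-2$ I bound $\Tr(G_\tau^{q-|I|}) \leq \|G_\tau\|_{\mfS^2(\mfh)}^{q-|I|}$ by Lemma \ref{q_schatten_embedding}, and for $|I|=q-1$ I use $\tfrac{1}{\tau^{q-1}}\Tr(G_\tau) = \tfrac{1}{\tau^{q-1}}\|G_\tau\|_{\mfS^1(\mfh)}$. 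Combining both estimates and invoking the uniform bound \eqref{G_tau_bound} then gives the desired $C^{|\tilde{V}(\mcP)|}$.

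There is no genuine obstacle: the argument is structurally identical to Lemma \ref{q_closed_path_lem}, and the only bookkeeping difference is that the vertex set $\tilde{V}(\mcP)$ may include both type-$1$ and type-$2$ vertices, which does not affect any step in the proof since the coupling constant counting goes through $|\mcP|$ rather than $|\tilde{V}_2(\mcP)|$. The slight subtlety worth checking carefully is that the set $J_\mcP = \{j : \mcJ_{\tau,e_j} \neq \hat{\mcJ}_{\tau,e_j}\}$ may now coincide with $\{1,\ldots,q\}$ when all edges on the cycle have colour $+1$; in that case one simply sums over the full collection of subsets without needing the ``smallest vertex incident to a $\sigma = -1$ edge'' restriction, and the conclusion is unaffected because the $|I|=q$ term contributes $\tau^{-q}\Tr(\mathbf{1})$-type quantities that are still controlled after the telescoping cancellation of the time exponentials.
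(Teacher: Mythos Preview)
Your approach is different from the paper's, and with one correction it works. The paper does \emph{not} rerun the closed-path argument verbatim: when $\tilde V_1(\mcP)\neq\emptyset$ it cuts the cycle at the $\tilde V_1$ vertices $b_1,\dots,b_k$, treats each resulting segment $\mcP_j$ as an open path with interior in $\tilde V_2$, bounds each segment by the argument of Lemma~\ref{q_open_path_lem} (obtaining the kernel estimate \eqref{q_delta_jth_factor}), and then closes up with Cauchy--Schwarz in $y_{b_1},\dots,y_{b_k}$. Your direct route is legitimate because in the delta graph every vertex---including those in $\tilde V_1$---arises from collapsing a $(+1,-1)$ pair, so the colouring induction \eqref{q_closed_colouring} and the telescoping identity carry over unchanged. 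The payoff of your route is brevity; the paper's route is more modular, reducing to the already-established open/closed path machinery rather than reproving it.

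There is, however, a genuine gap in your last paragraph. The claim that the $|I|=q$ term ``contributes $\tau^{-q}\Tr(\mathbf 1)$-type quantities that are still controlled'' is false: $\Tr_\mfh(\mathbf 1)=\infty$, and no telescoping of time exponents rescues this. What you must argue instead is that $J_\mcP\subsetneq\{1,\dots,q\}$ always holds, exactly as in Lemma~\ref{q_closed_path_lem}. This is immediate from \eqref{q_closed_colouring}: since the cycle $a_1,\dots,a_q,a_1$ cannot be monotone, there exist $j,j'$ with $a_j<a_{j+1}$ and $a_{j'}>a_{j'+1}$, hence $\sigma(e_j)=-\sigma(e_{j'})$, so at least one edge has colour $-1$ and lies outside $J_\mcP$. (Equivalently, the smallest-vertex argument from the proof of Lemma~\ref{q_closed_path_lem} applies verbatim, since the $\tilde V_1$ vertices, having $i=m+1$, are never minimal when $\tilde V_2(\mcP)\neq\emptyset$; and if $\tilde V_2(\mcP)=\emptyset$ then all $s_a=0$ and $J_\mcP=\emptyset$ trivially.) Once $|I|\le q-1$ is secured, your bound via $\Tr(G_\tau^{q-|I|})\le\|G_\tau\|_{\mfS^1}^{\,q-|I|}$ and \eqref{G_tau_bound} goes through.
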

\begin{proof}
We follow the approach of the proof given in \cite[Lemma 4.9]{FKSS17}. We have two cases, depending on whether
$\tilde{V}(\mcP) \subset \tilde{V}_2$ or $\tilde{V}_1(\mcP) \neq \emptyset$. In the first case, all the vertices lies in $\tilde{V}_2$, so we can argue as in the proof of Lemma \ref{q_closed_path_lem}.

Let us henceforth assume that $\tilde{V}_1(\mcP) \neq \emptyset$. 
If $|\tilde{V}(\mcP)| = 1$, then $\mcP$ is a loop, and the left-hand side of \eqref{q_delta_path_bound} is $\|G_{\tau}\|_{\mfS^1}$. Therefore, we can argue as in the proof of Lemma \ref{q_closed_path_lem} to get the required bound.
	
Let us now suppose that $|\tilde{V}(\mcP)| > 1$. Since $\mcP$ is a closed path, there exist $b_1\ldots,b_k \in \tilde{V}_1(\mcP)$ such that $\mcP = \sqcup_{j=1}^k \mcP_j$, where for each $j=1,\ldots,k$, $\mcP_j = \{e_1^j,\ldots,e^j_{q_j}\}$, with $b_j \in e^j_1$, $b_{j+1} \in e^j_{q_j}$, and $e^j_{k} \cap e^j_{k+1} \in \tilde{V}_2$ for $k\in \{1,\ldots,q_{j-1}\}$. Since $\mcP$ is closed, we set $b_{k+1} := b_1$. Let us note that if $b_j$ and $b_{j+1}$ are connected by a path of length one, then we have $q_j = 1$.
Therefore, the left-hand side of \eqref{q_delta_path_bound} can be written as
\begin{equation}
	\label{q_delta_path_rewritten}
\int_{\Lambda^{k}} \dd y_{b_1} \ldots \dd y_{b_1} \prod_{j=1}^k \left[\int_{\Lambda^{\tilde{V}_2(\mcP_j)}} \prod_{a \in \tilde{V}_2(\mcP_j)} \dd y_a \prod_{e \in \mcP_j} \mcJ_{\tau,e}(\mby_e,\mbs)\right]\,.
	\end{equation}
Arguing as in the proof of Lemma \ref{q_open_path_lem} (in particular as in the proof of \eqref{q_open_rewrite3}), we have that the $j^{th}$ factor in \eqref{q_delta_path_rewritten} is less than or equal to
\begin{multline}
	\label{q_delta_jth_factor}
C^{|\tilde{V}_2(\mcP_j)|}\left(1+\|G_\tau\|_{\mfS^2(\mfh)}\right)^{|\tilde{V}_2(\mcP_j)|} \left( \|G_{\tau}(y_{b_j},\cdot)\|_{\mfh} \|G_{\tau}(y_{b_{j+1}},\cdot)\|_{\mfh} + G_{\tau}(y_{b_j};y_{b_{j+1}})\right)\,.
\end{multline}
\eqref{q_delta_path_bound} follows from applying the Cauchy-Schwarz inequality in each of the $y_{b_1},\ldots,y_{b_k}$ variables, and using \eqref{G_tau_bound}. 
\end{proof}
Let us note that Lemma \ref{q_delta_path_bound_lem} implies \eqref{Proposition_3.15_1_1} for $\xi = \mathbf{1}_p$. 

For fixed $\Pi \in \mfP$, we define
\begin{equation}
\label{I^xi_Pi_delta}
I^\xi_\Pi := \int_{\Lambda^{\tilde{V}}} \dd \mby \Biggl(\prod_{i=1}^m  w(y_{i,1}-y_{i,2}) \,w(y_{i,1}-y_{i,3})\Biggr) \prod_{e \in \tilde{E}} \mcJ_e(\mby_e)\,.
\end{equation}
Let $b^\xi_{\infty,m}$ be defined as in \eqref{q_Wick_bm}, where $I^\xi_\Pi $ is now given by \eqref{I^xi_Pi_delta} instead of by \eqref{q_I_classical_defn}. The same telescoping argument used in the proof of Lemma \ref{q_I_convergence_lemma} (adapted to the framework of the proof of Lemma \ref{q_delta_path_bound_lem}) implies that \eqref{Proposition_3.15_1_2} holds for $\xi = \mathbf{1}_p$. We omit the details.
This completes the proof of Proposition \ref{q_expansion_nugeq0}.

\section{The time-independent problem with unbounded interaction potentials. Proof of Theorems \ref{q_L1_state_convergence_thm} and \ref{delta_function_state_convergence_thm}}
\label{q_Unbounded_cases_section}

In this section, we consider $w$ as in Assumption \ref{w_assumption} above.  Our goal is to prove Theorems \ref{q_L1_state_convergence_thm} and \ref{delta_function_state_convergence_thm}, and thus complete the analysis of the time-independent problem outlined in Section \ref{Time-independent results}. 
Throughout this section, we use the notational conventions as given in the statements of Theorems \ref{q_L1_state_convergence_thm} and \ref{delta_function_state_convergence_thm} above.
Let us first note the following claim, which follows from \eqref{duality_1}--\eqref{duality_2} and Theorem \ref{q_bounded_state_convergence_thm} by duality.


\begin{lemma}
\label{q_bounded_state_convergence_lem}
Suppose that $w \in L^\infty$ is even and real-valued (as in Assumption \ref{w_assumption_bounded}). Then, for all $p \in \N^*$, we have $\rho_{\tau}(\Theta_{\tau}(\xi)) \to \rho(\Th(\xi))$ as $\tau \to \infty$ uniformly in $\xi \in \mcCp$. Here, we recall the definition  \eqref{q_mathcalC_defn} of $\mcCp$.
\end{lemma}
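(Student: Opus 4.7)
The plan is to reduce everything to the two convergence statements provided by Theorem \ref{q_bounded_state_convergence_thm}, namely $\|\gamma_{\tau,p} - \gamma_p\|_{\mfS^1(\mfhp)} \to 0$ and $\lim_{\tau \to \infty}\mcZ_\tau = z$, and to combine them via the duality identities \eqref{duality_1}--\eqref{duality_2}. Since $\mcCp = \mfBp \cup \mathbf{1}_p$ by \eqref{q_mathcalC_defn}, it is natural to split the supremum in $\xi$ into the two cases $\xi \in \mfBp$ and $\xi = \mathbf{1}_p$, and to bound each uniformly in $\xi$.

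First, for $\xi \in \mfBp$, we use \eqref{duality_1}--\eqref{duality_2} to write
\begin{equation*}
\rho_\tau(\Th_\tau(\xi)) - \rho(\Th(\xi)) = \Tr\bigl((\gamma_{\tau,p} - \gamma_p)\,\xi\bigr)\,.
\end{equation*}
By H\"older's inequality for Schatten spaces (Proposition \ref{q_schatten_holder_inequality}) with $p = 1$ and $q = \infty$, combined with Lemma \ref{q_schatten_embedding}, we have
\begin{equation*}
\bigl|\Tr\bigl((\gamma_{\tau,p} - \gamma_p)\,\xi\bigr)\bigr| \leq \|\gamma_{\tau,p} - \gamma_p\|_{\mfS^1(\mfhp)}\,\|\xi\|_{\mfS^\infty(\mfhp)} \leq \|\gamma_{\tau,p} - \gamma_p\|_{\mfS^1(\mfhp)}\,,
\end{equation*}
using $\|\xi\|_{\mfS^\infty(\mfhp)} \leq \|\xi\|_{\mfS^2(\mfhp)} \leq 1$. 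Applying \eqref{correlation_function_convergence_bounded} yields convergence uniform in $\xi \in \mfBp$.

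For the case $\xi = \mathbf{1}_p$, the previous argument is no longer applicable since $\mathbf{1}_p \notin \mfS^\infty(\mfhp)$. Instead, we observe that \eqref{duality_1}--\eqref{duality_2} with $\xi = \mathbf{1}_p$ reduce to
\begin{equation*}
\rho_\tau(\Th_\tau(\mathbf{1}_p)) = \Tr_{\mfhp} \gamma_{\tau,p}\,,\qquad \rho(\Th(\mathbf{1}_p)) = \Tr_{\mfhp} \gamma_p\,,
\end{equation*}
so it suffices to show $\Tr_{\mfhp} \gamma_{\tau,p} \to \Tr_{\mfhp} \gamma_p$. This is precisely the convergence of traces that appears already in the last step of the proof of Theorem \ref{q_bounded_state_convergence_thm} (as an ingredient to upgrade the $\mfS^2$ convergence to $\mfS^1$ convergence); it follows by taking $p = 0$ in \eqref{partition_function_convergence_bounded} together with the explicit expansions in Lemma \ref{q_quantum_duhamel_expansion_lem} and Lemma \ref{q_classical_duhamel_expansion_lem}, or alternatively by applying \eqref{explicit_convergence_RS22_2} with the observation that $\mathbf{1}_p \in \mcCp$ is allowed in \eqref{duality_3}. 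No step presents a genuine obstacle: the only subtlety is recognising that the uniformity in $\xi \in \mcCp$ must be established case by case because $\mathbf{1}_p$ is unbounded in Hilbert--Schmidt norm, and hence the $\mfS^1$ convergence of Theorem \ref{q_bounded_state_convergence_thm} alone does not immediately give it.
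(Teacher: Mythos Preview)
Your proof reaches the correct conclusion by the same duality route the paper indicates, but it contains one incorrect assertion and an unnecessary case split. You write that ``$\mathbf{1}_p \notin \mfS^\infty(\mfhp)$'', but this is false: the identity operator is bounded with $\|\mathbf{1}_p\|_{\mfS^\infty(\mfhp)} = 1$. What is true is that $\mathbf{1}_p \notin \mfS^2(\mfhp)$, so the chain $\|\xi\|_{\mfS^\infty} \leq \|\xi\|_{\mfS^2} \leq 1$ breaks. However, your H\"older estimate only needs $\|\xi\|_{\mfS^\infty(\mfhp)} \leq 1$, and every $\xi \in \mcCp$ satisfies this directly (for $\xi \in \mfBp$ via Lemma~\ref{q_schatten_embedding}, and for $\xi = \mathbf{1}_p$ trivially). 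Hence your first argument already covers all of $\mcCp$ uniformly, and the separate treatment of $\xi = \mathbf{1}_p$ is superfluous. Your alternative argument for $\xi = \mathbf{1}_p$ via $\Tr\gamma_{\tau,p} \to \Tr\gamma_p$ is also valid, so the proof stands as written; it is simply longer than needed and rests on a mistaken justification for the case split.
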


We first note an approximation result.

\begin{lemma}
\label{q_unbounded_time_ind_lemma}
The following claims hold.
\begin{itemize}
\item[(i)]
Let $w$ be as in Assumption \ref{w_assumption} (i) and let $w^\eps \in L^\infty$ be a sequence of interaction potentials as in Assumption \ref{w_assumption_bounded} satisfying $w^{\varepsilon} \rightarrow w$ in $L^1$ as $\varepsilon \to 0$. Then, there is a sequence $(\varepsilon_\tau)$ tending to $0$ as $\tau \rightarrow \infty$ such that for any $p \in \N^*$
\begin{equation}
\label{q_unbounded_time_ind_equation}
\lim_{\tau \rightarrow \infty} \rt^{\varepsilon_\tau}(\Th_\tau(\xi)) = \rho(\Th(\xi))
\end{equation}
uniformly in $\xi \in \mathcal{C}_p$ with $\mathcal{C}_p$ defined as in \eqref{q_mathcalC_defn}.
\item[(ii)] 
Let $w$ be as in Assumption \ref{w_assumption} (ii). Let $w^{\varepsilon}$ be defined as in \eqref{w_epsilon_local_problem_definition} above. Then, there is a sequence $(\varepsilon_\tau)$ tending to $0$ as $\tau \rightarrow \infty$ such that for any $p \in \N^*$, \eqref{q_unbounded_time_ind_equation} holds (with notational conventions as in Theorem \ref{delta_function_state_convergence_thm}).
\end{itemize}
\end{lemma}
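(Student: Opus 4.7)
The plan is a diagonalisation argument following the pattern of \cite[Section 4]{RS22} and \cite[Section 5]{FKSS18}. Fix $p \in \N^*$. For each fixed $\varepsilon > 0$, Lemma \ref{q_bounded_state_convergence_lem} applied to the bounded potential $w^{\varepsilon}$ yields
\begin{equation*}
\lim_{\tau \to \infty} \sup_{\xi \in \mcCp} \bigl| \rt^{\varepsilon}(\Th_{\tau}(\xi)) - \rho^{\varepsilon}(\Th(\xi)) \bigr| = 0\,.
\end{equation*}
Hence the real content lies in establishing the \emph{classical} convergence
\begin{equation}
\label{plan_classical_convergence}
\lim_{\varepsilon \to 0} \sup_{\xi \in \mcCp} \bigl| \rho^{\varepsilon}(\Th(\xi)) - \rho(\Th(\xi)) \bigr| = 0\,,
\end{equation}
after which a standard diagonal extraction produces a sequence $(\varepsilon_\tau)$ with $\varepsilon_\tau \to 0$ such that the quantum-to-classical error at interaction $w^{\varepsilon_\tau}$ and the classical $\varepsilon_\tau$-approximation error are each smaller than $1/\tau$ for all $p \in \{1,\dots,\tau\}$. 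Since each $\mcCp$ is countably indexed, this delivers \eqref{q_unbounded_time_ind_equation} for every fixed $p \in \N^*$ simultaneously.

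To prove \eqref{plan_classical_convergence}, I would use \eqref{q_classical_state_rewrite_1} to reduce to showing convergence of the numerator (the denominator being the special case $p=0$, $\xi = \mathbf{1}$). Since Lemma \ref{q_random_var_estimate} combined with $\mcN \leq K$ on the support of $f$ (Assumption \ref{f_assumption}) gives $|\Th(\xi) f(\mcN)| \leq K^{p} \|\xi\|$ uniformly in $\xi \in \mcCp$, the task reduces to
\begin{equation*}
\bigl\|\left( \e^{-\mcW^{\varepsilon}} - \e^{-\mcW} \right) f(\mcN) \bigr\|_{L^{1}(\dd \mu)} \xrightarrow{\varepsilon \to 0} 0\,.
\end{equation*}
This would be obtained from two inputs: a $\mu$-almost sure convergence $\mcW^{\varepsilon} \to \mcW$, and an $\varepsilon$-uniform integrable majorant of $\e^{|\mcW^{\varepsilon}|} f(\mcN)$. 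Combined with the elementary inequality $|\e^{-a}-\e^{-b}| \leq (|a|+|b|)\e^{\max(|a|,|b|)}$, the dominated convergence theorem then closes the argument.

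For (i), Lemma \ref{Multlinear_estimates_1} supplies both the $\mu$-a.s. convergence $\mcW^{\varepsilon} \to \mcW$ (via the triangle inequality and $\|w^{\varepsilon}-w\|_{L^{1}} \to 0$, using that $\vph \in L^{6}$ $\mu$-a.s.) and the $\varepsilon$-uniform bound $|\mcW^{\varepsilon}| \leq C \|\vph\|_{L^{6}}^{6}$ with $C := \tfrac{1}{3}\sup_{\varepsilon}\|w^{\varepsilon}\|_{L^{1}}^{2} < \infty$. For (ii), with $w^{\varepsilon}$ as in \eqref{w_epsilon_local_problem_definition}, a change of variables gives $\|w^{\varepsilon}\|_{L^{1}} = \|\Phi\|_{L^{1}}$ uniformly in $\varepsilon$, and the $\mu$-a.s. convergence $\mcW^{\varepsilon} \to -\tfrac{c^{2}}{3}\int|\vph|^{6} = \mcW$ follows by writing $\mcW^{\varepsilon} = -\tfrac{1}{3}\int |\vph|^{2}\,(w^{\varepsilon}\!\ast\!|\vph|^{2})^{2}\,\dd x$, invoking the approximate-identity property $w^{\varepsilon} \ast |\vph|^{2} \to c|\vph|^{2}$ in $L^{3}$ (which holds $\mu$-a.s. since $|\vph|^{2} \in L^{3}$ $\mu$-a.s.), and applying H\"older's inequality. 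The required integrable majorant $\e^{C\|\vph\|_{L^{6}}^{6}} f(\mcN) \in L^{1}(\dd\mu)$ is furnished by Bourgain's lemma \cite[Lemma 3.10]{Bou94}, used already in Proposition \ref{Cauchy_problem_2}(i).

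The principal obstacle is confirming that Bourgain's integrability bound holds \emph{uniformly} in the family $(w^{\varepsilon})$ rather than only for a fixed potential. An inspection of \cite[Lemma 3.10]{Bou94} (see also the summary in \cite[Appendix A]{RS22}) shows that the dependence on the potential enters only through $\|w^{\varepsilon}\|_{L^{1}}$, which is uniformly bounded in both (i) and (ii); consequently the cut-off parameter $K$ in Assumption \ref{f_assumption} can be fixed once and for all, providing the single $\varepsilon$-independent dominator needed to close the dominated-convergence argument.
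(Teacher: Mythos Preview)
Your argument is correct and, for part~(i), essentially identical to the paper's: the same diagonalisation skeleton, the same reduction to classical convergence via Lemma~\ref{q_bounded_state_convergence_lem}, the same use of Lemma~\ref{Multlinear_estimates_1} for both the $\mu$-a.s.\ convergence $\mcW^{\varepsilon}\to\mcW$ and the uniform $L^{6}$ bound feeding into Bourgain's integrability lemma.

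For part~(ii) you take a genuinely different route. The paper works on the Fourier side: it expands $\mcW^{\varepsilon}$ as a sixfold sum over frequencies, uses the pointwise limit $\widehat{w}^{\varepsilon}(k)\to c$ together with the uniform bound $\sup_{\varepsilon}\|\widehat{w}^{\varepsilon}\|_{\ell^{\infty}}<\infty$, and constructs a summable majorant $\mathcal{J}(\varphi)\lesssim\|\varphi\|_{H^{1/3}}^{6}$ to apply dominated convergence in the frequency sum. You instead stay in physical space, exploiting the convolution form $\mcW^{\varepsilon}=-\tfrac13\int|\varphi|^{2}(w^{\varepsilon}\!*\!|\varphi|^{2})^{2}$ and the approximate-identity convergence $w^{\varepsilon}\!*\!|\varphi|^{2}\to c|\varphi|^{2}$ in $L^{3}$ (valid because $|\varphi|^{2}\in L^{3}$ $\mu$-a.s.\ and $\Phi\in L^{1}$ has integral $c$, no sign condition needed). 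Combined with the uniform bound $\|w^{\varepsilon}\|_{L^{1}}=\|\Phi\|_{L^{1}(\R)}$ for small $\varepsilon$ and H\"older's inequality with exponents $(3,3,3)$, this gives both the pointwise convergence and the majorant directly. Your approach is shorter and more elementary; the paper's Fourier argument, on the other hand, makes the mechanism of convergence (coefficient-wise) fully explicit and parallels the later time-dependent analysis in Lemma~\ref{q_Hartree_approximation_lemma_delta}, where the same Fourier manipulation reappears. Your remark that the admissible cut-off $K$ depends on $w^{\varepsilon}$ only through $\sup_{\varepsilon}\|w^{\varepsilon}\|_{L^{1}}$, and can therefore be fixed uniformly, is a useful point that the paper leaves implicit.
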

\begin{proof} (i) Let us first consider $w$ as in Assumption \ref{w_assumption} (i) (and with the notational conventions as in Theorem \ref{q_L1_state_convergence_thm}). 
By a diagonal argument, it suffices to prove that for fixed $\varepsilon>0$, we have
\begin{equation}
\label{q_unbounded_state_conv_1}
\lim_{\tinfty} \rt^{\varepsilon}(\Th_\tau(\xi)) = \rho^\eps(\Th(\xi))
\end{equation}
uniformly in $\xi \in \mc{C}_p$, and that
\begin{equation}
\label{q_unbounded_state_conv_2}
\lim_{\varepsilon \to 0} \rho^{\varepsilon} (\Th(\xi)) = \rho(\Theta(\xi))\,,
\end{equation}
uniformly in $\xi \in \mc{C}_p$. The convergence \eqref{q_unbounded_state_conv_1} follows by noting that $w^{\varepsilon}$ satisfies Assumption \ref{w_assumption_bounded} for each fixed $\varepsilon>0$ and by using Lemma \ref{q_bounded_state_convergence_lem}. 

Let us now prove the convergence \eqref{q_unbounded_state_conv_2}. We recall \eqref{q_classical_interaction} (and the corresponding analogue for interaction $w^{\varepsilon}$). 
Let us first show that 
\begin{equation}
\label{(4.5)_claim}
\lim_{\varepsilon \rightarrow 0} \mcW^\eps = \mcW\,,\quad \text{$\mu$-almost surely\,.}
\end{equation}
We have
\begin{multline}
\label{q_mcWeps_to_mcW1}
3\left|\mcW - \mcW^\eps\right| \leq \int \dd x \, \dd y \, \dd z \, |\vph(x)|^2\,|\vph(y)|^2\,|\vph(z)|^2 \\
\times |w(x-y)\,w(x-z) - w^\eps(x-y)\,w^\eps(x-z)|\,.
\end{multline}
By adding and subtracting $w^\eps(x-y)\,w(x-z)$, and by applying the triangle inequality, we see that the term inside the absolute value on the right-hand side of \eqref{q_mcWeps_to_mcW1} is
\begin{equation}
\label{q_mcWeps_to_mcW2}\leq |w(x-y)-w^\eps(x-y)|\,|w(x-z)|
+ |w^\eps(x-y)|\,|w(x-z) -w^\eps(x-z)|\,.
\end{equation}
To bound \eqref{q_mcWeps_to_mcW1}, we apply Lemma \ref{Multlinear_estimates_1} separately to each of the terms obtained from \eqref{q_mcWeps_to_mcW2} and deduce that
\begin{equation}
\label{(4.5)_proof}
\left|\mcW - \mcW^\eps\right|  \lesssim \|w-w^\eps\|_{L^1} \,(\|w\|_{L^1}+\|w^{\eps}\|_{L^1}  \,)\|\vph\|_{L^6}^6\,,
\end{equation}
We hence obtain \eqref{(4.5)_claim} from \eqref{(4.5)_proof}, since $w^{\varepsilon} \rightarrow w$ in $L^1$ and $\varphi \in L^6$ $\mu$-almost surely.

By using Lemma \ref{Multlinear_estimates_1} and the fact that $w^{\varepsilon} \rightarrow w$ in $L^1$ as $\varepsilon \to 0$,
it follows that there exists $c_0>0$ (depending on $w$) such that for $\varepsilon>0$ sufficiently small, we have
\begin{equation}
\label{(4.6)_bound}
\bigl|\e^{-\mathcal{W}^{\varepsilon}}-\e^{-\mathcal{W}}\bigr| \leq 2 \e^{c_0 \|\varphi\|_{L^6}^6}\,.
\end{equation}
By Proposition \ref{Cauchy_problem_1} with local interaction (as in \cite{Bou94}) and Assumption \ref{f_assumption} (with $K$ sufficiently small), we know that
\begin{equation}
\label{(4.7)_bound}
\e^{c_0 \|\varphi\|_{L^6}^6}\, f^{\frac{1}{2}}(\mathcal{N}) \in L^1(\dd \mu)\,.
\end{equation}
By Lemma \ref{q_random_var_estimate} and Assumption \ref{f_assumption}, we have 
\begin{equation}
\label{(4.8)_bound}
\Theta(\xi)\,f^{\frac{1}{2}}(\mathcal{N}) \in L^{\infty}(\dd \mu)\,.
\end{equation}
Combining \eqref{(4.5)_claim} and \eqref{(4.6)_bound}--\eqref{(4.8)_bound}, it follows that 
\begin{equation}
\label{(4.9)_claim}
\lim_{\varepsilon \rightarrow 0} \int \dd \mu\, \bigl|\e^{-\mathcal{W}^{\varepsilon}}-\e^{-\mathcal{W}} \bigr|\,|\Theta(\xi)|\,f(\mathcal{N})=0\,.
\end{equation}
By analogous arguments, we obtain 
\begin{equation}
\label{(4.10)_claim}
\lim_{\varepsilon \rightarrow 0} z^{\varepsilon}=z\,.
\end{equation}
We write 
\begin{equation}
\label{(4.10)_claim_B}
\rho^{\varepsilon}(\Theta(\xi))-\rho(\Theta(\xi))=\frac{1}{z}\,\int \dd \mu \biggl(\frac{z}{z^{\varepsilon}}\,\e^{-\mathcal{W}^{\varepsilon}}-\e^{-\mathcal{W}}\biggr)\,\Theta(\xi)\,f(\mathcal{N})\,,
\end{equation}
and deduce the claim of the lemma from \eqref{(4.9)_claim}--\eqref{(4.10)_claim_B}. For the last step, we also used \eqref{(4.7)_bound} and  $\e^{-\mathcal{W}^{\varepsilon}} \leq  \e^{c_0 \|\varphi\|_{L^6}^6}$, which is proved as \eqref{(4.6)_bound} above. The claim for  $w$ as in Assumption \ref{w_assumption} (i) now follows

(ii) Let us now consider $w$ as in Assumption \ref{w_assumption} (ii) (and notational conventions as in Theorem \ref{delta_function_state_convergence_thm}). Once we show that \eqref{(4.5)_claim} holds, the proof follows as above. We start from \eqref{q_classical_interaction} (with $w$ replaced by $w^{\varepsilon}$) and use Parseval's identity to write

\begin{multline}
\label{W^epsilon_Fourier}
\mathcal{W}^{\varepsilon} =-\frac{1}{3}\, \sum_{k_1,\ldots,k_6} \widehat{w}^{\varepsilon}(k_1-k_2)\, \widehat{w}^{\varepsilon}(k_3-k_4)\,\widehat{\varphi}(k_1)\,\overline{\widehat{\varphi}(k_2)}\,\widehat{\varphi}(k_3)\,\overline{\widehat{\varphi}(k_4)}\,\widehat{\varphi}(k_5)\,\overline{\widehat{\varphi}(k_6)}
\\
\times
\delta(k_1-k_2+k_3-k_4+k_5-k_6)\,.
\end{multline}
Recalling Assumption \ref{w_assumption} (ii) and \eqref{w_epsilon_local_problem_definition}, it follows that
\begin{equation}
\label{w^epsilon_properties}
\lim_{\varepsilon \rightarrow 0} \widehat{w}^{\varepsilon}(k)=c=\widehat{w}(k)\,\,\, \text{for all } k \in \Z\,,\qquad  \sup_{\varepsilon>0} \|\widehat{w}^{\epsilon}\|_{\ell^{\infty}} \lesssim 1\,.
\end{equation}
Let us define $F:\T \rightarrow \C$ to be the inverse Fourier transform of $|\widehat{\varphi}|$. We then use Parseval's identity, and Sobolev embedding to deduce that
\begin{multline}
\label{J}
\mathcal{J} \equiv \mathcal{J}(\varphi):=\sum_{k_1,\ldots,k_6} |\widehat{\varphi}(k_1)|\,|\widehat{\varphi}(k_2)|\,|\widehat{\varphi}(k_3)|\,|\widehat{\varphi}(k_4)|\,|\widehat{\varphi}(k_5)|\,|\widehat{\varphi}(k_6)|
\\
\times
\delta(k_1-k_2+k_3-k_4+k_5-k_6)= \|F\|_{L^6}^6 \lesssim \|F\|_{H^{\frac{1}{3}}}^6=\|\varphi\|_{H^{\frac{1}{3}}}^6\,.
\end{multline}
For the last equality in \eqref{J}, we used $\|F\|_{H^{\frac{1}{3}}}=\|\varphi\|_{H^{\frac{1}{3}}}$ which follows from the fact that the $H^{\frac{1}{3}}$ norm is invariant under taking absolute values in the Fourier transform. Using \eqref{W^epsilon_Fourier}--\eqref{J} and the dominated convergence theorem, we deduce \eqref{(4.5)_claim} for $w$ as in Assumption \ref{w_assumption} (ii). The claim now follows.
\end{proof}

We now have all the tools necessary to prove Theorems \ref{q_L1_state_convergence_thm} and  \ref{delta_function_state_convergence_thm}.
\begin{proof}[Proof of Theorem \ref{q_L1_state_convergence_thm}]
We deduce \eqref{correlation_function_convergence_L1} (with the same subsequence $(\eps_\tau)$) from Lemma \ref{q_unbounded_time_ind_lemma} (i) by a duality argument based on \eqref{duality_1}--\eqref{duality_2} as in the proof of Theorem  \ref{q_bounded_state_convergence_thm}.
The proof of \eqref{partition_function_convergence_L1} is similar. Let us first note that by \eqref{partition_function_convergence_bounded}, we have that for fixed $\varepsilon>0$, 
\begin{equation}
\label{partition_function_convergence_L1_proof_1}
\lim_{\tau \rightarrow \infty} \mathcal{Z}^{\varepsilon}_{\tau}=z^{\varepsilon}\,.
\end{equation}
The proof of \eqref{partition_function_convergence_L1} proceeds analogously as the proof of \eqref{correlation_function_convergence_L1}. The only difference is that instead of \eqref{q_unbounded_state_conv_1} and \eqref{q_unbounded_state_conv_2}, we use \eqref{partition_function_convergence_L1_proof_1} and \eqref{(4.10)_claim} respectively.
\end{proof}

\begin{proof}[Proof of Theorem \ref{delta_function_state_convergence_thm}]
The proof follows from Lemma \ref{q_unbounded_time_ind_lemma} (ii) by arguing analogously as in the proof of Theorem \ref{q_L1_state_convergence_thm}. In particular, we obtain \eqref{correlation_function_convergence_delta} and \eqref{partition_function_convergence_delta} by arguing as for \eqref{correlation_function_convergence_L1} and \eqref{partition_function_convergence_L1}  respectively.
\end{proof}

\section{The time-dependent problem. Proofs of Theorems \ref{q_bounded_time_thm}--\ref{q_delta_time_thm}}
\label{q_Time_dependent_case}
\subsection{Bounded interaction potentials. Proof of Theorem \ref{q_bounded_time_thm}}
Let us consider $w$ as in Assumption \ref{w_assumption_bounded}.
To prove Theorem \ref{q_bounded_time_thm}, we note the following analogues of the Schwinger-Dyson expansion results from \cite[Sections 3.2 and 3.3]{FKSS18}.
\begin{lemma}[Quantum Schwinger-Dyson Expansion]
	\label{q_SD_quantum}
	Let $\xi \in \mcL(\mathfrak{h}^{(p)})$, $\mathcal{K} >0$, $\varepsilon > 0$, and $t \in \mathbb{R}$ be given. Then there exists $L(\mathcal{K},\varepsilon,t,\|\xi\|,p) \in \mathbb{N}$, $(e^l)_{l=0}^L$, where $e^l=e^l(x_i,t) \in \mcL(\mfh^{(l)})$, and $\tau_0 = \tau_0 (\mathcal{K},\varepsilon,t,\|\xi\|) > 0$ such that
	\begin{equation*}
		\left\| \left(\Psi_\tau^t\Th_\tau(\xi) - \sum^L_{l=0}\Th_\tau(e^l)\right)\bigg|_{\mfh^{(\leq \mathcal{K}\tau)}} \right\| < \varepsilon
	\end{equation*}
	for all $\tau > \tau_0$.
\end{lemma}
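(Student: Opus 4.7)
My plan follows the Schwinger--Dyson expansion strategy of \cite[Lemma 3.5]{FKSS18}, adapted from the cubic (two-body) to the present quintic (three-body) setting. First, I would pass to the interaction picture. Set $\widetilde{\Psi}^t_\tau := \Psi^{-t,0}_\tau \circ \Psi^t_\tau$, where $\Psi^{t,0}_\tau(\mcA) := \e^{\mathrm{i}t\tau \Hf} \mcA\, \e^{-\mathrm{i}t\tau \Hf}$ denotes the free evolution. By Lemma \ref{q_conjugation_creatann}, $\Psi^{t,0}_\tau \Th_\tau(\xi) = \Th_\tau(\xi^t)$, with $\xi^t$ again a $p$-body operator and $\|\xi^t\| = \|\xi\|$. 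The Heisenberg equation then yields
\begin{equation*}
\frac{\mathrm{d}}{\mathrm{d}t}\,\widetilde{\Psi}^t_\tau \mcA = \mathrm{i}\tau\,\bigl[\widetilde{\mcW}_\tau(t),\,\widetilde{\Psi}^t_\tau \mcA\bigr]\,,
\end{equation*}
where $\widetilde{\mcW}_\tau(t) := \Psi^{-t,0}_\tau \mcWt$. By Lemma \ref{q_conjugated_mcWt}, $\widetilde{\mcW}_\tau(t) = \tfrac{1}{3}\Th_\tau(W_t)$ is the lift of a bounded three-body operator satisfying $\|W_t\| \lesssim \|w\|_{L^\infty}^2$ uniformly in $t$.

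Iterating the resulting Duhamel identity $N$ times produces the Schwinger--Dyson expansion
\begin{multline*}
\widetilde{\Psi}^t_\tau\Th_\tau(\xi) = \sum_{n=0}^{N-1}(\mathrm{i}\tau)^n \int_{\Delta^t_n} \bigl[\widetilde{\mcW}_\tau(s_1),\,\bigl[\ldots,\,\bigl[\widetilde{\mcW}_\tau(s_n),\,\Th_\tau(\xi)\bigr]\bigr]\bigr]\,\mathrm{d}\mbs \\ + R_N(t)\,,
\end{multline*}
with $\Delta^t_n := \{0 < s_n < \cdots < s_1 < t\}$ and $R_N$ the analogous expression in which $\Th_\tau(\xi)$ is replaced, in the innermost slot, by $\widetilde{\Psi}^{s_N}_\tau\Th_\tau(\xi)$. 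Expanding each nested commutator $\mathrm{i}\tau[\Th_\tau(W_{s_j}),\Th_\tau(\zeta)]$ via the canonical commutation relations \eqref{CCR_formal} writes it as a finite sum of lifts of body number between $\max(3,l_\zeta)$ and $l_\zeta + 2$ (where $l_\zeta$ denotes the body number of $\zeta$): a single Wick contraction furnishes the leading lift of body $l_\zeta + 2$, while $k \geq 2$ contractions yield lifts of lower body carrying suppressing factors $\tau^{-(k-1)}$. Applying this recursively from $\Th_\tau(\xi)$, each $n$-th explicit term becomes a finite sum of $\Th_\tau(\eta^{(n,l)}_{\mbs})$ with $l \leq p + 2n$. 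Integrating over $\Delta^t_n$ and collecting by body number produces the desired representation $\sum_{l=0}^{L}\Th_\tau(e^l)$ with $L := p + 2(N-1)$, taking the leading-in-$\tau$ contributions as the definitions of the $e^l$.

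The main analytic step is to estimate $R_N$ and the sub-leading (in $\tau^{-1}$) corrections uniformly on $\mfh^{(\leq \mathcal{K}\tau)}$. The naive bound $\|[A,B]\| \leq 2\|A\|\|B\|$ is insufficient because it produces an unacceptable growth $\tau^N$. To circumvent this, I would exploit the fact that every application of $\mathrm{i}\tau\,[\Th_\tau(W_s),\,\cdot\,]$ to a lift necessarily generates at least one Wick contraction, supplying a compensating $\tau^{-1}$ that cancels the explicit $\tau$ from the Duhamel expansion. Combined with Lemma \ref{q_quantum_lift_estimate}, which gives $\|\Th_\tau(\eta)\big|_{\mfh^{(\leq \mathcal{K}\tau)}}\| \leq \mathcal{K}^l\|\eta\|$, this yields a $\tau$-independent bound of the form $(C(p,\mathcal{K},\|w\|_{L^\infty})\,t)^n/n!$ on the $n$-th explicit term. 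Since $\mcNt$ commutes with both $\Hf$ and $\mcWt$, the evolutions $\Psi^t_\tau$ and $\widetilde{\Psi}^t_\tau$ preserve $\mfh^{(\leq \mathcal{K}\tau)}$ and act isometrically on each fixed particle sector. Conjugating $\widetilde{\Psi}^{s_N}_\tau$ past the outer commutators in $R_N$ as a $*$-automorphism and re-running the same contraction count (restricted to a sector on which the conjugated interaction remains bounded by $\mathcal{K}^3\|w\|_{L^\infty}^2/3$) produces the analogous bound $\|R_N|_{\mfh^{(\leq \mathcal{K}\tau)}}\| \lesssim (Ct)^N/N!$. Choosing $N = N(\mathcal{K},\varepsilon,t,\|\xi\|,p)$ large makes $R_N$ smaller than $\varepsilon/2$, while choosing $\tau_0$ large absorbs the sub-leading $1/\tau$ corrections in the explicit terms into the remaining $\varepsilon/2$. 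Returning to the Heisenberg picture via $\Psi^t_\tau = \Psi^{t,0}_\tau \circ \widetilde{\Psi}^t_\tau$ and observing that $\Psi^{t,0}_\tau$ preserves lifts and body numbers concludes the proof. I expect the main obstacle to be the combinatorial bookkeeping of the Wick contractions generated by the three-body interaction: verifying that the associated combinatorial factors are dominated by the simplex factor $1/n!$ is the principal technical modification relative to the cubic case of \cite[Lemma 3.5]{FKSS18}.
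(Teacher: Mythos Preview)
Your overall approach matches the paper's: both follow the Schwinger--Dyson strategy of \cite{FKSS18} (the paper cites Lemma 3.9 there rather than 3.5), and the only substantive change from the cubic case is that the three-body interaction carries $\|w\|_{L^\infty}^2$ and raises the body number by two per commutator step.

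There is, however, a gap in your treatment of the remainder. Conjugating $\widetilde{\Psi}^{s_N}_\tau$ past the outer commutators replaces each $\widetilde{\mcW}_\tau(s_j)$ by its image under the \emph{interacting} evolution $(\widetilde{\Psi}^{s_N}_\tau)^{-1}$; the resulting operators are no longer second-quantised lifts of few-body operators, so the Wick contraction mechanism that supplies the compensating $\tau^{-1}$ is unavailable and you are left with the naive bound that grows like $\tau^N$. The cleaner route (implicit in the paper's citation to \cite{FKSS18}) is: for fixed $n$ and $\tau$ the Dyson series converges absolutely on $\mfh^{(n)}$ by the crude sectorial bound, so the remainder is simply the tail $\sum_{j\ge N} A_j$, and one then applies the uniform-in-$\tau$ term-wise estimate (obtained via contractions on the explicit terms) to this tail.

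A second point: the term-wise bound is not $(Ct)^j/j!$ but geometric. The combinatorial factor from the iterated contractions is of order $(p+j)^j$, which after Stirling essentially cancels the simplex factor $1/j!$; the paper records the resulting bound as $\mathrm{e}^p\mathcal{K}^p(2\mathrm{e}\mathcal{K}\|w\|_{L^\infty}^2|t|)^j\|\xi\|$. Hence the expansion converges only for $|t|$ below a fixed threshold, and the statement for general $t$ requires an additional iteration step (splitting $[0,t]$ into short sub-intervals and feeding the output operators $e^l$ back into the small-time result). This is part of the argument in \cite[Lemma 3.9]{FKSS18} but is absent from your sketch; your final sentence flags the combinatorics as the main obstacle, but the actual resolution is not that the factors are dominated by $1/n!$ --- they are not --- but that one iterates the short-time expansion.
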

\begin{proof}
	We use the same proof and notation as \cite[Lemma 3.9]{FKSS18}. We begin by noting that all of the calculations up to \cite[(3.44)]{FKSS18} hold since they do not use the explicit form of $\mcW$. Instead of \cite[3.46]{FKSS17}, the same argument yields that the norm of $A^t_{\tau,\infty}(\xi)$ is bounded by
	\begin{equation*}
		\frac{|t|^j}{j!} (p+j)^j 2^j \left(\frac{n}{\tau}\right)^{p+j} \|w\|_{L^\infty}^{2j} \|\xi\| \leq \e^p \mathcal{K}^p\left( 2\e\mathcal{K}\|w\|_{L^\infty}^2|t|\right)^j \|\xi\|,
	\end{equation*}
as in \cite[(3.47)]{FKSS18}. An analogous argument to \cite[Lemma 3.9]{FKSS18} yields the result. We refer the reader to \cite[Lemma 3.9]{FKSS18} for the precise definitions and arguments.
\end{proof}
\begin{lemma}[Classical Schwinger-Dyson Expansion]
	\label{q_SD_classical}
Let $\xi \in \mcL(\mfh^{(p)})$, $\mathcal{K} > 0$, $\varepsilon > 0$, and $t \in \mathbb{R}$ be given. Then for $L(\mthc{K},\varepsilon, t,\|\xi\|,p) \in \mathbb{\N}$ and $\tau_0(\mthc{K},\varepsilon, t,\|\xi\|) > 0$ chosen possibly larger than in Lemma \ref{q_SD_quantum} and the same choice of $e^l \in \mcL(\mfh^{(l)})$ as in Lemma \ref{q_SD_quantum} we have
\begin{equation*}
\left| \left( \Psi^t \Th(\xi) - \sum^L_{l=0}\Th(e^l) \right) \chi_{\{\mthc{N} \leq \mthc{K}\}} \right| \leq \varepsilon
\end{equation*}
for all $\tau \geq \tau_0$.
\end{lemma}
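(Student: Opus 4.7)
The plan is to iterate Duhamel's formula on the classical side, in parallel with the quantum Schwinger-Dyson expansion of Lemma \ref{q_SD_quantum}. Using the decomposition $H = H_0 + \mathcal{W}$ from \eqref{q_classical_interacting_Hamn} and the identity $\partial_t\Psi^t X = \{H,\Psi^t X\}$ for the classical flow, I would write
\begin{equation*}
\Psi^t\Theta(\xi) \;=\; \Psi^t_0\Theta(\xi) + \int_0^t \dd s\;\Psi^{t-s}_0\bigl\{\mathcal{W},\,\Psi^s\Theta(\xi)\bigr\},
\end{equation*}
where $\Psi^t_0$ denotes the free classical flow generated by $H_0$. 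Iterating this identity $L$ times produces
\begin{equation*}
\Psi^t\Theta(\xi) \;=\; \sum_{l=0}^{L-1}\Theta\bigl(e^l_{\mathrm{cl}}\bigr) + R^L(t),
\end{equation*}
where $\Theta(e^l_{\mathrm{cl}})$ is the $l$-fold nested Poisson bracket of $\mathcal{W}$ with $\Theta(\xi)$ composed with the free flow and integrated over the $l$-simplex $0\leq s_1\leq\cdots\leq s_l\leq t$. By the three-body structure of $\mathcal{W}$ (see \eqref{W_kernel}), each bracket $\{\mathcal{W},\Theta(\eta)\}$ of a $j$-particle observable produces a $(j+2)$-particle observable, so $e^l_{\mathrm{cl}}$ is a $(p+2l)$-particle kernel.

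Next I would identify $e^l_{\mathrm{cl}}$ with the operator $e^l$ produced by the quantum iteration in Lemma \ref{q_SD_quantum}. Both expansions are organised around the same combinatorics of pairings of field factors: on normal-ordered polynomials, the rescaled quantum commutator $\mathrm{i}\tau[\mathcal{W}_\tau,\cdot]$ and the classical Poisson bracket $\{\mathcal{W},\cdot\}$ produce the same Wick-type contractions (compare \eqref{CCR} with \eqref{Poisson_structure}), and the discrepancy consists solely of the $\tau^{-1}$-corrections arising from the $(1/\tau)\delta$-term in \eqref{CCR_formal}. By induction on the depth of iteration, the $\tau$-independent leading-order part of each quantum nested commutator coincides with the corresponding classical nested Poisson bracket; it is precisely this common kernel that defines the $e^l$ in both Lemmas \ref{q_SD_quantum}--\ref{q_SD_classical}. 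Enlarging $\tau_0$ if necessary ensures that the subleading quantum corrections are absorbed on $\mfh^{(\leq\mathcal{K}\tau)}$ and do not interfere with the identification.

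To bound $R^L(t)$ on $\{\mathcal{N}\leq\mathcal{K}\}$, I would invoke Lemma \ref{q_random_var_estimate} together with the combinatorial bound on nested Poisson brackets. Arguing as in the quintic analogue of \cite[(3.46)--(3.47)]{FKSS18} recorded in the proof of Lemma \ref{q_SD_quantum}, the $L$-th iterated bracket applied to $\Theta(\xi)$ produces a $(p+2L)$-particle operator with operator norm at most $(p+L)^L 2^L\|w\|_{L^\infty}^{2L}\|\xi\|$ (from counting pairings at each contraction step), while integration over the $L$-simplex contributes $|t|^L/L!$. Combining these with Lemma \ref{q_random_var_estimate} and the hypothesis $\mathcal{N}\leq\mathcal{K}$ yields
\begin{equation*}
\bigl|R^L(t)\,\chi_{\{\mathcal{N}\leq\mathcal{K}\}}\bigr| \;\leq\; \frac{|t|^L}{L!}\,(p+L)^L\,2^L\,\mathcal{K}^{p+L}\,\|w\|_{L^\infty}^{2L}\,\|\xi\| \;\leq\; \e^p\,\mathcal{K}^p\bigl(2\e\mathcal{K}\|w\|_{L^\infty}^2|t|\bigr)^L\|\xi\|,
\end{equation*}
by Stirling, which is exactly the geometric bound obtained for the quantum remainder. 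Choosing $L(\mathcal{K},\varepsilon,t,\|\xi\|,p)$ sufficiently large (and possibly enlarging $\tau_0$) makes this smaller than $\varepsilon$, concluding the proof.

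The main obstacle will be the identification step: one must verify carefully that the nested-bracket combinatorics on the two sides match, and that the $\tau^{-1}$ discrepancies from \eqref{CCR_formal} at each level of iteration are genuinely subleading on the sector $\mfh^{(\leq\mathcal{K}\tau)}$, so that the \emph{same} $e^l$ from Lemma \ref{q_SD_quantum} serves both expansions. Once this algebraic bookkeeping is settled — in essence an induction mirroring the Wick-pairing analysis of Section \ref{q_untruncated} — the remainder estimate is a direct transcription of the quantum argument with $\chi_{\{\mathcal{N}\leq\mathcal{K}\}}$ playing the role of the sector cutoff $\mathbf{1}_{\mfh^{(\leq\mathcal{K}\tau)}}$.
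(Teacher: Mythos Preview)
Your approach is essentially the same as the paper's, which simply states that the proof is analogous to Lemma \ref{q_SD_quantum} (itself referencing \cite[Lemma 3.9]{FKSS18}). Your detailed sketch of the Duhamel/Poisson-bracket iteration, the identification of the classical nested brackets with the $e^l$ from the quantum expansion, and the remainder estimate via Lemma \ref{q_random_var_estimate} on $\{\mathcal{N}\leq\mathcal{K}\}$ is exactly what ``analogous'' unpacks to here.

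Two small points to tidy up. First, there is an exponent inconsistency: you correctly note that the $L$-fold bracket produces a $(p+2L)$-particle observable, but then your remainder bound has $\mathcal{K}^{p+L}$ rather than $\mathcal{K}^{p+2L}$; fixing this changes the geometric ratio to something like $C\mathcal{K}^2\|w\|_{L^\infty}^2|t|$. Second, and more substantively, your final sentence asserts that choosing $L$ large makes $\bigl(2\e\mathcal{K}\|w\|_{L^\infty}^2|t|\bigr)^L$ small, but this only holds when the base is strictly less than $1$, i.e.\ for sufficiently small $|t|$. For arbitrary $t\in\mathbb{R}$ one needs the time-subdivision argument from \cite[Lemma 3.9]{FKSS18}: partition $[0,t]$ into $N$ subintervals each of length below the convergence threshold, expand on the first subinterval to produce finitely many $\Theta(e^l)$, then apply the expansion to each of these on the second subinterval, and so on. This iteration is what forces $L$ and $\tau_0$ to depend on $t$ in the stated way, and it is implicit in both the paper's one-line proof and its quantum counterpart. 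Once you insert this step, your argument is complete.
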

\begin{proof}
The proof is analogous to the proof of Lemma \ref{q_SD_quantum}.
\end{proof}
\begin{proof}[Proof of Theorem \ref{q_bounded_time_thm}]
By the proof of \cite[Proposition 2.1]{FKSS18}, we note that the time-dependent claim follows from the corresponding time-independent claim, provided that we have the results of Lemmas \ref{q_SD_quantum} and \ref{q_SD_classical} above. The time-independent claim was shown in Theorem \ref{q_bounded_state_convergence_thm}. The claim now follows.
\end{proof}
\begin{remark}
The analogous reduction of the time-dependent result to the time-dependent result was also used in the proof of \cite[Theorem 1.10]{RS22}.
\end{remark}
\subsection{Unbounded interaction potentials. Proofs of Theorems \ref{q_L1_time_thm}--\ref{q_delta_time_thm}}

We now study the time-dependent problem with unbounded interaction potentials and prove Theorems \ref{q_L1_time_thm}--\ref{q_delta_time_thm}.

We first consider $w$ as in Assumption \ref{w_assumption} (i) and prove Theorem \ref{q_L1_time_thm}.
Throughout this subsection, we consider $w$ as in Assumption \ref{w_assumption}. Before proceeding with the proof of Theorem \ref{q_L1_time_thm}, we note an approximation result concerning the approximation of the flow of \eqref{q_Hartree_defn}.

Let $s \in (0,\frac{1}{2})$ be given. We denote by $\mathcal{G} \subset H^s(\Lambda)$ the set constructed in Proposition \ref{Cauchy_problem_2} (ii) above. We know $\mbP^f_{\mathrm{Gibbs}} (\mathcal{G})=1$ and initial data in $\mathcal{G}$ gives rise to global solutions of \eqref{q_Hartree_Cauchy_problem}. Here, we recall \eqref{q_gibbs_cutoff_defn_rigorous}.

\begin{lemma}
\label{q_Hartree_approximation_lemma}
Let $w^\eps \in L^\infty(\Lambda)$ be a sequence such that 
\begin{equation}
\label{w^{epsilon}_convergence_approximation}
\lim_{\varepsilon \rightarrow 0} \|w^{\varepsilon}-w\|_{L^1}=0\,.
\end{equation} Furthermore, fix $s \in (0,\frac{1}{2})$, $T>0$, and consider $\psi \in \mathcal{G}$, with $\mathcal{G} \subset H^s(\Lambda)$ as above.

\begin{enumerate}
\item[(i)]
For $\varepsilon>0$, the following Cauchy problem admits a solution on $[-T,T]$.
\begin{equation}
\label{q_Hartree_approximation_cauchy_problem_2}
\begin{cases}
\mathrm{i} \partial_t u^\eps+(\Delta - \kappa)u^\eps =
\\
-\frac{1}{3}  \int \dd y \, \dd z \, w^\eps(x-y)\,w^\eps(x-z)\, |u^\eps(y)|^2\,|u^\eps(z)|^2\,u^\eps(x) 
\\
-\frac{2}{3}  \int \dd y \, \dd z \, w^\eps(x-y)\,w^\eps(y-z)\, |u^\eps(y)|^2\,|u^\eps(z)|^2\,u^\eps(x) 
\\
u^\eps|_{t=0} = \psi\,.
\end{cases}
\end{equation}
\item[(ii)]
We have
\begin{equation}
\label{Lemma_5.3_convergence}
\lim_{\eps \to 0} \|u^{\varepsilon}- u\|_{L^\infty_{[-T,T]}H^s} = 0\,,
\end{equation}
where 
\begin{equation}
\label{q_Hartree_approximation_cauchy_problem_1}
\begin{cases}
\mathrm{i}\partial_t u +(\Delta-\kappa)u=  
\\
-\frac{1}{3} \int \dd y \, \dd z \, w(x-y)\,w(x-z)\,|u(y)|^2\,|u(z)|^2\,u(x) 
\\
-\frac{2}{3}\int \dd y \, \dd z \, w(x-y)\,w(y-z)\,|u(y)|^2\,|u(z)|^2\,u(x)
\\
u|_{t=0} = \psi\,.
\end{cases}
\end{equation}
\end{enumerate}

\end{lemma}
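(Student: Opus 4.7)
The plan is to combine the local well-posedness from Proposition \ref{Cauchy_problem_1} with an iterative comparison argument on short time intervals, in the spirit of Claim (*) from the proof of Proposition \ref{Cauchy_problem_2} (ii). Since $\psi \in \mathcal{G}$, the solution $u$ of \eqref{q_Hartree_approximation_cauchy_problem_1} exists on $[-T,T]$ with $\sup_{t \in [-T,T]} \|u(t)\|_{H^s} \leq A$ for some $A$ depending on $\psi$ and $T$. Moreover, since $\Lambda$ has finite measure and $w^\eps \to w$ in $L^1$, we may assume $\|w^\eps\|_{L^1} \leq \|w\|_{L^1} + 1$ uniformly in small $\eps$. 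In particular, the $L^1$ hypothesis on $w^\eps$ allows us to apply Proposition \ref{Cauchy_problem_1} to \eqref{q_Hartree_approximation_cauchy_problem_2}, yielding a local existence time $\delta \sim 1$ which is uniform in $\eps$ as long as the initial data has $H^s$ norm at most $2A$; I will work on $[0,T]$ and note that $[-T,0]$ is analogous by time reversal.

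The key analytic step is an $\eps$-approximation estimate: on any interval $[t_0,t_0+\delta]$ on which both $u$ and $u^\eps$ exist with uniform $X^{s,b}$ bounds, one has
\begin{equation*}
\|u^\eps - u\|_{X^{s,b}_{[t_0,t_0+\delta]}} \lesssim_{A,\|w\|_{L^1}} \|w^\eps - w\|_{L^1} + \|u^\eps(t_0)-u(t_0)\|_{H^s}.
\end{equation*}
To prove this I would write $\mathcal{N}_{1,2}^\eps(u^\eps,\ldots)-\mathcal{N}_{1,2}(u,\ldots)$ as a telescoping sum, separating the difference of kernels $w^\eps - w$ from the difference of solutions $u^\eps - u$. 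The Duhamel estimate of Lemma \ref{X^{s,b}_space_properties} (iii) combined with Lemma \ref{Multlinear_estimates_2}, applied slot by slot to each term of the telescoping sum (possibly with $w_1=w^\eps-w$ or $w_2=w^\eps-w$ in place of $w_1=w_2=w$), then closes the estimate on a sufficiently short time interval, exactly as in the contraction argument \eqref{Lv_estimate_1}--\eqref{delta_choice}.

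The final step is an iteration over $\lceil T/\delta\rceil$ intervals, as at the end of the proof of Claim (*). If $\sigma_j := \sup_{t \in [j\delta, (j+1)\delta]} \|u^\eps(t)-u(t)\|_{H^s}$, then the short-time estimate above together with the embedding from Lemma \ref{X^{s,b}_space_properties} (i) yields
\begin{equation*}
\sigma_{j+1} \leq C_0 \|w^\eps - w\|_{L^1} + \mathcal{K}\, \sigma_j,
\end{equation*}
for constants $C_0,\mathcal{K}$ depending only on $A$, $\|w\|_{L^1}$, and $s$. Iterating and imposing $\sigma_j \leq 1$ at each step provides the a priori bound $\|u^\eps(t)\|_{H^s} \leq A + 1 \leq 2A$, which is precisely what is needed to invoke local existence for the next time step with the same $\delta$. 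Therefore, for all $\eps$ small enough, $u^\eps$ extends to $[0,T]$, giving (i), and the iterated bound
\begin{equation*}
\sigma_j \leq C_0 \Bigl(\sum_{k=0}^{j-1}\mathcal{K}^k\Bigr)\|w^\eps-w\|_{L^1} \leq C(A,T,\|w\|_{L^1})\,\|w^\eps-w\|_{L^1}
\end{equation*}
over $j \leq \lceil T/\delta\rceil$ gives \eqref{Lemma_5.3_convergence} by \eqref{w^{epsilon}_convergence_approximation}.

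The main obstacle is ensuring that the multiplicative factor $\mathcal{K}^{T/\delta}$ accumulated through the iteration does not overwhelm the $L^1$-smallness of $w^\eps-w$. This is not a genuine issue here because $T$ is fixed beforehand, $A$ is determined by $\psi$ and $T$ through the global flow of \eqref{q_Hartree_approximation_cauchy_problem_1}, and $\delta$ depends only on $A$ and $\|w\|_{L^1}$; hence the number of iteration steps is fixed once $(\psi,T)$ is fixed, and the final constant $C(A,T,\|w\|_{L^1})$ is finite.
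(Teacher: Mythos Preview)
Your proposal is correct and follows essentially the same route as the paper: use the global bound $\sup_{[-T,T]}\|u(t)\|_{H^s}\le A$ coming from $\psi\in\mathcal{G}$, pick $\delta$ depending only on $A$ and $\|w\|_{L^1}$, derive the short-time difference estimate via Lemma~\ref{X^{s,b}_space_properties}~(ii)--(iii) together with a telescoping application of Lemma~\ref{Multlinear_estimates_2} (one slot carrying $w^\eps-w$, the others carrying $u^\eps-u$), and then iterate over $\lceil T/\delta\rceil$ intervals while maintaining $\|u^\eps(k\delta)\|_{H^s}\le A+1$ so that the same $\delta$ works throughout. The only cosmetic difference is that the paper phrases the iteration qualitatively (at each step it passes to $\eps<\eps_k$ small enough that $\|u^\eps(k\delta)-u(k\delta)\|_{H^s}\le 1$ before moving on), whereas you keep the explicit geometric bound $\sigma_j\le C_0\bigl(\sum_{k<j}\mathcal{K}^k\bigr)\|w^\eps-w\|_{L^1}$; both yield \eqref{Lemma_5.3_convergence}.
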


\begin{proof}
As in the proof of Proposition \ref{Cauchy_problem_1}, it suffices to consider the case $\kappa=0$. By symmetry, it suffices to consider only positive times.
Using \cite[Lemma 4.4]{Bou94} and \cite[(4.10)]{Bou94}, we have a more quantitative description of the set $\mathcal{G}$ in Proposition \ref{Cauchy_problem_2} (ii) above\footnote{Once we have the setup of Section \ref{Cauchy problem}, we can directly apply these arguments from \cite{Bou94}.}.
Namely, given $\nu>0$, we can write 
\begin{equation}
\label{G_union}
\mathcal{G} = \bigcup_{\vartheta,A>0} \mathcal{K}_{\vartheta,A}\,,
\end{equation}
where
\begin{equation}
\label{K_eta,A}
\mathcal{K}_{\vartheta,A}:=\Biggl\{\psi \in H^s(\Lambda):\, \|\psi\|_{L^2}^2 \leq K\,,\quad \|u(t)\|_{H^s} \leq A \log \biggl( \frac{1+|t|}{\vartheta}\biggr)^{s+\nu}\Biggr\}\,.
\end{equation}
In \eqref{K_eta,A}, the constant $K$ is as in Assumption \ref{f_assumption} and $u$ is the solution of \eqref{q_Hartree_approximation_cauchy_problem_1} with initial data $\psi$.

By \eqref{G_union}, we deduce that the claim follows if we show that it holds for $\psi \in \mathcal{K}_{\vartheta,A}$ with $\vartheta,A>0$ fixed. Throughout the proof, we consider $\varepsilon$ sufficiently small such that 
\begin{equation}
\label{w^epsilon_bound}
\|w^{\varepsilon}\|_{L^1} \leq 2 \|w\|_{L^1}\,.
\end{equation}
This is possible by \eqref{w_epsilon_local_problem_definition}.
For fixed $\vartheta,A$, we let 
\begin{equation}
\label{mathcal_A_choice}
\mathcal{A}:=A \, \log \biggl( \frac{1+|T|}{\vartheta}\biggr)^{s+\nu}\,.
\end{equation}
We fix $\theta>0$ small and consider
\begin{equation}
\label{b_choice_Section_5}
b=\frac{1}{2}+\theta\,.
\end{equation}
With parameters as above, we consider $\delta>0$ such that 
\begin{equation}
\label{delta_choice_Section_5}
\delta^{\theta} \,\|w\|_{L^1}^2 \, (\mathcal{A}+1)^4 \ll 1\,,
\end{equation}
where the smallness of the right-hand side of \eqref{delta_choice_Section_5} will be determined later (depending on the other parameters above).
For the remainder of the proof, we let 
\begin{equation}
\label{Section_5_choice_of_n}
n:=\lfloor T/\delta \rfloor\,.
\end{equation}
Let us now show that the following properties hold for $k=0,1,\ldots,n-1$.
\begin{itemize}
\item[(1)] For $\varepsilon>0$ small, we have $u^{\varepsilon} \in L^{\infty}_{[0,k\delta]}H^s_x$ and
\begin{equation}
\label{Section_5_Induction_1}
\lim_{\varepsilon \rightarrow 0} \|u^{\varepsilon}(k\delta)-u(k\delta)\|_{H^s}=0\,.
\end{equation}
\item[(2)] There exists $\varepsilon_k>0$ such that for all $\varepsilon \in (0,\varepsilon_k)$, and for a constant $C>0$ independent of $k$, we have $u^{\varepsilon} \in X^{s,b}_{[k\delta,(k+1)\delta]}$ and

\begin{multline}
\label{Section_5_Induction_2}
\|u^{\varepsilon}-u\|_{X^{s,b}_{[k\delta,(k+1)\delta]}} \leq C\|u^{\varepsilon}(k\delta)-u(k\delta)\|_{H^s}
\\
+C \delta^{\theta}\, \|w^{\varepsilon}-w\|_{L^1}\,\|w\|_{L^1}\,(\mathcal{A}+1)^5\,.
\end{multline}
\end{itemize}

We show \eqref{Section_5_Induction_1}--\eqref{Section_5_Induction_2} by induction on $k$.

\paragraph{\textbf{Base}} 
We consider $k=0$. Note that \eqref{Section_5_Induction_1} is automatically satisfied since $u^{\varepsilon}(0)=u(0)=\psi$. Using \eqref{w^epsilon_bound}--\eqref{delta_choice_Section_5} and arguing analogously as in the proof of Proposition \ref{Cauchy_problem_1}, we deduce that
\begin{equation}
\label{Section_5_Base_1}
u^{\varepsilon} \in X^{s,b}_{[0,\delta]}\,, \qquad \|u^{\varepsilon}\|_{X^{s,b}_{[0,\delta]}} \lesssim \mathcal{A}\,.
\end{equation}
Let us note that we also have
\begin{equation}
\label{Section_5_Base_2}
u \in X^{s,b}_{[0,\delta]}\,, \qquad \|u\|_{X^{s,b}_{[0,\delta]}} \lesssim \mathcal{A}\,,
\end{equation}
by the same argument. 

We use Duhamel's principle for the difference equation solved by $u^{\varepsilon}-u$ and write for $t \in [0,\delta]$
\begin{multline}
\label{Section_5_Base_3}
u^\eps(x,t) - u(x,t) =
\\
\frac{\mathrm{i}}{3} \,\int_0^t \dd t' \, \e^{\mathrm{i}(t-t')\Delta} \biggl[\int \dd y \, \dd z
 \, w^\eps(\cdot-y)\,w^\eps(\cdot-z) |u^\eps(y,t')|^2 |u^\eps(z,t')|^2 u^\eps(\cdot,t') \\
- \int \dd y \, \dd z\, w(\cdot-y)\, w(\cdot-z)\, |u(y,t')|^2\, |u(z,t')|^2 \,u(\cdot,t')\biggr](x)\\
+\frac{2\mathrm{i}}{3} \,\int_0^t \dd t' \, \e^{\mathrm{i}(t-t')\Delta} \biggl[\int \dd y \, \dd z
\, w^\eps(\cdot-y)\, w^\eps (y-z)\, |u^\eps(y)|^2 |u^\eps(z)|^2 u^\eps(\cdot) \\
- \int \dd y \, \dd z\, w(\cdot-y)\, w(y-z)\,  |u(y,t')|^2\, |u(z,t')|^2 \,u(\cdot,t')\biggr](x)\,.
\end{multline}
By using telescoping, Lemma \ref{X^{s,b}_space_properties} (iii), Lemma \ref{Multlinear_estimates_2}, and \eqref{w^epsilon_bound}, we deduce from \eqref{Section_5_Base_3} that
\begin{multline}
\label{Section_5_Base_4}
\|u^\eps - u\|_{X^{s,b}_{[0,\delta]}} 
\leq  C \delta^{\theta}\,\|w^{\varepsilon}-w\|_{L^1} \, \|w\|_{L^1}\, \bigl(\|u^{\varepsilon}\|_{X^{s,b}_{[0,\delta]}}^5 +\|u\|_{X^{s,b}_{[0,\delta]}}^5 \bigr)
\\
+C \delta^{\theta}\, \|w\|_{L^1}^2\, \|u^{\varepsilon} - u\|_{X^{s,b}_{[0,\delta]}}\, \bigl(\|u^{\varepsilon}\|_{X^{s,b}_{[0,\delta]}}^4 +\|u\|_{X^{s,b}_{[0,\delta]}}^4\bigr)
\,.
\end{multline}
Using \eqref{Section_5_Base_1}--\eqref{Section_5_Base_2}, followed by \eqref{delta_choice_Section_5} (with sufficiently small right-hand side), we obtain from \eqref{Section_5_Base_4} that
\begin{equation}
\label{Section_5_Base_5}
\|u^\eps - u\|_{X^{s,b}_{[0,\delta]}} \leq C \delta^{\theta}\, \|w^{\varepsilon}-w\|_{L^1}\,\|w\|_{L^1}\, \mathcal{A}^5\,,
\end{equation}
with a different choice of $C$. We obtain \eqref{Section_5_Induction_2} from \eqref{Section_5_Base_5}.
\paragraph{\textbf{Inductive Step}}
Suppose that \eqref{Section_5_Induction_1}--\eqref{Section_5_Induction_2} hold for some $0 \leq k \leq n-2$. 
Let us observe that by Lemma \ref{X^{s,b}_space_properties} (i) and \eqref{Section_5_Induction_2} for $k$, we have that for $\eps \in (0,\eps_k)$
\begin{multline}
\label{Section_5_Base_6}
\|u^{\varepsilon}((k+1)\delta)-u((k+1)\delta)\|_{H^s} \lesssim \|u^{\varepsilon}-u\|_{X^{s,b}_{[k\delta,(k+1)\delta]}}
\\
\leq C \|u^{\varepsilon}(k\delta)-u(k\delta)\|_{H^s}+C\delta^{\theta} \|w^{\varepsilon}-w\|_{L^1}\,\|w\|_{L^1}^2\,(\mathcal{A}+1)^5\,.
\end{multline}
We deduce
\begin{equation}
\label{Section_5_Step_1}
\lim_{\varepsilon \rightarrow 0} \|u^{\varepsilon}((k+1)\delta)-u((k+1)\delta)\|_{H^s}=0
\end{equation}
from \eqref{Section_5_Base_6}, combined with \eqref{Section_5_Induction_1} for $k$ and 
\eqref{w^{epsilon}_convergence_approximation}. This shows that \eqref{Section_5_Induction_1} holds for $k+1$.

By \eqref{Section_5_Step_1}, it follows that there exists $\varepsilon_{k+1} \in (0,\eps_k)$ small enough such that for all $\varepsilon \in (0,\varepsilon_{k+1})$, we have
\begin{equation}
\label{Section_5_Step_2}
\|u^{\varepsilon}((k+1)\delta)-u((k+1)\delta)\|_{H^s} \leq 1\,.
\end{equation}
From \eqref{Section_5_Step_2}, we deduce that for all $\varepsilon \in (0,\varepsilon_{k+1})$, we have
\begin{equation}
\label{Section_5_Step_3}
\|u^{\varepsilon}((k+1)\delta)\|_{H^s} \leq \|u((k+1)\delta)\|_{H^s}+1 \leq \mathcal{A}+1\,.
\end{equation}
For \eqref{Section_5_Step_3}, we recalled \eqref{K_eta,A} and \eqref{mathcal_A_choice} above.
Using \eqref{Section_5_Step_3}, recalling \eqref{w^epsilon_bound}, \eqref{b_choice_Section_5}--\eqref{delta_choice_Section_5} and arguing analogously as in the proof of Proposition \ref{Cauchy_problem_1}, we deduce that
\begin{equation}
\label{Section_5_Step_4}
u^{\varepsilon} \in X^{s,b}_{[(k+1)\delta,(k+2)\delta]}\,, \qquad \|u^{\varepsilon}\|_{X^{s,b}_{[(k+1)\delta,(k+2)\delta]}} \lesssim \mathcal{A}+1\,.
\end{equation}
As in \eqref{Section_5_Base_2} (with the same argument), we have
\begin{equation}
\label{Section_5_Step_5}
u \in X^{s,b}_{[(k+1)\delta,(k+2)\delta]}\,, \qquad \|u\|_{X^{s,b}_{[(k+1)\delta,(k+2)\delta]}} \lesssim \mathcal{A}\,.
\end{equation}
Similarly as for \eqref{Section_5_Base_3}, we use Duhamel's principle to write for $t \in [(k+1)\delta,(k+2)\delta]$
\begin{multline}
\label{Section_5_Step_6}
u^\eps(x,t) - u(x,t) =\e^{\mathrm{i}(t-(k+1)\delta) \Delta}\Bigl(u^{\varepsilon}(\cdot,(k+1)\delta)-u(\cdot,(k+1)\delta)\Bigr)(x)
\\
+\frac{\mathrm{i}}{3} \,\int_{(k+1)\delta}^t \dd t' \, \e^{\mathrm{i}(t-t')\Delta} \biggl[\int \dd y \, \dd z
        \, w^\eps(\cdot-y)\, w^\eps(\cdot-z) |u^\eps(y,t')|^2 \,|u^\eps(z,t')|^2\, u^\eps(\cdot,t') \\
        - \int \dd y \, \dd z\, w(\cdot-y)\,w(\cdot-z)\, |u(y,t')|^2\, |u(z,t')|^2 \,u(\cdot,t')\biggr](x)\\
+\frac{2\mathrm{i}}{3} \,\int_{(k+1)\delta}^t \dd t' \, \e^{\mathrm{i}(t-t')\Delta} \biggl[\int \dd y \, \dd z
        \, w^\eps(\cdot-y)\, w^\eps (y-z)\,|u^\eps(y,t')|^2 \,|u^\eps(z,t')|^2 \,u^\eps(\cdot,t') \\
        - \int \dd y \, \dd z\, w(\cdot-y)\, w(y-z)\, |u(y,t')|^2\, |u(z,t')|^2 \,u(\cdot,t')\biggr](x)
	\end{multline}
Using Lemma \ref{X^{s,b}_space_properties} (ii) to estimate the first term on the right-hand side of \eqref{Section_5_Step_6}, and using telescoping, Lemma \ref{X^{s,b}_space_properties} (iii), Lemma \ref{Multlinear_estimates_2}, and \eqref{w^epsilon_bound} (as for \eqref{Section_5_Base_4}) to estimate the second term, we deduce that
\begin{multline}
\label{Section_5_Step_7}
\|u^\eps - u\|_{X^{s,b}_{[(k+1)\delta,(k+2)\delta]}} 
\leq  C\|u^{\varepsilon}((k+1)\delta)-u((k+1)\delta)\|_{H^s}
\\
+C \delta^{\theta}\,\|w^{\varepsilon}-w\|_{L^1} \, \|w\|_{L^1}\, \bigl(\|u^{\varepsilon}\|_{X^{s,b}_{[(k+1)\delta,(k+2)\delta]}}^5 +\|u\|_{X^{s,b}_{[(k+1)\delta,(k+2)\delta]}}^5 \bigr)
\\
+C \delta^{\theta}\, \|w\|_{L^1}^2\, \|u^{\varepsilon} - u\|_{X^{s,b}_{[(k+1)\delta,(k+2)\delta]}}\, \bigl(\|u^{\varepsilon}\|_{X^{s,b}_{[(k+1)\delta,(k+2)\delta]}}^4 +\|u\|_{X^{s,b}_{[(k+1)\delta,(k+2)\delta]}}^4\bigr)
\,.
\end{multline}
We now use \eqref{Section_5_Step_4}--\eqref{Section_5_Step_5} and \eqref{delta_choice_Section_5} (for sufficiently small right-hand side) to deduce that 
\begin{multline}
\label{Section_5_Step_8}
\|u^\eps - u\|_{X^{s,b}_{[(k+1)\delta,(k+2)\delta]}} \leq C\|u^{\varepsilon}((k+1)\delta)-u((k+1)\delta\|_{H^s}
\\
+ C \delta^{\theta}\, \|w^{\varepsilon}-w\|_{L^1}\,\|w\|_{L^1}\, (\mathcal{A}+1)^5\,,
\end{multline}
for a different choice of $C$. From \eqref{Section_5_Step_8}, we conclude the induction.

Using \eqref{Section_5_Induction_1}--\eqref{Section_5_Induction_2}, Lemma \ref{X^{s,b}_space_properties} (i), and \eqref{w^{epsilon}_convergence_approximation}, it follows that
\begin{equation}
\label{Lemma_5.3_Conclusion_1}
\lim_{\varepsilon \rightarrow 0} \|u^{\varepsilon}-u\|_{L^{\infty}_{[0,n\delta]} H^s_x}=0\,,
\end{equation}
for $n$ as in \eqref{Section_5_choice_of_n}. By using $\|u\|_{L^{\infty}_{[n\delta,T]} H^s_x} \leq \mathcal{A}$, and by repeating the above argument, we also obtain
\begin{equation}
\label{Lemma_5.3_Conclusion_2}
\lim_{\varepsilon \rightarrow 0} \|u^{\varepsilon}-u\|_{L^{\infty}_{[n\delta,T]} H^s_x}=0\,.
\end{equation}
The lemma follows from \eqref{Lemma_5.3_Conclusion_1}--\eqref{Lemma_5.3_Conclusion_2}.
\end{proof}

\begin{remark}
The proof of Lemma \ref{q_Hartree_approximation_lemma} above is more involved than that of the analogous cubic result \cite[Proposition 5.1]{FKSS18}, \cite[Lemma 5.4]{RS22}. It requires full use of the set $\mathcal{G}$ of initial data leading to global solutions of \eqref{q_Hartree_defn}.
\end{remark}

Let us recall the following form of the diagonal argument \cite[Lemma 5.5]{FKSS18}.

\begin{lemma}[Diagonal argument]
\label{diagonal_argument_lemma}
Let $(\Gamma_k)$ be a sequence of sets with $\Gamma_k \subset \Gamma_{k+1}$ and let $\Gamma:=\cup_k \Gamma_k$.
Given $\varepsilon, \tau>0$, let $g,g^{\varepsilon}, g^{\varepsilon}_{\tau} : \Gamma \rightarrow \mathbb{C}$ 
be functions satisfying the following properties.
\begin{itemize}
\item[(i)] For $k \in \N$ and $\varepsilon>0$ fixed, we have $\lim_{\tau \rightarrow \infty} g^{\varepsilon}_{\tau}(\zeta)=g^{\varepsilon}(\zeta)$, uniformly in $\zeta \in \Gamma_k$.
\item[(ii)] For fixed $k \in \N$, we have $\lim_{\varepsilon \rightarrow 0} g^{\varepsilon}(\zeta)=g(\zeta)$, uniformly in $\zeta \in \Gamma_k$. 
\end{itemize}
Then, there exists a sequence $(\varepsilon_{\tau})$ converging to zero as $\tau \rightarrow \infty$ such that 
\begin{equation*}
\lim_{\tau \rightarrow \infty} g_{\tau}^{\varepsilon_{\tau}}(\zeta)=g(\zeta)\,.
\end{equation*}
\end{lemma}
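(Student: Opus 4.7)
The plan is a standard diagonalization argument, organised so that one smallness parameter is frozen before the other is selected. First, I would use hypothesis (ii) to choose, for each $k \in \N$, a number $\varepsilon_k \in (0,1/k)$ with
\begin{equation*}
\sup_{\zeta \in \Gamma_k}\bigl|g^{\varepsilon_k}(\zeta)-g(\zeta)\bigr| < \frac{1}{k}\,.
\end{equation*}
Next, holding this specific $\varepsilon_k$ fixed and invoking hypothesis (i), I would choose a threshold $\tau_k>0$, which I may take to be strictly increasing with $\tau_k \to \infty$, such that for all $\tau \geq \tau_k$,
\begin{equation*}
\sup_{\zeta \in \Gamma_k}\bigl|g^{\varepsilon_k}_{\tau}(\zeta)-g^{\varepsilon_k}(\zeta)\bigr| < \frac{1}{k}\,.
\end{equation*}
I then set $\varepsilon_\tau := \varepsilon_k$ for $\tau \in [\tau_k,\tau_{k+1})$. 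Since $\varepsilon_k<1/k$, this produces a sequence with $\varepsilon_\tau \to 0$ as $\tau \to \infty$.

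To check the convergence, I would fix any $\zeta \in \Gamma$ and pick $k_0$ such that $\zeta \in \Gamma_{k_0}$. For $\tau \geq \tau_{k_0}$, let $k \geq k_0$ be the unique index with $\tau \in [\tau_k,\tau_{k+1})$. The monotonicity $\Gamma_{k_0} \subset \Gamma_k$ guarantees $\zeta \in \Gamma_k$, so the triangle inequality gives
\begin{equation*}
\bigl|g^{\varepsilon_\tau}_{\tau}(\zeta)-g(\zeta)\bigr| \leq \bigl|g^{\varepsilon_k}_{\tau}(\zeta)-g^{\varepsilon_k}(\zeta)\bigr|+\bigl|g^{\varepsilon_k}(\zeta)-g(\zeta)\bigr| < \frac{2}{k}\,.
\end{equation*}
As $\tau \to \infty$ one has $k \to \infty$, which yields the claimed convergence.

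I do not expect any substantive obstacle here; the argument is routine. The one point requiring care is the ordering: $\varepsilon_k$ has to be chosen first, using only (ii), and then $\tau_k$ is selected with knowledge of that particular $\varepsilon_k$, using (i). Reversing this order would not work because the rate of convergence in (i) is allowed to depend on $\varepsilon$. The monotonicity hypothesis $\Gamma_k \subset \Gamma_{k+1}$ is precisely what makes the two uniform bounds compatible once $k$ is advanced with $\tau$.
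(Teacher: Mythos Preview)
Your argument is correct and is the standard diagonal extraction one expects here. The paper does not supply its own proof of this lemma; it simply records the statement and cites \cite[Lemma 5.5]{FKSS18}, so there is nothing to compare against beyond confirming that your construction is sound. One cosmetic point: you should also assign a value to $\varepsilon_\tau$ for $\tau < \tau_1$ (anything will do, e.g.\ $\varepsilon_\tau := \varepsilon_1$), since the conclusion only concerns large $\tau$.
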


We now have all of the necessary tools to prove Theorem \ref{q_L1_time_thm}.

\begin{proof}[Proof of Theorem \ref{q_L1_time_thm}]
Once we have Lemma \ref{q_Hartree_approximation_lemma} at our disposal, the proof is quite similar to that of \cite[Theorem 1.11]{RS22}. We just outline the main differences and refer the reader to \cite[Section 5.2]{RS22} for more details.
We adopt the convention that the superscript $\varepsilon$ denotes an object defined using the interaction potential $w^{\varepsilon}$. Using Theorem \ref{q_bounded_time_thm}, Lemma \ref{diagonal_argument_lemma}, and arguing analogously as in the proof of \cite[Theorem 1.11]{RS22}, the claim follows if we show that
\begin{equation}
\label{rho_eps_time_convergence_2}
\lim_{\varepsilon \to 0} \tilde{\rho}^{\varepsilon}_{1}\left( \Psi^{t_{1},\varepsilon}\Theta(\xi^{1}) \cdots \Psi^{t_{m},\varepsilon}\Theta(\xi^{m}) \right) = \tilde{\rho}_1\left( \Psi^{t_{1}}\Theta(\xi^{1}) \cdots \Psi^{t_{m}}\Theta(\xi^{m})\right)\,,
\end{equation}
uniformly in $\Gamma_k$
\begin{equation}
\label{Gamma_k}
\Gamma_{k}:= \bigl\{(m,t_i,p_i,\xi^i):\, m \leq k,\, |t_i| \leq k,\,p_i \in \mathbb{N}^{*}\,, p_i \leq k,\, \|\xi^i\| \leq k\,, 1 \leq i \leq m\bigr\}\,.
\end{equation}
In \eqref{Gamma_k}, we take $\xi^{i} \in \mathcal{L} (\mathfrak{h}^{(p_i)})$. We also recall \eqref{q_time_evolved_rv}, \eqref{classical_state_series}, and define the quantities with superscript $\varepsilon$ accordingly.

Let $\varphi$ denote the classical free field given by \eqref{q_class_free_field}. Recall that by \eqref{q_regularity_vph}, we have that for $s \in (0,\frac{1}{2})$, $\varphi \in H^s$ $\mu$-almost surely.
Let us recall the definition \eqref{q_random_var_defn} of $\Theta(\xi)$ and
the definition \eqref{q_evolution_defn} of the flow map $S_t(\cdot)$ of \eqref{q_Hartree_defn}. By suitably defining the quantities with superscript $\varepsilon$, we have for $\xi \in \mathcal{L}(\mathfrak{h}^{(k)})$
\begin{equation}
\label{Thm_unbounded_1}
{\Psi}^{t,\varepsilon}\Theta(\xi)=\left\langle \left(S^{\varepsilon}_{t} \,\varphi \right)^{\otimes_k}\,, \xi \left(S^{\varepsilon}_{t} \varphi \right)^{\otimes_k} \right\rangle_{\mathfrak{h}^{\otimes_k}}\,,\quad 
{\Psi}^{t}\Theta(\xi)= \left\langle \left(S_{t} \varphi \right)^{\otimes_k}\,, \xi \left(S_{t} \varphi \right)^{\otimes_k} \right\rangle_{\mathfrak{h}^{\otimes_k}}\,.
\end{equation}
When $\xi \in \mathcal{L}(\mathfrak{h}^{(k)})$,  Lemma \ref{q_Hartree_approximation_lemma} and \eqref{Thm_unbounded_1} imply that
\begin{equation}
\label{Thm_unbounded_2}
\lim_{\varepsilon \rightarrow 0} {\Psi}^{t,\varepsilon} {\Theta}(\xi) = {\Psi}^{t}\Theta(\xi)\,,
\end{equation}
$\mu$-almost surely.

Recalling \eqref{(4.5)_claim}, and using \eqref{Thm_unbounded_2}, it follows that 
\begin{equation}
\label{Thm_unbounded_3}
\lim_{\varepsilon \rightarrow 0} {\Psi}^{{t}_{1},\varepsilon} \Theta({\xi}^{1}) \cdots \Psi^{t_{m},\varepsilon} \Theta(\xi^{m}) \,\e^{-\mathcal{W}^{\varepsilon}} = \Psi^{t_{1}} {\Theta}({\xi}^{1})\cdots \Psi^{t_m} \Theta(\xi^m) \,\e^{-\mathcal{W}}\,,
\end{equation}
$\mu$-almost surely.

Using Lemma \ref{Multlinear_estimates_1}, Lemma \ref{q_random_var_estimate}, conservation of mass for \eqref{q_Hartree_approximation_cauchy_problem_2} and \eqref{q_Hartree_approximation_cauchy_problem_1}, as well as \eqref{w^{epsilon}_convergence_approximation}, we have the following bounds for $\varepsilon>0$ sufficiently small.
\begin{align}
\notag
\bigl|\Psi^{t_1,\varepsilon} {\Theta}({\xi}^{1}) \cdots \Psi^{t_m,\varepsilon} \Theta(\xi^m) \e^{-\mathcal{W}^{\varepsilon}} f\left(\mcN\right)\bigr| &\leq \Biggl(\prod_{j=1}^m \|\xi^j\| \|\varphi\|^{2p_j}_{\mathfrak{h}} \Biggr)\, \e^{c \|w\|_{L^1}^2 \|\varphi\|_{L^6}^6}\,f(\mathcal{N})
\\
\label{Thm_unbounded_4}
\bigl|\Psi^{t_1} \Theta(\xi^1) \cdots \Psi^{t_m} \Theta(\xi^m) e^{-\mcW} f(\mathcal{N})\bigr| &\leq \Biggl(\prod_{j=1}^m \|\xi^j\| \|\varphi\|^{2p_j}_{\mathfrak{h}} \Biggr) \e^{c \|w\|_{L^1}^2 \|\varphi\|_{L^6}^6}\,f(\mathcal{N})\,.
\end{align}
The claim now follows from \eqref{Thm_unbounded_3}--\eqref{Thm_unbounded_4}, by using Proposition \ref{Cauchy_problem_2} (i), Assumption \ref{f_assumption} (with $K$ sufficiently small) and the dominated convergence theorem.

We now consider $w$ as in Assumption \ref{w_assumption} (ii) and prove Theorem \ref{q_delta_time_thm}.
For simplicity of notation, throughout the sequel, we set $c=1$ in Assumption \ref{w_assumption} (ii). The general case follows in the same way.
Let us note the following analogue of Lemma \ref{q_Hartree_approximation_lemma}. 

\begin{lemma}
\label{q_Hartree_approximation_lemma_delta}
Let $s>0$ and let $\mathcal{G} \subset H^s(\Lambda)$ be the set constructed in Proposition \ref{Cauchy_problem_2} (ii). Let $w^\eps$ be given as in \eqref{w_epsilon_local_problem_definition}. Consider $T>0$ and $\psi \in \mathcal{G}$. The following claims hold.
\begin{enumerate}
\item[(i)]
For $\varepsilon>0$, the following Cauchy problem admits a solution on $[-T,T]$.
\begin{equation}
\label{q_Hartree_approximation_cauchy_problem_2_delta}
\begin{cases}
\mathrm{i} \partial_t u^\eps+(\Delta - \kappa)u^\eps =
\\
-\frac{1}{3}  \int \dd y \, \dd z \, w^\eps(x-y)\,w^\eps(x-z)\, |u^\eps(y)|^2\,|u^\eps(z)|^2\,u^\eps(x) 
\\
-\frac{2}{3}  \int \dd y \, \dd z \, w^\eps(x-y)\,w^\eps(y-z)\, |u^\eps(y)|^2\,|u^\eps(z)|^2\,u^\eps(x) 
\\
u^\eps|_{t=0} = \psi\,.
\end{cases}
\end{equation}
\item[(ii)]
We have
\begin{equation}
\label{Lemma_5.3_convergence_delta}
\lim_{\eps \to 0} \|u^{\varepsilon}- u\|_{L^\infty_{[-T,T]}H^s} = 0\,,
\end{equation}
where 
\begin{equation}
\label{q_Hartree_approximation_cauchy_problem_1_delta}
\begin{cases}
\mathrm{i}\partial_t u +(\Delta-\kappa)u=-|u|^4\,u
\\
u|_{t=0} = \psi\,.
\end{cases}
\end{equation}
\end{enumerate}
\end{lemma}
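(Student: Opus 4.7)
The plan is to mirror the proof of Lemma \ref{q_Hartree_approximation_lemma}, with the $L^1$-convergence $\|w^\varepsilon - w\|_{L^1} \to 0$ replaced by the weaker Fourier-side convergence $\widehat{w^\varepsilon}(k) \to 1$ pointwise in $k \in \mathbb{Z}$ supplied by \eqref{w^epsilon_properties}, together with the uniform bound $\sup_\varepsilon \|\widehat{w^\varepsilon}\|_{\ell^\infty} < \infty$. First I would reduce to $\kappa = 0$ and $t \geq 0$, and apply Bourgain's decomposition $\mathcal{G} = \bigcup_{\vartheta, A > 0} \mathcal{K}_{\vartheta, A}$ from \eqref{K_eta,A} (valid for the local quintic NLS \eqref{q_Hartree_approximation_cauchy_problem_1_delta} by \cite{Bou94}), so that it suffices to prove the claim for $\psi \in \mathcal{K}_{\vartheta, A}$ with $\vartheta, A > 0$ fixed. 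With $\mathcal{A}$ as in \eqref{mathcal_A_choice}, $b = 1/2 + \theta$, and $\delta > 0$ chosen so that $\delta^\theta (\mathcal{A} + 1)^4 \ll 1$, I would then run the inductive scheme \eqref{Section_5_Induction_1}--\eqref{Section_5_Induction_2} over the $\lfloor T/\delta \rfloor + 1$ subintervals of length $\delta$ covering $[0, T]$.

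The principal new ingredient is a substitute for Lemma \ref{Multlinear_estimates_2}. Reinspecting the proof there, one sees that Hausdorff-Young is applied merely to replace $|\widehat{w_i}(k)|$ by $\|w_i\|_{L^1}$; one can instead retain $\widehat{w^\varepsilon}$ and use $\sup_\varepsilon \|\widehat{w^\varepsilon}\|_{\ell^\infty}$, yielding
\begin{equation*}
\|\mathcal{N}^\varepsilon_{1,2}(v_1,\ldots,v_5)\|_{X^{s,-\frac{1}{2}+\theta}_{[t_0,t_0+\delta]}} \lesssim \delta^\theta\, \sup_\varepsilon \|\widehat{w^\varepsilon}\|_{\ell^\infty}^2\, \prod_{j=1}^5 \|v_j\|_{X^{s,\frac{1}{2}+\theta}_{[t_0,t_0+\delta]}}\,,
\end{equation*}
uniformly in $\varepsilon$ (and the identical bound with $\widehat{w^\varepsilon}$ replaced by $1$ controls the local quintic nonlinearity $\mathcal{N}_{1,2}(v,\ldots,v) = -|v|^4 v$). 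This gives the local $X^{s,b}$ existence theory for \eqref{q_Hartree_approximation_cauchy_problem_2_delta} uniformly in $\varepsilon$, hence claim (i) and the analogues of \eqref{Section_5_Base_1}--\eqref{Section_5_Base_2} on each subinterval.

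For the crucial difference $\mathcal{N}^\varepsilon_{1,2}(v,\ldots,v) - \mathcal{N}_{1,2}(v,\ldots,v)$, I would introduce a sharp frequency cutoff $P_N$ and telescope into a finite sum of multilinear expressions, each of which either contains a factor of $P_N^\perp v$ or is built entirely from $P_N v$. The former are controlled by the uniform multilinear estimate above together with $\|P_N^\perp v\|_{X^{s,b}} \lesssim N^{-s} \|v\|_{X^{s,b}}$, which uses crucially $s > 0$. The latter involves only finitely many Fourier modes $|k_j| \leq N$, and by \eqref{w^epsilon_properties} the Fourier symbols $\widehat{w^\varepsilon}(k_i - k_j)\widehat{w^\varepsilon}(k_k - k_l) - 1$ and $\widehat{w^\varepsilon}(k_i - k_j + k_k - k_l)\widehat{w^\varepsilon}(k_k - k_l) - 1$ converge uniformly to $0$ on this finite index set as $\varepsilon \to 0$. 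Choosing first $N$ large (to make the high-frequency contribution small) and then $\varepsilon$ small (to make the low-frequency contribution small) delivers that $\|\mathcal{N}^\varepsilon_{1,2}(v,\ldots,v) - \mathcal{N}_{1,2}(v,\ldots,v)\|_{X^{s,-\frac{1}{2}+\theta}_{[t_0,t_0+\delta]}}$ tends to $0$ as $\varepsilon \to 0$, uniformly over $v$ in bounded balls of $X^{s,b}$. Feeding this into the Duhamel inequality and using Lemma \ref{X^{s,b}_space_properties}, the induction carries through exactly as for Lemma \ref{q_Hartree_approximation_lemma}.

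The main obstacle I anticipate is ensuring that the $\varepsilon$-vanishing rate in the nonlinear-difference estimate is uniform across all iterations: each Duhamel step amplifies the error by a constant depending on $\mathcal{A}$, so the per-step error $\eta(\varepsilon, N)$ must be made much smaller than $C^{-\lfloor T/\delta \rfloor}$ at the end of the argument. Since $T$ is fixed and both $\mathcal{A}$ and $\delta$ depend only on $T, \vartheta, A$, the number of iterations is finite, so choosing $N$ large first (depending on this Duhamel constant and the ambient norms) and then $\varepsilon$ correspondingly small suffices. This same two-scale choice is also what forces the argument to exploit $s > 0$ genuinely, since the high-frequency remainder bound $N^{-s}$ would be unavailable if the initial data were only in $L^2$.
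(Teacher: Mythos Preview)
Your proposal follows the same inductive framework as the paper and is correct in overall structure. The paper's proof differs from yours in how it handles the key term $\mathcal{N}^\varepsilon_{1,2}(u,\ldots,u) + |u|^4 u$ (the ``$I_2$''-type contribution): rather than introducing a frequency cutoff $P_N$, the paper fixes a global extension $U$ of $u$ with $\|U\|_{X^{s,b}} \lesssim \mathcal{A}$, writes out the spacetime Fourier transform of the difference explicitly (so that the multiplier $\widehat{w^\varepsilon}(k_1-k_2)\widehat{w^\varepsilon}(k_3-k_4)-1$ appears), and applies the dominated convergence theorem twice --- once over the inner convolution variables to get pointwise convergence in $(k,\eta)$, and again (with the quintic estimate \eqref{quintic_estimate_known} supplying the integrable majorant) to conclude $\|G^\varepsilon\|_{X^{s,b-1}} \to 0$. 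This yields only a qualitative $\mathbf{F}(\varepsilon)\to 0$ in place of the $\|w^\varepsilon-w\|_{L^1}$ factor in \eqref{Section_5_Induction_2}, but that suffices since the number of subintervals is finite.

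Your frequency-cutoff route also works, but one statement needs correction: the bound $\|P_N^\perp v\|_{X^{s,b}} \lesssim N^{-s}\|v\|_{X^{s,b}}$ is false as written (take $v$ supported at spatial frequency $\sim N$). What you actually need, and what is true, is the qualitative fact that for the \emph{fixed} solution $u$ on each subinterval, $\|P_N^\perp u\|_{X^{s,b}_{[k\delta,(k+1)\delta]}} \to 0$ as $N\to\infty$ (dominated convergence applied to a fixed extension). Correspondingly, your claim of uniformity over bounded balls of $X^{s,b}$ is too strong and also unnecessary: the argument is applied to the single function $u$, and since there are only finitely many subintervals (depending on $T,\vartheta,A$) you may choose $N$, and then $\varepsilon$, depending on all of them at once. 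With these adjustments your argument goes through; the paper's direct DCT argument simply bypasses the auxiliary parameter $N$ altogether.
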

\end{proof}
The proof of Lemma \ref{q_Hartree_approximation_lemma_delta} is similar to that of Lemma \ref{q_Hartree_approximation_lemma} given above. The main difference is that we can no longer use the smallness of the factor  $\|w^{\varepsilon}-w\|_{L^1}$. Instead, we use a qualitative approach based on the dominated convergence theorem.

\begin{proof}
The proof proceeds as for Lemma \ref{q_Hartree_approximation_lemma}. We outline the main differences. As before, we reduce to considering $\kappa=0$ and positive times. We consider $\nu>0$ and 
use notation as in \eqref{G_union}--\eqref{K_eta,A} and \eqref{mathcal_A_choice}--\eqref{b_choice_Section_5} above. Throughout, we also consider $\varepsilon$ such that \eqref{w^epsilon_bound} holds. We take the initial data $\psi \in \mathcal{K}_{\vartheta,A}$. Furthermore, we consider $\delta>0$ such that 
\begin{equation}
\label{delta_choice_Section_5_delta}
\delta^{\theta} \, (\mathcal{A}+1)^4 \ll 1\,,
\end{equation}
where the smallness of the right-hand side of \eqref{delta_choice_Section_5_delta} will be determined later.
With $\delta$ as in \eqref{delta_choice_Section_5_delta}, we consider $n \in \N$ defined as in \eqref{Section_5_choice_of_n}.

We show that the following properties hold for $k=0,1,\ldots,n-1$.
\begin{itemize}
\item[(1)] For $\varepsilon>0$ small, we have $u^{\varepsilon} \in L^{\infty}_{[0,k\delta]}H^s_x$ and
\begin{equation}
\label{Section_5_Induction_1_delta}
\lim_{\varepsilon \rightarrow 0} \|u^{\varepsilon}(k\delta)-u(k\delta)\|_{H^s}=0\,.
\end{equation}
\item[(2)] There exists $\varepsilon_k>0$ such that for all $\varepsilon \in (0,\varepsilon_k)$, and for a constant $C>0$ independent of $k$, we have $u^{\varepsilon} \in X^{s,b}_{[k\delta,(k+1)\delta]}$ and

\begin{equation}
\label{Section_5_Induction_2_delta}
\|u^{\varepsilon}-u\|_{X^{s,b}_{[k\delta,(k+1)\delta]}} \leq C\|u^{\varepsilon}(k\delta)-u(k\delta)\|_{H^s}
+C\,\mathbf{F}(\varepsilon)\,(\mathcal{A}+1)^5\,.
\end{equation}
In \eqref{Section_5_Induction_2_delta} and throughout the sequel, $\mathbf{F}(\varepsilon)$ denotes a quantity such that $\lim_{\varepsilon \rightarrow 0} \mathbf{F}(\varepsilon)=0$.
\end{itemize}

We show \eqref{Section_5_Induction_1_delta}--\eqref{Section_5_Induction_2_delta} by induction on $k$.

\paragraph{\textbf{Base}} 
We consider $k=0$. The claim \eqref{Section_5_Induction_1_delta} is automatically satisfied since $u^{\varepsilon}(0)=u(0)=\psi$. 
Let us now show \eqref{Section_5_Induction_2_delta}. Using \eqref{w^epsilon_bound} and arguing as for \eqref{Section_5_Base_1}, we have 
\begin{equation}
\label{Section_5_Base_1_delta}
u^{\varepsilon} \in X^{s,b}_{[0,\delta]}\,, \qquad \|u^{\varepsilon}\|_{X^{s,b}_{[0,\delta]}} \lesssim \mathcal{A}\,.
\end{equation}
Similarly, for the local problem, we have
\begin{equation}
\label{Section_5_Base_2_delta}
u \in X^{s,b}_{[0,\delta]}\,, \qquad \|u\|_{X^{s,b}_{[0,\delta]}} \lesssim \mathcal{A}\,,
\end{equation}
(see \cite{Bou93}).

Instead of \eqref{Section_5_Base_3}, we write for $t \in [0,\delta]$
\begin{multline}
\label{Section_5_Base_3_delta}
u^\eps(x,t) - u(x,t) =
\\
\frac{\mathrm{i}}{3} \,\int_0^t \dd t' \, \e^{\mathrm{i}(t-t')\Delta} \biggl[\int \dd y \, \dd z
 \, w^\eps(\cdot-y)\,w^\eps(\cdot-z) |u^\eps(y,t')|^2 |u^\eps(z,t')|^2 u^\eps(\cdot,t') 
\\- |u(\cdot,t')|^4\,u(\cdot,t')\biggr](x)
\\
+\frac{2\mathrm{i}}{3} \,\int_0^t \dd t' \, \e^{\mathrm{i}(t-t')\Delta} \biggl[\int \dd y \, \dd z
\, w^\eps(\cdot-y)\, w^\eps (y-z)\, |u^\eps(y,t')|^2 |u^\eps(z,t')|^2 u^\eps(\cdot,t') 
\\
- |u(\cdot,t')|^4\,u(\cdot,t')\biggr](x)
\,.
\end{multline}
From \eqref{Section_5_Base_3_delta} and Lemma \ref{X^{s,b}_space_properties} (iii), we deduce that
\begin{multline}
\label{Section_5_Base_3_delta_2}
\|u^\eps - u\|_{X^{s,b}_{[0,\delta]}} \lesssim
\\
\Biggl\|\int \dd y \, \dd z
 \, w^\eps(x-y)\,w^\eps(x-z) |u^\eps(y,t)|^2 |u^\eps(z,t)|^2 u^\eps(x,t) 
- |u(x,t)|^4\,u(x,t)\Biggr\|_{X^{s,b-1}_{[0,\delta]}} 
\\
+
\Biggl\|\int \dd y \, \dd z
 \, w^\eps(x-y)\,w^\eps(y-z) |u^\eps(y,t)|^2 |u^\eps(z,t)|^2 u^\eps(x,t) 
- |u(x,t)|^4\,u(x,t)\Biggr\|_{X^{s,b-1}_{[0,\delta]}}
\\
=:I+II\,.
\end{multline}
We have 
\begin{equation}
\label{I_1+I_2}
I \leq I_1+I_2\,,
\end{equation}
where 
\begin{multline}
\label{I_1_definition_delta}
I_1:=\Biggl\|\int \dd y \, \dd z
 \, w^\eps(x-y)\,w^\eps(x-z) |u^\eps(y,t)|^2 |u^\eps(z,t)|^2 u^\eps(x,t) 
\\
-\int \dd y \, \dd z
 \, w^\eps(x-y)\,w^\eps(x-z) |u(y,t)|^2 |u(z,t)|^2 u(x,t) \Biggr\|_{X^{s,b-1}_{[0,\delta]}} 
\end{multline}
and
\begin{equation}
\label{I_2_definition_delta}
I_2:=\Biggl\|\int \dd y \, \dd z
 \, w^\eps(x-y)\,w^\eps(x-z) |u(y,t)|^2 |u(z,t)|^2 u(x,t) -|u(x,t)|^4\,u(x,t) \Biggr\|_{X^{s,b-1}_{[0,\delta]}}\,.
\end{equation}
By telescoping in \eqref{I_1_definition_delta}, and using Lemma \ref{Multlinear_estimates_2} as well as \eqref{w^epsilon_bound} and \eqref{Section_5_Base_1_delta}--\eqref{Section_5_Base_2_delta}, it follows that 
\begin{equation}
\label{I_1_bound_delta}
I_1 \lesssim \delta^{\theta}\, \|w^{\varepsilon}\|_{L^1}^2\, \|u^{\varepsilon} - u\|_{X^{s,b}_{[0,\delta]}}\, \bigl(\|u^{\varepsilon}\|_{X^{s,b}_{[0,\delta]}}^4 +\|u\|_{X^{s,b}_{[0,\delta]}}^4\bigr) \lesssim \delta^{\theta}\, \|u^{\varepsilon} - u\|_{X^{s,b}_{[0,\delta]}}\,\mathcal{A}^4\,.
\end{equation}
In order to estimate \eqref{I_2_definition_delta}, let us fix $U:\T \times \R \to \C$ such that 
\begin{equation}
\label{U_construction}
U|_{\T \times [0,\delta]} = u|_{\T \times [0,\delta]}\,,\quad \|U\|_{X^{s,b}} \sim \|u\|_{X^{s,b}_{[0,\delta]}}\lesssim \mathcal{A}\,.
\end{equation}
Here, we recalled \eqref{local_X^{s,b}_norm} and \eqref{Section_5_Base_2_delta}.
Let us define $G^{\eps}:\T \times \R \to \C$ as
\begin{multline}
\label{I_2_bound_delta_1_A}
G^{\eps}(x,t):=
\\
\int \dd y \, \dd z
 \, w^\eps(x-y)\,w^\eps(x-z) |U(y,t)|^2\, |U(z,t)|^2\, U(x,t) -|U(x,t)|^4\,U(x,t)\,.
\end{multline}
We argue as for \eqref{N_1_claim} to compute
\begin{multline}
\label{I_2_bound_delta_1}
(G^{\eps})\,\widetilde{\,}\, (k,\eta)=
\\
= \sum_{k_1,\ldots,k_5} \int \dd \eta_1 \cdots \dd \eta_5\, \delta(k_1-k_2+k_3-k_4+k_5-k)\,\delta(\eta_1-\eta_2+\eta_3-\eta_4+\eta_5-\eta)\,
\\
\times \bigl\{\widehat{w}^{\eps}(k_1-k_2)\,\widehat{w}^{\eps}(k_3-k_4)-1\bigr\}\,\widetilde{U}(k_1,\eta_1)\,\overline{\widetilde{U}(k_2,\eta_2)}\,\widetilde{U}(k_3,\eta_3)\,\overline{\widetilde{U}(k_4,\eta_4)}\,\widetilde{U}(k_5,\eta_5)\,.
\end{multline}
From \eqref{w_epsilon_local_problem_definition}, it follows that there exists $C>0$ such that for all $k_1,k_2,k_3,k_4 \in \Z$ and $\varepsilon \in (0,1)$, we have
\begin{equation}
\label{w_epsilon_bound}
|\widehat{w}^{\eps}(k_1-k_2)\,\widehat{w}^{\eps}(k_3-k_4)-1| \leq C\,.
\end{equation}
Moreover, by \eqref{int_Phi} (and recalling that we had set $c=1$ for simplicity of notation), it follows that for all $k_1,k_2,k_3,k_4 \in \Z$
\begin{equation}
\label{w_epsilon_bound_2}
\lim_{\eps \rightarrow 0} \widehat{w}^{\eps}(k_1-k_2)\,\widehat{w}^{\eps}(k_3-k_4)=1\,.
\end{equation}
Now, let us define 
\begin{multline}
\label{I_2_bound_delta_2}
F(x,t):=
\\
\mathfrak{F}^{-1} \Biggl( \sum_{k_1,\ldots,k_5} \int \dd \eta_1 \cdots \dd \eta_5\, \delta(k_1-k_2+k_3-k_4+k_5-k)\,\delta(\eta_1-\eta_2+\eta_3-\eta_4+\eta_5-\eta)\,
\\
\times |\widetilde{U}(k_1,\eta_1)|\,|\widetilde{U}(k_2,\eta_2)|\,|\widetilde{U}(k_3,\eta_3)|\,|\widetilde{U}(k_4,\eta_4)|\,|\widetilde{U}(k_5,\eta_5)|\Biggr)(x,t)\,,
\end{multline}
where $\mathfrak{F}$ denotes the spacetime Fourier transform defined in \eqref{spacetime_Fourier_transform}. Then, arguing analogously as in the proof of Lemma \ref{Multlinear_estimates_2} (see \eqref{N_1_claim_A}--\eqref{F_j_choice_norm} above), we get that 
\begin{equation}
\label{I_2_bound_delta_3}
\|F\|_{X^{s,b-1}} \lesssim \|U\|_{X^{s,b}}^5 \lesssim \mathcal{A}^5\,.
\end{equation}
For the last inequality in \eqref{I_2_bound_delta_3}, we recalled \eqref{U_construction}.
By using Definition \ref{q_Xsb_defn} and \eqref{I_2_bound_delta_2}--\eqref{I_2_bound_delta_3}, it follows that for almost every $(k,\eta) \in \Z \times \R$ (with respect to the product of counting measure on $\Z$ and Lebesgue measure on $\R$), we have that 
\begin{multline}
\label{I_2_bound_delta_4}
|\widetilde{F}(k,\eta)|
\\
= \sum_{k_1,\ldots,k_5} \int \dd \eta_1 \cdots \dd \eta_5\, \delta(k_1-k_2+k_3-k_4+k_5-k)\,\delta(\eta_1-\eta_2+\eta_3-\eta_4+\eta_5-\eta)\,
\\
\times |\widetilde{U}(k_1,\eta_1)|\,|\widetilde{U}(k_2,\eta_2)|\,|\widetilde{U}(k_3,\eta_3)|\,|\widetilde{U}(k_4,\eta_4)|\,|\widetilde{U}(k_5,\eta_5)|<\infty\,.
\end{multline}
By \eqref{I_2_bound_delta_1}--\eqref{w_epsilon_bound}, \eqref{I_2_bound_delta_4}, and the dominated convergence theorem, it follows that for almost every $(k,\eta) \in \Z \times \R$, we have
\begin{equation}
\label{I_2_bound_delta_5}
\lim_{\varepsilon \rightarrow 0} (G^{\eps})\,\widetilde{\,}\, (k,\eta)=0\,.
\end{equation}
By \eqref{I_2_bound_delta_1}, \eqref{w_epsilon_bound}, and \eqref{I_2_bound_delta_2}, we have that for all $(k,\eta) \in \Z \times \R$ and $\varepsilon \in (0,1)$
\begin{equation}
\label{I_2_bound_delta_6}
|(G^{\eps})\,\widetilde{\,}\, (k,\eta)| \lesssim \widetilde{F}(k,\eta)\,.
\end{equation}
By using Definition \ref{q_Xsb_defn}, \eqref{I_2_bound_delta_3}, \eqref{I_2_bound_delta_5}--\eqref{I_2_bound_delta_6}, and the dominated convergence theorem, it follows that 
\begin{equation}
\label{I_2_bound_delta_7}
\lim_{\varepsilon \rightarrow 0} \|G^{\eps}\|_{X^{s,b-1}}=0\,.
\end{equation}
By \eqref{I_2_bound_delta_6} and \eqref{I_2_bound_delta_3}, we have that for all $\varepsilon \in (0,1)$
\begin{equation}
\label{I_2_bound_delta_8}
\|G^{\eps}\|_{X^{s,b-1}} \lesssim \mathcal{A}^5\,.
\end{equation}
Recalling \eqref{local_X^{s,b}_norm}, and applying \eqref{U_construction}--\eqref{I_2_bound_delta_1_A}, \eqref{I_2_bound_delta_7}--\eqref{I_2_bound_delta_8} in \eqref{I_2_definition_delta}, it follows that\footnote{Here, we could also absorb the factor of $\mathcal{A}^5$ into the $\mathbf{F}(\varepsilon)$, but we keep it instead to make the argument parallel with the proof of Lemma \ref{q_Hartree_approximation_lemma} above.}
\begin{equation}
\label{I_2_bound_delta_9}
I_2 \lesssim \mathbf{F}(\varepsilon)\,\mathcal{A}^5\,.
\end{equation}
From \eqref{I_1+I_2}, \eqref{I_1_bound_delta}, and \eqref{I_2_bound_delta_9}, it follows that 
\begin{equation}
\label{I_bound_delta}
I \lesssim \delta^{\theta}\, \|u^{\varepsilon} - u\|_{X^{s,b}_{[0,\delta]}}\,\mathcal{A}^4+ \mathbf{F}(\varepsilon)\,\mathcal{A}^5\,.
\end{equation}

We bound the term $II$ appearing in \eqref{Section_5_Base_3_delta_2} in a similar way. Instead of \eqref{I_1+I_2}, we write
\begin{equation}
\label{II_1+II_2}
II \leq II_1+II_2\,,
\end{equation}
where 
\begin{multline}
\label{II_1_definition_delta}
II_1:=\Biggl\|\int \dd y \, \dd z
 \, w^\eps(x-y)\,w^\eps(y-z) |u^\eps(y,t)|^2 |u^\eps(z,t)|^2 u^\eps(x,t) 
\\
-\int \dd y \, \dd z
 \, w^\eps(x-y)\,w^\eps(y-z) |u(y,t)|^2 |u(z,t)|^2 u(x,t) \Biggr\|_{X^{s,b-1}_{[0,\delta]}} 
\end{multline}
and
\begin{equation}
\label{II_2_definition_delta}
I_2:=\Biggl\|\int \dd y \, \dd z
 \, w^\eps(x-y)\,w^\eps(y-z) |u(y,t)|^2 |u(z,t)|^2 u(x,t) -|u(x,t)|^4\,u(x,t) \Biggr\|_{X^{s,b-1}_{[0,\delta]}}\,.
\end{equation}
Arguing analogously as for \eqref{I_1_bound_delta}, we get that 
\begin{equation}
\label{II_1_bound_delta}
II_1 
\lesssim \delta^{\theta}\, \|u^{\varepsilon} - u\|_{X^{s,b}_{[0,\delta]}}\,\mathcal{A}^4\,.
\end{equation}
Let us define $H^{\eps}:\T \times \R \to \C$ as
\begin{multline}
\label{II_2_bound_delta_1_A}
H^{\eps}(x,t):=
\\
\int \dd y \, \dd z
 \, w^\eps(x-y)\,w^\eps(y-z) |U(y,t)|^2\, |U(z,t)|^2\, U(x,t) -|U(x,t)|^4\,U(x,t)\,.
\end{multline}
We argue as for \eqref{N_2_claim} to compute
\begin{multline}
\label{II_2_bound_delta_1}
(H^{\eps})\,\widetilde{\,}\, (k,\eta)=\sum_{k_1,\ldots,k_5} \int \dd \eta_1 \cdots \dd \eta_5\, \delta(k_1-k_2+k_3-k_4+k_5-k)\,
\\ \times
\delta(\eta_1-\eta_2+\eta_3-\eta_4+\eta_5-\eta)\ \bigl\{\widehat{w}^{\eps}(k_1-k_2+k_3-k_4)\,\widehat{w}^{\eps}(k_3-k_4)-1\bigr\}\,
\\
\times
\widetilde{U}(k_1,\eta_1)\,\overline{\widetilde{U}(k_2,\eta_2)}\,\widetilde{U}(k_3,\eta_3)\,\overline{\widetilde{U}(k_4,\eta_4)}\,\widetilde{U}(k_5,\eta_5)\,.
\end{multline}
Using \eqref{II_2_bound_delta_1_A}--\eqref{II_2_bound_delta_1} instead of \eqref{I_2_bound_delta_1_A}--\eqref{I_2_bound_delta_1} and arguing analogously as for \eqref{II_2_bound_delta_9}, we obtain
\begin{equation}
\label{II_2_bound_delta_9}
II_2 \lesssim \mathbf{F}(\varepsilon)\,\mathcal{A}^5\,.
\end{equation}
From \eqref{II_1+II_2}, \eqref{II_1_bound_delta}, and \eqref{II_2_bound_delta_9}, it follows that 
\begin{equation}
\label{II_bound_delta}
II \lesssim \delta^{\theta}\, \|u^{\varepsilon} - u\|_{X^{s,b}_{[0,\delta]}}\,\mathcal{A}^4+ \mathbf{F}(\varepsilon)\,\mathcal{A}^5\,.
\end{equation}
The base of the induction follows from \eqref{Section_5_Base_3_delta_2}, \eqref{I_bound_delta}, and \eqref{II_bound_delta}.

\paragraph{\textbf{Inductive Step}}

Suppose that \eqref{Section_5_Induction_1_delta}--\eqref{Section_5_Induction_2_delta} hold for some $0 \leq k \leq n-2$. By Lemma \ref{X^{s,b}_space_properties} (i) and \eqref{Section_5_Induction_2_delta} for $k$, we have that for $\eps \in (0,\eps_k)$
\begin{multline}
\label{Section_5_Base_6_delta}
\|u^{\varepsilon}((k+1)\delta)-u((k+1)\delta)\|_{H^s} \lesssim \|u^{\varepsilon}-u\|_{X^{s,b}_{[k\delta,(k+1)\delta]}}
\\
\leq C \|u^{\varepsilon}(k\delta)-u(k\delta)\|_{H^s}+C\,\mathbf{F}(\varepsilon)\,(\mathcal{A}+1)^5\,.
\end{multline}
We deduce
\begin{equation}
\label{Section_5_Step_1_delta}
\lim_{\varepsilon \rightarrow 0} \|u^{\varepsilon}((k+1)\delta)-u((k+1)\delta)\|_{H^s}=0
\end{equation}
from \eqref{Section_5_Base_6_delta} and \eqref{Section_5_Induction_1_delta} for $k$. Hence, we obtain \eqref{Section_5_Induction_1_delta} for $k+1$.

By \eqref{Section_5_Step_1_delta}, we deduce that there exists $\varepsilon_{k+1} \in (0,\eps_k)$ small enough such that for all $\varepsilon \in (0,\varepsilon_{k+1})$, we have
\begin{equation}
\label{Section_5_Step_2_delta}
\|u^{\varepsilon}((k+1)\delta)-u((k+1)\delta)\|_{H^s} \leq 1\,.
\end{equation}
From \eqref{Section_5_Step_2_delta}, it follows that for all $\varepsilon \in (0,\varepsilon_{k+1})$, we have
\begin{equation}
\label{Section_5_Step_3_delta}
\|u^{\varepsilon}((k+1)\delta)\|_{H^s} \leq \|u((k+1)\delta)\|_{H^s}+1 \leq \mathcal{A}+1\,.
\end{equation}
Using \eqref{Section_5_Step_3_delta} and arguing as for \eqref{Section_5_Step_4}, we deduce that
\begin{equation}
\label{Section_5_Step_4_delta}
u^{\varepsilon} \in X^{s,b}_{[(k+1)\delta,(k+2)\delta]}\,, \qquad \|u^{\varepsilon}\|_{X^{s,b}_{[(k+1)\delta,(k+2)\delta]}} \lesssim \mathcal{A}+1\,.
\end{equation}
Analogously as in \eqref{Section_5_Step_5}, we have 
\begin{equation}
\label{Section_5_Step_5_delta}
u \in X^{s,b}_{[(k+1)\delta,(k+2)\delta]}\,, \qquad \|u\|_{X^{s,b}_{[(k+1)\delta,(k+2)\delta]}} \lesssim \mathcal{A}\,.
\end{equation}
By Duhamel's principle, we have that for $t \in [(k+1)\delta,(k+2)\delta]$
\begin{multline}
\label{Section_5_Step_6_delta}
u^\eps(x,t) - u(x,t) =\e^{\mathrm{i}(t-(k+1)\delta) \Delta}\Bigl(u^{\varepsilon}(\cdot,(k+1)\delta)-u(\cdot,(k+1)\delta)\Bigr)(x)
\\
+\frac{\mathrm{i}}{3} \,\int_{(k+1)\delta}^t \dd t' \, \e^{\mathrm{i}(t-t')\Delta} \biggl[\int \dd y \, \dd z
        \, w^\eps(\cdot-y)\, w^\eps(\cdot-z) |u^\eps(y,t')|^2 \,|u^\eps(z,t')|^2\, u^\eps(\cdot,t') \\
        - |u(\cdot,t')|^4\,u(\cdot,t')\biggr](x)\\
+\frac{2\mathrm{i}}{3} \,\int_{(k+1)\delta}^t \dd t' \, \e^{\mathrm{i}(t-t')\Delta} \biggl[\int \dd y \, \dd z
        \, w^\eps(\cdot-y)\, w^\eps (y-z)\,|u^\eps(y,t')|^2 \,|u^\eps(z,t')|^2 \,u^\eps(\cdot,t') \\
        - |u(\cdot,t')|^4\,u(\cdot,t')\biggr](x)
	\end{multline}
We use Lemma \ref{X^{s,b}_space_properties} (ii) to estimate the first term on the right-hand side of \eqref{Section_5_Step_6_delta}. In order to estimate the second and third term, we use Lemma \ref{X^{s,b}_space_properties} (iii), \eqref{Section_5_Step_4_delta}, and \eqref{Section_5_Step_5_delta}, and apply analogous arguments to those in the proof of the base step. Putting everything together, we obtain  
\begin{equation*}
\|u^\eps - u\|_{X^{s,b}_{[(k+1)\delta,(k+2)\delta]}} 
\leq  C\|u^{\varepsilon}((k+1)\delta)-u((k+1)\delta)\|_{H^s}+C\,\mathbf{F}(\varepsilon)\,(\mathcal{A}+1)^5\,,
\end{equation*}
which concludes the proof of the inductive step. The claim of Lemma \ref{q_Hartree_approximation_lemma_delta} now follows from \eqref{Section_5_Induction_1_delta}--\eqref{Section_5_Induction_2_delta} analogously as in the proof of Lemma \ref{q_Hartree_approximation_lemma}.
\end{proof}

\begin{remark}
Let us note that, in the second term on the right-hand side of \eqref{Section_5_Induction_2_delta}, there is no factor of $\delta^{\theta}$ as there is in \eqref{Section_5_Induction_2}. This is because the analysis in the proof of \eqref{Section_5_Induction_1_delta}--\eqref{Section_5_Induction_2_delta} is based on the globally-defined function $U$ in \eqref{U_construction}. The $\delta^{\theta}$ factor is not necessary in order to prove Lemma  \ref{q_Hartree_approximation_lemma_delta}. Hence, this does not have any effect on the analysis.
\end{remark}

We now have all the necessary tools to prove Theorem \ref{q_delta_time_thm}.

\begin{proof}[Proof of Theorem \ref{q_delta_time_thm}]
The proof is analogous to that of Theorem \ref{q_L1_time_thm}. The only difference is that we apply Lemma \ref{q_Hartree_approximation_lemma_delta} instead of Lemma \ref{q_Hartree_approximation_lemma}.
\end{proof}

\appendix 

\section{An alternative model}
\label{Appendix A}
In this appendix, we introduce a different type of quintic Hartree (nonlocal quintic NLS) equation. We remark how Gibbs measures for the corresponding flow can be obtained as mean-field limits of Gibbs states for suitable three-body quantum interacting systems in sense studied above. The analysis is conceptually the same as before, but some of the estimates need to be done differently, due to the different algebraic structure of the nonlinearity. In the appendix, we only present the aforementioned estimates, and refer the reader to \cite[Chapter 4]{Rout_Thesis_2023} and \cite{RS23} for the full details.

Instead of \eqref{q_Hartree_defn}, we study

\begin{equation}
\label{A_q_Hartree_defn}
\mathrm{i}\partial_t u + (\Delta - \kappa)u =\int \dd y \, \dd z \, w(x-y)\,w(y-z)\,w(z-x)\,|u(y)|^2\,|u(z)|^2\,u(x)\,.
\end{equation}
When studying \eqref{A_q_Hartree_defn}, we consider interaction potentials satisfying the following assumption.
\begin{assumption}
\label{A_w_assumption}
Let $w: \Lambda \rightarrow \R$ be even and such that $w \in L^{\frac{3}{2}}(\Lambda)$.
\end{assumption}
In particular, unlike as in Assumption \ref{w_assumption}, we cannot consider $w=c\delta$, and thus we cannot consider the local problem in \eqref{A_q_Hartree_defn}.
By arguing as for Lemma \ref{Hamiltonian_Lemma}, we have the following result.
\begin{lemma}
\label{A_Hamiltonian_Lemma}
With Poisson structure given by \eqref{Poisson_structure}, we have that \eqref{q_Hartree_defn} corresponds to the Hamiltonian equation of motion associated with Hamiltonian
\begin{multline*}
H(u) = \int \dd x\, \left(|\nabla u(x)|^2 + \kappa |u(x)|^2\right) \\ 
+ \frac{1}{3}\int \dd x \, \dd y \, \dd z \, w(x-y)\,w(y-z)\,w(z-x)\,|u(x)|^2\,|u(y)|^2\,|u(z)|^2\,.
\end{multline*}
\end{lemma}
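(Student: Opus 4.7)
The plan is to carry out a direct variational calculation analogous to the one giving \eqref{Poisson_bracket_H} in the proof of Lemma \ref{Hamiltonian_Lemma}, keeping track of the combinatorial factors produced by the three-body interaction kernel. First I would split $H = H_0 + H_{\mathrm{int}}$, where $H_0 = \int \dd x\,(|\nabla u|^2 + \kappa |u|^2)$ and $H_{\mathrm{int}}$ is the sextic term. The Poisson bracket $\{H_0, u\}(x)$ equals $\mathrm{i}(\Delta u(x) - \kappa u(x))$ by the same standard computation used in the proof of Lemma \ref{Hamiltonian_Lemma}, and this accounts for the linear part $\mathrm{i}\partial_t u + (\Delta - \kappa)u$ on the left-hand side of \eqref{A_q_Hartree_defn}.

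The main point is to compute $\{H_{\mathrm{int}}, u\}(x)$. From \eqref{Poisson_structure}, this bracket equals $-\mathrm{i}\,\delta H_{\mathrm{int}}/\delta \bar{u}(x)$. The crucial structural observation is that, since $w$ is even, the kernel
\[
K(x,y,z) := w(x-y)\,w(y-z)\,w(z-x)
\]
is invariant under \emph{every} permutation of $(x,y,z)$: cyclicity is manifest, and the transposition $(x,y) \mapsto (y,x)$ maps $K$ to $w(y-x)w(x-z)w(z-y) = w(x-y)w(z-x)w(y-z) = K$. Consequently the integrand of $H_{\mathrm{int}}$, namely $K(x,y,z)\,|u(x)|^2|u(y)|^2|u(z)|^2$, is fully symmetric in its three variables, so differentiating $|u(x)|^2|u(y)|^2|u(z)|^2$ with respect to $\bar{u}(x)$ and then using symmetry to combine the three resulting contributions yields
\[
\frac{\delta H_{\mathrm{int}}}{\delta \bar{u}(x)} = \int \dd y\,\dd z\, w(x-y)\,w(y-z)\,w(z-x)\,|u(y)|^2|u(z)|^2\,u(x).
\]
Here the factor $3$ from the three symmetric contributions exactly cancels the prefactor $1/3$ in $H_{\mathrm{int}}$; this is the essential difference from the analogous step in Lemma \ref{Hamiltonian_Lemma}, where the kernel $w(x-y)w(x-z)$ is only partially symmetric and the differentiation instead produces two terms with coefficients $1/3$ and $2/3$.

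Combining the two pieces gives
\[
\{H,u\}(x) = \mathrm{i}(\Delta u(x) - \kappa u(x)) - \mathrm{i}\int \dd y\,\dd z\, w(x-y)\,w(y-z)\,w(z-x)\,|u(y)|^2|u(z)|^2\,u(x),
\]
and identifying $\partial_t u = \{H,u\}$ reproduces \eqref{A_q_Hartree_defn}, completing the proof. There is no real obstacle here beyond careful bookkeeping; the only substantive point is the full $S_3$-symmetry of the kernel $K$, which is what allows the $1/3$ in the Hamiltonian to yield coefficient $1$ (rather than $1/3 + 2/3$) in the resulting equation of motion.
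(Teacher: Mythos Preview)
Your proof is correct and follows the same approach as the paper, which simply states that the result follows ``by arguing as for Lemma~\ref{Hamiltonian_Lemma}'' via a direct Poisson-bracket computation. Your explicit identification of the full $S_3$-symmetry of the kernel $w(x-y)w(y-z)w(z-x)$ (using evenness of $w$) is exactly the point that makes the calculation cleaner here than in Lemma~\ref{Hamiltonian_Lemma}, collapsing the three variational contributions into a single term and cancelling the $1/3$.
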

With $w$ as in Assumption \ref{A_w_assumption}, the classical interaction $\mathcal{W} \equiv \mathcal{W}^{\omega}$ is given by
\begin{equation}
\label{A_q_classical_interaction}
\mc{W} := \frac{1}{3} \int \dd x \, \dd y \, \dd z \, w(x-y)\,w(y-z)\,w(z-x)\, |\vph(x)|^2\,|\vph(y)|^2\,|\vph(z)|^2\,.
\end{equation}
With \eqref{q_classical_interaction} replaced by \eqref{A_q_classical_interaction}, we define the classical Gibbs measure and $p$-particle correlation functions as in \eqref{q_gibbs_cutoff_defn_rigorous}--\eqref{q_classical_correln_fn_defn} above. Moreover, instead of \eqref{W_kernel}, let us now consider
\begin{multline*}
W(x_1,x_2,x_3;y_1,y_2,y_3)
\\
:=w(x_1-x_2)\,w(x_2-x_3)\,w(x_3-x_1)\,\delta(x_1-y_1)\,\delta(x_2-y_2)\,\delta(x_3-y_3)\,,
\end{multline*}
(again for $w$ as in Assumption \ref{A_w_assumption}). With this modification, we define the time-evolved observable as in \eqref{q_time_evolved_rv}, where now $S_t$ denotes the flow of \eqref{A_q_Hartree_defn}.

In the many-body quantum setting, we consider $w$ as in Assumption \ref{w_assumption_bounded} and define the Hamiltonian $H_{\tau}$ by \eqref{q_quantum_interacting_Hamn}, where now instead of by \eqref{Quantum_interaction}, the interaction is given by
\begin{equation*}
\mc{W}_\tau := \frac{1}{3} \int \dd x \, \dd y \, \dd z \, w(x-y)\,w(y-z)\,w(z-x)\,\vphts(x)\vphts(y)\vphts(z)\vpht(x)\vpht(y)\vpht(z)\,.
\end{equation*}
In particular, we have 
\begin{equation*}
H_{\tau}=\bigoplus_{n=0}^{\infty} H_{\tau}^{(n)}\,,
\end{equation*}
where 
\begin{equation*}
H^{(n)}_{\tau} =
\frac{1}{\tau}\,\sum_{i=1}^n (-\Delta_i + \kappa) + \frac{1}{3\tau^3} \mathop{\sum_{i,j,k}^{n}}_{i \neq j \neq k \neq i} w(x_i-x_j)\,w(x_j-x_k)\,w(x_k-x_i)\,.
\end{equation*}
With these modifications, we consider the objects defined as in \eqref{q_quantum_state_defn}--\eqref{q_quantum_time_evolution} above. 

We can now state our results in this alternative model. Let us first state the time-independent results.
\begin{theorem}[Convergence for bounded interaction potentials]
\label{A_q_bounded_state_convergence_thm}
Let $w$ be as in Assumption \ref{w_assumption_bounded}. With objects as defined in the appendix, we have that \eqref{correlation_function_convergence_bounded} and \eqref{partition_function_convergence_bounded} hold.
\end{theorem}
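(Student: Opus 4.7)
The plan is to run the argument of Sections \ref{q_Power_series_expansions_of_the_quantum_and_classical_states} and \ref{q_untruncated} in parallel, tracking only the modifications required by the new algebraic form of the interaction. The Duhamel expansions of Lemmas \ref{q_quantum_duhamel_expansion_lem} and \ref{q_classical_duhamel_expansion_lem} do not use the specific form of $\mathcal{W}_{\#}$ and thus apply verbatim once $\mathcal{W}$ and $\mathcal{W}_\tau$ are redefined as in the appendix. On $\mathfrak{h}^{(n)}$, the new quantum interaction acts as multiplication by
\begin{equation*}
\frac{1}{3\tau^{3}}\mathop{\sum_{i,j,k}^{n}}_{i\neq j\neq k\neq i} w(x_i-x_j)\,w(x_j-x_k)\,w(x_k-x_i),
\end{equation*}
which is pointwise bounded by $(n/\tau)^{3}\|w\|_{L^{\infty}}^{3}$; similarly $|\mathcal{W}|\leq\tfrac{1}{3}\|w\|_{L^{\infty}}^{3}\|\varphi\|_{L^{2}}^{6}$ classically. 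Combined with Assumption \ref{f_assumption}, this yields the analogues of Lemmas \ref{q_quantum_explicit_bounds_lem} and \ref{q_classical_explicit_bounds_lemma} with $\|w\|_{L^{\infty}}^{2m}$ replaced uniformly by $\|w\|_{L^{\infty}}^{3m}$.

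The remainder estimates go through by the same Feynman--Kac route as in Lemma \ref{q_quantum_remainder_bounds_lem}, with the only change being that the cyclic triple product $w_{i,j}\,w_{j,k}\,w_{k,i}$ now appears in the exponent after applying Proposition \ref{q_FK_formula} and is controlled pointwise by $\|w\|_{L^{\infty}}^{3}$, producing an exponential factor $\e^{\frac{1}{3}K^{3}|\mathrm{Re}(z)|\,\|w\|_{L^{\infty}}^{3}}$. The classical remainder of Lemma \ref{q_classical_remainder_lemma} is handled identically using the bound above on $|\mathcal{W}|$. Combined with the explicit-term bounds, this lifts the Duhamel series for both the classical and quantum states to entire functions of $z$, giving the analogues of Corollaries \ref{q_quantum_power_series_analytic} and \ref{q_classical_power_series_analytic}.

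The substantive step is to establish the analogue of Proposition \ref{q_expansion_nugeq0}. The abstract vertex set $\Sigma$, the set $\mathfrak{P}$ of admissible pairings, and the coloured multigraph $(V,E,\sigma)$ of Definition \ref{q_colored_graph_defn} depend only on the number of creation and annihilation operators at each interaction, which is unchanged. The only modification in the integrand \eqref{I_{tau,Pi}_definition} occurs at the interaction vertices: the star product $w(x_{i,1,1}-x_{i,2,1})\,w(x_{i,1,1}-x_{i,3,1})$ is replaced by the cyclic triangle $w(x_{i,1,1}-x_{i,2,1})\,w(x_{i,2,1}-x_{i,3,1})\,w(x_{i,3,1}-x_{i,1,1})$. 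Pointwise boundedness of $w$ then gives
\begin{equation*}
\bigl|I^{\xi}_{\tau,\Pi}(\mbt)\bigr| \leq \|w\|_{L^{\infty}}^{3m}\int_{\Lambda^{V}}\dd\mby\,|\xi(\mby_{1})|\prod_{e \in E}\mathcal{J}_{\tau,e}(\mby_{e},\mbs),
\end{equation*}
and the subsequent open/closed path estimates of Lemmas \ref{q_closed_path_lem} and \ref{q_open_path_lem} apply verbatim, since they only concern the propagator edges carrying the Wick pairings, not the $w$-edges encoding the interaction. The same holds for the $\xi=\mathbf{1}_{p}$ variant of Section \ref{Subsection_3.6.3}. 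Wick's theorem (Proposition \ref{q_Wick_thm}) identifies the limit as $b^{\xi}_{m}$, and one concludes via the dominated-convergence and duality arguments of Section \ref{q_subsection_bounded_proof}, including the positivity-based upgrade from $\mathfrak{S}^{2}$ to $\mathfrak{S}^{1}$ convergence. The main obstacle is essentially bookkeeping: one must check that the new operator $W$ with kernel $w(x_{1}-x_{2})\,w(x_{2}-x_{3})\,w(x_{3}-x_{1})\prod_{j}\delta(x_{j}-y_{j})$, which is already symmetric on $\mathfrak{h}^{(3)}$ by evenness of $w$ and cyclicity, produces the correct Fock-space lift $\Theta_{\tau}(W)=3\mathcal{W}_{\tau}$ feeding into the graphical machinery. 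Beyond the substitution $\|w\|_{L^{\infty}}^{2m}\mapsto\|w\|_{L^{\infty}}^{3m}$, no new dispersive or multilinear estimate is required.
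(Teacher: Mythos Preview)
Your proposal is correct and follows the same route the paper indicates: the appendix does not give an independent proof but states that the analysis is ``quite similar as before'' with a ``slightly different'' graph structure, deferring the details to \cite[Chapter 4]{Rout_Thesis_2023} and \cite{RS23}. Your write-up makes explicit precisely what the paper leaves implicit, namely that the only change at the level of $I^{\xi}_{\tau,\Pi}$ is the replacement of the star $w(x_{i,1}-x_{i,2})\,w(x_{i,1}-x_{i,3})$ by the cyclic triangle $w(x_{i,1}-x_{i,2})\,w(x_{i,2}-x_{i,3})\,w(x_{i,3}-x_{i,1})$, which under Assumption \ref{w_assumption_bounded} is absorbed into $\|w\|_{L^{\infty}}^{3m}$ and leaves the propagator-edge analysis of Lemmas \ref{q_closed_path_lem}--\ref{q_open_path_lem} untouched.
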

\begin{theorem}[Convergence for $L^\frac{3}{2}$ interaction potentials]
\label{q_L32_state_convergence_thm}
Let $w$ be as in Assumption \ref{w_assumption}. Suppose that $w^\eps$ is a sequence of interaction potentials as in Assumption \ref{w_assumption_bounded} converging to $w$ in $L^{\frac{3}{2}}$. 
With objects defined analogously as for Theorem \ref{A_q_bounded_state_convergence_thm}, there exists a sequence $\eps_\tau \rightarrow 0$ as $\tau \to \infty$ such for any $p \in \mathbb{N}^*$, we have
\begin{equation*}
\lim_{\tau \to \infty}\|\gamma^{\varepsilon_\tau}_{\tau,p} - \gamma_p\|_{\mfS^1(\mfhp)} = 0\,,
\end{equation*}
and such that
\begin{equation*}
\lim_{\tau \to \infty} \mc{Z}^{\varepsilon_\tau}_{\tau} = z\,.
\end{equation*}
\end{theorem}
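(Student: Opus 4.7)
The plan is to adapt the diagonal-argument proof of Theorem \ref{q_L1_state_convergence_thm} to the triangular three-body structure. Since each $w^{\varepsilon}$ falls under Assumption \ref{w_assumption_bounded}, Theorem \ref{A_q_bounded_state_convergence_thm} already gives $\rho_\tau^\varepsilon \to \rho^\varepsilon$ and $\mathcal{Z}_\tau^\varepsilon \to z^\varepsilon$ as $\tau \to \infty$, uniformly over $\xi \in \mathcal{C}_p$ by the duality argument \eqref{duality_1}--\eqref{duality_3}. Via Lemma \ref{diagonal_argument_lemma}, the entire statement reduces to the classical limit $\rho^\varepsilon \to \rho$ and $z^\varepsilon \to z$ as $\varepsilon \to 0$, which in turn reduces by the dominated-convergence scheme of \eqref{(4.5)_claim}--\eqref{(4.10)_claim_B} to showing $\mathcal{W}^\varepsilon \to \mathcal{W}$ $\mu$-almost surely together with a $\mu$-integrable dominant for $\mathrm{e}^{-\mathcal{W}^\varepsilon} f(\mathcal{N})$ that is uniform in small $\varepsilon$.

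The crucial technical input is a substitute for Lemma \ref{Multlinear_estimates_1} tailored to the cyclic kernel and the $L^{3/2}$ scaling. I would prove, for real even $w_1,w_2,w_3 \in L^{3/2}(\Lambda)$ and $q$ sufficiently regular,
\begin{equation*}
\biggl|\int \dd x\,\dd y\,\dd z\, w_1(x-y)\,w_2(y-z)\,w_3(z-x)\,|q(x)|^2\,|q(y)|^2\,|q(z)|^2 \biggr| \leq C\,\prod_{i=1}^{3}\|w_i\|_{L^{3/2}}\, \bigl\|\widehat{|q|^2}\bigr\|_{\ell^{3/2}(\mathbb{Z})}^{3}.
\end{equation*}
Expanding in Fourier and setting $g := |q|^2$, the triple integral becomes
\begin{equation*}
\sum_{m_1,m_2,m_3 \in \mathbb{Z}} \hat{w}_1(m_1)\,\hat{w}_2(m_2)\,\hat{w}_3(m_3)\,\hat g(m_3-m_1)\,\hat g(m_1-m_2)\,\hat g(m_2-m_3).
\end{equation*}
After substituting $n_2 := m_1-m_2$ and $n_3 := m_2-m_3$, the $m_1$-sum decouples and H\"older in $\ell^3_{m_1}$ yields $\prod_i \|\hat w_i\|_{\ell^3}$, each factor bounded by $\|w_i\|_{L^{3/2}}$ via Hausdorff--Young. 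The residual double sum $\sum_{n_2,n_3} |\hat g(n_2)||\hat g(n_3)||\hat g(n_2+n_3)|$ rewrites as $\sum_m |\hat g(m)| (|\hat g|\ast|\hat g|)(m)$ and is controlled by $\|\hat g\|_{\ell^{3/2}}^3$ via a H\"older pairing in $\ell^{3/2}$--$\ell^3$ combined with the Young convolution inequality $\ell^{3/2}\ast\ell^{3/2} \hookrightarrow \ell^3$.

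To deploy the trilinear bound $\mu$-a.s., I write $\widehat{|\varphi|^2}(n) = \sum_k \hat\varphi(k)\,\overline{\hat\varphi(k-n)}$ and set $\alpha_k := |\omega_k|/\sqrt{\lambda_k}$, so that $|\widehat{|\varphi|^2}(n)| \leq (\alpha \ast \alpha)(n)$. Young on $\mathbb{Z}$ with $\ell^{6/5}\ast\ell^{6/5}\hookrightarrow \ell^{3/2}$ reduces a.s.\ finiteness of $\|\widehat{|\varphi|^2}\|_{\ell^{3/2}}$ to a.s.\ finiteness of $\|\alpha\|_{\ell^{6/5}}$, which follows from the Gaussian moment bound $\mathbb{E}_{\mu} \|\alpha\|_{\ell^{6/5}}^{6/5} = \sum_k \mathbb{E}_\mu |\omega_k|^{6/5} \,\lambda_k^{-3/5} < \infty$, since $\lambda_k^{-3/5} \sim k^{-6/5}$ is summable. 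The $\mu$-a.s.\ convergence $\mathcal{W}^\varepsilon \to \mathcal{W}$ then follows by telescoping $w^\varepsilon w^\varepsilon w^\varepsilon - www$ into three differences and applying the trilinear estimate with $w^\varepsilon - w$ in one slot, using $\|w^\varepsilon - w\|_{L^{3/2}} \to 0$. For the dominated-convergence step, the estimate delivers the pointwise bound $|\mathcal{W}^\varepsilon| \lesssim \|w^\varepsilon\|_{L^{3/2}}^3\,\|\alpha\|_{\ell^{6/5}}^6$, and one verifies $\mathrm{e}^{C \|\alpha\|_{\ell^{6/5}}^6} f(\mathcal{N}) \in L^1(\mathrm{d}\mu)$ for sufficiently small $K$ in Assumption \ref{f_assumption} by a Chebyshev/Gaussian-concentration argument for the Wiener-type norm, mirroring the role played by \cite[Lemma 3.10]{Bou94} in Proposition \ref{Cauchy_problem_2}(i).

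The principal difficulty is that the cyclic kernel $w(x-y)w(y-z)w(z-x)$ does not factorise under physical-space H\"older or Young inequalities in the manner exploited by Lemma \ref{Multlinear_estimates_1} for the two-legged kernel; Fourier analysis is forced, and the Hausdorff--Young pairing $L^{3/2}\to\ell^3$ with an $\ell^3$ H\"older in the decoupled summation variable is tight, so no gain is available on the $w$-side. Consequently the estimate produces the Wiener-type norm $\|\widehat{|\varphi|^2}\|_{\ell^{3/2}}$ rather than a plain $L^p$ norm of $\varphi$, and securing an exponentially integrable $\mu$-a.s.\ dominant for this norm under the mass cutoff $f(\mathcal{N})$---the step replacing the clean $\|\varphi\|_{L^6}^6$ bound available in the $L^1$ theory---is the most delicate point of the argument.
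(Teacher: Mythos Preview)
Your overall architecture matches the paper's: reduce to the classical limit $\rho^{\varepsilon}\to\rho$ and $z^{\varepsilon}\to z$ via the diagonal argument, then prove $\mathcal{W}^{\varepsilon}\to\mathcal{W}$ $\mu$-a.s.\ together with a dominated-convergence envelope. The Fourier decoupling step for the cyclic kernel---H\"older in $\ell^3$ over the free summation variable to extract $\prod_i\|\hat w_i\|_{\ell^3}\lesssim\prod_i\|w_i\|_{L^{3/2}}$---is exactly what the paper does in Lemma~\ref{A_Multlinear_estimates_1}. The divergence is in what you do with the residual sextic sum.

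The paper recognises that sum as $\|F\|_{L^6}^6$ with $\widehat F=|\hat q|$ via Parseval, then applies Sobolev embedding $H^{1/3}\hookrightarrow L^6$ to obtain $|\mathcal{W}|\lesssim\|w\|_{L^{3/2}}^3\|\varphi\|_{H^{1/3}}^6$. This choice is deliberate: the dominant $\mathrm{e}^{c\|\varphi\|_{H^{1/3}}^6}\chi(\mathcal{N}\le B)\in L^1(\mathrm{d}\mu)$ is precisely the content of Proposition~\ref{A_Cauchy_problem_2}(i), which is proved by running Bourgain's frequency-decomposition argument with $\|\cdot\|_{H^{1/3}}$ in place of $\|\cdot\|_{L^6}$ (the paper sketches this in the paragraph containing \eqref{A_Cauchy_problem_2_i_2}--\eqref{A.17'}). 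So the multilinear bound and the exponential-integrability input are designed to match.

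Your route instead bounds the residual sum by $\|\widehat{|q|^2}\|_{\ell^{3/2}}^3$ via a Young/H\"older pairing, and then controls the latter by $\|\alpha\|_{\ell^{6/5}}^6$. The estimate is correct, and a.s.\ finiteness of $\|\alpha\|_{\ell^{6/5}}$ is easy as you say. But the dominant you then require is $\mathrm{e}^{C\|\alpha\|_{\ell^{6/5}}^6}f(\mathcal{N})\in L^1(\mathrm{d}\mu)$, and this is \emph{not} the same as the $H^{1/3}$ statement: $\|\hat\varphi\|_{\ell^{6/5}}$ and $\|\varphi\|_{H^{1/3}}$ are incomparable norms (the natural H\"older comparison produces a divergent $\sum_k(1+|k|)^{-1}$). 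Bourgain's argument as adapted in Proposition~\ref{A_Cauchy_problem_2}(i) relies on dyadic $L^2$-based concentration and the dual-net construction in $H^{-1/3}$; redoing this for an $\ell^{6/5}$ norm of Gaussians is plausible but requires genuinely new estimates (the Lipschitz constant of $\|\cdot\|_{\ell^{6/5}}$ with respect to $\ell^2$ on a dyadic block is dimension-dependent, so the concentration inequality degrades). Your proposal treats this as a routine mirror of the known case, which it is not. The paper's choice of landing on $\|\varphi\|_{H^{1/3}}^6$ via Parseval and Sobolev is simpler precisely because it feeds into an exponential-integrability result that is already in hand.
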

We can also obtain the following time-dependent results.
\begin{theorem}[Convergence for bounded potentials]
\label{A_q_bounded_time_thm}
Let $w$ be as in Assumption \ref{w_assumption_bounded}. Let $m \in \mathbb{N}^*$, $p_1,\ldots,p_m \in \N^*$, $\xi_1 \in \mc{L}(\mathfrak{h}^{(p_1)}),\ldots,\xi_m \in \mc{L}(\mathfrak{h}^{(p_m)})$, and $t_1,\ldots,t_m \in \R$ be given. With objects as defined in the appendix, we have that \eqref{Theorem_1.12} holds.
\end{theorem}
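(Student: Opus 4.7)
The plan is to reduce Theorem \ref{A_q_bounded_time_thm} to the time-independent result Theorem \ref{A_q_bounded_state_convergence_thm} via a Schwinger--Dyson expansion, following exactly the strategy used in the main text to deduce Theorem \ref{q_bounded_time_thm} from Theorem \ref{q_bounded_state_convergence_thm}. The only novelty is that the interaction $\mcW_\tau$ now contains three factors of $w$ rather than two (with a different combinatorial arrangement), so I need to verify that the analogues of Lemmas \ref{q_SD_quantum} and \ref{q_SD_classical} go through with the modified bookkeeping.

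First, I would establish a quantum Schwinger--Dyson expansion for the new model: for $\xi \in \mcL(\mfh^{(p)})$, $\mathcal{K} > 0$, $\varepsilon > 0$, and $t \in \R$, there exist $L \in \N$, operators $e^l \in \mcL(\mfh^{(l)})$, and $\tau_0 > 0$ such that
\begin{equation*}
\left\| \left(\Psi_\tau^t \Theta_\tau(\xi) - \sum_{l=0}^L \Theta_\tau(e^l)\right)\bigg|_{\mfh^{(\leq \mathcal{K}\tau)}}\right\| < \varepsilon
\end{equation*}
for all $\tau > \tau_0$. The proof proceeds exactly as in Lemma \ref{q_SD_quantum} (and in \cite[Lemma 3.9]{FKSS18}); the only change is in the norm estimate of the individual term of order $j$. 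For the new model, on the $n$-th sector, $\mcW_\tau$ acts as multiplication by a function of sup-norm bounded by $\frac{1}{3}(n/\tau)^3 \|w\|_{L^\infty}^3$. Consequently, the analogue of the bound used to obtain \cite[(3.47)]{FKSS18} becomes
\begin{equation*}
\frac{|t|^j}{j!}(p+j)^j 2^j \left(\frac{n}{\tau}\right)^{p+j}\|w\|_{L^\infty}^{3j}\|\xi\| \leq \mathrm{e}^p \mathcal{K}^p \left(2\mathrm{e}\mathcal{K}\|w\|_{L^\infty}^3 |t|\right)^j \|\xi\|\,,
\end{equation*}
on $\mfh^{(\leq \mathcal{K}\tau)}$. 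This remainder is summable in $j$, so the same truncation argument yields the required approximation. The classical Schwinger--Dyson expansion follows analogously, using Lemma \ref{q_classical_interaction_estimate_prop} adapted to the triangle-shaped interaction (which gives $|\mcW| \leq \frac{1}{3}\|w\|_{L^\infty}^3 \|\vph\|_{L^2}^6$).

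With both expansions in hand, I would then argue exactly as in the proof of \cite[Proposition 2.1]{FKSS18}, as is done for Theorem \ref{q_bounded_time_thm}. Namely, on the sectors $\mfh^{(\leq \mathcal{K}\tau)}$, which carry essentially all the mass of the quantum state by the presence of $f(\mcN_\tau)$, each factor $\Psi^{t_i}_\tau \Theta_\tau(\xi_i)$ can be replaced by a polynomial $\sum_{l=0}^{L_i}\Theta_\tau(e^l_i)$ in rescaled creation/annihilation operators, up to an error of $O(\varepsilon)$ in operator norm; similarly on the classical side using Lemma \ref{q_SD_classical}'s analogue. Applying the time-independent convergence provided by Theorem \ref{A_q_bounded_state_convergence_thm} (extended, by duality as in \eqref{duality_1}--\eqref{duality_3}, to products of lifted observables of the form $\Theta_\tau(e^{l_1}_1)\cdots\Theta_\tau(e^{l_m}_m)$) gives \eqref{Theorem_1.12} for this alternative model.

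The main obstacle is purely verificational: checking that every $L^\infty$-norm bound on the interaction in the argument of \cite[Lemma 3.9]{FKSS18}, and every tail estimate on high Fock sectors controlled by the cut-off $f$, adapts cleanly to the three-$w$ structure. Since both the quantum and classical bounds from Sections \ref{q_subsection_quantum_power_series_bounds}--\ref{q_subsection_classical_power_series_bounds} already go through (as used in \cite[Chapter 4]{Rout_Thesis_2023} and \cite{RS23}) and since the cyclic $w(x-y)w(y-z)w(z-x)$ kernel is bounded by $\|w\|_{L^\infty}^3$ in the same way that the $w(x-y)w(x-z)$ kernel is bounded by $\|w\|_{L^\infty}^2$, no new analytical input beyond the main text is required.
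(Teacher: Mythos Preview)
Your proposal is correct and follows essentially the same approach as the paper. The appendix does not give a self-contained proof of Theorem \ref{A_q_bounded_time_thm}; it simply states that the analysis is ``quite similar as before'' and refers to \cite[Chapter 4]{Rout_Thesis_2023} and \cite{RS23}, meaning the intended argument is precisely the Schwinger--Dyson reduction to the time-independent result (Theorem \ref{A_q_bounded_state_convergence_thm}) that you outline, with the only bookkeeping change being the replacement of $\|w\|_{L^\infty}^2$ by $\|w\|_{L^\infty}^3$ in the interaction bounds.
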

\begin{theorem}[Convergence for $L^{\frac{3}{2}}$ potentials]
\label{q_L32_time_thm}
Let $w$ be as in Assumption \ref{w_assumption}. Let $w^\eps$ be defined as in Theorem \ref{q_L32_state_convergence_thm} above. With objects defined as for Theorem \ref{A_q_bounded_time_thm}, there is a sequence $\varepsilon_\tau \rightarrow 0$ as $\tau \to \infty$ such that, for all $m \in \mathbb{N}^*$, $p_1,\ldots,p_m \in \N^*$, $\xi_1 \in \mc{L}(\mathfrak{h}^{(p_1)}),\ldots,\xi_m \in \mc{L}(\mathfrak{h}^{(p_m)})$, and $t_1,\ldots,t_m \in \R$, we have
\begin{equation*}
\lim_{\tau \rightarrow \infty} \rt^{\eps_\tau}(\Psi^{t_1}_\tau\Th_\tau(\xi_1) \ldots \Psi^{t_m}_\tau\Th_\tau(\xi_m)) = \rho(\Psi^{t_1}\Th(\xi_1) \ldots \Psi^{t_m}\Th(\xi_m))\,.
\end{equation*}
\end{theorem}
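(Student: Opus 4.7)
The plan is to mirror the proof of Theorem \ref{q_L1_time_thm} given above, replacing $L^1$ based multilinear estimates by $L^{3/2}$ based ones that exploit the triangular convolution structure $w(x-y)w(y-z)w(z-x)$. As in the proof of Theorem \ref{q_L1_time_thm}, the plan relies on three inputs: (i) the bounded time-dependent result Theorem \ref{A_q_bounded_time_thm}, applied for each fixed $\varepsilon>0$; (ii) an approximation lemma analogous to Lemma \ref{q_Hartree_approximation_lemma} for the modified Hartree equation \eqref{A_q_Hartree_defn}; and (iii) the diagonal argument Lemma \ref{diagonal_argument_lemma}.

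The first step is to establish the multilinear estimates adapted to the new nonlinearity. For the deterministic bound, I would substitute $u=x-y$, $v=z-y$ to write
\begin{equation*}
\int \dd x\,\dd y\,\dd z\, w(x-y)\,w(y-z)\,w(z-x)\,f(x)\,g(y)\,h(z) = \int \dd u\,\dd v\, w(u)\,w(-v)\,w(v-u)\,F(u,v),
\end{equation*}
with $F(u,v):=\int\dd y\, f(y+u)\,g(y)\,h(y+v)$. Taking $f=g=h=|\varphi|^2$, H\"older with exponents $(3,3,3)$ gives $|F(u,v)|\leq\|\varphi\|_{L^6}^6$, while H\"older with exponents $(3/2,3)$ and Young's inequality $\|w*w\|_{L^3}\leq\|w\|_{L^{3/2}}^2$ give $\int|w(u)|\,|w(v)|\,|w(v-u)|\,\dd u\,\dd v\leq\|w\|_{L^{3/2}}^3$. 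This yields the analogue of Lemma \ref{Multlinear_estimates_1}:
\begin{equation*}
|\mathcal{W}|\leq \tfrac{1}{3}\|w\|_{L^{3/2}}^3\,\|\varphi\|_{L^6}^6.
\end{equation*}
A similar Fourier-space bound, combined with the known quintic $X^{s,b}$ estimate \cite[Proof of (3.56)]{Erdogan_Tzirakis}, produces the analogue of Lemma \ref{Multlinear_estimates_2} with $\|w_1\|_{L^1}\|w_2\|_{L^1}$ replaced by $\|w\|_{L^{3/2}}^3$; the key point is that Hausdorff-Young gives $\hat{w}\in\ell^3$, and the convolution kernel of the new nonlinearity is controlled in Fourier variables by a product of three $\hat{w}$ factors whose $\ell^3$ norms are summable against the quintic product of $\widetilde{v}_j$.

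The second step is to prove, with these estimates in hand, an approximation lemma of the form: if $w^\varepsilon\to w$ in $L^{3/2}$ and $\psi$ lies in the Gibbs-measure-full-measure set $\mathcal{G}$ constructed in Proposition \ref{Cauchy_problem_2} (ii) adapted to the model in Appendix \ref{Appendix A}, then the solutions $u^\varepsilon$ to \eqref{A_q_Hartree_defn} with potential $w^\varepsilon$ exist on $[-T,T]$ and converge to $u$ in $L^\infty_{[-T,T]}H^s$. The argument is the inductive bootstrap on short time windows of length $\delta\sim\|w\|_{L^{3/2}}^{-6/\theta}(\mathcal{A}+1)^{-4/\theta}$ from the proof of Lemma \ref{q_Hartree_approximation_lemma}, using the quantitative description \eqref{G_union}--\eqref{K_eta,A} of $\mathcal{G}$ together with the new $X^{s,b}$ multilinear estimate. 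The main obstacle I anticipate is ensuring that the Cauchy problem for \eqref{A_q_Hartree_defn} admits global solutions on $\mathcal{G}$ in the first place, i.e.\ that Proposition \ref{Cauchy_problem_2} has a genuine analogue in the appendix model; this needs the $L^{3/2}$-multilinear bounds to slot into Bourgain's scheme \cite[Section 4]{Bou94}, replacing the $L^1$ bound in the proof of Proposition \ref{Cauchy_problem_2} (i) by our new cubic-in-$\|w\|_{L^{3/2}}$ estimate, and the operator $R_N^+$ must still satisfy $\lim_N\|R_N^+w\|_{L^{3/2}}=0$, which follows from the uniform bound $\|R_N^+\|_{L^{3/2}\to L^{3/2}}\lesssim 1$ and density of $w$ with compactly supported Fourier transform.

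The third and final step is to assemble these ingredients exactly as in the proof of Theorem \ref{q_L1_time_thm}. For each fixed $\varepsilon>0$, Theorem \ref{A_q_bounded_time_thm} gives $\rho_\tau^\varepsilon(\Psi^{t_1}_\tau\Theta_\tau(\xi^1)\cdots\Psi^{t_m}_\tau\Theta_\tau(\xi^m))\to \rho^\varepsilon(\Psi^{t_1,\varepsilon}\Theta(\xi^1)\cdots\Psi^{t_m,\varepsilon}\Theta(\xi^m))$ as $\tau\to\infty$, uniformly on the sets $\Gamma_k$ of \eqref{Gamma_k}. Then I would prove
\begin{equation*}
\lim_{\varepsilon\to 0}\rho^\varepsilon(\Psi^{t_1,\varepsilon}\Theta(\xi^1)\cdots\Psi^{t_m,\varepsilon}\Theta(\xi^m))=\rho(\Psi^{t_1}\Theta(\xi^1)\cdots\Psi^{t_m}\Theta(\xi^m))
\end{equation*}
uniformly on $\Gamma_k$ by combining the approximation lemma (which gives $\mu$-almost sure convergence $\Psi^{t,\varepsilon}\Theta(\xi)\to\Psi^t\Theta(\xi)$), the fact that $\mathcal{W}^\varepsilon\to\mathcal{W}$ $\mu$-almost surely (an adaptation of the argument for \eqref{(4.5)_claim} using the new estimate), and the dominated convergence theorem, where the uniform integrable majorant is $\bigl(\prod_j\|\xi^j\|\|\varphi\|_\mathfrak{h}^{2p_j}\bigr)\,\e^{c\|w\|_{L^{3/2}}^3\|\varphi\|_{L^6}^6}f(\mathcal{N})$, which lies in $L^1(\dd\mu)$ for $K$ in Assumption \ref{f_assumption} sufficiently small by the analogue of Proposition \ref{Cauchy_problem_2} (i) and conservation of mass for \eqref{A_q_Hartree_defn}. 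Lemma \ref{diagonal_argument_lemma} applied to $\Gamma_k$ as in \eqref{Gamma_k} then extracts the desired sequence $\varepsilon_\tau\to 0$, completing the proof.
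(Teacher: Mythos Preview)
Your proposal is correct and follows essentially the same approach as the paper outlines in Appendix~\ref{Appendix A}: reduce to the bounded case via Theorem~\ref{A_q_bounded_time_thm}, prove an approximation lemma for the flow of \eqref{A_q_Hartree_defn} analogous to Lemma~\ref{q_Hartree_approximation_lemma}, and then apply the diagonal argument of Lemma~\ref{diagonal_argument_lemma} over the sets $\Gamma_k$. The paper itself does not spell out the proof of Theorem~\ref{q_L32_time_thm} in detail but defers to \cite[Chapter~4]{Rout_Thesis_2023} and \cite{RS23}, stating that the argument is ``quite similar as before'' once the multilinear estimates Lemmas~\ref{A_Multlinear_estimates_1}--\ref{A_Multlinear_estimates_2} and Proposition~\ref{A_Cauchy_problem_2} are in place.

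One minor point of comparison: your physical-space proof of the estimate $|\mathcal{W}|\lesssim\|w\|_{L^{3/2}}^3\|\varphi\|_{L^6}^6$ (via the substitution $u=x-y$, $v=z-y$, H\"older, and Young's inequality $\||w|*|w|\|_{L^3}\leq\|w\|_{L^{3/2}}^2$) is a valid alternative to the paper's Fourier-space proof of Lemma~\ref{A_Multlinear_estimates_1}, which instead bounds the expression by $\|w\|_{L^{3/2}}^3\|\varphi\|_{H^{1/3}}^6$ through Hausdorff--Young and a H\"older inequality in the frequency variable $\zeta_1$. Your bound is pointwise sharper (since $H^{1/3}\hookrightarrow L^6$), and for the integrable majorant in the dominated convergence step it lets you invoke Bourgain's original $L^6$ large-deviation estimate \cite[Lemma~3.10]{Bou94} directly, whereas the paper's route requires the $H^{1/3}$ variant established in the proof of Proposition~\ref{A_Cauchy_problem_2}~(i). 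Either choice closes the argument.
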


Let us note the following results, which are analogues of Propositions \ref{Cauchy_problem_1} and \ref{Cauchy_problem_2} for \eqref{A_q_Hartree_defn} respectively.

\begin{proposition}[Deterministic local existence for \eqref{A_q_Hartree_defn}]
\label{A_Cauchy_problem_1}
The Cauchy problem \eqref{A_q_Hartree_defn} is locally well-posed in $H^s(\La)$ for $s>0$.
\end{proposition}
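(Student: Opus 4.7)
The plan is to follow the strategy of Proposition \ref{Cauchy_problem_1} essentially verbatim: construct the solution by a Banach fixed-point argument in a ball in $X^{s,b}_{[0,\delta]}$ with $b=\tfrac{1}{2}+\varepsilon$ and $\delta\sim\|u_0\|_{H^s}^{-4/\varepsilon}$, using the linear estimates of Lemma \ref{X^{s,b}_space_properties}. The only substantive new ingredient is the replacement of the multilinear estimate in Lemma \ref{Multlinear_estimates_2} by its analogue for the triangle-type nonlinearity
\[
\mathcal{N}(v_1,\ldots,v_5)(x,t) := \int \dd y\,\dd z\, w(x-y)\,w(y-z)\,w(z-x)\,v_1(y,t)\,\overline{v_2(y,t)}\,v_3(z,t)\,\overline{v_4(z,t)}\,v_5(x,t),
\]
whose convolution structure is not among those estimated in Lemma \ref{Multlinear_estimates_2}.

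The heart of the argument is a symbol computation on the Fourier side. Expanding each factor of $w$ as a Fourier series and integrating $y,z,x$ against the resulting complex exponentials produces Kronecker deltas that reduce the three summation indices coming from the three copies of $w$ to a single free index. A direct calculation then gives
\begin{multline*}
\bigl(\mathcal{N}(v_1,\ldots,v_5)\bigr)\,\widetilde{\,}\,(k,\eta)
= \mathop{\sum_{k_1,\ldots,k_5}}_{k_1-k_2+k_3-k_4+k_5=k} \int\dd\eta_1\cdots\dd\eta_5\,\delta(\eta_1-\eta_2+\eta_3-\eta_4+\eta_5-\eta)\\
\times M(\mathbf{k},k)\,\widetilde{v}_1(k_1,\eta_1)\,\overline{\widetilde{v}_2(k_2,\eta_2)}\,\widetilde{v}_3(k_3,\eta_3)\,\overline{\widetilde{v}_4(k_4,\eta_4)}\,\widetilde{v}_5(k_5,\eta_5),
\end{multline*}
where, setting $a:=k_2-k_1$ and $b:=k_4-k_3$, the multiplier takes the form
\[
M(\mathbf{k},k) = \sum_{n\in\mathbb{Z}} \widehat{w}(n)\,\widehat{w}(n+a)\,\widehat{w}(n+a+b).
\]

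The only genuinely new estimate is a uniform bound on this symbol. By H\"older's inequality on $\ell^3(\mathbb{Z})$ with the symmetric exponents $(3,3,3)$ together with translation invariance of the $\ell^3$-norm,
\[
|M(\mathbf{k},k)| \leq \|\widehat{w}\|_{\ell^3}^{3},
\]
and by the Hausdorff-Young inequality $\|\widehat{w}\|_{\ell^3} \leq \|w\|_{L^{3/2}}$. This is precisely why Assumption \ref{A_w_assumption} requires $w \in L^{3/2}$. With the pointwise bound $|M(\mathbf{k},k)|\leq\|w\|_{L^{3/2}}^{3}$ in hand, the rest of the argument proceeds as in the proof of Lemma \ref{Multlinear_estimates_2}: one dominates $\bigl|\bigl(\mathcal{N}(v_1,\ldots,v_5)\bigr)\,\widetilde{\,}\,(k,\eta)\bigr|$ by $\|w\|_{L^{3/2}}^{3}\cdot(F_1F_2F_3F_4F_5)\,\widetilde{\,}\,(k,\eta)$, where $\widetilde{F}_j:=|\widetilde{v}_j|$, invokes the scalar quintic $X^{s,b}$ estimate from \cite[Proof of (3.56)]{Erdogan_Tzirakis}, and then applies the localisation property \eqref{X^{s,b}_localisation} to gain the factor $\delta^{\varepsilon}$. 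The outcome is
\[
\|\mathcal{N}(v_1,\ldots,v_5)\|_{X^{s,-\frac{1}{2}+\varepsilon}_{[t_0,t_0+\delta]}} \lesssim_{s,\varepsilon} \delta^{\varepsilon}\,\|w\|_{L^{3/2}}^{3}\,\prod_{j=1}^{5}\|v_j\|_{X^{s,\frac{1}{2}+\varepsilon}_{[t_0,t_0+\delta]}}.
\]

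Equipped with this multilinear estimate, the contraction-mapping proof of Proposition \ref{Cauchy_problem_1} transfers without change: Lemma \ref{X^{s,b}_space_properties} controls the free evolution and the Duhamel integral, the multilinear estimate above controls the nonlinearity, and choosing $\delta$ as in the analogue of \eqref{delta_choice} (with $\|w\|_{L^{3/2}}^{3}$ in place of $\|w\|_{L^1}^{2}$) makes the Duhamel map a strict contraction on a ball in $X^{s,b}_{[0,\delta]}$; conditional uniqueness in that ball follows from the same difference estimate as in \eqref{u_difference_estimate}. The main obstacle is therefore purely algebraic: identifying the correct Lebesgue threshold by recognising that the triangle integral against three copies of $\widehat{w}$ pairs naturally with $\ell^3$, and hence with $L^{3/2}$ via Hausdorff-Young. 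Once that pairing is made, no further dispersive analysis is required beyond what is already carried out in Section \ref{Cauchy problem}.
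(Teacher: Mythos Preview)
Your proposal is correct and follows essentially the same approach as the paper: the paper reduces Proposition~\ref{A_Cauchy_problem_1} to the contraction argument of Proposition~\ref{Cauchy_problem_1} together with the multilinear estimate Lemma~\ref{A_Multlinear_estimates_2}, whose proof computes the same Fourier-side multiplier you found (cf.\ \eqref{A_N_2_2}) and bounds it via H\"older in $\ell^3$ and Hausdorff--Young exactly as you do. Your writeup in fact makes the multiplier computation and the $\ell^3$ pairing more explicit than the paper, which refers back to the analogous step in Lemma~\ref{A_Multlinear_estimates_1}.
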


\begin{proposition}[Invariance of truncated Gibbs measure and almost sure global existence for \eqref{A_q_Hartree_defn}]
\label{A_Cauchy_problem_2}
The following claims hold.
\begin{itemize}
\item[(i)]  Recall the probability space $(\mathbb{C}, \mc{G}, \mu)$ defined in \eqref{q_rigorous_gibbs_defn}, $\vph \equiv \vph^\om$ defined in \eqref{q_class_free_field}, and $\mathcal{N}$ defined in \eqref{q_mass}. For $B>0$ sufficiently small, we have 
\begin{equation*}
\e^{-\frac{1}{3} \int \dd x \, \dd y \, \dd z \, w(x-y)\,w(y-z)\,w(x-z) \,|\vph(x)|^2|\vph(y)|^2|\vph(z)|^2}  \chi_{(\mcN \leq B)} \in L^1(\dd\mu)\,.
\end{equation*}
In particular, taking $K=B$ in \eqref{q_cut-off_support}, we get that the probability measure $\mbP^f_{\mathrm{Gibbs}}$ is well-defined.
\item[(ii)] Consider $s \in (0,\frac{1}{2})$. The measure $\mbP^f_{\mathrm{Gibbs}}$ is invariant under the flow of \eqref{A_q_Hartree_defn}. Furthermore, \eqref{A_q_Hartree_defn} admits global solutions for $\mbP^f_{\mathrm{Gibbs}}$-almost every $u_0 \in H^s(\Lambda)$.
\end{itemize}
\end{proposition}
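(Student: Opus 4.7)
The plan is to follow the strategy of Proposition \ref{Cauchy_problem_2}, adapting the key multilinear estimates to the new triangular interaction kernel $w(x-y)\,w(y-z)\,w(z-x)$. The crucial observation is that this kernel admits a natural $L^{3/2}$ estimate: by applying H\"older's inequality with the balanced choice of exponents $(3/2,3/2,3/2,3,3,3)$ to the six-factor integral
\begin{equation*}
\int \dd x\,\dd y\,\dd z\,|w(x-y)|\,|w(y-z)|\,|w(z-x)|\,|\varphi(x)|^2\,|\varphi(y)|^2\,|\varphi(z)|^2\,,
\end{equation*}
each factor of $|w|$ contributing $2/3$ to the reciprocal exponent and each $|\varphi|^2$ contributing $1/3$ summing to $3$, we obtain the analogue of Lemma \ref{Multlinear_estimates_1}, namely $|\mathcal{W}| \lesssim \|w\|_{L^{3/2}}^3\,\|\varphi\|_{L^6}^6$. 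Since $\varphi \in H^{1/3} \hookrightarrow L^6$ $\mu$-almost surely, $\mathcal{W}$ is finite $\mu$-a.s., and combining this bound with \cite[Lemma 3.10]{Bou94} (which yields $\e^{c \|\varphi\|_{L^6}^6}\,\chi_{(\mathcal{N} \leq B)} \in L^1(\dd\mu)$ for $B>0$ sufficiently small) gives part (i).

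For part (ii), we follow the three-step argument from the proof of Proposition \ref{Cauchy_problem_2} (ii). First, one introduces the Fourier-truncated Hamiltonian system associated with $H_N(a)$ obtained by projecting \eqref{A_q_Hartree_defn} onto frequencies $|k|\leq N$, and verifies that the analogous truncated Gibbs measure $\dd\mbP^f_{\mathrm{Gibbs},N}$ is invariant under the finite-dimensional flow; the computation mirrors the one leading from \eqref{H^N} to \eqref{Hamiltonian_system_N}, using that $w$ is real and even so that the $L^2$-mass is conserved. Second, one establishes Bourgain's approximation result (Claim (*)): given $u^N$ satisfying the truncated equation with uniform $H^s$ bound on $[0,T]$, the continuum solution $u$ exists on $[0,T]$ and the difference $\|u-u^N\|_{L^\infty_tH^{s_1}}$ is controlled by $\|R_N^+ w\|_{L^{3/2}}+N^{s_1-s}$. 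Third, the Bourgain iteration from \cite[Sections 3--4]{Bou94} globalizes the solution on a set of full $\mbP^f_{\mathrm{Gibbs}}$-measure and establishes invariance.

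The main technical obstacle is producing the $X^{s,b}$ multilinear estimate and the accompanying telescoping bound (the analogues of Lemma \ref{Multlinear_estimates_2} and Lemma \ref{Difference_N}) adapted to three $L^{3/2}$ weights arranged in a cyclic triangle. The strategy is to pass to the spacetime Fourier side (as in \eqref{N_1_claim}) and observe that the integrand acquires a factor $\widehat{w}(\zeta_1)\widehat{w}(\zeta_2)\widehat{w}(\zeta_3)$ with frequency-momentum conservation $\zeta_1+\zeta_2+\zeta_3=0$, where by Hausdorff--Young $\widehat{w} \in \ell^3$; one then bounds the three $\widehat{w}$-factors in $\ell^3$ (using H\"older's inequality on the cyclic delta-constraint) and reduces to the pure quintic $X^{s,b}$ estimate \eqref{quintic_estimate_known} applied to modified functions $F_j$ with $\widetilde{F}_j=|\widetilde{v}_j|$, obtaining the bound $\lesssim \delta^{\varepsilon}\|w\|_{L^{3/2}}^3\,\prod_j\|v_j\|_{X^{s,1/2+\varepsilon}_{[t_0,t_0+\delta]}}$. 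Once this estimate is in place, the local well-posedness argument underlying Proposition \ref{A_Cauchy_problem_1}, the approximation Lemma analogous to Lemma \ref{Difference_N}, and the full Bourgain iteration run exactly as in Section \ref{Proof of the approximation lemma} above, with $\|R_N^+ w\|_{L^1}$ replaced by $\|R_N^+ w\|_{L^{3/2}}$ (whose decay to zero follows again from density of trigonometric polynomials in $L^{3/2}$ and the uniform bound $\|R_N^+\|_{L^{3/2}\to L^{3/2}} \lesssim 1$ from \cite[VII, Theorem 3.8]{Stein_Weiss}).
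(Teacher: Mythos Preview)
Your route for part (i) is genuinely different from the paper's and, once corrected, simpler; but the H\"older step as written does not work. On $\Lambda^3$ the reciprocals of the H\"older exponents must sum to $1$, not to $3$, so applying H\"older directly to the six factors with exponents $(3/2,3/2,3/2,3,3,3)$ is invalid. The claimed bound $|\mathcal{W}|\lesssim \|w\|_{L^{3/2}}^3\|\varphi\|_{L^6}^6$ is nevertheless true and follows from a slightly more careful H\"older argument: with $g:=|\varphi|^2$, split
\[
|w(x-y)||w(y-z)||w(z-x)|\,g(x)g(y)g(z)=A_1A_2A_3,
\]
where $A_1=|w(x-y)|^{1/2}|w(z-x)|^{1/2}g(x)$ and $A_2,A_3$ are defined cyclically; then H\"older with exponents $(3,3,3)$ on $\Lambda^3$ together with Fubini gives $\|A_j\|_{L^3(\Lambda^3)}=\|w\|_{L^{3/2}}\|g\|_{L^3}$, hence the bound. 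With this fix your argument for (i) is complete and avoids a step the paper carries out in full: the paper only obtains the weaker estimate $|\mathcal{W}|\lesssim \|w\|_{L^{3/2}}^3\|\varphi\|_{H^{1/3}}^6$ via the Fourier-side computation of Lemma~\ref{A_Multlinear_estimates_1}, and consequently must redo the entire probabilistic tail analysis of \cite[Lemma 3.10]{Bou94} with the $H^{1/3}$ norm in place of $L^6$; your physical-space bound lets you invoke \cite[Lemma 3.10]{Bou94} as a black box.

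For part (ii) your outline matches the paper's approach, but the frequency constraint is not $\zeta_1+\zeta_2+\zeta_3=0$. After integrating in $y,z$ and taking the Fourier transform in $x$, only one $\zeta$-variable remains free (given $k$ and the $k_j$), with $\zeta_2,\zeta_3$ being shifts of $\zeta_1$ determined by the $k_j$; see \eqref{A_N_2_2}. H\"older in $\ell^3$ over this single free variable does give $\|\widehat w\|_{\ell^3}^3\lesssim \|w\|_{L^{3/2}}^3$ by Hausdorff--Young, and the reduction to the quintic estimate \eqref{quintic_estimate_known} then proceeds exactly as you describe.
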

In order to obtain Proposition \ref{A_Cauchy_problem_1} and \ref{A_Cauchy_problem_2}, it is necessary to prove the following two multilinear estimates, which are analogues of Lemmas \ref{Multlinear_estimates_1} and \ref{Multlinear_estimates_2} respectively.

\begin{lemma}
\label{A_Multlinear_estimates_1}
Consider $w_1,w_2,w_3 \in L^{\frac{3}{2}}(\La)$.
Given $q \in H^{\frac{1}{3}}(\La)$, let 
\begin{equation}
\label{A_N_1_definition}
\mathcal{M}_1(q):=\int \dd x\,\dd y\,\dd z\, w_1(x-y)\,w_2(y-z)\,w_3(z-x)\,|q(x)|^2\,|q(y)|^2\,|q(z)|^2\,.
\end{equation}
Then, we have 
\begin{equation*}
|\mathcal{M}_1(q)| \lesssim \|w_1\|_{L^{\frac{3}{2}}}\,\|w_2\|_{L^{\frac{3}{2}}}\,\|w_3\|_{L^{\frac{3}{2}}}\,\|q\|_{H^{\frac{1}{3}}}^6\,.
\end{equation*}
\end{lemma}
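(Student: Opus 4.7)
The plan is to reduce the estimate to a symmetric application of Hölder's inequality on $\Lambda^3$, where each interaction factor $w_i$ is split into two half-powers of its absolute value so that the remaining $L^{3/2}$ requirement, rather than the too-strong $L^3$ requirement, naturally appears. Concretely, set $f(x) := |q(x)|^2$, take absolute values inside the integrand of \eqref{A_N_1_definition}, and write
\begin{align*}
|w_1(x-y)|\,|w_2(y-z)|\,|w_3(z-x)|\,f(x)\,f(y)\,f(z) = A(x,y,z)\,B(x,y,z)\,C(x,y,z)
\end{align*}
with
\begin{align*}
A(x,y,z) &:= |w_1(x-y)|^{1/2}\,|w_3(z-x)|^{1/2}\,f(x)\,, \\
B(x,y,z) &:= |w_1(x-y)|^{1/2}\,|w_2(y-z)|^{1/2}\,f(y)\,, \\
C(x,y,z) &:= |w_2(y-z)|^{1/2}\,|w_3(z-x)|^{1/2}\,f(z)\,.
\end{align*}
This symmetric splitting distributes each $|w_i|$ equally between the two factors whose distinguished variable it connects, and assigns the remaining mass to the factor sitting at the third vertex.

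The next step is to apply the triple Hölder inequality with exponents $(3,3,3)$:
\begin{equation*}
|\mathcal{M}_1(q)| \leq \|A\|_{L^3(\Lambda^3)}\,\|B\|_{L^3(\Lambda^3)}\,\|C\|_{L^3(\Lambda^3)}\,.
\end{equation*}
For $A$, one computes
\begin{equation*}
\|A\|_{L^3(\Lambda^3)}^3 = \int_{\Lambda^3} \dd x\,\dd y\,\dd z\, |w_1(x-y)|^{3/2}\,|w_3(z-x)|^{3/2}\,f(x)^3\,.
\end{equation*}
The $y$- and $z$-integrals are pure translation integrals of $|w_1|^{3/2}$ and $|w_3|^{3/2}$ respectively, each yielding $\|w_i\|_{L^{3/2}}^{3/2}$, while the remaining $\int f^3 = \|q\|_{L^6}^6$. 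Hence $\|A\|_{L^3} = \|w_1\|_{L^{3/2}}^{1/2}\,\|w_3\|_{L^{3/2}}^{1/2}\,\|q\|_{L^6}^2$, and the analogous identities hold for $B$ and $C$ by the symmetric roles of the three vertices. Multiplying the three bounds gives
\begin{equation*}
|\mathcal{M}_1(q)| \leq \|w_1\|_{L^{3/2}}\,\|w_2\|_{L^{3/2}}\,\|w_3\|_{L^{3/2}}\,\|q\|_{L^6}^6\,.
\end{equation*}

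Finally, the one-dimensional Sobolev embedding $H^{1/3}(\Lambda) \hookrightarrow L^6(\Lambda)$ (the scaling identity $\tfrac13 = \tfrac12 - \tfrac16$) converts $\|q\|_{L^6}^6$ into $\|q\|_{H^{1/3}}^6$ up to a constant depending only on $\Lambda$, yielding the claimed inequality. There is no serious obstacle: the entire argument is a one-line Hölder estimate, and the only nontrivial observation is the choice of the symmetric half-power splitting, which is forced by the fact that any factorisation placing an entire $w_i$ into a single slot would require $w_i \in L^3$ rather than $L^{3/2}$.
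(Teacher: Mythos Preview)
Your proof is correct and takes a genuinely different route from the paper. The paper argues on the Fourier side: it expands all factors in Fourier series, reduces the three $\zeta$-sums to a single free frequency via the delta constraints, applies H\"older in that frequency to produce $\|\widehat{w}_i\|_{\ell^3}$, and then invokes Hausdorff--Young to get $\|w_i\|_{L^{3/2}}$; the remaining sum in $k_1,\ldots,k_6$ is identified as $\|F\|_{L^6}^6$ for $\widehat{F}=|\widehat{q}|$ and controlled by $\|q\|_{H^{1/3}}^6$ via Sobolev embedding. Your argument stays entirely in physical space: the symmetric half-power splitting followed by triple H\"older on $\Lambda^3$ is more elementary, avoids Fourier analysis altogether, and in fact yields the slightly more transparent intermediate bound $\|q\|_{L^6}^6$ rather than $\|F\|_{L^6}^6$. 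What the paper's approach buys is that the same Fourier-side machinery is reused verbatim in the companion $X^{s,b}$ estimate (Lemma~\ref{A_Multlinear_estimates_2}), where a purely physical-space argument is no longer available; your method, while cleaner here, would not carry over to that setting.
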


\begin{lemma}
\label{A_Multlinear_estimates_2}
Consider $w_1,w_2,w_3 \in L^{\frac{3}{2}}(\La)$.
Given $s,\varepsilon>0$ and $v_j \in X^{s,1/2-\varepsilon}$, for $j=1,\ldots,5$, we let
\begin{multline}
\label{A_N_2_definition}
\mathcal{M}_2(v_1,v_2,v_3,v_4,v_5)(x,t):=
\\
\int \dd y\,\dd z\,w_1(x-y)\,w_2(y-z)\,w_3(z-x)\,v_1(y,t)\,\overline{v_2(y,t)}\,v_3(z,t)\,\overline{v_4(z,t)}\,v_5(x,t)\,.
\end{multline}
For $\varepsilon>0$ sufficiently small, we have that for all $t_0 \in \R$ and $\delta>0$ small
\begin{equation}
\label{A_Multlinear_estimates_2_claim_2}
\|\mathcal{M}_2(v_1,v_2,v_3,v_4,v_5)\|_{X^{s,-\frac{1}{2}+\varepsilon}_{[t_0,t_0+\delta]}} \lesssim_{s,\varepsilon} \delta^{\varepsilon}\, \|w_1\|_{L^{\frac{3}{2}}}\,\|w_2\|_{L^{\frac{3}{2}}}\,\|w_3\|_{L^{\frac{3}{2}}}\,\prod_{j=1}^{5}\|v_j\|_{X^{s,\frac{1}{2}+\varepsilon}_{[t_0,t_0+\delta]}}\,.
\end{equation}
Here, we recall \eqref{local_X^{s,b}_norm}.
\end{lemma}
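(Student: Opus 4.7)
The plan is to adapt the proof of Lemma \ref{Multlinear_estimates_2}, with the essential modification occurring only at the step where the kernel built from $w$ is estimated. By density it suffices to consider smooth $v_j$, and, exactly as in the proof of Lemma \ref{Multlinear_estimates_2}, the estimate \eqref{A_Multlinear_estimates_2_claim_2} follows via the localization bound \eqref{X^{s,b}_localisation} and translation in $t$ from the global-in-time inequality
\begin{equation*}
\|\mathcal{M}_2(v_1,\ldots,v_5)\|_{X^{s,-\frac{1}{2}+\varepsilon}}\lesssim_{s,\varepsilon}\|w_1\|_{L^{3/2}}\|w_2\|_{L^{3/2}}\|w_3\|_{L^{3/2}}\prod_{j=1}^{5}\|v_j\|_{X^{s,\frac{1}{2}-\varepsilon}}.
\end{equation*}
Everything then reduces to proving this global estimate.

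First I would compute the spacetime Fourier transform $\widetilde{\mathcal{M}_2}(k,\eta)$ by expanding each $w_i$ as a Fourier series and integrating out the exponentials in $y$ and $z$. The algebraic identity $(x-y)+(y-z)+(z-x)=0$ forces the three internal frequencies $\zeta_1,\zeta_2,\zeta_3$ attached to $w_1,w_2,w_3$ to satisfy exactly two linear constraints coming from the $y$- and $z$-integrations, so a single free internal variable $\zeta$ remains. A direct calculation gives
\begin{multline*}
\widetilde{\mathcal{M}_2}(k,\eta)=\sum_{k_1,\ldots,k_5}\int \dd\eta_1\cdots\dd\eta_5\,\delta(k_1-k_2+k_3-k_4+k_5-k)\\
\times\delta(\eta_1-\eta_2+\eta_3-\eta_4+\eta_5-\eta)\,K_{\mathbf{w}}(k_1-k_2,\,k_1-k_2+k_3-k_4)\\
\times\widetilde{v}_1(k_1,\eta_1)\,\overline{\widetilde{v}_2(k_2,\eta_2)}\,\widetilde{v}_3(k_3,\eta_3)\,\overline{\widetilde{v}_4(k_4,\eta_4)}\,\widetilde{v}_5(k_5,\eta_5),
\end{multline*}
where the triangle kernel is
\begin{equation*}
K_{\mathbf{w}}(a,b):=\sum_{\zeta\in\Z}\widehat{w_1}(\zeta)\,\widehat{w_2}(\zeta-a)\,\widehat{w_3}(\zeta-b).
\end{equation*}

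The decisive step is the uniform bound on $K_{\mathbf{w}}$. This is where the $L^{3/2}$ hypothesis of Assumption \ref{A_w_assumption} enters naturally: H\"older on $\ell^3$ with exponents $\tfrac{1}{3}+\tfrac{1}{3}+\tfrac{1}{3}=1$ followed by the Hausdorff--Young inequality on $\T$ yields
\begin{equation*}
|K_{\mathbf{w}}(a,b)|\leq\|\widehat{w_1}\|_{\ell^3}\|\widehat{w_2}\|_{\ell^3}\|\widehat{w_3}\|_{\ell^3}\leq\|w_1\|_{L^{3/2}}\|w_2\|_{L^{3/2}}\|w_3\|_{L^{3/2}},
\end{equation*}
uniformly in $a,b\in\Z$. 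Once $K_{\mathbf{w}}$ has been absorbed into the prefactor, the remaining multilinear expression is of precisely the same form treated in the proof of Lemma \ref{Multlinear_estimates_2}. Introducing auxiliary functions $F_j$ with $\widetilde{F}_j=|\widetilde{v}_j|$, Parseval in spacetime identifies the remaining convolution as $\widetilde{(F_1F_2F_3F_4F_5)}(k,\eta)$, and the known quintic $X^{s,b}$ estimate \eqref{quintic_estimate_known} from \cite[Proof of (3.56)]{Erdogan_Tzirakis}, combined with the identity $\|F_j\|_{X^{s,1/2-\varepsilon}}=\|v_j\|_{X^{s,1/2-\varepsilon}}$, closes the argument.

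The main obstacle, and indeed the only genuinely new ingredient relative to Lemma \ref{Multlinear_estimates_2}, is establishing the uniform bound on the triangle kernel $K_{\mathbf{w}}$. The $\ell^3$/Hausdorff--Young chain pins the integrability exponent of $w$ at exactly $3/2$: any weaker integrability would fail to close the sum over the free variable $\zeta$. This is also the structural reason why the delta-function limit from Assumption \ref{w_assumption} (ii) is unavailable in this alternative model, since $\delta\notin L^{3/2}(\T)$.
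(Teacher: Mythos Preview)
Your proposal is correct and follows essentially the same approach as the paper: both reduce to the global estimate, compute the spacetime Fourier transform to isolate a single free internal frequency $\zeta$, bound the resulting three-fold sum $\sum_\zeta \widehat{w_1}(\zeta)\widehat{w_2}(\zeta-a)\widehat{w_3}(\zeta-b)$ via H\"older in $\ell^3$ together with Hausdorff--Young, and then conclude with the quintic $X^{s,b}$ estimate \eqref{quintic_estimate_known} exactly as in the proof of Lemma \ref{Multlinear_estimates_2}. Your explicit packaging of the ``triangle kernel'' $K_{\mathbf{w}}$ and the remark on the sharpness of the $L^{3/2}$ threshold are nice touches, but the argument is the same.
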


We now present the details of the proofs Lemmas \ref{A_Multlinear_estimates_1} and \ref{A_Multlinear_estimates_2}. Let us note that the arguments are quite different, and more involved, than those used to prove Lemmas \ref{Multlinear_estimates_1} and \ref{Multlinear_estimates_2} above. The reason for this is that the nonlinearities \eqref{A_N_1_definition} and \eqref{A_N_2_definition} no longer have a convolution structure.

\begin{proof}[Proof of Lemma \ref{A_Multlinear_estimates_1}]
By a density argument, we may assume without loss of generality that $w_1,w_2,w_3,q$ are smooth functions, thus making the calculations that follow rigorous. Writing each integrand as a Fourier series, we compute
\begin{multline}
\label{A_N_1_1}
\mathcal{M}_1(q) =
\\
\sum_{k_1,\ldots,k_6} \sum_{\zeta_1,\zeta_2,\zeta_3} \int \dd x\,\dd y\,\dd z\,
\widehat{w}_1(\zeta_1)\,\widehat{w}_2(\zeta_2)\,\widehat{w}_3(\zeta_3)\,\widehat{q}\,(k_1)\,\overline{\widehat{q}\,(k_2)}\,\widehat{q}\,(k_3)\,\overline{\widehat{q}\,(k_4)}\,\widehat{q}\,(k_5)\,\overline{\widehat{q}\,(k_6)}\,
\\ 
\times \e^{2\pi \mathrm{i} x(k_1-k_2+\zeta_1-\zeta_3)}\,
\e^{2\pi \mathrm{i} y(k_3-k_4-\zeta_1+\zeta_2)}\,\e^{2\pi \mathrm{i}z (k_5-k_6-\zeta_2+\zeta_3)}
\, 
\end{multline}
The summations in \eqref{A_N_1_1} and in the sequel are taken over $\Z$.
By integrating in $x,y,z$ in \eqref{A_N_1_1} and taking absolute values, we deduce that
\begin{multline}
\label{A_N_1_2}
|\mathcal{M}_1(q)| \leq 
\\
\sum_{k_1,\ldots,k_6} \sum_{\zeta_1,\zeta_2,\zeta_3} 
|\widehat{w}_1(\zeta_1)|\,|\widehat{w}_2(\zeta_2)|\,|\widehat{w}_3(\zeta_3)|\,|\widehat{q}\,(k_1)|\,|\widehat{q}\,(k_2)|\,|\widehat{q}\,(k_3)|\,|\widehat{q}\,(k_4)|\,|\widehat{q}\,(k_5)|\,|\widehat{q}\,(k_6)|\,
\\ 
\times \delta(k_1-k_2+\zeta_1-\zeta_3)\,
\delta(k_3-k_4-\zeta_1+\zeta_2)\,\delta(k_5-k_6-\zeta_2+\zeta_3)
\,.
\end{multline}
By the constraints on the summands, we can rewrite the expression on the right-hand side of \eqref{A_N_1_2} as
\begin{multline}
\label{A_N_1_3}
\sum_{k_1,\ldots,k_6} \biggl\{ \sum_{\zeta_1} 
|\widehat{w}_1(\zeta_1)|\,|\widehat{w}_2(\zeta_1-k_3+k_4)|\,|\widehat{w}_3(\zeta_1+k_1-k_2)|\biggr\}\,
\\
\times
|\widehat{q}\,(k_1)|\,|\widehat{q}\,(k_2)|\,|\widehat{q}\,(k_3)|\,|\widehat{q}\,(k_4)|\,|\widehat{q}\,(k_5)|\,|\widehat{q}\,(k_6)|\,
\\ 
\times \delta(k_1-k_2+k_3-k_4+k_5-k_6)
\,,
\end{multline}
which by applying H\"{o}lder's inequality in $\zeta_1$ in the curly brackets in \eqref{A_N_1_3} is
\begin{multline}
\label{A_N_1_4}
\leq \|\widehat{w}_1\|_{L^3}\, \|\widehat{w}_2\|_{L^3}\,\|\widehat{w}_3\|_{L^3}
\, \sum_{k_1,\ldots,k_6}
|\widehat{q}\,(k_1)|\,|\widehat{q}\,(k_2)|\,|\widehat{q}\,(k_3)|\,|\widehat{q}\,(k_4)|\,|\widehat{q}\,(k_5)|\,|\widehat{q}\,(k_6)|\,
\\ 
\times \delta(k_1-k_2+k_3-k_4+k_5-k_6)
\,.
\end{multline}
By the Hausdorff-Young inequality and Parseval's identity, we have that 
\begin{equation}
\label{A_N_1_5}
\eqref{A_N_1_4} \lesssim \|w_1\|_{L^{\frac{3}{2}}}\,\|w_2\|_{L^{\frac{3}{2}}}\,\|w_3\|_{L^{\frac{3}{2}}}\,
\int \dd x \,|F(x)|^6\,,
\end{equation}
where $F$ is chosen such that 
\begin{equation}
\label{A_N_1_6}
\widehat{F}=|\widehat{q}\,|\,.
\end{equation}
By applying Sobolev embedding, we deduce that
\begin{multline}
\label{A_N_1_7}
\eqref{A_N_1_5} =  \|w_1\|_{L^{\frac{3}{2}}}\,\|w_2\|_{L^{\frac{3}{2}}}\,\|w_3\|_{L^{\frac{3}{2}}}\,\|F\|_{L^6}^6 \lesssim 
\|w_1\|_{L^{\frac{3}{2}}}\,\|w_2\|_{L^{\frac{3}{2}}}\,\|w_3\|_{L^{\frac{3}{2}}}\,\|F\|_{H^{\frac{1}{3}}}^6
\\
= \|w_1\|_{L^{\frac{3}{2}}}\,\|w_2\|_{L^{\frac{3}{2}}}\,\|w_3\|_{L^{\frac{3}{2}}}\,\|q\|_{H^{\frac{1}{3}}}^6\,.
\end{multline}
For the last equality in \eqref{A_N_1_7}, we used that $\|F\|_{H^{\frac{1}{3}}}=\|q\|_{H^{\frac{1}{3}}}$, which follows by \eqref{A_N_1_6}.
\end{proof}

\begin{proof}[Proof of Lemma \ref{A_Multlinear_estimates_2}]
Arguing analogously as in the proof of Lemma \ref{Multlinear_estimates_2}, it suffices to prove the following global version of the estimate \eqref{A_Multlinear_estimates_2_claim_2}.
\begin{equation*}
\|\mathcal{M}_2(v_1,v_2,v_3,v_4,v_5)\|_{X^{s,-\frac{1}{2}+\varepsilon}} \lesssim_{s,\varepsilon} \|w_1\|_{L^{\frac{3}{2}}}\,\|w_2\|_{L^{\frac{3}{2}}}\,\|w_3\|_{L^{\frac{3}{2}}}\,\prod_{j=1}^{5}\|v_j\|_{X^{s,\frac{1}{2}-\varepsilon}}\,.
\end{equation*}
Arguing as for \eqref{N_2_claim}, it suffices to show that for $w_i,\, i=1,2,3$ and $v_j,\, j=1,\ldots,5$  and $k \in \Z, \eta \in \R$, we have
\begin{multline}
\label{A_N_2_claim}
\big|\bigl(\mathcal{N}_2(v_1,v_2,v_3,v_4,v_5)\bigr)\,\widetilde{\,}\,(k,\eta)\big|
\lesssim \|w_1\|_{L^{\frac{3}{2}}}\,\|w_2\|_{L^{\frac{3}{2}}}\,\|w_3\|_{L^{\frac{3}{2}}}
\\
\times \sum_{k_1,\ldots,k_5} \int \dd \eta_1 \cdots \dd \eta_5\, \delta(k_1-k_2+k_3-k_4+k_5-k)\,\delta(\eta_1-\eta_2+\eta_3-\eta_4+\eta_5-\eta)\,
\\
\times |\widetilde{v}_1(k_1,\eta_1)|\,|\widetilde{v}_2(k_2,\eta_2)|\,|\widetilde{v}_3(k_3,\eta_3)|\,|\widetilde{v}_4(k_4,\eta_4)|\,|\widetilde{v}_5(k_5,\eta_5)|\,.
\end{multline}
By expanding all of the integrands as a Fourier series and arguing analogously as in the proof of Lemma \ref{A_Multlinear_estimates_1}, we compute

\begin{multline}
\label{A_N_2_1}
\mathcal{N}_2(v_1,v_2,v_3,v_4,v_5)(x,t)=
\\
\sum_{k_1,\ldots,k_5} \sum_{\zeta_1,\zeta_2,\zeta_3}
\int \dd y\,\dd z\,\int \dd \eta_1 \cdots \dd \eta_5\,
\widehat{w}_1(\zeta_1)\,\widehat{w}_2(\zeta_2)\,\widehat{w}_3(\zeta_3)\,\widetilde{v}_1(k_1,\eta_1)\,\overline{\widetilde{v}_2(k_2,\eta_2)}\,
\\
\times \widetilde{v}_3(k_3,\eta_3)\,\overline{\widetilde{v}_4(k_4,\eta_4)}\,\widetilde{v}_5(k_5,\eta_5)\,\e^{2\pi \mathrm{i} x(\zeta_1-\zeta_3+k_5)}\,\e^{2\pi \mathrm{i} y (-\zeta_1+\zeta_2+k_1-k_2)}\, 
\\
\times 
\e^{2\pi \mathrm{i} z(-\zeta_2+\zeta_3+k_3-k_4)}\,\e^{2\pi \mathrm{i} t (\eta_1-\eta_2+\eta_3-\eta_4+\eta_5)}\,.
\end{multline}
From \eqref{A_N_2_1}, we hence deduce that
\begin{multline}
\label{A_N_2_2}
\bigl(\mathcal{N}_2(v_1,v_2,v_3,v_4,v_5)\bigr)\,\widetilde{\,}\,(k,\eta)=
\\
\sum_{k_1,\ldots,k_5} \sum_{\zeta_1,\zeta_2,\zeta_3} \int \dd \eta_1 \cdots \dd \eta_5\,
\widehat{w}_1(\zeta_1)\,\widehat{w}_2(\zeta_2)\,\widehat{w}_3(\zeta_3)\,\widetilde{v}_1(k_1,\eta_1)\,\overline{\widetilde{v}_2(k_2,\eta_2)}\,
\\
\times \widetilde{v}_3(k_3,\eta_3)\,\overline{\widetilde{v}_4(k_4,\eta_4)}\,\widetilde{v}_5(k_5,\eta_5)\,\delta(\zeta_1-\zeta_3+k_5-k)\,\delta(-\zeta_1+\zeta_2+k_1-k_2)\, 
\\
\times 
\delta(-\zeta_2+\zeta_3+k_3-k_4)\,\delta(\eta_1-\eta_2+\eta_3-\eta_4+\eta_5-\eta)
\\
=\sum_{k_1,\ldots,k_5} \sum_{\zeta_1} \int \dd \eta_1 \cdots \dd \eta_5\,
\widehat{w}_1(\zeta_1)\,\widehat{w}_2(\zeta_1-k_1+k_2)\,\widehat{w}_3(\zeta_1+k_5-k)\,
\\
\times
\widetilde{v}_1(k_1,\eta_1)\,\overline{\widetilde{v}_2(k_2,\eta_2)}\, \widetilde{v}_3(k_3,\eta_3)\,\overline{\widetilde{v}_4(k_4,\eta_4)}\,\widetilde{v}_5(k_5,\eta_5)\,
\\
\times \delta(k_1-k_2+k_3-k_4+k_5-k)\,
\delta(\eta_1-\eta_2+\eta_3-\eta_4+\eta_5-\eta)\,.
\end{multline}
We now deduce \eqref{A_N_2_claim} from \eqref{A_N_2_2} by arguing analogously as in the proof of Lemma \ref{A_Multlinear_estimates_1}. 
\end{proof}

One then obtains Proposition \ref{A_Cauchy_problem_1} by arguing analogously as in the proof of Proposition \ref{Cauchy_problem_1} above. The only difference is that one uses Lemma \ref{A_Multlinear_estimates_2} instead of Lemma \ref{Multlinear_estimates_2}. For details, see \cite[Proof of Proposition 4.1.1]{Rout_Thesis_2023} and \cite[Proof of Proposition 2.1]{RS23}. We now prove Proposition \ref{A_Cauchy_problem_2}(i). Given Proposition \ref{A_Cauchy_problem_2}(i), the proof of Proposition \ref{A_Cauchy_problem_2}(ii) is similar to that of Proposition \ref{Cauchy_problem_2}(ii). For details, see \cite[Proof of Proposition 4.1.2 (ii)]{Rout_Thesis_2023} and \cite[Proof of Proposition 2.2 (ii)]{RS23}.

\begin{proof}[Proof of Proposition \ref{A_Cauchy_problem_2} (i)]
The analysis is similar to that of \cite[Lemma 3.10]{Bou94}, which was recalled in \cite[Appendix A]{RS22}. We follow the exposition in \cite[Appendix A]{RS22} and explain the main differences.
By using Proposition \ref{A_Multlinear_estimates_1} with $w_1=w_2=w_3$ and $q=\vph$, it follows that 
\begin{equation}
\label{A_Cauchy_problem_2_i_1}
\biggl|-\frac{1}{3}\,\int \dd x \, \dd y \, \dd z \, w(x-y)\,w(y-z)\,w(z-x)\,|\vph(x)|^2\,|\vph(y)|^2\,|\vph(z)|^2\biggr|
\lesssim \|w\|_{L^{\frac{3}{2}}}^3\,\|\vph\|_{H^{\frac{1}{3}}}^6\,.
\end{equation}

The estimate \eqref{A_Cauchy_problem_2_i_1} is the key reduction to the proof of \cite[Lemma 3.10]{Bou94}. 
In particular, given $c_0>0$, it suffices to show that for $B>0$ sufficiently small, depending on $c_0$, we have
\begin{equation}
\label{A_Cauchy_problem_2_i_1_B}
\e^{c_0\|\varphi\|^6_{H^{1/3}}}\,\chi (\mathcal{N} \leq B) \in L^1(\dd \mu)\,.
\end{equation}
By arguing analogously as for \cite[(A.4)]{RS22} (see also \cite[(3.11)]{Bou94}), we reduce the proof of \eqref{A_Cauchy_problem_2_i_1_B} to showing that there exists $c>0$ such that for large enough $\lambda>0$, we have
\begin{equation}
\label{A_Cauchy_problem_2_i_2}
\mu \Biggl[\Biggl\|\sum_{k \in \Z} \frac{\omega_k}{\sqrt{\lambda_k}}\,\e^{2\pi \mathrm{i} k x}\Biggr\|_{H^{\frac{1}{3}}}>\lambda\,, \,\, \Biggl(\sum_{k \in \Z} \frac{|\omega_k|^2}{\lambda_k}\Biggr) \leq B \Biggr] \lesssim \mathrm{exp}\bigl(-c M_0^{4/3}\,\lambda^2\bigr)\,,
\end{equation}
where 
\begin{equation}
\label{A_M_0}
M_0 \sim \biggl(\frac{\lambda}{B}\biggr)^3\,.
\end{equation}
Here, we recall \eqref{q_class_free_field}. In other words, we reduce to the analysis from \cite[Lemma 3.10]{Bou94} and \cite[Appendix A]{RS22} with $p=6$, with $\|\cdot\|_{L^6}$ norms replaced by $\|\cdot\|_{H^{\frac{1}{3}}}$ norms\footnote{These norms are linked by the Sobolev embedding $H^{\frac{1}{3}} \subset L^6$.} Since the norms are now defined in Fourier space, the analysis in fact simplifies.
In particular, the analogue of \cite[(A.7)]{RS22}, which is now given by
\begin{equation*} 
\Biggl\|\sum_{|k| \sim M} a_k
 \e^{2\pi \mathrm{i} k x}\Biggr\|_{H^{\frac{1}{3}}} \lesssim M^{\frac{1}{3}}\,\Biggl\|\sum_{|k| \sim M} a_k \e^{2\pi \mathrm{i} k x}\Biggr\|_{L^2}\,,
\end{equation*}
follows by definition of  $\|\cdot\|_{H^{\frac{1}{3}}}$. Here, $M$ is a dyadic integer and $|k| \sim M$ means that $\frac{3M}{4} \leq |k| < \frac{3M}{2}$.
Similarly, the analogue of \cite[(A.14)]{RS22} which is now given by 
\begin{equation*} 
\Biggl\|\sum_{|k| \sim M} \frac{\omega_k}{\sqrt{\lambda_k}} \,\e^{2\pi \mathrm{i} k x}\Biggr\|_{H^{\frac{1}{3}}} \lesssim \frac{1}{M}\,\Biggl\|\sum_{|k| \sim M} \omega_k \e^{2\pi \mathrm{i} k x}\Biggr\|_{H^{\frac{1}{3}}}\,,
\end{equation*}
follows by definition of  $\|\cdot\|_{H^{\frac{1}{3}}}$.

With $M_0$ as in \eqref{M_0} and  
\begin{equation*}
\sigma_M \sim M^{-1/6}+(M_0/M)^{1/2}\,, \quad M>M_0\,,
\end{equation*}
as in \cite[(A.17)]{RS22}, the above modifications allow us to deduce that there is $M>M_0$ a dyadic integer such that 
\begin{equation}
\label{A.17'}
\Biggl\|\sum_{|k| \sim M}\omega_k\, \e^{2\pi \mathrm{i} k x}\Biggr\|_{H^{\frac{1}{3}}}>\sigma_M M \lambda\,.
\end{equation}
We now conclude the proof of \eqref{A_Cauchy_problem_2_i_2} by using \eqref{A.17'} and a union bound as in \cite[(A.18)--(A.24)]{RS22}.
In order to apply the above union bound argument, one needs to show that, given a dyadic integer $M$, and
\begin{equation*}
\mathcal{S} \equiv \mathcal{S}_M:=\Biggl\{\sum_{|k| \sim M} a_k \,\e^{2\pi \mathrm{i} k x}\,, a_k \in \C\Biggr\}\,,
\end{equation*}
there exists a set $\Xi \equiv \Xi_M$ contained in the unit sphere of $H^{-\frac{1}{3}}$ satisfying the following properties.
\begin{itemize}
\item[(1)] $\mathrm{max}_{\phi \in \Xi} |\langle g, \phi \rangle| \geq \frac{1}{2}\,\|g\|_{H^{\frac{1}{3}}}$ for all $g \in \mathcal{S}$.
\item[(2)] $\|\phi\|_{L^2} \lesssim M^{\frac{1}{3}}$ for all $\phi \in \Xi$.
\item[(3)] $\log |\Xi| \lesssim M$.
\end{itemize}
The property (2) above follows from the frequency localisation of $\phi$ and the assumption that $\|\phi\|_{H^{-1/3}}=1$. Properties (1) and (2) follow from the analogue of \cite[Lemma A.4]{RS22} with the spaces $L^p$ and $L^{p'}$ replaced with $H^{\frac{1}{3}}$ and $H^{-\frac{1}{3}}$ respectively. The modified claim follows from the same proof. We refer the reader to \cite[Appendix A]{RS22} for the full details.
\end{proof}

The rest of the analysis leading to the results stated in Theorems \ref{A_q_bounded_state_convergence_thm}--\ref{q_L32_time_thm} is then quite similar as before. The graph structure is slightly different, see \cite[Section 4.2.6]{Rout_Thesis_2023} and \cite[Section 3.6]{RS23}, but this does not affect the above arguments in a substantial way. We refer the reader to \cite[Chapter 4]{Rout_Thesis_2023} and \cite{RS23} for the full details of the proofs.

\bigskip
\emph{Acknowledgements:}
The authors would like to thank Zied Ammari, J\"{u}rg Fr\"{o}hlich, Sebastian Herr, Antti Knowles, Jinyeop Lee, Phan Th\`{a}nh Nam, Nicolas Rougerie, Mathieu Lewin, Renato Luc\`{a}, Benjamin Schlein, Arnaud Triay, Nikolay Tzvetkov, and Daniel Ueltschi for useful discussions and comments during various stages of the project. 
A.R. is supported by the Warwick Mathematics Institute Centre for Doctoral Training, and gratefully acknowledges funding from the University of Warwick. V.S. acknowledges support of the EPSRC New Investigator Award grant EP/T027975/1.

\end{document}